% TO DO/DISCUSS:
% k WRITE PAPER
% meh CAN EXTEND KOENKER L-stat-type AVERAGING FOR QR TO IVQR?  AVERAGE ACROSS tau...?

\documentclass[12pt,letterpaper]{article}

\usepackage{etoolbox}
\usepackage{amsmath,amssymb,amsthm,amsfonts,amstext}
\usepackage[margin=1in,letterpaper]{geometry}
\usepackage[longnamesfirst]{natbib}
\usepackage{enumerate}
\usepackage{bm}
\usepackage{tabularx,longtable,threeparttable,booktabs}
\usepackage{siunitx}
\sisetup{group-digits            = true,
         group-separator         = {,}, % default: \, (thinspace)
         round-mode              = places,
         round-precision         = 3 }
\usepackage{color}
\usepackage{graphicx}
\usepackage{caption}
\usepackage{setspace}
\usepackage{alltt}
\usepackage{hypernat}
\usepackage{mathtools,dsfont}
\usepackage{hyperref}
\usepackage[capitalize,noabbrev]{cleveref}
\usepackage{pdflscape}
\usepackage[paper=portrait,pagesize]{typearea}

\newtoggle{SUPPLEMENTAL}\toggletrue{SUPPLEMENTAL}
\togglefalse{SUPPLEMENTAL} %COMMENT OUT FOR SUPPLEMENTAL APPENDIX

\newtoggle{BLINDED}\toggletrue{BLINDED}
\togglefalse{BLINDED} %COMMENT OUT FOR BLINDED VERSION

\numberwithin{equation}{section}

\theoremstyle{plain}

\newtheorem{theorem}{Theorem}[section]

\newtheorem{proposition}[theorem]{Proposition}

\theoremstyle{definition}

\appto\TPTnoteSettings{\linespread{1}\footnotesize}
\crefformat{footnote}{#2\footnotemark[#1]#3} %cleveref
\crefname{conjecture}{Conjecture}{Conjectures}
\crefname{section}{Section}{Sections}
\crefname{subsection}{Section}{Sections}
\crefname{subsubsection}{Section}{Sections}
\Crefname{conjecture}{Conjecture}{Conjectures}
\Crefname{section}{Section}{Sections}
\Crefname{subsection}{Section}{Sections}
\Crefname{subsubsection}{Section}{Sections}
\crefname{appendix}{Appendix}{Appendices}
\crefname{subappendix}{Appendix}{Appendices}
\crefname{subsubappendix}{Appendix}{Appendices}
\Crefname{appendix}{Appendix}{Appendices}
\Crefname{subappendix}{Appendix}{Appendices}
\Crefname{subsubappendix}{Appendix}{Appendices}
\crefname{equation}{}{}
\Crefname{equation}{Equation}{Equations}
\crefformat{enumi}{(#2#1#3)}
\crefrangeformat{enumi}{(#3#1#4)\crefrangeconjunction(#5#2#6)}
\crefmultiformat{enumi}{(#2#1#3)}{ and~(#2#1#3)}{, (#2#1#3)}{ and~(#2#1#3)}
\Crefformat{enumi}{Part (#2#1#3)}
\Crefrangeformat{enumi}{Parts (#3#1#4)\crefrangeconjunction(#5#2#6)}
\Crefmultiformat{enumi}{Parts (#2#1#3)}{ and~(#2#1#3)}{, (#2#1#3)}{ and~(#2#1#3)}
\crefname{assumption}{}{}
\Crefname{assumption}{Assumption}{Assumptions}
 %Oxford comma
\newcommand{\crefrangeconjunction}{--}

% \DeclareGraphicsRule{.tif}{png}{.png}{`convert #1 `dirname #1`/`basename #1 .tif`.png}
\DeclareGraphicsExtensions{.pdf,.png}
\graphicspath{ {./Figures/} }

\newcommand{\citeposs}[1]{\citeauthor{#1}'s (\citeyear{#1})}

\newcommand{\numnornd}[1]{\num[round-mode=off,group-digits=integer]{#1}} % no rounding, but group-separator
\newcommand{\matf}[1]{\underline{\boldsymbol{\mathbf{#1}}}} %matrix formatting
\newcommand{\vecf}[1]{\boldsymbol{\mathbf{#1}}} %vector formatting

\DeclareMathOperator{\Cov}{Cov}

\newcommand{\R}{{\mathbb R}}

\DeclareMathOperator{\E}{E}
\DeclareMathOperator{\Q}{Q}
\let\Pr\relax \DeclareMathOperator{\Pr}{P} %comment out if Pr desired

\DeclareMathOperator*{\argmin}{arg\,min}

\DeclareMathOperator{\1}{\mathds{1}}
\newcommand{\Ind}[1]{\1\left\{#1\right\}}
\newcommand{\NormDist}{\mathrm{N}}

\newcommand{\UnifDist}{\textrm{Unif}}

\newcommand{\RMSE}{\textrm{RMSE}}
\newcommand{\rRMSE}{\textrm{rRMSE}}

 %differential operator: easy to change to upright if needed.
%(partial) derivatives:
\newcommand{\pD}[2]{\frac{\partial #1}{\partial #2}}

\newcommand{\independenT}[2]{\mathrel{\rlap{$#1#2$}\mkern2mu{#1#2}}}
\newcommand\independent{\protect\mathpalette{\protect\independenT}{\perp}}

\let\originalleft\left
\let\originalright\right
\renewcommand{\left}{\mathopen{}\mathclose\bgroup\originalleft}
\renewcommand{\right}{\aftergroup\egroup\originalright}

%Bibliography spacing fix

  \renewenvironment{thebibliography}[1]{%
    \begin{oldthebibliography}{#1}%
      \setlength{\parskip}{0ex}%
      \setlength{\itemsep}{0ex}%
  }%
  {%
    \end{oldthebibliography}%
  }

\allowdisplaybreaks[1]

\usepackage{array}

\hypersetup{
  colorlinks = false,
  urlcolor = black,
  pdfauthor = {Xin Liu},
  pdfkeywords = {economics, econometrics, statistics},
  pdftitle = {Liu: IVQR Averaging},
  pdfsubject = {econometrics},
  pdfpagemode = UseNone
}

\defcitealias{ChengLiaoShi2019}{CLS}

\title{Averaging estimation for instrumental variables quantile regression}
\author{Xin Liu\thanks{Department of Economics, University of Missouri; email: \texttt{xl6f6@mail.missouri.edu}}}

\begin{document}

\maketitle

\begin{abstract}
This paper proposes averaging estimation methods to improve the finite-sample efficiency of the instrumental variables quantile regression (IVQR) estimation. 
First, I apply \citeposs{ChengLiaoShi2019} averaging GMM framework to the IVQR model.
I propose using the usual quantile regression moments for averaging to take advantage of cases when endogeneity is not too strong.
I also propose using two-stage least squares slope moments to take advantage of cases when heterogeneity is not too strong.
The empirical optimal weight formula of \citet{ChengLiaoShi2019} helps optimize the bias--variance tradeoff, ensuring uniformly better (asymptotic) risk of the averaging estimator over the standard IVQR estimator under certain conditions.
My implementation involves many computational considerations and builds on recent developments in the quantile literature.
Second, I propose a bootstrap method 
that directly averages among IVQR, quantile regression, and two-stage least squares estimators. 
More specifically, I find the optimal weights in the bootstrap world and then apply the bootstrap-optimal weights to the original sample.
The bootstrap method is simpler to compute and generally performs better in simulations, but it lacks the formal uniform dominance results of \citet{ChengLiaoShi2019}.
Simulation results demonstrate that 
in the multiple-regressors/instruments case, 
both the GMM averaging and bootstrap estimators have uniformly smaller risk than the IVQR estimator
across data-generating processes (DGPs) with all kinds of combinations of different endogeneity levels and heterogeneity levels. 
In DGPs with a single endogenous regressor and instrument,
where averaging estimation is known to have least opportunity for improvement, 
the proposed averaging estimators outperform the IVQR estimator in some cases but not others.

% \vspace*{1cm}
\vspace*{0.5\baselineskip}

\noindent {\em Keywords}: model selection, model averaging.

\noindent {\em JEL classification codes:} C21, C26
% C31=multi eqn QR
% C32=multi eqn "dynamic" QR
% C36=multiple equation IV
% C13, C23. 
% C26=single eqn IV
% C21=single eqn QR
% C22=single eqn "dynamic" QR
% C87=econometric "software"? we have R code...
\end{abstract}

\newpage

\onehalfspacing

\section{Introduction}

% \textbf{General background}  

Since \citeposs{KoenkerBassett1978} seminal work, quantile regression (QR) has become a useful tool to capture unobserved heterogeneous effects in policy analysis or program evaluation.
However, in practice endogeneity commonly results in inconsistent estimates of the conventional quantile regression.
To address this endogeneity issue in quantile regression, 
\citet{ChernozhukovHansen2005} propose an instrumental variables method to identify the structural quantile function or (conditional) quantile treatment effects.
Subsequently, \citet{ChernozhukovHansen2006} and then many others have provided estimation methods for the instrumental variables quantile regression (IVQR) model.
Although there are important distinctions, for now I simply refer to ``the'' IVQR estimator.

Although the IVQR estimator has desirable large-sample properties like consistency and asymptotic normality,
it can have imprecise estimates due to its large finite-sample variance,\footnote{It is possible that the IVQR variance is infinite, similar to IV \citep{Kinal1980}.
More technically, the word ``variance'' in this paper means a different measure of dispersion/spread that is never infinite, like trimmed/truncated variance or interquartile range.}
just as the two-stage least squares (2SLS) estimator may have a substantial finite-sample dispersion.

The goal of this paper is to propose new estimation methods to improve the finite-sample estimation efficiency of IVQR.
I propose two such methods. 
The first applies the averaging generalized method of moments (GMM) framework of  \citet*{ChengLiaoShi2019} (hereafter \citetalias{ChengLiaoShi2019}) to IVQR. 
The second is a bootstrap averaging method,
which uses the bootstrap world's optimal averaging weights on IVQR, QR, and 2SLS estimators.

This paper has four main contributions.
First, beyond extending the 2SLS/OLS averaging to the analogous IVQR/QR averaging, this paper shows it is also helpful to include 2SLS in averaging to improve on IVQR.
Second, the implementation of \citetalias{ChengLiaoShi2019}/GMM averaging is not trivial for IVQR.
In particular, it needs two-step GMM estimation and nonparametric Jacobian matrix estimation, which are discussed or extended in this paper (and my code).
Third, the bootstrap IVQR/QR/2SLS averaging method is new and outperforms \citetalias{ChengLiaoShi2019}/GMM averaging in simulations, 
although it lacks theoretical results like in \citetalias{ChengLiaoShi2019}.
Fourth, the simulations compare various methods in a wide range of data-generating process (DGP) types (varying endogeneity, heterogeneity, distributional shapes, etc.),
although they are still limited.

% \noindent\textbf{Averaging estimation }
\paragraph{Averaging estimation in this paper}
% 
% Recent studies on averaging estimation (or shrinkage estimation) provide inspiration for this paper to improve finite-sample estimation efficiency in IVQR. 
My first method follows the framework of \citetalias{ChengLiaoShi2019}.
\citetalias{ChengLiaoShi2019} first define a benchmark conservative GMM estimator based on a set of valid moments.
% The model is identified based on this set of ``conservative moments.''
Then they add an additional set of possibly misspecified moments to the conservative moments to obtain the ``aggressive'' GMM estimator.
On one hand, the aggressive GMM estimator might be more biased than the conservative GMM estimator since additional moments might not be valid.
On the other hand, by adding this additional information, the aggressive GMM estimator might significantly reduce variance and overall mean squared error (MSE) compared to the conservative GMM estimator.
\citetalias{ChengLiaoShi2019} propose an optimal weight formula to average the conservative and aggressive GMM estimators.
They show that under certain conditions the averaging estimator uniformly dominates the conservative GMM estimator in asymptotic risk.
% by investigating the upper and lower bounds of the difference in asymptotic risk.

This paper applies \citetalias{ChengLiaoShi2019} to IVQR as follows.
The conservative GMM estimator uses only the IVQR moments.
Two types of additional moments are proposed: 
the conventional QR moments and the 2SLS slope moments (excluding the intercept term). 
The motivation for proposing these two additional moments comes from the bias--variance tradeoff. 
When there is not much endogeneity in the model, 
the conventional QR estimator is little biased; 
meanwhile, the QR estimates usually have smaller variance than IVQR estimates. 
Therefore, introducing QR moments as the additional moments can reduce variance and maybe achieve an overall reduction in MSE, although it might increase the bias. 
When there is not much heterogeneity across quantiles, 
2SLS and IVQR at any quantile will usually have similar slope estimates, thus similar bias; 
at the same time, 2SLS estimates usually have a smaller variance than IVQR estimates, except in cases like a fat-tailed error term.
Therefore, using 2SLS slope moments as the additional moments can also improve efficiency by reducing the overall MSE. 
From another perspective, 2SLS can be viewed as a limiting case of smoothed IVQR estimation as the smoothing bandwidth goes to infinity \citep[\S2.2]{KaplanSun2017}.
These two types of additional moments yield two types of aggressive estimators to average with the IVQR estimator.
I apply \citetalias{ChengLiaoShi2019}'s empirical averaging weight formula to obtain the averaging estimator. 
Simulation results demonstrate the averaging estimator has MSE uniformly below or equal to that of the IVQR estimator at all quantiles under certain uniform dominance conditions.

Besides the \citetalias{ChengLiaoShi2019} GMM averaging method, 
I propose a new bootstrap averaging method. 
The bootstrap averaging method averages the IVQR, QR, and 2SLS estimators in the bootstrap world with a grid of fixed weights, 
and picks the weight that minimizes the robust root mean squared error (robust RMSE, or rRMSE) as the bootstrap optimal weight. 
This optimal weight is then used to average the IVQR, QR, and 2SLS estimators in the original sample to obtain the bootstrap averaging estimator.

The motivation for the bootstrap averaging method is the same as for the additional QR and 2SLS slope moments in the GMM averaging method.
The QR and 2SLS estimators might have smaller MSE than the IVQR estimator in some DGPs with little endogeneity or little heterogeneity, respectively.
The hope is that the bootstrap world is similar enough to the real world that the bootstrap method places more weight on the 2SLS and/or QR estimator 
when they perform better than the IVQR estimator, 
and put more weight on the IVQR estimator when 2SLS and QR have larger MSE than IVQR.

The bootstrap averaging method, although lacking theoretical results, has potential advantages over the \citetalias{ChengLiaoShi2019} averaging GMM method when applied to IVQR.
Bootstrap averaging is easier for computation since it avoids highly over-identified quantile GMM, which has a difficult criterion function to minimize.
Moreover, the bootstrap averaging estimator has better performance in simulation results. 
This is partly because the bootstrap can average among three different estimators, whereas \citetalias{ChengLiaoShi2019} only averages between two estimators.
It might also be because often in finite samples QR outperforms the aggressive IVQR-QR GMM estimator 
and 2SLS outperforms the aggressive IVQR-2SLS GMM estimator 
in situations when the additional moments provide significant variance reduction.

% \noindent\textbf{Literature }
\paragraph{Literature}~

% \noindent
\textit{Averaging estimation }
Averaging estimation originates from Stein-like shrinkage estimation and has recently been reinvestigated and extended by many authors to improve estimation efficiency.
\Citet{JamesStein1961} propose an estimator that shrinks the least squares estimator toward zero.
The James--Stein estimator can be viewed as in the class of averaging estimation, 
in that it averages the least squares estimator and zero.
Under certain conditions, the James--Stein estimator dominates the least squares estimator in terms of a strict reduction of MSE.
A limitation of the James--Stein estimator is it restricts to the normal distribution. 
\Citet{Maasoumi1978} applies the idea of Stein-like shrinkage estimation to simultaneous equations by averaging the three-stage least squares (3SLS) estimator with the least squares (LS) estimator.
He shows while the 3SLS and 2SLS estimators have no finite moments (therefore, unbounded risk) in some cases, the averaging estimator has finite moments (therefore, bounded risk).
Recently, \citet{Hansen2017} applies averaging estimation to a single equation instrumental variables model. 
He averages 2SLS and OLS estimators with weight depending on the statistic for testing exogeneity in the model.
He shows that the averaging estimator uniformly dominates 2SLS in asymptotic risk
% finite sample MSE, 
when the number of endogenous regressors is greater than two.
It extends the James--Stein framework to a general error term distribution but still limits to homoskedasticity.
The averaging GMM method of \citetalias{ChengLiaoShi2019} works in a general framework with no normality or homoskedasticity restriction.
\citetalias{ChengLiaoShi2019} provide supporting simulations to show the uniform dominance results in their averaging GMM framework can hold with both Gaussian and non-Gaussian errors.

% \noindent
\textit{IVQR estimation and computation }
Since \citeposs{ChernozhukovHansen2005} seminal work on IVQR identification, many researches focus on IVQR estimation and computational efficiency.
The challenging computation of IVQR estimation comes from the non-differentiable IVQR moments and the non-convex GMM objective function. 
%%%%%%%%%%%%%%%%%%%%%
\Citet{ChernozhukovHansen2006} first propose a two-step inverse quantile regression method with a grid search on the endogenous regressors' coefficients to compute the IVQR estimator.
This IVQR estimator is asymptotically equivalent to a GMM estimator.
However, 
its computation time scales poorly with the number of endogenous regressors.
\Citet{ChenLee2018} propose an exact GMM estimator using mixed-integer quadratic programming, but it also has long computation time.
\Citet{Zhu2019} proposes a $k$-step correction approach using mixed integer linear programming.
This estimator is asymptotically equivalent to the GMM estimator and has computational efficiency in models with multiple endogenous regressors.
\Citet{KaidoWuthrich2019} decompose IVQR estimation into conventional QR sub-problems.
\Citet{KaplanSun2017} propose smoothing the IVQR moments, which helps both estimation and computational efficiency. 
However, their results only apply to an exactly-identified linear model and iid sampling. 
\Citet{deCastroGalvaoKaplanLiu2019} propose a GMM estimator using the same smoothed IVQR moments, 
extending results to over-identified nonlinear models and dependent data. 
% This estimator also has good computational efficiency. 
% 
I use the \citet{deCastroGalvaoKaplanLiu2019} in this paper because \citetalias{ChengLiaoShi2019} require a two-step GMM estimator.

% To apply the \citeposs{ChengLiaoShi2019} averaging GMM framework to instrumental variables quantile regression,
% this paper use the most recent \citeposs{deCastroGalvaoKaplanLiu2019} method to obtain the GMM estimator in IVQR model.
% In principle, other IVQR estimation or computation method can be used to obtain the IVQR estimator as well. 
% In practice, however, these other IVQR estimation method may not suit for the \citet{ChengLiaoShi2019} averaging GMM framework. 
% One reason is that \citet{ChengLiaoShi2019} requires an exact GMM estimator rather than an estimator that has asymptotic equivalence to the GMM estimator. 
% Another reason is that when the conservative model is over-identified, the GMM weighting matrix matters. 
% Different from previous studies, 
% most of which are focusing on improving the computation efficiency, 
% this paper focus on contributing to improve the estimation efficiency in instrumental variables quantile regression.

% \noindent
\textit{Other identification methods in quantile regression with endogeneity }
There are three main approaches to address endogeneity in quantile regression.
They are based on different sets of assumptions appropriate for different empirical settings; none is strictly ``better'' or ``worse.''
First, as noted, \citet{ChernozhukovHansen2005} use an instrumental variables approach to identify the structural quantile function and (conditional) quantile treatment effects. 
Second, the local quantile treatment effect (LQTE) model \citep{AbadieEtAl2002} identifies the (conditional) quantile treatment effect for the sub-population of ``compliers'' in the binary treatment variable case, 
parallel to the local average treatment effect (LATE) model.
Third, triangular models and control functions have been used by \citet{Chesher2003}, \citet{Lee2007}, and others.
% use control function approach in the linear triangular simultaneous equation with conditional quantile restrictions to achieve identification of parameters. 
% \citet{Chesher2003} provide identification results in a nonseparable triangular structural model. 
% 
See \citet[\S9.2.5]{ChernozhukovHansenWuthrich2017} and \citet[\S10.5]{MellyWuthrich2017} for more detailed comparisons with additional references.
Different from these listed studies that focus on identification, 
I focus on improving estimation efficiency within the IVQR framework of \citet{ChernozhukovHansen2005}.

% \noindent\textbf{Outline }
\paragraph{Outline}
The organization of this paper is as follows.
\Cref{sec:setup} presents the model setup. 
\Cref{sec:est:avg} presents the GMM averaging estimation method. 
\Cref{sec:bs} presents the bootstrap averaging estimation method.
% \Cref{sec:bandwidth} discusses details of smoothing bandwidths for GMM averaging.
\Cref{sec:sim} presents simulation results. 
\Cref{sec:conclusion} concludes.
Proofs and additional computational details are collected in the appendix.
Code is provided for all methods and simulations.

% \noindent\textbf{Notations }
\paragraph{Notation}
For scalar/vector/matrix variable formatting, $\vecf{X}$ is a random vector with elements $X_j$, $\vecf{x}$ is a non-random vector with elements $x_j$, $Y$ and $y$ are random and non-random scalars, respectively, and $\matf{M}$ and $\matf{m}$ are random and non-random matrices with row $i$, column $j$ elements $M_{ij}$ and $m_{ij}$.
For vector/matrix multiplication, all vectors are treated as column vectors.
Also, $\Ind{\cdot}$ is the indicator function, 
$\E(\cdot)$ expectation, 
$\Q_\tau(\cdot)$ the $\tau$-quantile, 
$\Pr(\cdot)$ probability, 
and $\NormDist(\mu,\sigma^2)$ the normal distribution. 
% For vectors, $\norm{\cdot}$ is the Euclidean norm. 
Acronyms used include those for 
% central limit theorem (CLT), 
% continuous mapping theorem (CMT), 
instrumental variables (IV), 
two-stage least squares (2SLS), 
generalized method of moment (GMM),
[smoothed] instrumental variables quantile regression ([S]IVQR), 
% mean value theorem (MVT), 
probability density function (PDF), 
cumulative distribution function (CDF), 
quantile regression (QR), 
conditional quantile function (CQF),
data-generating process (DGP), 
[root] mean squared error ([R]MSE), 
asymptotic mean squared error (AMSE),
and interquantile range (IQR).
% uniform (weak) law of large numbers (ULLN), and 
% weak law of large numbers (WLLN). 

\section{Model Setup}
\label{sec:setup}

We are interested in estimating the parameter  $\vecf{\beta}_{0\tau}\in\mathcal{B}\subseteq\R^{d_\beta}$ in a linear quantile model
that uniquely satisfies the conditional probability
\begin{equation}\label{eqn:CH05-cond-quantile}
\tau 
= \Pr\left(   Y_{i} \leq \vecf{X}_{i}' \vecf{\beta}_{0\tau}  \mid \vecf{Z}_{i} \right)
,
\end{equation}
where 
$\tau \in (0,1) $ is a given quantile level;  
$ Y_{i}$ is the outcome variable; 
$\vecf{X}_i= \big( \vecf{X}_{exog, i}'  , \vecf{D}_{i}'  \big)'  \in\mathcal{X}\subseteq\R^{d_X}$ is the vector of regressors;
$ \vecf{D}_{i}$ is the vector of potentially endogenous explanatory variables; 
$\vecf{X}_{exog, i}$ is the vector of exogenous explanatory variables;
and $\vecf{Z}_i=\big( \vecf{X}_{exog, i}'  , \vecf{Z}_{excl, i}'  \big)'  \in\mathcal{Z}\subseteq\R^{d_Z}$ is the full vector of instruments, which contains both the exogenous explanatory variables $\vecf{X}_{exog, i}$ and excluded instruments $\vecf{Z}_{excl, i}$. 
The conditional probability in \cref{eqn:CH05-cond-quantile} comes from \citeposs{ChernozhukovHansen2005} identification result in their Theorem 1,
which states conditions under which the $\vecf{\beta}_{0\tau}$ satisfying \cref{eqn:CH05-cond-quantile} is a structural parameter or includes a (conditional) quantile treatment effect parameter.

For intuition about identification, suppose a structural random coefficient model
\begin{align}  \label{eqn:random-coeffi}
Y & = \vecf{X}' \vecf{\beta}(U)
,
\end{align}
with unobserved scalar $U \sim \UnifDist(0,1)$, and $\vecf{\beta}(\cdot)$ is the vector-valued function of $U$ satisfying the monotonicity condition that 
$\vecf{X}' \vecf{\beta}(u)$ is increasing in $u$ 
for any $\vecf{X}=\vecf{x}$ in its support $\mathcal{X}$.
Monotonicity implies that $Y\le\vecf{X}'\vecf{\beta}(\tau)$ is equivalent to $U\le\tau$.
If additionally $\vecf{X}$ is exogenous with $\vecf{X} \independent U$, then 
$\Pr( Y \le \vecf{X}' \vecf{\beta}(\tau) \mid \vecf{X} ) = 
\Pr( U \le \tau \mid \vecf{X})$ by monotonicity, $\Pr( U \le \tau \mid \vecf{X})=\Pr(U\le\tau)$ by exogeneity, and $\Pr(U\le\tau)=\tau$ by the normalization $U\sim\UnifDist(0,1)$.
If $\vecf{X}$ and $U$ are dependent, 
then we can instead condition on instrument $\vecf{Z}$ satisfying $\vecf{Z}\independent U$, yielding \cref{eqn:CH05-cond-quantile} where $\vecf{\beta}_{0\tau} \equiv \vecf{\beta}(\tau)$.

The conditional probability in \cref{eqn:CH05-cond-quantile} can be written as a conditional expectation,
\begin{equation}\label{eqn:condi-moments} 
% \vecf{0}_{d_Z\times 1} = 
0 =
\E\left[  \Ind{ Y_{i} - \vecf{X}_{i}' \vecf{\beta}_{0\tau} \le 0} - \tau \mid \vecf{Z}_i \right] .
\end{equation}
By the law of iterated expectations, \cref{eqn:condi-moments} implies the unconditional moments
\begin{equation}\label{eqn:moments} 
\vecf{0}_{d_Z\times 1} = 
\E\left\{ \vecf{Z}_i \left[ \Ind{Y_{i} - \vecf{X}_{i}' \vecf{\beta}_{0\tau} \le 0} - \tau \right] \right\} .
\end{equation}

Most IVQR estimators use \cref{eqn:moments}.
In principle, the $\vecf{Z}_i$ in \cref{eqn:moments} could be replaced by the (nonparametrically estimated) optimal instruments based on \cref{eqn:condi-moments}.
However, this is not the focus of this paper.
This paper estimates structural parameter $\vecf{\beta}_{0\tau} $ 
based on the unconditional moments in \cref{eqn:moments}.

We introduce a few notations. 
Define the population map $\vecf{M}_{1} \colon \mathcal{B} \times \mathcal{T} \mapsto \R^{d_Z}$ as
\begin{align}\label{eqn:def-M}
\vecf{M}_{1}(\vecf{\beta},\tau) 
&\equiv \E\left[ \vecf{g}_{1i}(\vecf{\beta},\tau) \right] 
,\\ \label{eqn:def-gu}
\vecf{g}_{1i}(\vecf{\beta},\tau)
&\equiv \vecf{g}_{1}( Y_i,\vecf{X}_{i},\vecf{Z}_i,\vecf{\beta},\tau)
 \equiv \vecf{Z}_i\left[ \Ind{ Y_{i} - \vecf{X}_{i}' \vecf{\beta} \le 0} - \tau \right] 
,
\end{align}
where 
the subscript $1$ denotes the original IVQR moments, used as the ``conservative'' moments in GMM averaging later. 
Population moments \cref{eqn:moments} imply
\begin{equation}\label{eqn:def-beta0}
 \vecf{M}_{1}(\vecf{\beta}_{0\tau},\tau) =\vecf{0}. 
\end{equation}
Denote the sample moments as
\begin{equation}\label{eqn:def-M-hat}
    \hat{\vecf{M}}_{1n}\left(\vecf{\beta},\tau\right)
\equiv \frac{1}{n}\sum_{i=1}^{n} \vecf{g}_{1i}(\vecf{\beta},\tau) .
\end{equation}

% In principle, any method to estimate the parameter in  \cref{eqn:CH05-cond-quantile} can be viewed as the IVQR estimation and the estimators are called the IVQR estimators. 
% This paper, more specifically, use the sample moments \cref{eqn:def-M-hat} and apply the GMM method to obtain the IVQR estimator. 

% The next section is to provide an averaging estimation method to improve accuracy on the IVQR estimator. 

\section{Averaging {GMM} Estimation for {IVQR}}
\label{sec:est:avg}

This section describes the first averaging estimation method. 
It follows the averaging GMM framework in \citetalias{ChengLiaoShi2019}.
It contributes to provide two types of additional moments 
and apply the averaging estimation method to the instrument variables quantile regression.

\citetalias{ChengLiaoShi2019} define a ``conservative'' GMM estimator based on a set of valid moments
and an ``aggressive'' GMM estimator based on the set of moments that combines the valid moments and additional, possibly misspecified moments.
\citetalias{ChengLiaoShi2019} average between these two GMM estimators
and show that under certain conditions the averaging GMM estimator uniformly dominates the conservative GMM estimator.

\subsection*{Step 1: conservative moments and conservative estimator}

To apply the \citetalias{ChengLiaoShi2019} averaging GMM framework to the IVQR model, 
I use \citeposs{deCastroGalvaoKaplanLiu2019} smoothed two-step GMM method for computation to obtain the conservative GMM estimator.
It replaces the IVQR moment functions
\begin{equation}
\vecf{g}_{1i}(\vecf{\beta},\tau)
= \vecf{Z}_i\left[ \Ind{ Y_{i} - \vecf{X}_{i}' \vecf{\beta} \le 0} - \tau \right]
\end{equation}
with smoothed moment functions
\begin{equation}\label{Smoothed-moments}
\vecf{g}_{1i}(\vecf{\beta},\tau)
=  \vecf{Z}_i \bigl[ \tilde{I}\bigl( \big(-Y_{i} + \vecf{X}_{i}' \vecf{\beta}_{0\tau} \big) / h_n \bigr) - \tau \bigr] ,
\end{equation}
where $h_n$ is the sequence of smoothing bandwidth and $\tilde{I}(\cdot)$ is the same smoothed indicator function as defined in \citet{deCastroGalvaoKaplanLiu2019}. 
The conservative GMM estimator is thus
\begin{equation}\label{eqn:def-est-beta1}
\hat{\vecf{\beta}}_{\mathrm{1}}
= \argmin_{\vecf{\beta} \in \mathcal{B} } 
    \hat{\vecf{M}}_{1n}(\vecf{\beta},\tau)'  \check{\matf{\Sigma}}^{-1}_{1}
    \hat{\vecf{M}}_{1n}(\vecf{\beta},\tau),
\end{equation}
where $\check{\matf{\Sigma}}_{1}$ is some consistent estimator of the (long-run) variance of the IVQR sample moments from the first step.
With iid sampling, 
% the efficient weighting matrix in GMM estimation is the inverse of 
\begin{equation}\label{eqn:Omega-bar}
\check{\matf{\Sigma}}_{1}=\frac{1}{n}\sum_{i=1}^{n} \vecf{g}_{1i}(\check{\vecf{\beta}},\tau) \vecf{g}_{1i}(\check{\vecf{ \beta}},\tau)' 
- \hat{\vecf{M}}_{1n}( \check{\vecf{\beta}}, \tau )  \hat{\vecf{M}}_{1n}( \check{\vecf{\beta}}, \tau )',
\end{equation}
where $\check{\vecf{\beta}}$ is some initial consistent estimator of $\vecf{\beta}_{0\tau}$. 
Specifically, my $\check{\vecf{\beta}}$ is \citeposs{deCastroGalvaoKaplanLiu2019} method of moments estimator with instrument vector equal to the linear projection of $\vecf{X}$ onto $\vecf{Z}$.

\subsection*{Step 2: additional moments}

In addition to the conservative moments $\E[\vecf{g}_{1i}(\vecf{\beta},\tau)]=\vecf{0}_{d_Z\times1}$, we have ``additional moments'' based on $\vecf{g}^{*}( \vecf{\beta}, \tau)$ that might or might not be valid. 
If the additional moments are valid 
(i.e., $\vecf{0}_{r*} = \E[ \vecf{g}^{*}(\vecf{\beta}_{0\tau}, \tau) ]$),
then adding additional valid information to estimation will reduce variance and improve efficiency of the IVQR estimator. 
If the additional moments are misspecified 
(i.e., $\vecf{0}_{r*} \ne \E[ \vecf{g}^{*}(\vecf{\beta}_{0\tau}, \tau) ]$), 
then combining these invalid additional moments with the original valid IVQR moments will result in a biased aggressive GMM estimator.
However, the misspecified moments could still be helpful, 
if as a result the aggressive GMM estimator
has a large reduction in variance, and an overall reduction in MSE.

In principle, 
the averaging estimator can always be at least as good as the conservative IVQR estimator, 
if it puts zero weight on the aggressive estimator 
when the additional moments are misspecified severely enough. 
In practice, 
this desirable result may not exactly hold due to the estimation error of the empirical weight in finite samples.

I propose two different types of additional moments. 
The first type is the conventional QR moments.
If the structural model is a linear conditional $\tau$-quantile function (CQF), 
then
\begin{equation} 
\Pr( Y_{i} - \vecf{X}_{i}' \vecf{\beta}_{0\tau} \le 0 \mid \vecf{X}_{i}) = \tau . 
\end{equation}
This conditional quantile restriction can be rewritten as a conditional expectation, 
\begin{equation}\label{eqn:condi-QR}
% \vecf{0}_{d_X\times 1} = 
0 =
\E\left\{  \Ind{ Y_{i} - \vecf{X}_{i}' \vecf{\beta}_{0\tau} \le 0} - \tau \mid \vecf{X}_{i} \right\} .
\end{equation}
Using the law of iterated expectations, \cref{eqn:condi-QR} implies certain unconditional QR moments
that are also the first-order condition of the population minimization problem of the expectation of the check function. 
Writing these unconditional QR moments in two separate parts,
\begin{align}\label{eqn:excludX-moments} 
\vecf{0}_{d_{X exog}\times 1} &= 
\E\left\{ \vecf{X}_{exog, i} \left[ \Ind{Y_{i} - \vecf{X}_{i}' \vecf{\beta}_{0\tau} \le 0} - \tau \right] \right\} ,
\\
% \end{equation}
% and 
% \begin{equation}
\label{eqn:QR-moments} 
\vecf{0}_{d_D\times 1} &= 
\E\left\{ \vecf{D}_{i} \left[ \Ind{Y_{i} - \vecf{X}_{i}' \vecf{\beta}_{0\tau} \le 0} - \tau \right] \right\} .
\end{align}

I use \cref{eqn:QR-moments} but not \cref{eqn:excludX-moments} for the additional moments.
The first part \cref{eqn:excludX-moments} is already contained in the IVQR population moments \cref{eqn:moments}, therefore the ``conservative moments,'' as the exogenous regressors $\vecf{X}_{exog}$ are contained in the full instruments $\vecf{Z}$.
I use the second part \cref{eqn:QR-moments}, 
the QR moments with potentially endogenous regressors $\vecf{D}$, 
as the ``additional moments'' to compute the ``aggressive moments,'' the ``aggressive estimator,'' and the ``averaging estimator.''
(For computation, I again use a smoothed version of the moment function, like \cref{Smoothed-moments} but with $\vecf{D}_i$ replacing $\vecf{Z}_i$.)
I call this the IVQR-QR type of averaging.

% The additional moments in \cref{eqn:QR-moments} are valid when the model has no endogeneity
% (i.e., the regressors $\vecf{D}_{i}$ are exogenous).
% When the regressors $\vecf{D}_{i}$ are endogenous, 
% the additional moments are invalid, 
% and 
% the conventional QR estimates based on \cref{eqn:excludX-moments} and \cref{eqn:QR-moments} are inconsistent and biased.

% Denote the QR type of additional moments as 
% \begin{equation}\label{eqn:def-g-u-star-QR}
% \vecf{g}_{i}^{u*}( \vecf{\beta}, \tau) 
% \equiv \vecf{D}_{i}\left[ \Ind{Y_{i} - \vecf{X}_{i}' \vecf{\beta} \le 0} - \tau \right] .
% \end{equation}

% The smoothed version is 
% \begin{equation}\label{eqn:def-g-star-QR}
% \vecf{g}_{i}^{*}( \vecf{\beta}, \tau) 
% \equiv \vecf{D}_{i}\left[ \tilde{I}\bigl( \left(-Y_{i} + \vecf{X}_{i}' \vecf{\beta} \right) / h_n \bigr) - \tau \right] ,
% \end{equation}
% where $h_n$ is the same smoothing bandwidth used to compute the IVQR estimator. 

The motivation for using the conventional QR moments as the additional moments is that when there is little endogeneity 
(i.e., the additional QR moments are only slightly misspecified),
the QR estimator is only a little biased.
Meanwhile, the QR estimator usually has lower variance than the IVQR estimator.
Therefore the aggressive estimator has a lower variance than the IVQR estimator.
When the DGP has severe endogeneity 
(i.e., the additional QR moments are severely misspecified),
the QR estimator and aggressive estimator have larger bias than the IVQR estimator.
Ideally, more weight is put on the aggressive estimator when there is little endogeneity, 
and more weight is put on the conservative IVQR estimator when there is much endogeneity.

%%%%%%%%%%%%%%%%%%%%
As an alternative to the QR type of additional moments, I propose using the 2SLS slope moment functions
\begin{equation}\label{eqn:def-g-star-2SLS}
\vecf{g}_{i}^{*}( \vecf{\beta}, \tau) 
\equiv (\vecf{Z}_{-1,i}-\bar{\vecf{Z}}_{-1}) \left( Y_{i} - \vecf{X}_{i}' \vecf{\beta} \right) ,
\end{equation}
where $\vecf{Z}_{-1}$ is the instruments without the intercept term, i.e., $\vecf{Z}=(1,\vecf{Z}_{-1})$.
% and the subscript ``$-1$'' denotes ``excluding the intercept term''.
We use the demeaned instruments $(\vecf{Z}_{-1,i} - \bar{\vecf{Z}}_{-1})$ in these additional moments, 
where $\bar{\vecf{Z}}_{-1} \equiv 1/n \sum_{i=1}^{n} \vecf{Z}_{-1,i}$, to represent the 2SLS slope condition $\Cov(\vecf{Z}_{-1}, Y-\vecf{X}'\vecf{\beta}_0)=\vecf{0}$.
% , where $U=Y-X' \beta_0$.
With the 2SLS slope moments as the additional moments, 
we obtain the corresponding aggressive and averaging estimators.
I call this the IVQR-2SLS type of averaging.

% \Citet[\S2.2]{KaplanSun2017} also connect the 2SLS and IVQR estimators.
% On one hand, the 2SLS estimator can be viewed as a smoothed IVQR estimator as the smoothing bandwidth goes to infinity.
% On the other hand, this agreement between the 2SLS estimator and the infinity-smoothing-IVQR estimator occurs only for the slope estimates, rather than for the intercept estimates. 
% Therefore, this paper uses the demeaned instruments in the 2SLS slope moments to avoid the different-intercept issue between the 2SLS and IVQR estimates.

The motivation for using the 2SLS slope moments as the additional moments is that when there is not much heterogeneity across quantiles, 
2SLS and IVQR at any quantile have similar slope (but not intercept) estimates, thus similar bias.
Moreover, the smoothed IVQR estimator of \citet{KaplanSun2017} has slope estimates approach the 2SLS slope estimates as the smoothing bandwidth goes to infinity (see their \S2.2).
Meanwhile, the 2SLS estimator usually has smaller variance than IVQR estimator,
especially at quantile levels that are away from the median and closer to the tails.%
\footnote{One exception to 2SLS having smaller variance is with fat-tailed error terms, which I investigate in the simulations in \cref{sec:sim}.}
For example, even in an empirically-based simulation with substantial heterogeneity (that causes 2SLS to be biased), Table 3 of \citet{KaplanSun2017} shows 2SLS to be more efficient than IVQR at four out of five quantile levels.
% In that table, the 2SLS estimator performs better at 4 out of 5 quantile levels shown, 
% despite bias due to large slope heterogeneity.
% This is supported by simulation results in Table 3 of \citet{KaplanSun2017}, 
% which is a realistic DGP based on empirical application.
% In that table, the 2SLS estimator performs better at 4 out of 5 quantile levels shown, 
% despite bias due to large slope heterogeneity.
Incorporating the additional 2SLS slope moments, the aggressive estimator usually has smaller variance than the conservative IVQR estimator.
The averaging estimator improves efficiency over the IVQR estimator 
by putting more weight on the aggressive estimator 
when it has large enough reduction in variance, 
and putting more weight on the IVQR estimator 
when the 2SLS slope moments are severely misspecified 
and the bias increase overwhelms the variance reduction.

Besides the two types of additional moments proposed in this paper, 
the same idea suggests using IVQR slope moments with other quantile levels to be the additional moments. 
The intuition is that 
when there is no or little heterogeneity, there will not be much difference in estimates across quantiles. 
This is related to the $L$-estimation method in conventional quantile regression by \citet{KoenkerPortnoy1986}.

\subsection*{Step 3: aggressive moments and aggressive estimator}

Incorporating the additional moments 
(either the potentially endogenous QR moments or the 2SLS slope moments) 
to the IVQR moments, 
we obtain the aggressive moments. Define the aggressive GMM estimator as 
\begin{equation}\label{eqn:def-est-beta2}
\hat{\vecf{\beta}}_{\mathrm{2}}
= \argmin_{\vecf{\beta} \in \mathcal{B} } 
    \hat{\vecf{M}}_{2n}(\vecf{\beta},\tau)'  \check{\matf{\Sigma}}^{-1}_{2}
    \hat{\vecf{M}}_{2n}(\vecf{\beta},\tau),
\end{equation}
where
\begin{equation}
% \begin{split}
\label{eqn:m2}
\hat{\vecf{M}}_{2n}(\vecf{\beta},\tau) 
\equiv \frac{1}{n}\sum_{i=1}^{n} \vecf{g}_{2i}( \vecf{\beta}, \tau) , 
\quad
\vecf{g}_{2i}( \vecf{\beta}, \tau) 
\equiv (\vecf{g}_{1i}( \vecf{\beta}, \tau)',\vecf{g}^{*}_{i}( \vecf{\beta}, \tau)' )' .
% \end{split}
\end{equation}
The aggressive moments contain both the conservative moments and additional moments.
The GMM weighting matrix $\check{\matf{\Sigma}}_{2}^{-1}$ is constructed in the same way as $\check{\matf{\Sigma}}_{1}^{-1}$,
except with aggressive moments (using $\vecf{g}_2$) instead of conservative moments (using $\vecf{g}_1$).
The subscript ``2'' denotes ``aggressive'' here.

\subsection*{Step 4: averaging estimator}

Following \citetalias{ChengLiaoShi2019}, define the averaging GMM estimator as
\begin{equation}\label{eqn:def-g-star-AVG}
\hat{\vecf{\beta}}_{\mathrm{AVG}}=(1-\hat{w})\hat{\vecf{\beta}}_{\mathrm{1}} + \hat{w} \hat{\vecf{\beta}}_{\mathrm{2}} .
\end{equation}
The empirical averaging weight $\hat{w}$ in \citetalias{ChengLiaoShi2019} is the sample analog of the optimal weight:
\begin{equation}\label{weight}
    \hat{w}=\frac{\mathrm{tr}(\matf{\Upsilon} (\hat{\matf{\Sigma}}_1-\hat{\matf{\Sigma}}_2))}{n(\hat{\vecf{\beta}}_1-\hat{\vecf{\beta}}_2)' \matf{\Upsilon} (\hat{\vecf{\beta}}_1-\hat{\vecf{\beta}}_2)+\mathrm{tr}(\matf{\Upsilon}(\hat{\matf{\Sigma}}_1-\hat{\matf{\Sigma}}_2))}
\end{equation} 
where 
\begin{equation}
    \hat{\matf{\Sigma}}_{k} = (\hat{\matf{G}}'_{k} \check{\matf{\Sigma}}^{-1}_{k}  \hat{\matf{G}}_{k})^{-1} , \ \text{for} \ k=1, 2.
\end{equation}
The middle part $\check{\matf{\Sigma}}_{k}$
is the estimator of the covariance matrix of the conservative sample moments and aggressive sample moments for $k=1$ and $k=2$, respectively; 
its inverse $\check{\matf{\Sigma}}^{-1}_{k}$ is the efficient two-step GMM weighting matrix.
Define the estimator of the Jacobian matrix of the conservative moments and aggressive moments ($k=1$ and $k=2$, respectively) as
\begin{equation}\label{eqn:est-G}
\hat{\matf{G}}_{k}
=
 \frac{1}{n} \sum_{i=1}^{n}
\pD{}{\vecf{\beta}' }  \vecf{g}_{ki}(\hat{\vecf{\beta}}_{1},\tau) .
\end{equation}
Both $ \check{\matf{\Sigma}}_{k}$ and $\hat{\matf{G}}_{k}$, for $k=1$ and $k=2$, are evaluated at the conservative GMM estimator $\hat{\vecf{\beta}}_1$.
Therefore, they are consistent regardless of misspecification of the additional moments.
The diagonal matrix $\matf{\Upsilon} $ measures how much we weight each element in the parameter vector in the (scaled) loss function $(\hat{\vecf{\beta}}-\vecf{\beta}_{0\tau})'\matf{\Upsilon}(\hat{\vecf{\beta}}-\vecf{\beta}_{0\tau})$, as in (3.9) of \citetalias{ChengLiaoShi2019}. 
When $\matf{\Upsilon}$ is the identity matrix, 
the expected loss (i.e., risk) becomes the sum of MSEs of each component of the estimator vector, 
$\sum_{j=1}^{d_X} \E[(\hat\beta_j-\beta_{0\tau j})^2]$.

Averaging estimation can be considered as a bias--variance tradeoff.
The averaging weights are crucial for good performance of the averaging estimator.
Too much weight on the aggressive estimator can result in large squared bias and large MSE.
Too little weight on the aggressive estimator can result in large variance and again large MSE.
The optimal averaging weight ideally balances the variance and the squared bias, minimizing MSE.
This is like the bandwidth choice for kernel regression and other nonparametric estimators.

\subsection{Bandwidth choice}
\label{sec:bandwidth}

% This section discusses details about choosing smoothing bandwidths in computing the averaging weight when applying the \citetalias{ChengLiaoShi2019} averaging GMM framework to IVQR.
% 
The \citetalias{ChengLiaoShi2019} empirical weight formula in \cref{weight} requires estimation of the conservative and aggressive parameters, covariance matrix, and (population) Jacobian matrix, which involves a conditional density that must be nonparametrically estimated.
The performance of the averaging GMM method heavily depends on whether the empirical weight is estimated accurately or not.

% \subsection{Bandwidth in estimating parameters and covariance matrix}

To estimate the parameters and the covariance matrix with the smoothed GMM approach of \citet{deCastroGalvaoKaplanLiu2019},
I use the smallest possible smoothing bandwidth, 
for two reasons.
First, \citet{KaplanSun2017} note smoothing can reduce MSE.%
\footnote{\Citet{KaplanSun2017} derive an MSE-optimal bandwidth for estimating the smoothed estimating equation IVQR estimator.
This bandwidth is typically much larger than the smallest possible smoothing that makes computation feasible.}
However, the goal of this paper is to demonstrate that 
it is the averaging method, instead of smoothing, that can improve estimation efficiency.
% Therefore, I use the smallest possible smoothing bandwidth
% to prevent the reducing MSE effect of smoothing method.
Second,
the \citetalias{ChengLiaoShi2019} averaging GMM framework assumes the conservative moments are valid and that the conservative estimator is not biased, but smoothing introduces some bias.
Using the smallest possible smoothing bandwidth makes the bias of the conservative IVQR estimator small enough to be negligible.

% Both the conservative and aggressive covariance matrix are evaluated at the conservative estimator. 
% The precise estimation of covariance matrix all depends on the that of the conservative estimator. 

For IVQR and even QR, 
the population Jacobian matrix involves a conditional PDF, 
which is commonly estimated by a nonparametric kernel estimator.
The usual kernel estimator is actually the same as the standard sample Jacobian when $\hat{\vecf{\beta}}$ is based on smoothed moments.

To precisely estimate the Jacobian matrix of the IVQR moments and of the aggressive moments, 
I modify \citeposs{Kato2012} bandwidth.
He provides the optimal bandwidth for conventional QR based on asymptotic mean squared error (AMSE).
He also provides a simplified version assuming independent, standard normal regression errors.
I extend his Gaussian plug-in bandwidth to allow for any error variance, and I adapt his formulas for QR to IVQR.
With QR, the vector $\vecf{X}$ acts as both the regressors and the instruments; for IVQR, the parts of the bandwidth formulas where $\vecf{X}$ acts as instruments are replaced by $\vecf{Z}$.
For the QR Gaussian plug-in, I build on \citeposs{Kato2012} results to prove in \cref{prop:bandwidth} that in the model 
% \begin{equation*}
$Y=\vecf{X}' \vecf{\beta}_0 + U$
% \ \text{with} \ 
with
$U \mid \vecf{X} \sim \NormDist( \mu , \sigma^2 )$
% \  \text{and}\ 
and
$\Q_{\tau}(U \mid \vecf{X})=0$,
% \end{equation*}
the sample analog of the AMSE-optimal bandwidth is
\begin{equation*}
\begin{split}
\hat{h}_{\mathrm{opt}}
&= n^{-1/5} \left( \frac{4.5\sum_{j,k=1}^{d_X} \left( n^{-1} \sum_{i=1}^{n} X_{ij}^2 X_{ik}^2 \right) }{\hat\alpha(\tau)\sum_{j,k=1}^{d_X} \left( n^{-1} \sum_{i=1}^{n} X_{ij} X_{ik}  \right)^2 } \right)^{1/5},
\\
\hat\alpha(\tau)
&\equiv \frac{1}{\hat\sigma^5} \left[ 1-\left(\Phi^{-1} \left( \tau \right) \right)^2     \right]^2 \phi \left(\Phi^{-1} \left( \tau \right) \right),
\end{split}
\end{equation*}
% \begin{equation*}
% \end{equation*}
where 
$\Phi(\cdot)$ and $\phi(\cdot)$ are the standard normal CDF and PDF.
Details are in \cref{sec:app-bandwidth}.

\section{Bootstrap method}
\label{sec:bs}

In addition to the theoretically-based averaging GMM estimation in \cref{sec:est:avg}, 
I also propose a bootstrap averaging estimator for IVQR.
Simulation performance of this bootstrap averaging estimator is in \cref{sec:sim}.

The bootstrap averaging estimator comes from the same motivating idea in \cref{sec:est:avg}.
That is, compared to the IVQR estimator, the QR or 2SLS estimator might have smaller variance, and overall smaller MSE, though larger bias.
This is especially true with only mild endogeneity or heterogeneity.

There are some differences with \cref{sec:est:avg} that may enable the bootstrap's better performance in simulations.
\Cref{sec:est:avg} considers averaging between two estimators (conservative and aggressive), whereas 
the bootstrap method averages among three estimators: IVQR, 2SLS, and QR.
% , the IV estimator, the QR estimator and the IVQR estimator directly. 
Computationally, bootstrap averaging is simpler and easier, not requiring two-step GMM with a large degree of overidentification.
% It performs better than the GMM averaging estimator in most of the DGP cases when both bootstrap averaging and GMM averaging estimators uniform dominate the IVQR estimator. 

The bootstrap method algorithm is as follows.
\begin{enumerate}[1.]
\item\label{step:orig} Compute the IVQR, 2SLS, and QR estimators 
$(\hat{\vecf{\beta}}_{\mathrm{IVQR}}, \hat{\vecf{\beta}}_{\mathrm{2SLS}}, \hat{\vecf{\beta}}_{\mathrm{QR}})$ 
using the original sample $(Y_i,  \vecf{X}_{endo,i}, \vecf{X}_{exog,i}, \vecf{Z}_{excl,i})$ for $i=1, \ldots, n$.
\item\label{step:bs-sample} %Use the sample 
% $(Y_i,  \vecf{X}_{endo,i}, \vecf{X}_{exog,i}, \vecf{Z}_{excl,i})$ 
% for $i=1, \ldots, n$
% to draw a 
Draw bootstrap sample $b$: 
$(Y_i^{(b)},  \vecf{X}_{endo,i}^{(b)}, \vecf{X}_{exog,i}^{(b)}, \vecf{Z}_{excl,i}^{(b)})$ 
for $i=1, \ldots, n$. 
\item\label{step:bs-est} Use the bootstrap sample 
% $(Y_i^{(b)},  \vecf{X}_{endo,i}^{(b)}, \vecf{X}_{exog,i}^{(b)}, \vecf{Z}_{excl,i}^{(b)})$ for $i=1, \ldots, n$, 
to compute the bootstrapped IVQR, 2SLS, and QR estimators 
$(\hat{\vecf{\beta}}_{\mathrm{IVQR}}^{(b)} , \hat{\vecf{\beta}}_{\mathrm{2SLS}}^{(b)} , \hat{\vecf{\beta}}_{\mathrm{QR}}^{(b)})$. 
\item\label{step:bs-avg} Compute averaging estimators in the bootstrap sample for a fixed grid of weights $(w_{1s}, w_{2s}, w_{3s})$ for $s=1, \ldots, \numnornd{13701}$, with $0 \le w_{1s}, w_{2s}, w_{3s} \le 1$ and $w_{1s}+w_{2s}+w_{3s}=1$; specifically,  $\hat{\vecf{\beta}}_{\mathrm{AVG},s}^{(b)} 
= w_{1s}\hat{\vecf{\beta}}_{\mathrm{IVQR}}^{(b)} 
+ w_{2s}\hat{\vecf{\beta}}_{\mathrm{2SLS}}^{(b)} 
+ w_{3s}\hat{\vecf{\beta}}_{\mathrm{QR}}^{(b)}$.
\item Repeat steps \ref{step:bs-sample}--\ref{step:bs-avg} for $b=1, \ldots, B$.
\item\label{step:bs-rmse} Treating $\hat{\vecf{\beta}}_{\mathrm{IVQR}}$ as the true population parameter in the bootstrap world, compute the RMSE of the $\numnornd{13071}$ different averaging estimators:
% % 
% That is, 
% given a fixed weights $(w_{1s}, w_{2s}, w_{3s})$, 
% (for $s=1, \ldots, 13701$),
\begin{equation*}
    \RMSE(\hat{\vecf{\beta}}_{\mathrm{AVG,s}}) = \sqrt{\frac{1}{B} \sum_{b=1}^{B} (\hat{\vecf{\beta}}_{\mathrm{AVG,s}}^{(b)} - \hat{\vecf{\beta}}_{\mathrm{IVQR}} )^2 }.
\end{equation*}
\item\label{step:w-star} Define the ``bootstrapped optimal weight'' as minimizing the RMSE in step \ref{step:bs-rmse}:
\begin{equation*}
(w_1^*, w_2^*, w_3^*)
\equiv (w_{1s^*}, w_{2s^*}, w_{3s^*})
,\quad
s^* \equiv \argmin_{s \in \{1,\ldots, 13701\} } \mathrm{RMSE}(\hat{\vecf{\beta}}_{\mathrm{AVG,s}}) .
\end{equation*}
\item Using the bootstrapped optimal weight in step \ref{step:w-star}, define the ``bootstrapped averaging estimator'' as
$\hat{\theta}_{\mathrm{AVG.BS}} 
\equiv w_1^*\hat{\vecf{\beta}}_{\mathrm{IVQR}} 
+ w_2^*\hat{\vecf{\beta}}_{\mathrm{2SLS}} 
+ w_3^*\hat{\vecf{\beta}}_{\mathrm{QR}}$.
\end{enumerate}

The computation time for the bootstrap averaging estimator is reasonably fast even with a large weight grid size like $\numnornd{13701}$. 
The actual IVQR, 2SLS, and QR estimators only need to be computed once per bootstrap draw; the $\numnornd{13701}$ is just arithmetic.
% The IVQR estimator in the bootstrap world is easier to compute than the ``aggressive'' GMM estimator in the GMM averaging framework. 
% The aggressive GMM estimator is obtained by searching for global minimum using simulated annealing algorithm in \citep{R.GenSA}. 
In my implementation, $\hat{\vecf{\beta}}_{\mathrm{IVQR}}$ and $\hat{\vecf{\beta}}_{\mathrm{IVQR}}^{(b)}$ are obtained by first projecting the regressors onto the instruments, 
and then solving smoothed versions of the exactly-identified equations, using standard numerical methods.
So the additional computation time of bootstrap averaging over GMM averaging is not too big.%
\footnote{For example, in simulation model 1, 
running 400 replications with 250 bootstrap draws per replication
to compute both the GMM averaging estimator and the bootstrap averaging estimator 
takes 2.75 times longer than running 400 replications with only GMM averaging.}

% Simulation results in \cref{sec:sim} show that the RMSE of bootstrapped averaging estimator is uniformly bounded by that of the IVQR estimator under certain sets of DGPs. 

\section{Simulations}\label{sec:sim}

This section reports simulation results showing the finite-sample performance of the two types of GMM averaging estimator and the bootstrap averaging estimator, relative to the IVQR estimator.

I consider three different simulation models, with many DGPs within each model.
Simulation model 1 presents a case where the uniform dominance condition does not hold.
Simulation models 2 and 3 present cases where all the averaging estimators uniformly dominate the IVQR estimator.
Simulation model 2 closely follows the \citetalias{ChengLiaoShi2019} simulation model (S2), but with modification to IVQR.
Since one important feature of quantile regression is to capture unobserved heterogeneity, simulation model 3 includes slope heterogeneity.
% whereas the simulation model 2 is a fixed slope model.

The performance of the averaging estimators are measured by their robust root mean squared error (robust RMSE) relative to that of the IVQR estimator.
The robust RMSE is computed by replacing the bias with median bias and replacing the standard deviation with interquartile range (IQR) divided by 1.349. 
It equals RMSE for normal distributions but is more robust to outliers. 
More specifically, robust RMSE (rRMSE) is computed as
\begin{equation}\label{eqn:rRMSE}
\rRMSE = \sqrt{
  \sum_{j=1}^{d_\theta} [ (median\  bias_j)^2 + (IQR_j/1.349)^2 ]
} ,
\end{equation}
where the median bias is the median of estimators among the $M$ replications minus the true parameter value,
\begin{equation}
     median\  bias_j=median_{m \in \{1,\ldots,M\} }(\hat{\theta}_{jm})-\theta_j,  
     \ \textrm{for} \ j=1, \ldots,d_\theta,
\end{equation}
the IQR is the difference between the $0.75$-quantile and $0.25$-quantile of the estimators among the $M$ replications,
\begin{equation}
IQR_j=0.75\text{quantile}_{m \in \{1 \ldots,M\} }(\hat{\theta}_{jm})-
     0.25\text{quantile}_{m \in \{1 \ldots,M\} }(\hat{\theta}_{jm}),  
     \ \textrm{for} \ j=1, \ldots,d_\theta,
\end{equation}
$M$ is the number of replications in total,
and $d_\theta$ is the number of parameters.

I normalize the rRMSE of the IVQR estimator to 1 and use the relative rRMSE to see how other estimators perform relative to the IVQR estimator.
That is, I divide all rRMSEs by the IVQR rRMSE to get the relative rRMSE.
If the relative rRMSE is above (below) 1, that means the estimator has larger (smaller) rRMSE than IVQR, i.e., it performs worse (better) than IVQR.
Ideally, an estimator 
has relative rRMSE (weakly) below 1 in all different DGPs. 
In this case, 
we say the proposed estimator uniformly dominates the IVQR estimator: regardless of the true DGP, it performs as good or better than IVQR.
Such an estimator is unambiguously preferred to IVQR.
An estimator may still be preferred even without uniform dominance, but it would depend on the user's preferences (e.g., the estimator's Bayes risk may be larger or smaller than IVQR's depending on the user's prior over DGPs).

\subsection{Simulation Model 1}

\subsubsection{Simulation DGP}
\label{subsec:simDGP1}

This Job Training Partnership Act-based simulation DGP generalizes DGP 1 in \citet{deCastroGalvaoKaplanLiu2019} to a class of DGPs that allow different combinations of endogeneity, heterogeneity, and fat-tail levels.
Consider a structural random coefficient model that describes the impact of a job training program $D_i$ on individual $i$'s earnings $Y_i$,
\begin{equation}
Y_i=\beta(U_i)+\gamma(U_i) \times D_i .
\end{equation}
The unobserved 
scalar $U_i \sim \UnifDist(0,1)$ incorporates other earnings determinants like ability.
The individual-specific intercept and slope depend on $U_i$, through functions $\beta(\cdot)$ and $\gamma(\cdot)$.

The slope function is set as 
% \begin{equation}
$\gamma(U_i)=100c_2 U_i^4$.
% \end{equation}
The nonnegative constant $c_2$ indicates the degree of treatment effect heterogeneity.
% Each $c_2$ value corresponds to a different DGP with a specific heterogeneity level.

The intercept function is set as
% \begin{equation}
$\beta(U_i)=60+Q(U_i)$,
% \end{equation} 
with two possible $Q(\cdot)$.
First, $Q(U_i)$ follows a $\chi^2_3$ distribution.
Second, $Q(U_i)$ follows a $t$-distribution with $c_3$ degrees of freedom.
% In the second case, the slope is set to be a constant zero, $c_2=0$ so $\gamma(U_i)=0$.
Each $c_3$ value represents a different fat-tail degree of earnings.
As $c_3$ increases, the earnings distribution becomes less fat-tailed, approaching a normal distribution.

The job training offer (eligibility) $Z_i$ is completely randomized with $\Pr(Z_i=1)=\Pr(Z_i=0)=1/2$.
This $Z_i$ is a valid instrument for $D_i$.
% considered as the instruments,
% \begin{equation}
% Z_i = \left\{
% \begin{array}{ll}
% 0 & \text{w.p. } 1/2 \\
% 1 & \text{w.p. } 1/2 .
% \end{array}
% \right.
% \end{equation}
% The binary variable $Z_i$ taking value 1 means the individual is offered, and value 0 not offered. 
% It takes probability $1/2$ in each case. 

The relationship between the randomized offer and the self-selection of participation in the program is described as a conditional probability,
\begin{equation}\label{SimModel1:ProbFn}
\Pr(D_i=1 \mid Z_i=1, U_i) = 0.5+c_1(U_i-0.5) ,
\quad
\Pr(D_i=1 \mid Z_i=0,U_i)=0 ,
\end{equation}
where $D_i=1$ if the individual actually takes the training and $D_i=0$ if not.
The constant $c_1\in[0,1]$ indicates the endogeneity level of $D_i$, with larger $c_1$ meaning more endogeneity.

Altogether I ran 242 DGPs (combinations of $c_1$, $c_2$, and $c_3$) in this simulation model 1.
Since many results are very similar, I selected 14 DGPs representative of different combinations of endogeneity, heterogeneity, and fat-tail level.
More details are in \cref{app:DGP-1}.

\subsubsection{Simulation results}

\Cref{tab:est:M1:tau=0.5} presents the finite-sample rRMSE of the estimators proposed in this paper relative to that of the IVQR estimator at the median, $\tau=0.5$.

\begin{table}[htbp]
    \centering\caption{\label{tab:est:M1:tau=0.5} Finite-sample relative rRMSE in JTPA-based simulation model 1, $\tau=0.5$.}
            \sisetup{round-precision=2,round-mode=places}
    \begin{threeparttable}
    \begin{tabular}{clllccclcclccc}
    \toprule
    &          &          & & \multicolumn{2}{c}{$\mathrm{IVQR.2SLS}$} &               & \multicolumn{2}{c}{$\mathrm{IVQR.QR} $} &               &  &     \\
    \cmidrule{5-6} \cmidrule{8-9}
    $\mathrm{DGP }$  &  $\mathrm{Endog}$ &  $\mathrm{Hetero}$ &  $\mathrm{Fattail}$ &  $\mathrm{AVG} $ & $\mathrm{AGG}$       & $\mathrm{2SLS}$ & $\mathrm{AVG}$ & $\mathrm{AGG}$         & $\mathrm{QR}$ & $\mathrm{BS} $   & $\mathrm{CON}$ \\
    \midrule
1          &     No &     No &     NA &  $\num[math-rm=\mathbf]{0.951569}$  & $\num{0.868586}$ & $\num{0.945289}$  & $\num[math-rm=\mathbf]{0.881229}$  &  $\num{0.764277}$ & $\num{0.602878}$ & $\num[math-rm=\mathbf]{0.724973}$ & $\num{0.313962}$   \\
2          &   Some &     No &     NA &  $\num[math-rm=\mathbf]{0.974909}$  & $\num{1.020234}$ & $\num{1.050889}$  & $\num[math-rm=\mathbf]{0.961879}$  &  $\num{1.050789}$ & $\num{1.059315}$ & $\num[math-rm=\mathbf]{0.969991}$ & $\num{0.319387}$   \\
3          &   Some &     No &     NA &  $\num[math-rm=\mathbf]{0.968523}$  & $\num{0.927275}$ & $\num{1.014554}$  & $\num{1.127433}$  &  $\num{1.739923}$ & $\num{1.766773}$ & $\num[math-rm=\mathbf]{0.992771}$ & $\num{0.321078}$   \\
4          &   Much &     No &     NA &  $\num[math-rm=\mathbf]{0.974521}$  & $\num{0.938316}$ & $\num{0.918147}$  & $\num{1.044576}$  &  $\num{2.563305}$ & $\num{2.711448}$ & $\num[math-rm=\mathbf]{0.986249}$ & $\num{0.337619}$   \\
5          &     No &   Some &     NA &  $\num{1.132561}$  & $\num{4.617217}$ & $\num{4.100392}$  & $\num[math-rm=\mathbf]{0.871133}$  &  $\num{0.722243}$ & $\num{0.569145}$ & $\num[math-rm=\mathbf]{0.917096}$ & $\num{1.392060}$   \\
6          &   Some &   Some &     NA &  $\num{1.129272}$  & $\num{5.039539}$ & $\num{4.449811}$  & $\num[math-rm=\mathbf]{0.915169}$  &  $\num{0.883725}$ & $\num{0.779233}$ & $\num[math-rm=\mathbf]{0.977591}$ & $\num{1.347838}$   \\
7          &   Much &   Some &     NA &  $\num{1.084652}$  & $\num{5.800216}$ & $\num{5.098121}$  & $\num{1.096158}$  &  $\num{2.811002}$ & $\num{2.947386}$ & $\num{1.143536}$ & $\num{1.202500}$   \\
8          &   Much &   Some &     NA &  $\num{1.082160}$  & $\num{7.635008}$ & $\num{6.867073}$  & $\num{1.041133}$  &  $\num{3.677603}$ & $\num{3.527626}$ & $\num{1.204501}$ & $\num{1.495871}$   \\
9          &   Some &   Much &     NA &  $\num{1.057804}$  & $\num{5.663275}$ & $\num{5.180694}$  & $\num[math-rm=\mathbf]{0.946151}$  &  $\num{0.860991}$ & $\num{0.830940}$ & $\num[math-rm=\mathbf]{0.949558}$ & $\num{2.627995}$   \\
10         &     No &     No &   Some &  $\num[math-rm=\mathbf]{0.949378}$  & $\num{0.998784}$ & $\num{1.174431}$  & $\num[math-rm=\mathbf]{0.891689}$  &  $\num{0.782840}$ & $\num{0.535217}$ & $\num[math-rm=\mathbf]{0.697091}$ & $\num{0.176753}$   \\
11         &     No &     No &   Much &  $\num{1.007083}$  & $\num{1.004834}$ & $\num{28.514470}$  & $\num[math-rm=\mathbf]{0.772184}$  &  $\num{0.694622}$ & $\num{0.526553}$ & $\num[math-rm=\mathbf]{0.776365}$ & $\num{0.194113}$   \\
12         &   Much &     No & Little &  $\num[math-rm=\mathbf]{0.968218}$  & $\num{0.885921}$ & $\num{0.929345}$  & $\num{1.087361}$  &  $\num{2.562201}$ & $\num{2.624060}$ & $\num{1.031583}$ & $\num{0.169276}$   \\
13         &   Much &     No &   Some &  $\num[math-rm=\mathbf]{0.968126}$  & $\num{1.082454}$ & $\num{1.265369}$  & $\num{1.029175}$  &  $\num{2.745966}$ & $\num{2.726339}$ & $\num{1.083136}$ & $\num{0.161290}$   \\
14         &   Much &     No &   Much &  $\num{1.009193}$  & $\num{1.034325}$ & $\num{29.684063}$  & $\num{1.030598}$  &  $\num{2.750409}$ & $\num{2.557961}$ & $\num{1.063432}$ & $\num{0.203994}$   \\
\bottomrule
            \end{tabular}
\begin{tablenotes}
 \item $\num{400}$ replications. $\num{50}$ bootstraps. Sample size is 1000.\\
 Columns 1--4 describe the DGP.
%  We normalize the rRMSE of IVQR estimator to 1. 
 Columns 5--11 report the relative rRMSE (i.e., rRMSE divided by the IVQR rRMSE) of the IVQR-2SLS averaging estimator, IVQR-2SLS aggressive estimator, 2SLS estimator, IVQR-QR averaging estimator, IVQR-QR aggressive estimator, QR estimator, and bootstrap averaging estimator, respectively. 
 Column 12 reports the absolute rRMSE of the IVQR estimator (whose relative rRMSE is 1 by definition).
 Columns 5, 8, and 11 are the three averaging estimators of primary interest.
 Columns 6, 7, 9, and 10 are reported for reference to better see the mechanisms of the averaging methods.
\end{tablenotes}
\end{threeparttable}
\end{table}
% total time elapsed: 12.30736 hours

\Cref{tab:est:M1:tau=0.5} shows no averaging estimator uniformly dominates IVQR in simulation model 1.
That is, no column has relative rRMSE below $1$ for each DGP.
But, they do not suffer much in the least favorable cases, even when 2SLS and/or QR (or their aggressive GMM counterparts) have relative rRMSE well above 1.

As expected, the IVQR-2SLS averaging estimator performs better than (or almost the same as) the IVQR estimator in the DGPs with no slope heterogeneity.
The efficiency gain is not much (up to 5\%). 
In the DGPs with slope heterogeneity, the relative rRMSE of IVQR-2SLS averaging is around $1.06$ to $1.13$,
while that of IVQR-2SLS aggressive estimator and 2SLS estimator is around 5 to 7.
This indicates that although IVQR-2SLS averaging estimator is worse than IVQR,
its rRMSE is much closer to 1 compared with the 2SLS estimator and aggressive estimator, due to putting most of the weight on the conservative estimator in these cases.
% The averaging estimator won't suffer too much in those unfavorable situation.

The IVQR-QR averaging estimator performs around 11--23\% better than IVQR estimator in the DGPs with no endogeneity and no heterogeneity.
As the endogeneity level of treatment variable increase, 
the QR moments become more misspecified, 
and the performance of the IVQR-QR averaging estimator becomes less favorable.
In the DGPs with much endogeneity, 
the IVQR-QR averaging estimator performs worse than the IVQR estimator by around 3--13\%.
Compared with the relative rRMSE of the QR estimator and that of the IVQR-QR aggressive estimator, 
the averaging estimator is much closer to 1 and saved from even worse performance by putting most of the weight on the conservative estimator.

The bootstrap estimator that averages among IVQR, 2SLS, and QR performs better than IVQR in all but one DGP in which either 2SLS or QR (or both) has relative rRMSE below 1.
In the DGPs in which both 2SLS and QR are worse than IVQR, 
the bootstrap averaging estimator performs worse than IVQR, but not by as much.
For example, in DGPs 7, 8, 13, and 14,
the bootstrap averaging estimator has relative rRMSE from 1.06 to 1.20, compared to 1.27 to 29.68 for 2SLS and 2.56 to 3.53 for QR.

\subsubsection{Results at other quantiles}

Results at other quantiles are shown in \cref{appdx:M1}.
It includes results at $\tau=0.2$ up to $\tau=0.8$.
% quantile level to give a whole picture of the performance and reveal the patterns of the three averaging estimators at different quantiles.

For IVQR-2SLS averaging, there is a ``magic quantile'' at which 2SLS performs well across all DGPs, regardless of heterogeneity.
In simulation model 1, the slope is $\gamma(U)=100 c_2 U^4$ with $U \sim \UnifDist(0,1)$.
The 2SLS population slope is $\E[100 c_2 U^4]=100c_2\E(U^4)=20c_2$, where $\E(U^4)=0.2$ is the fourth moment of a standard uniform distribution.
The $\tau$-IVQR slope is $100c_2\tau^4$; since $0.7^4=0.24$, the slope is very close to the 2SLS slope when $\tau=0.7$ ($24c_2$ vs.\ $20c_2$).
At a slightly smaller $\tau$ (not run in simulations), the IVQR and 2SLS population slopes are identical.
So among $\tau=0.2,0.3,\ldots,0.8$, the ``magic quantile'' is $\tau=0.7$ in simulation model 1.
This is true regardless of $c_2$; it depends only on how close $\tau^4$ is to $\E(U^4)$.
% The $0.7$-quantile IVQR populuation slope is very close to the IV population slope, 
% regardless of heterogeneity level of the DGP.
Simulation results confirm that at the $0.7$-quantile, 
2SLS has relative rRMSE much below 1 even in the DGPs 5--9
with some or much slope heterogeneity.
The IVQR-2SLS averaging estimator also has relative rRMSE below 1 in all the DGPs.

As $\tau$ moves away from the ``magic quantile,'' the 2SLS estimator begins to show the pattern that 
its relative rRMSE is below 1 in DGPs with no slope heterogeneity, 
but above 1 in the DGPs with some or much slope heterogeneity.
In DGPs 5--9 with slope heterogeneity,
the IVQR-2SLS aggressive estimator's relative rRMSE can be as large as 12,
but that of averaging estimator is still close to 1 (bounded by 1.19).
The averaging estimator's rRMSE is less than 20\% worse than IVQR in the least favorable situation.

For the IVQR-QR averaging estimator, the patterns at other quantiles are almost the same as at the median.
% There is no magic quantile for the IVQR-QR averaging. 

% For the IVQR-QR averaging estimator, 
% the pattern at different quantiles are almost the same.
% There is no magic quantile for the IVQR-QR averaging. 
% In DGP 1, DGP 5, DGP 10 and DGP 11, that no endogeneity is involved, 
% the IVQR-QR averaging estimator has smaller robust RMSE than IVQR estimator. 
% As the endogeneity level increase, 
% the IVQR-QR averaging estimator has larger robust RMSE. 
% Across all quantile levels and all DGPs, 
% the worse case for IVQR-QR averaging estimator 
% is its relative robust RMSE being 1.11, 
% whereas that of IVQR estimator are at least 2.23, and that of QR estimator are at least 2.52. 

\Cref{tab:S1:IVQR-BS:comparetau} presents the bootstrap averaging results at different quantiles.
Although bootstrap averaging does not uniformly dominate the conservative IVQR estimator, it offers significant improvement in a variety of DGPs with relatively small downside.
Of the $98$ relative rRMSEs reported in the BS columns in \cref{tab:est:M1:tau=0.5,tab:S1:IVQR-BS:comparetau}, only five are $1.15$ or above, with the largest being $1.29$, compared to $30$ relative rRMSEs below $0.85$.

\begin{table}[htbp]
    \centering\caption{\label{tab:S1:IVQR-BS:comparetau} Relative rRMSE of bootstrap averaging in simulation model 1.}
            \sisetup{round-precision=2,round-mode=places}
    \begin{threeparttable}
    \begin{tabular}{ccccScccrcccccScccc}
    \toprule
    % \cmidrule{3-5} \cmidrule{5-7} \cmidrule{8-11}  \cmidrule{13-16}  
    &                     \multicolumn{2}{c}{$\tau=0.2$}&   \multicolumn{2}{c}{$\tau=0.3$}            &&  \multicolumn{2}{c}{$\tau=0.4$} &   \multicolumn{2}{c}{$\tau=0.6$}             &&  
    \multicolumn{2}{c}{$\tau=0.7$} &    \multicolumn{2}{c}{$\tau=0.8$}           
      \\
    \cmidrule{2-3} \cmidrule{4-5} \cmidrule{7-8} \cmidrule{9-10} \cmidrule{12-13}  \cmidrule{14-15}  
    $\mathrm{DGP }$   
    & $\mathrm{BS}$ & $\mathrm{CON}$ & $\mathrm{BS}$ & $\mathrm{CON}$ 
    && $\mathrm{BS}$ & $\mathrm{CON}$ & $\mathrm{BS}$ & $\mathrm{CON}$ 
    && $\mathrm{BS}$ & $\mathrm{CON}$ & $\mathrm{BS}$ & $\mathrm{CON}$ 
 \\
    \midrule
1            &   $\num[math-rm=\mathbf]{0.717590}$  & $\num{0.200699}$
& $\num[math-rm=\mathbf]{0.777669}$  & $\num{0.222795}$ 
&& $\num[math-rm=\mathbf]{0.746059}$  & $\num{0.262160}$ 
& $\num[math-rm=\mathbf]{0.728276}$  & $\num{0.374756}$ 
&&  $\num[math-rm=\mathbf]{0.667956}$  & $\num{0.441311}$ 
& $\num[math-rm=\mathbf]{0.688927}$ & $\num{0.586038}$
\\
2            &   $\num[math-rm=\mathbf]{0.805050}$  & $\num{0.238507}$ 
& $\num[math-rm=\mathbf]{0.910770}$  & $\num{0.242750}$ 
&& $\num[math-rm=\mathbf]{0.942130}$  & $\num{0.275171}$
& $\num[math-rm=\mathbf]{0.902930}$  & $\num{0.386376}$ 
&&  $\num[math-rm=\mathbf]{0.950457}$  & $\num{0.397837}$ 
& $\num[math-rm=\mathbf]{0.778211}$ & $\num{0.531690}$ 
\\
3            &    $\num[math-rm=\mathbf]{0.973480}$  & $\num{0.268979}$   
& $\num{1.025453}$  & $\num{0.289591}$
&& $\num{1.026058}$  & $\num{0.339472}$   
& $\num{1.052818}$  & $\num{0.383936}$ 
&&   $\num[math-rm=\mathbf]{0.951108}$  & $\num{0.406742}$ 
& $\num[math-rm=\mathbf]{0.863831}$ & $\num{0.590445}$
\\
4            &    $\num[math-rm=\mathbf]{0.997812}$  & $\num{0.286971}$ 
& $\num{1.038450}$  & $\num{0.280525}$ 
&& $\num[math-rm=\mathbf]{0.985335}$  & $\num{0.323231}$
& $\num{1.015765}$  & $\num{0.327001}$  
&&   $\num[math-rm=\mathbf]{0.918032}$  & $\num{0.409502}$  
& $\num[math-rm=\mathbf]{0.870871}$ & $\num{0.473768}$
\\
5            &   $\num[math-rm=\mathbf]{0.931778}$  & $\num{0.262225}$  
& $\num[math-rm=\mathbf]{0.881007}$  & $\num{0.435454}$  
&& $\num[math-rm=\mathbf]{0.900719}$  & $\num{0.845524}$ 
& $\num[math-rm=\mathbf]{0.961853}$  & $\num{2.284178}$ 
&&   $\num[math-rm=\mathbf]{0.810318}$  & $\num{3.002037}$ 
& $\num[math-rm=\mathbf]{0.927856}$ & $\num{3.823902}$
\\
6            &   $\num[math-rm=\mathbf]{0.880377}$  & $\num{0.292543}$   
&  $\num[math-rm=\mathbf]{0.907597}$  & $\num{0.471132}$ 
&& $\num[math-rm=\mathbf]{0.896864}$  & $\num{0.753200}$ 
& $\num[math-rm=\mathbf]{0.932082}$  & $\num{2.005065}$ 
&&  $\num[math-rm=\mathbf]{0.747725}$  & $\num{2.849327}$
& $\num[math-rm=\mathbf]{0.920452}$ & $\num{3.823448}$
\\
7            &  $\num{1.182218}$  & $\num{0.334383}$ 
& $\num{1.143677}$  & $\num{0.521376}$ 
&&  $\num{1.107408}$  & $\num{0.773788}$ 
& $\num{1.099726}$  & $\num{1.626560}$
&&  $\num[math-rm=\mathbf]{0.807219}$  & $\num{1.970702}$ 
& $\num[math-rm=\mathbf]{0.988952}$ & $\num{2.182250}$
\\
8            &   $\num{1.235545}$  & $\num{0.392785}$ 
&  $\num{1.192488}$  & $\num{0.664187}$  
&&  $\num{1.105882}$  & $\num{0.949278}$ 
& $\num{1.292962}$  & $\num{2.084858}$ 
&&  $\num[math-rm=\mathbf]{0.803794}$  & $\num{2.818585}$ 
& $\num{1.015731}$ & $\num{3.089752}$  
\\
9            &     $\num[math-rm=\mathbf]{0.957760}$  & $\num{0.336555}$  
& $\num{1.009758}$  & $\num{0.705640}$  
&& $\num{1.001349}$  & $\num{1.341854}$ 
& $\num[math-rm=\mathbf]{0.955567}$  & $\num{4.101009}$ 
&&  $\num[math-rm=\mathbf]{0.720695}$  & $\num{5.971537}$ 
& $\num[math-rm=\mathbf]{0.941050}$ & $\num{7.506477}$
\\
10            &   $\num[math-rm=\mathbf]{0.720202}$  & $\num{0.262022}$  
& $\num[math-rm=\mathbf]{0.800783}$  & $\num{0.190439}$ 
&& $\num[math-rm=\mathbf]{0.747600}$  & $\num{0.186220}$ 
& $\num[math-rm=\mathbf]{0.738735}$  & $\num{0.192385}$ 
&&  $\num[math-rm=\mathbf]{0.804744}$  & $\num{0.189647}$ 
& $\num[math-rm=\mathbf]{0.747379}$ & $\num{0.229356}$ 
\\
11            &  $\num[math-rm=\mathbf]{0.841286}$  & $\num{0.421388}$  
& $\num[math-rm=\mathbf]{0.825398}$  & $\num{0.276590}$
&& $\num[math-rm=\mathbf]{0.767596}$  & $\num{0.213634}$ 
& $\num[math-rm=\mathbf]{0.742150}$  & $\num{0.210796}$ 
&&  $\num[math-rm=\mathbf]{0.748903}$  & $\num{0.319371}$
& $\num[math-rm=\mathbf]{0.781523}$ & $\num{0.467432}$
\\
12            &   $\num[math-rm=\mathbf]{0.808155}$  & $\num{0.296785}$  
& $\num[math-rm=\mathbf]{0.911510}$  & $\num{0.228425}$
&& $\num[math-rm=\mathbf]{0.970161}$  & $\num{0.181721}$
& $\num{1.080718}$  & $\num{0.152095}$ 
&&  $\num{1.060233}$  & $\num{0.155356}$ 
& $\num{1.070966}$ & $\num{0.153497}$  
\\
13            &    $\num[math-rm=\mathbf]{0.898756}$  & $\num{0.331279}$ 
& $\num[math-rm=\mathbf]{0.971984}$  & $\num{0.254676}$ 
&& $\num{1.057774}$  & $\num{0.194289}$  
& $\num{1.041635}$  & $\num{0.158934}$ 
&&  $\num{1.023785}$  & $\num{0.167077}$ 
& $\num{1.107483}$ & $\num{0.175413}$  
\\
14            &   $\num[math-rm=\mathbf]{0.833138}$  & $\num{0.793780}$ 
& $\num[math-rm=\mathbf]{0.967510}$  & $\num{0.406347}$ 
&& $\num[math-rm=\mathbf]{0.908508}$  & $\num{0.253328}$
& $\num{1.075077}$  & $\num{0.202276}$ 
&&  $\num{1.076085}$  & $\num{0.231732}$  
& $\num{1.111387}$ & $\num{0.325663}$  
\\
\bottomrule
\end{tabular}
\begin{tablenotes}
 \item $\num{400}$ replications, $\num{50}$ bootstraps, sample size $1000$. ``CON'' is the rRMSE of the conservative IVQR estimator; ``BS'' is the bootstrap averaging relative rRMSE (bootstrap rRMSE divided by conservative rRMSE).
\end{tablenotes}
            \end{threeparttable}
            \end{table}
% total time elapsed: 7.185553 hours

\subsection{Simulation Model 2}
\label{sim:model2}

This simulation model is close to simulation DGP 2 in \citetalias{ChengLiaoShi2019}.
The \citetalias{ChengLiaoShi2019} model considers endogenous regressors at a fixed endogeneity level with a set of valid instruments, while introducing potentially invalid instruments in the additional moments with varying degrees of endogeneity. 
Since my paper proposes using the IVQR-2SLS and IVQR-QR types of additional moments, no additional instruments are involved.
Instead of varying the endogeneity of additional instruments, I let the regressors' endogeneity level vary across DGPs.
Otherwise, the DGPs are the same as in \citetalias{ChengLiaoShi2019}.
This simulation model 2 has enough continuous endogenous regressors and instruments that uniform dominance holds.

\subsubsection{Simulation DGP}
\label{subsec:simDGP2}

Consider the linear model
\begin{equation}
Y=\theta_0 + \sum_{j=1}^{6} \theta_j X_j + u,
\end{equation}
where $\theta_1=\theta_2=\cdots=\theta_6=2.5$ and $\theta_0=1 $; 
$Y$ is the dependent variable; 
and there are six endogenous regressors $(X_{1}, \ldots, X_{6})$ 
and twelve excluded valid instruments $(Z_{1}, \ldots, Z_{12})$.
The data $(Y_i, X_{1,i}, \ldots, X_{6,i}, Z_{1,i}, \ldots, Z_{12,i})$ are sampled iid for $i=1,\ldots,n$.
The regressors are generated by 
\begin{equation}
X_j=(Z_j+Z_{j+6})/2+\epsilon_j, 
\textrm{ for }
j=1, \ldots, 6 .
\end{equation}
The $(Z_{1}, \ldots, Z_{12}, \epsilon_1, \ldots, \epsilon_6, u) $ are generated from a multivariate normal distribution with mean zero and covariance matrix $diag(\matf{I}_{12\times12}, \matf{\Sigma}_{7\times7})$, where 
\begin{equation}\label{data-generate}
    \matf{\Sigma}_{7\times7} =
\left[ {
\begin{array}{cc}
\matf{I}_{6\times6} &  c_0 \times \vecf{1}_{6\times1}     \\
c_0 \times \vecf{1}_{1\times6}  & 1    
\end{array}
}
\right] .
\end{equation}
The $\matf{I}_{6\times6}$ and $\matf{I}_{12\times12}$ are the identity matrix. 
The $\vecf{1}_{6\times1}$ and $\vecf{1}_{1\times6}$ are the column and row vector of ones, respectively. 
In each DGP, $c_0$ is a fixed constant. 
It measures how much the regressors and the structural error are correlated. 
For $\matf{\Sigma}_{7\times7}$ to be positive definite it requires $c_0\le 0.4$.
Thus, let $c_0$ take value in $\{0, 0.05, 0.1, \ldots, 0.4 \} $.
The higher the $c_0$ value, the more endogenous the regressors.
In the special case $c_0=0$, the regressors are generated independently from the structural error.
The simulation results demonstrate that 
in the case $c_0=0.4$,
the regressors are endogenous enough that the QR estimator performs much worse than IVQR estimator.

In this data-generating process, 
the error term $u$ follows a Gaussian distribution. 
Then, I make a location shift to make its $\tau$-quantile to be zero. 
This shift is to make the model to satisfy the conditional IVQR restriction that 
conditional on instruments, the $\tau$-quantile of error term is zero.

The conservative IVQR moment functions use the twelve instruments, 
\begin{equation}\label{moments:cons}
g_{1,j}(\beta)= Z_j (\Ind{Y-X_1\beta_1-\cdots-X_6\beta_6 \le 0}-\tau) 
\textrm{ for }
j=1, \ldots, 12.
\end{equation}
The additional moment functions for IVQR-QR averaging are
\begin{equation}\label{moments:add:IVQR-QR}
g_{j}^{*}(\beta)= X_j (\Ind{Y-X_1\beta_1-\cdots-X_6\beta_6 \le 0}-\tau) 
\textrm{ for }
j=1, \ldots, 6.
\end{equation}
The additional moment functions for IVQR-2SLS averaging are
\begin{equation}\label{moments:add:IVQR-2SLS}
g_{j}^{*}(\beta)= (Z_{j}-\bar{Z}_{j})u
\textrm{ for }
j=1, \ldots, 12.
\end{equation}

The IVQR estimator is the GMM estimator from the 12 moments using \cref{moments:cons}.
The IVQR-QR aggressive estimator is the GMM estimator from the 18 moments using \cref{moments:cons,moments:add:IVQR-QR}.
The IVQR-2SLS aggressive estimator is the GMM estimator from the 24 moments using \cref{moments:cons,moments:add:IVQR-2SLS}.

\subsubsection{Simulation results}
\label{subsec:simDGP2-avg}

\begin{table}[htbp!]
    \centering\caption{\label{tab:M2:S1S22:tau=0.5} Relative rRMSE in simulation model 2, $\tau=0.5$.}
            \sisetup{round-precision=2,round-mode=places}
    \begin{threeparttable}
    \begin{tabular}{cSScccScccScccc}
    \toprule
    &                    && \multicolumn{2}{c}{$\mathrm{IVQR.2SLS}$}&               && \multicolumn{2}{c}{$\mathrm{IVQR.QR} $} &               &&  &    \\
    \cmidrule{4-5} \cmidrule{8-9}
    $\mathrm{DGP }$  &  $\mathrm{Endog}$  &&  $\mathrm{AVG} $ & $\mathrm{AGG}$       & $\mathrm{2SLS}$ && $\mathrm{AVG}$ & $\mathrm{AGG}$         & $\mathrm{QR}$ && $\mathrm{BS} $    & $\mathrm{SEE}$ \\
    \midrule
1          &    0  &&   $\num[math-rm=\mathbf]{0.876621}$  & $\num{0.816984}$ & $\num{0.769302}$  && $\num[math-rm=\mathbf]{0.627007}$  &  $\num{0.464573}$ & $\num{0.457321}$ && $\num[math-rm=\mathbf]{0.554361}$ &  $\num{0.815396}$  \\
2          & 0.05  &&   $\num[math-rm=\mathbf]{0.919416}$  & $\num{0.923405}$ & $\num{0.783947}$  && $\num[math-rm=\mathbf]{0.685586}$  &  $\num{0.566005}$ & $\num{0.548458}$ && $\num[math-rm=\mathbf]{0.623790}$ &  $\num{0.847613}$  \\
3          & 0.1   &&   $\num[math-rm=\mathbf]{0.929650}$  & $\num{0.945819}$ & $\num{0.799861}$  && $\num[math-rm=\mathbf]{0.775141}$  &  $\num{0.808968}$ & $\num{0.800620}$ && $\num[math-rm=\mathbf]{0.704582}$ &  $\num{0.840211}$  \\
4          & 0.15  &&   $\num[math-rm=\mathbf]{0.920974}$  & $\num{0.912700}$ & $\num{0.755202}$  && $\num[math-rm=\mathbf]{0.894906}$  &  $\num{1.163727}$ & $\num{1.163325}$ && $\num[math-rm=\mathbf]{0.789633}$ &  $\num{0.829286}$  \\
5          &  0.2  &&   $\num[math-rm=\mathbf]{0.917827}$  & $\num{0.927735}$ & $\num{0.789565}$  && $\num[math-rm=\mathbf]{0.910303}$  &  $\num{1.433761}$ & $\num{1.406581}$ && $\num[math-rm=\mathbf]{0.828754}$ &  $\num{0.829167}$  \\
6          & 0.25  &&   $\num[math-rm=\mathbf]{0.925927}$  & $\num{0.914773}$ & $\num{0.789525}$  && $\num[math-rm=\mathbf]{0.952613}$  &  $\num{1.798988}$ & $\num{1.768014}$ && $\num[math-rm=\mathbf]{0.843464}$ &  $\num{0.869715}$  \\
7          &  0.3  &&   $\num[math-rm=\mathbf]{0.881145}$  & $\num{0.867900}$ & $\num{0.756437}$  && $\num[math-rm=\mathbf]{0.963672}$  &  $\num{2.075631}$ & $\num{2.051260}$ && $\num[math-rm=\mathbf]{0.840108}$ &  $\num{0.785066}$  \\
8          & 0.35  &&   $\num[math-rm=\mathbf]{0.901901}$  & $\num{0.911520}$ & $\num{0.708683}$  && $\num[math-rm=\mathbf]{0.974877}$  &  $\num{2.550676}$ & $\num{2.505178}$ && $\num[math-rm=\mathbf]{0.812182}$ &  $\num{0.801656}$  \\
9          &  0.4  &&   $\num[math-rm=\mathbf]{0.914493}$  & $\num{0.945680}$ & $\num{0.774773}$  && $\num[math-rm=\mathbf]{0.964646}$  &  $\num{2.958814}$ & $\num{2.886673}$ && $\num[math-rm=\mathbf]{0.843013}$ &  $\num{0.853452}$  \\
\bottomrule
            \end{tabular}
\begin{tablenotes}
 \item $\num{200}$ replications. $\num{50}$ bootstraps. Sample size is 1000.
\end{tablenotes}
\end{threeparttable}
\end{table}

\begin{figure}[htbp!]
 \centering
\includegraphics[width=0.5\textwidth]{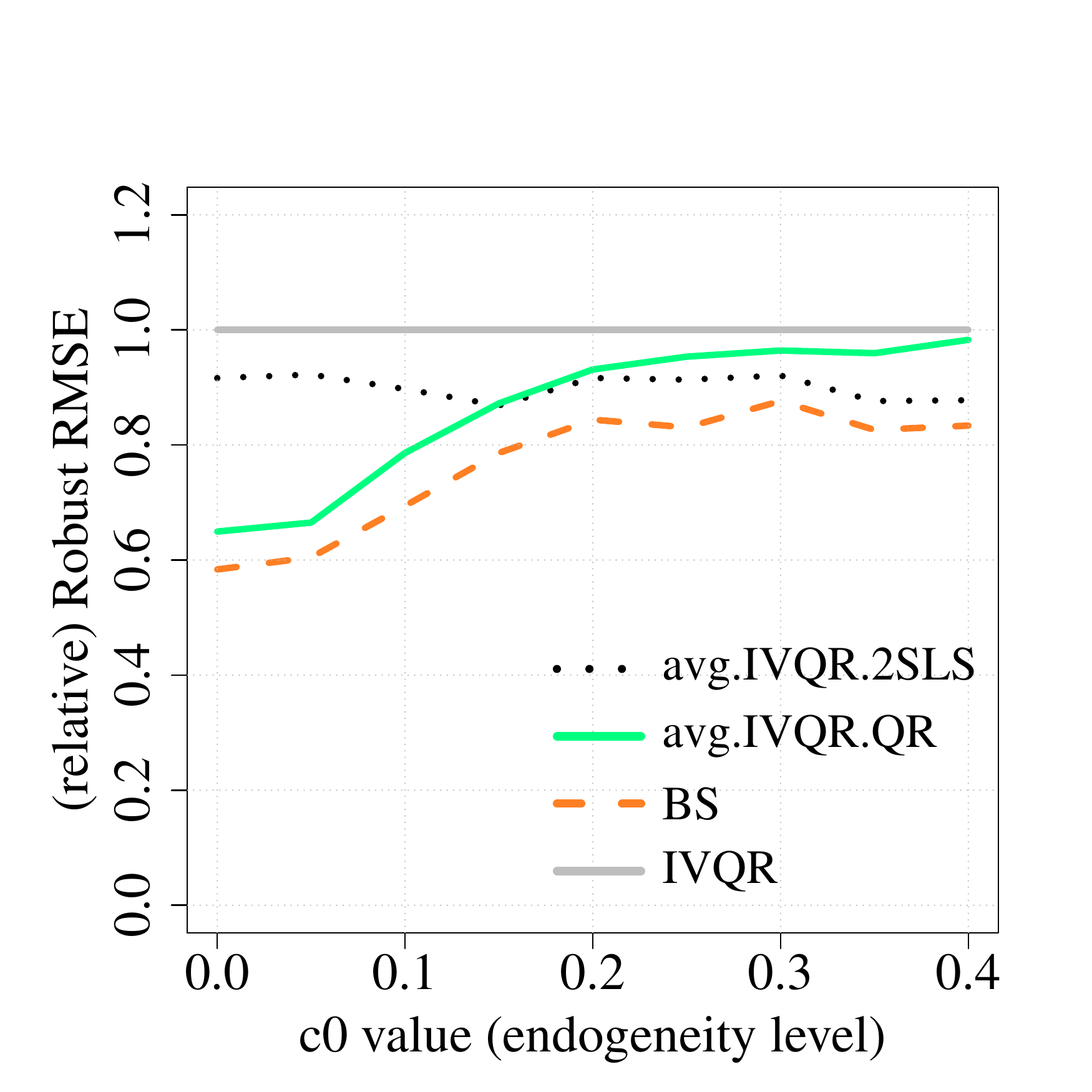}
\caption{\label{fig:sim-M2:CLS-FH-S1S22-tau0.5}%
Relative rRMSE in simulation model 2; $\tau=0.5$, $n=1000$, $\num{200}$ replications, $\num{50}$ bootstraps.} 
\end{figure}

In simulation model 2, 
all three averaging estimators 
uniformly dominate the IVQR estimator. 
\Cref{fig:sim-M2:CLS-FH-S1S22-tau0.5} shows these three averaging estimators' relative rRMSE is bounded below 1
in a class of DGPs with different endogeneity level. 
\Cref{tab:M2:S1S22:tau=0.5} reports the rRMSE of various estimators relative to that of the IVQR estimator.
The second column reports the value of $c_0$, 
which indicates how much regressor endogeneity there is in the DGP.
Columns 3, 6, and 9 are the three averaging estimators proposed in this paper. 
Columns 4 and 5 are the IVQR-2SLS aggressive estimator and the 2SLS estimator for reference.
Columns 7 and 8 are the IVQR-QR aggressive estimator and the QR estimator for reference. 
Column 10 is the smoothed estimating equations (SEE) estimator proposed by \citet{KaplanSun2017},
using their code's plug-in bandwidth.

In simulation model 2, 
the regressors' coefficients are set to be fixed.
There is no slope heterogeneity.
The 2SLS estimator performs uniformly better than the IVQR estimator, 
in the sense that its relative rRMSE is less than 1 across all DGPs.
The IVQR-2SLS aggressive estimator and averaging estimator both have relative rRMSE less than 1 and uniformly dominate the IVQR estimator.

The more endogeneity of regressors in a DGP, 
the more misspecified the QR moments, 
and the worse the performance of QR.
In DGPs 1--3 with the least endogeneity, 
the QR estimator performs better than the IVQR estimator. 
In DGPs 4--9 with more endogeneity, 
QR is increasingly worse than IVQR.
The IVQR-QR aggressive estimator follows the same pattern as the QR estimator.
However, the IVQR-QR averaging estimator has relative rRMSE less than 1 in all DGPs,
even in the DGPs where the QR moments are very misspecified 
and the QR and IVQR-QR aggressive estimators are much worse than the IVQR estimator.
This shows the empirical optimal weight formula works as desired, putting more weight on the aggressive estimator when the additional moments reduce variance more relative to the increased bias, and putting more weight on the conservative estimator when the additional moments are severely misspecified.

The bootstrap averaging estimator performs best of all.
It not only uniformly dominates the IVQR estimator, up to 45\% efficiency gain, 
but also is uniformly better than either type of GMM averaging estimator. 
Moreover, 
the bootstrap averaging estimator usually performs better than the SEE estimator, although not uniformly.
As there is no slope heterogeneity in simulation model 2, 
the IVQR-2SLS averaging estimator's relative rRMSE is always below 1, mostly around 0.9.
The IVQR-QR averaging estimator's relative rRMSE is also below 1 in all DGPs, increasing from 0.63 to 0.97 as $c_0$ increases.
The bootstrap averaging estimator has even smaller relative rRMSE, ranging from 0.55 to 0.84 as $c_0$ increases.

The simulation results at other quantiles are reported in \cref{appdx:M2:Gaussian}. 
The results share the same pattern as that at the median.
All three averaging estimators uniformly dominate the IVQR estimator at quantiles from $\tau=0.2$ to $\tau=0.8$.

\subsubsection{Results with non-Gaussian error term}
\label{non-Gaussian}

The uniform dominance in this fixed-coefficient model with multiple endogenous regressors also holds with a non-Gaussian error term.
To illustrate this, 
I set the error term to follow a chi-square distribution with $4$ degrees of freedom.
More specifically, 
I first generate the error term in the same way as in \cref{data-generate}.
Second, I transform the error term to
% \begin{equation}
$u^{*}= F_{\chi_4^2 }^{-1}(\Phi(u))$
% \end{equation}
where $\Phi(\cdot)$ is the CDF of the original Gaussian error term $u$ and
$F_{\chi_4^2}^{-1}$ is the inverse CDF of a $\chi^2_4$ distribution.
Finally, I shift $u^*$ to have its $\tau$-quantile equal zero.

The simulation results have similar patterns as in the Gaussian error case; see the results tables and figures in \cref{appdx:M2:non-Gaussian}.
All three averaging estimators have relative rRMSE less than 1 in all the DGPs at all quantiles, 
expect three cases of IVQR-2SLS averaging at $\tau=0.2$ having relative rRMSE between 1.01 and 1.03.
However, these three rare cases seem due to simulation error since their relative rRMSEs are less than or equal to 1 when running more simulation replications.

\subsection{Simulation Model 3}
\label{sim:model3}

Seeing slope heterogeneity across quantiles is part of the value of quantile regression. 
Simulation model 3 extends simulation model 2 to allow for slope heterogeneity.
% and vary endogeneity level at the same time. 

\subsubsection{Simulation DGP}
\label{subsec:simDGP3}

Consider a linear model 
\begin{equation}
Y=\theta_0 + \sum_{j=1}^{6} \theta_j X_j + u,
\end{equation}
with 
$\theta_0=1 $ and random coefficients for individual $i$ equal to 
\begin{equation}\label{slope:hetero}
\theta_{1i}=\theta_{2i}=\cdots=\theta_{6i}=hetero \times [F(u_i)]^4, 
\textrm{ for }
i=1,\ldots,n,
\end{equation}
% and $\theta_0=1 $; 
where $F(\cdot)$ is the CDF of $u$.
The slope term \cref{slope:hetero} is set to be a function of the rank of the error term in its distribution. 
It represents the slope heterogeneity feature. 
The term ``$hetero$'' is a fixed constant in a DGP, the same as $c_2$ in simulation model 1.
When $hetero=0$, 
the slope is a constant zero, 
which means the model has no slope heterogeneity.
The larger the ``$hetero$'' value, 
the more slope heterogeneity in the DGP.
In simulation model 3, 
$hetero$ takes value in $\{0,0.1,\ldots,1\}$.
I consider both 
Gaussian and non-Gaussian errors.
The non-Gaussian error follows the same transformation as in \cref{non-Gaussian}.
The error term has the same location shift as in \cref{sim:model2} to make its $\tau$-quantile equal zero.

% $Y$ is the dependent variable; 
As in simulation model 2, there are six endogenous regressors $(X_{1}, \ldots, X_{6})$ and twelve valid instruments $(Z_{1}, \ldots, Z_{12})$, and sampling is iid.
% The data set $(Y_i, X_{1,i}, \ldots, X_{6,i}, Z_{1,i}, \ldots, Z_{12,i})$ for $i=1, \ldots,n$ is an iid sampling.
The regressors are generated by 
\begin{equation}
X_j=Z_j+Z_{j+6}+\epsilon_j, 
\textrm{ for }
j=1, \ldots, 6 .
\end{equation}

The monotonicity condition in IVQR requires that 
the structural quantile function is increasing in $u$ given any $\vecf{X}=\vecf{x}\in\mathcal{X}$ \citep[Condition A1]{ChernozhukovHansen2005}. 
For the monotonicity condition to hold in this simulation model,
I restrict the regressors to have non-negative support. 
More specifically, 
I shift each regressor to the right by 3.1 times its standard deviation.
There is less than 0.001 probability 
that a regressor remains negative, in which case I set it to zero.

As in simulation model 2, the $(Z_{1}, \ldots, Z_{12}, \epsilon_1, \ldots, \epsilon_6, u) $ 
are generated from a multivariate normal distribution 
with mean zero and covariance matrix $diag(\matf{I}_{12\times12}, \matf{\Sigma}_{7\times7})$, where 
\begin{equation}
    \matf{\Sigma}_{7\times7} =
\left[ {
\begin{array}{cc}
\matf{I}_{6\times6} &  c_0 \times \vecf{1}_{6\times1}     \\
c_0 \times \vecf{1}_{1\times6}  & 1    
\end{array}
}
\right] .
\end{equation}
Since $c_0 \in \{0, 0.05, 0.1, \ldots, 0.4 \}$, there are $11\times 9$ DGP combinations of $(hetero, c_0)$.

I simulate the rRMSE as in \cref{eqn:rRMSE}, where the true population parameters are $\theta_j=hetero \times \tau^4$ for all $j=1,\ldots,6$.
% at each quantile is 
% \begin{equation}
%     \theta_1=\theta_2=\ldots=\theta_6=hetero \times \tau^4.
% \end{equation}

\subsubsection{Simulation results}
\label{subsec:simDGP3-avg}

\begin{table}[htbp]
    \centering\caption{\label{tab:bound:S3:tau=0.5} Bounds of relative rRMSE in simulation model 3, $\tau=0.5$.}
            \sisetup{round-precision=3,round-mode=places}
    \begin{threeparttable}
    \begin{tabular}{cScccScccSc}
    \toprule
           && \multicolumn{8}{c}{$\mathrm{Fixed \ Endog }$}               \\
    \cmidrule{3-10}
           && \multicolumn{2}{c}{$\mathrm{No}\ (c_0=0)$}              && \multicolumn{2}{c}{$\mathrm{Some}\ (c_0=0.2) $}     && \multicolumn{2}{c}{$\mathrm{Much}\ (c_0=0.4) $}    \\
     \cmidrule{3-4} \cmidrule{6-7} \cmidrule{9-10}
              && $\mathrm{Lower}$   & $\mathrm{Upper}$   &&   $\mathrm{Lower}$  & $\mathrm{Upper}$    && $\mathrm{Lower}$  & $\mathrm{Upper}$                \\
    \cmidrule{1-10}
        $\hat{\theta}_{\mathrm{AVG.2SLS}}$ &&  $\num{0.921558}$  & $\num{1.006412}$  && $\num{0.960113}$  &  $\num{1.024411}$ && $\num{0.958913}$ & $\num{0.996904}$  \\
  $\hat{\theta}_{\mathrm{AVG.QR}}$ &&  $\num{0.830799}$  & $\num{0.919396}$  && $\num{0.996671}$  &  $\num{1.002288}$ && $\num{0.998377}$ & $\num{1.001428}$  \\
        $\hat{\theta}_{\mathrm{BS}}$ &&  $\num{0.769622}$  & $\num{0.929397}$  && $\num{0.850083}$  &  $\num{1.014216}$ && $\num{0.859666}$ & $\num{1.006431}$  \\
    \cmidrule{1-10}
                && \multicolumn{8}{c}{$\mathrm{Fixed \ Hetero }$}               \\
      \cmidrule{3-10}
                && \multicolumn{2}{c}{$\mathrm{No}\ (\mathrm{hetero}=0)$}              && \multicolumn{2}{c}{$\mathrm{Some}\ (\mathrm{hetero}=0.5) $}     && \multicolumn{2}{c}{$\mathrm{Much}\ (\mathrm{hetero}=1) $}    \\
      \cmidrule{3-4} \cmidrule{6-7} \cmidrule{9-10}
                && $\mathrm{Lower}$   & $\mathrm{Upper}$   &&   $\mathrm{Lower}$  & $\mathrm{Upper}$    && $\mathrm{Lower}$  & $\mathrm{Upper}$               \\
      \cmidrule{1-10}
        $\hat{\theta}_{\mathrm{AVG.2SLS}}$ &&  $\num{0.906251}$  & $\num{0.991062}$  && $\num{0.973068}$  &  $\num{1.021872}$ && $\num{0.981040}$ & $\num{1.015973}$  \\
  $\hat{\theta}_{\mathrm{AVG.QR}}$ &&  $\num{0.857462}$  & $\num{1.000541}$  && $\num{0.875519}$  &  $\num{1.001435}$ && $\num{0.888192}$ & $\num{1.001399}$  \\
        $\hat{\theta}_{\mathrm{BS}}$ &&  $\num{0.802951}$  & $\num{0.883027}$  && $\num{0.877296}$  &  $\num{1.003506}$ && $\num{0.886404}$ & $\num{1.018404}$  \\
\bottomrule
            \end{tabular}
            \begin{tablenotes}
            \item $\num{200}$ replications. $\num{50}$ bootstraps. Sample size is 1000.\\
            \end{tablenotes}
            \end{threeparttable}
            \end{table}
% total time elapsed: 2.017372 days

In \cref{tab:bound:S3:tau=0.5},
I fix the endogeneity level indicator $c_0$ at zero, 0.2, or 0.4 (highest value), and vary the slope heterogeneity across all values $hetero=0,0.1,\ldots,1$.
Similarly, I fix $hetero$ at zero, 
0.5, or 1 (highest value), and vary the endogeneity level across all values $c_0=0,0.05,\ldots,0.4$.
Altogether this includes 60 DGPs, covering the boundaries of the 99 DGPs and indicating the patterns.
\Cref{tab:bound:S3:tau=0.5} reports the upper and lower bounds of the relative rRMSE of the three averaging estimators in these six cases (i.e., 3 fixed endogeneity levels, 3 fixed heterogeneity levels).

\Cref{tab:bound:S3:tau=0.5} summarizes results at the median in these six cases.
More detailed results tables are included in the supplemental appendix.
We can see that all three averaging estimators have relative rRMSE lower bounds strictly less than 1.
The bootstrap averaging estimator has the smallest lower bound in five cases and is essentially tied for smallest (up to simulation error) in the sixth.
With no endogeneity and varying heterogeneity, 
IVQR-QR and bootstrap averaging both have upper bound strictly below 1, while the IVQR-2SLS averaging estimator has upper bound roughly 1 (up to simulation error).
With no heterogeneity and varying endogeneity, 
IVQR-2SLS and bootstrap averaging both have upper bound strictly below 1, while IVQR-QR averaging has upper bound roughly 1. 
In the other four cases, the three averaging estimators all have upper bound roughly 1. 
% The biggest upper bound among all estimators in these four cases is 1.024.
Although some exceed 1 (largest value 1.024), this is believed to be entirely due to simulation error: all the upper bounds reduce to $1.000\pm0.001$ with a larger number of simulation replications.

\Cref{fig:M3:S1S22:tau0.5} is similar to \cref{tab:bound:S3:tau=0.5}, but it visualizes results for all 60 DGPs (not just bounds).
% presents the graphs of the three averaging estimators' relative rRMSE compared with the IVQR estimator in these six cases.
It shows that in simulation model 3, 
all three averaging estimators uniformly dominate the IVQR estimator.
Using any of these three averaging estimators would provide more precise estimation than the IVQR estimator.

\begin{figure}[htbp!]
% \centering
% \hfill
\includegraphics[width=0.45\textwidth, height=0.3\textheight, trim=35 20 20 70]{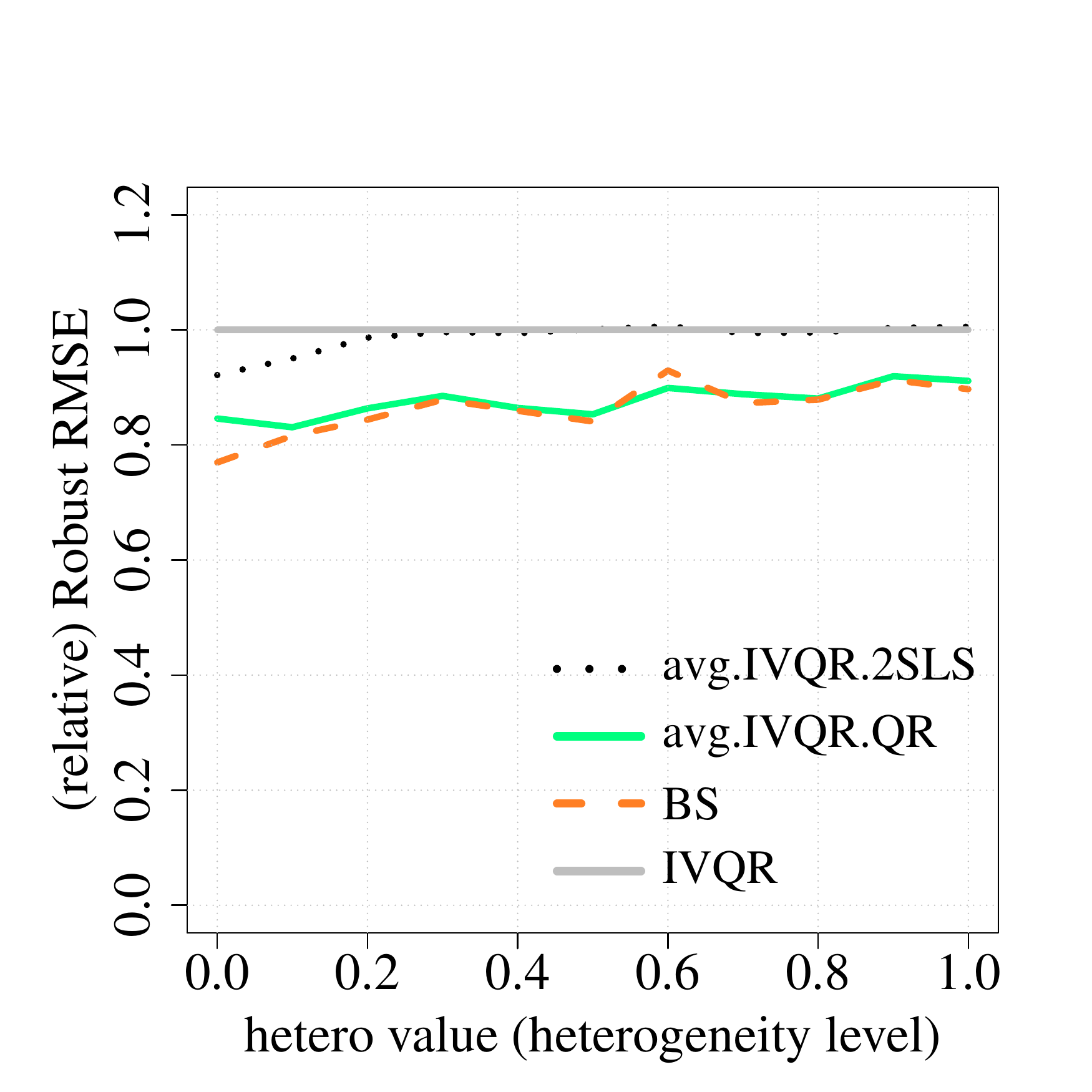}
\hfill
\includegraphics[width=0.45\textwidth, height=0.3\textheight, trim=35 20 20 70]{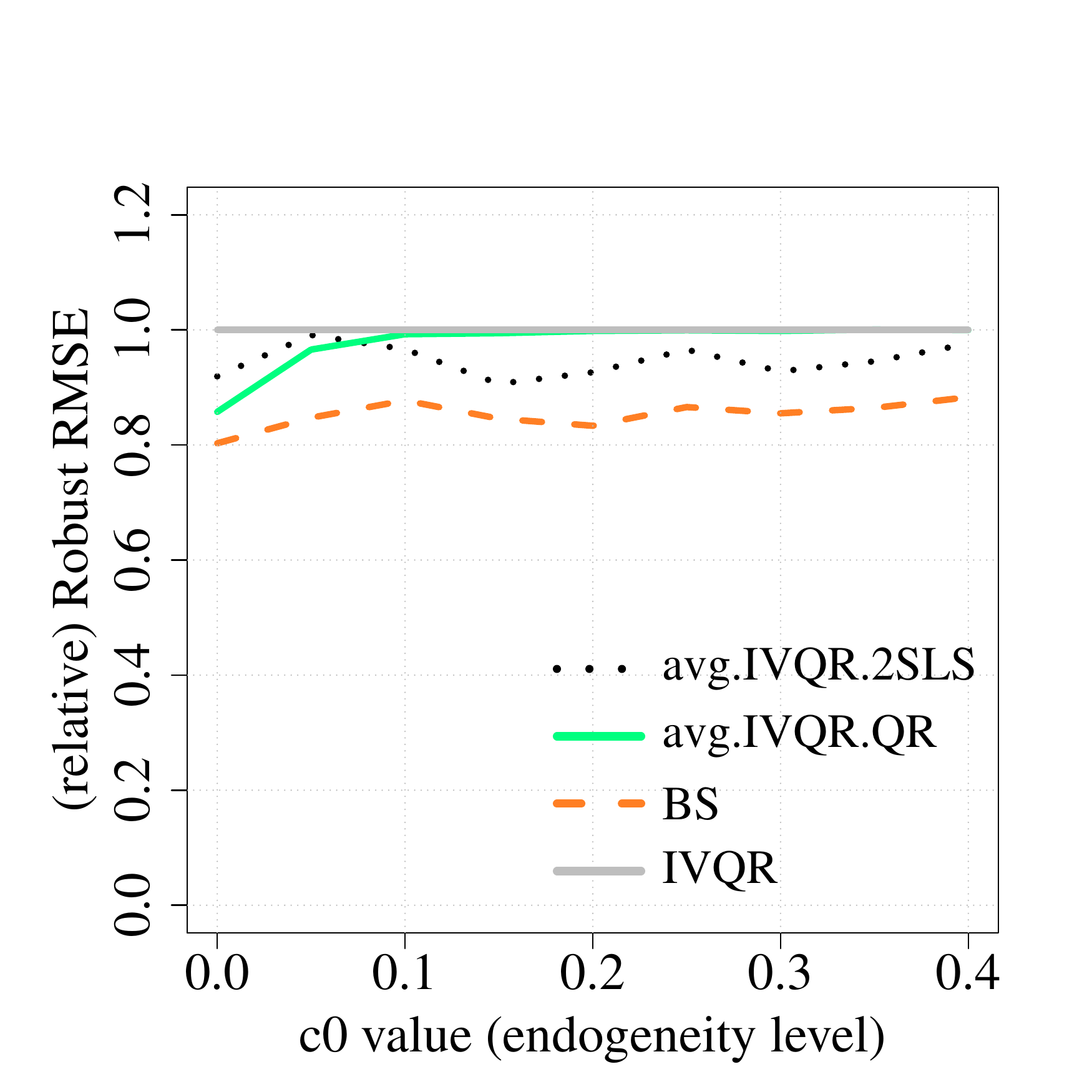}
\hfill\null
% \centering
% \hfill
\includegraphics[width=0.45\textwidth, height=0.3\textheight, trim=35 20 20 70]{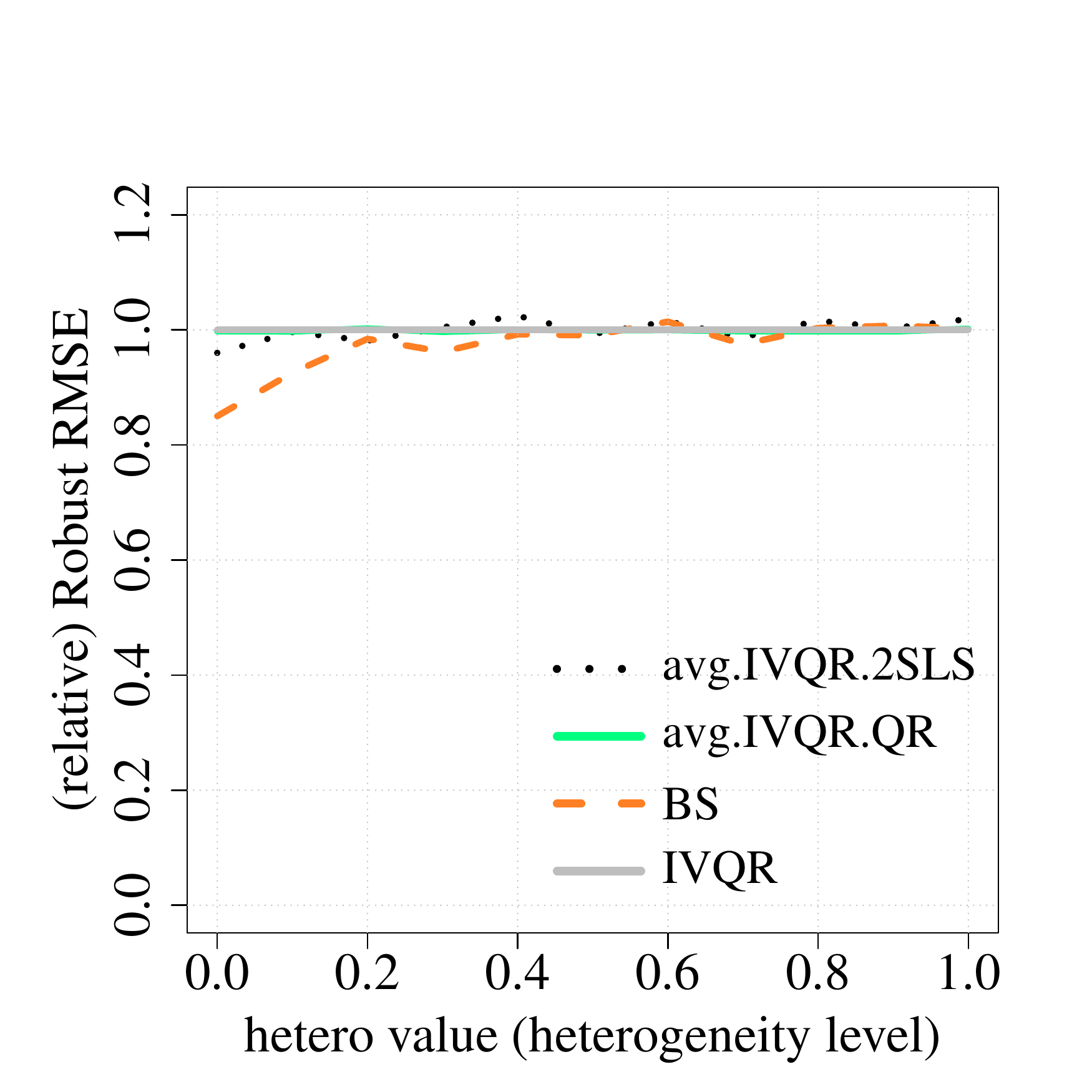}
\hfill
\includegraphics[width=0.45\textwidth, height=0.3\textheight, trim=35 20 20 70 ]{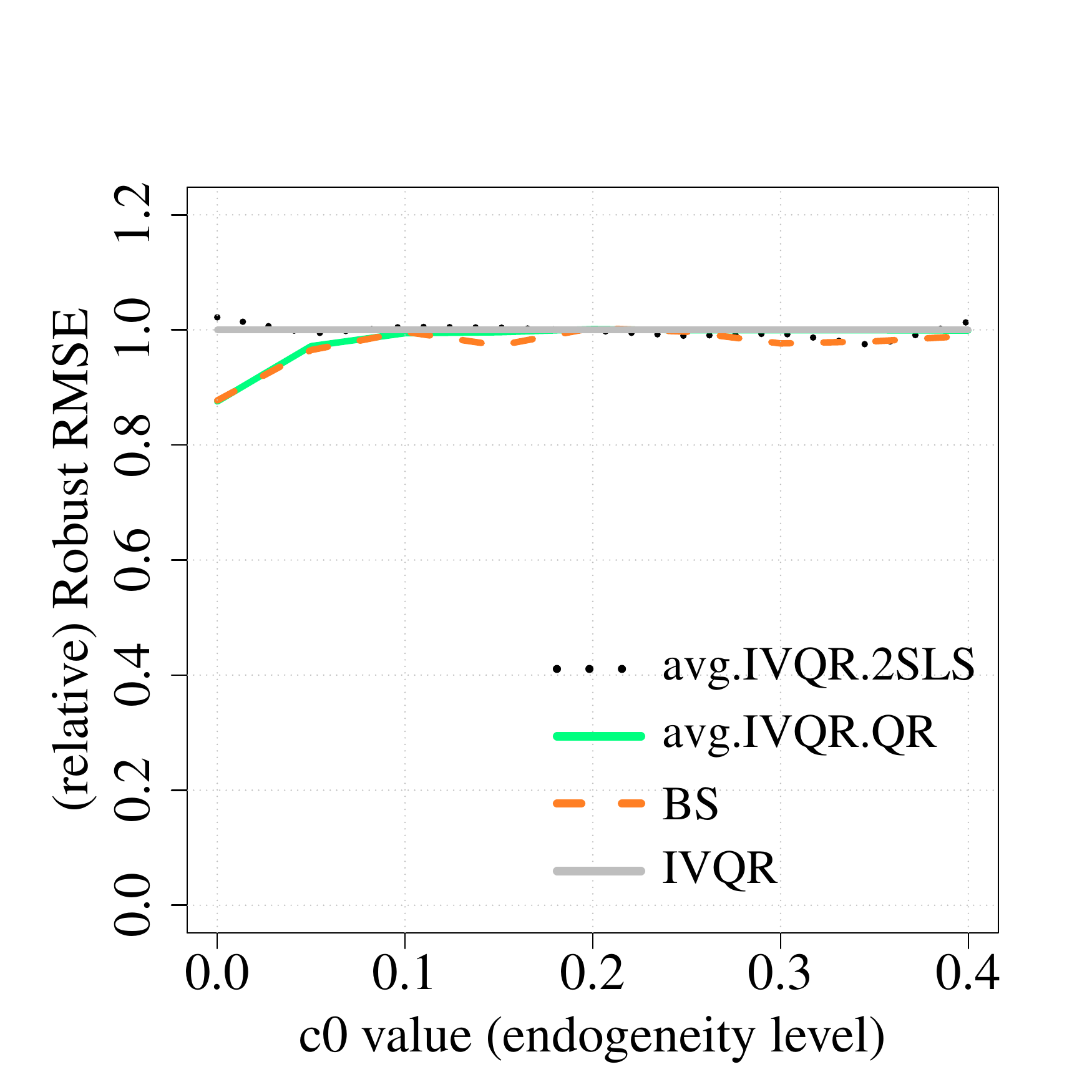}
\hfill\null
\includegraphics[width=0.45\textwidth, height=0.3\textheight, trim=35 20 20 70 ]{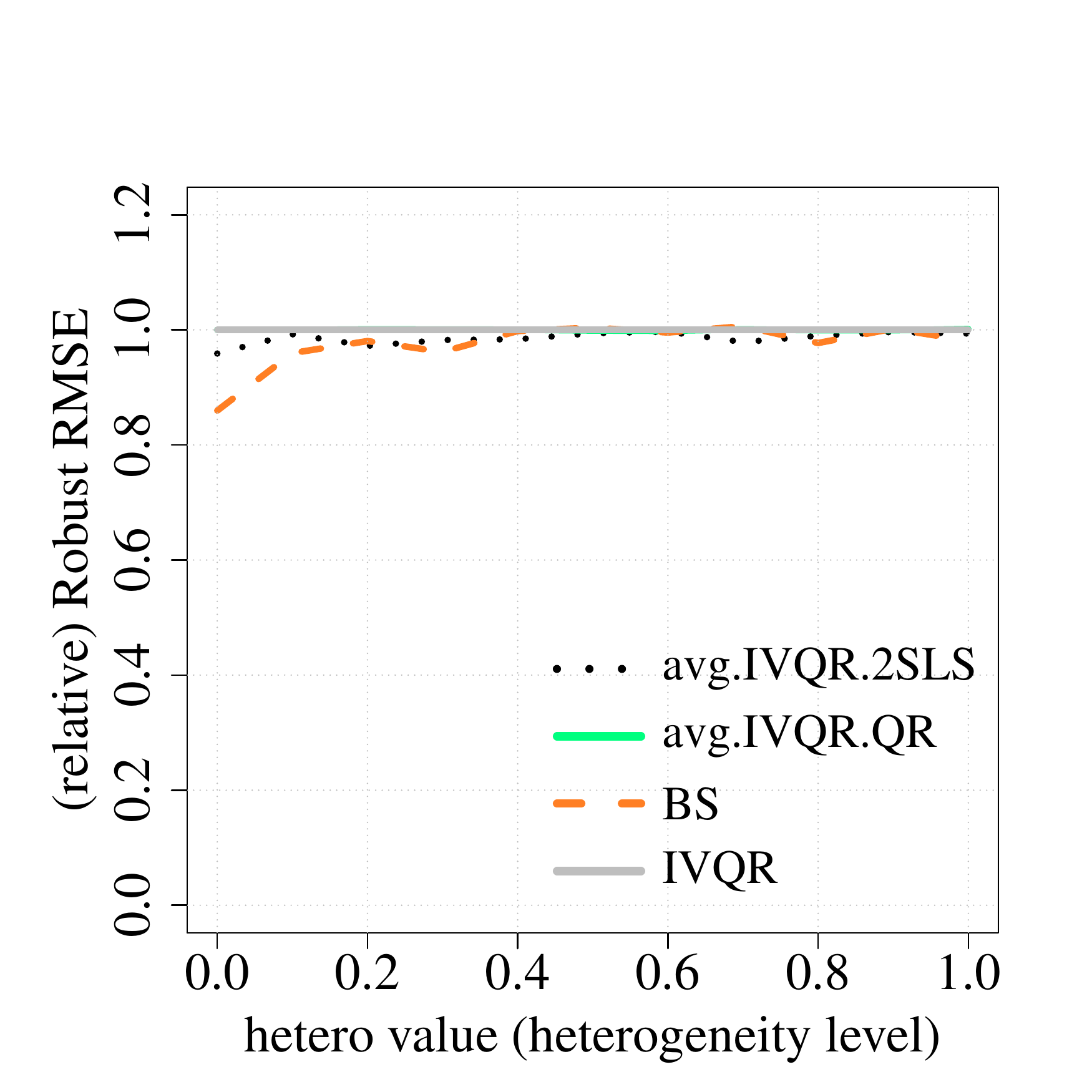}
% \hfill\null
\hfill
\includegraphics[width=0.45\textwidth, height=0.3\textheight, trim=35 20 20 70]{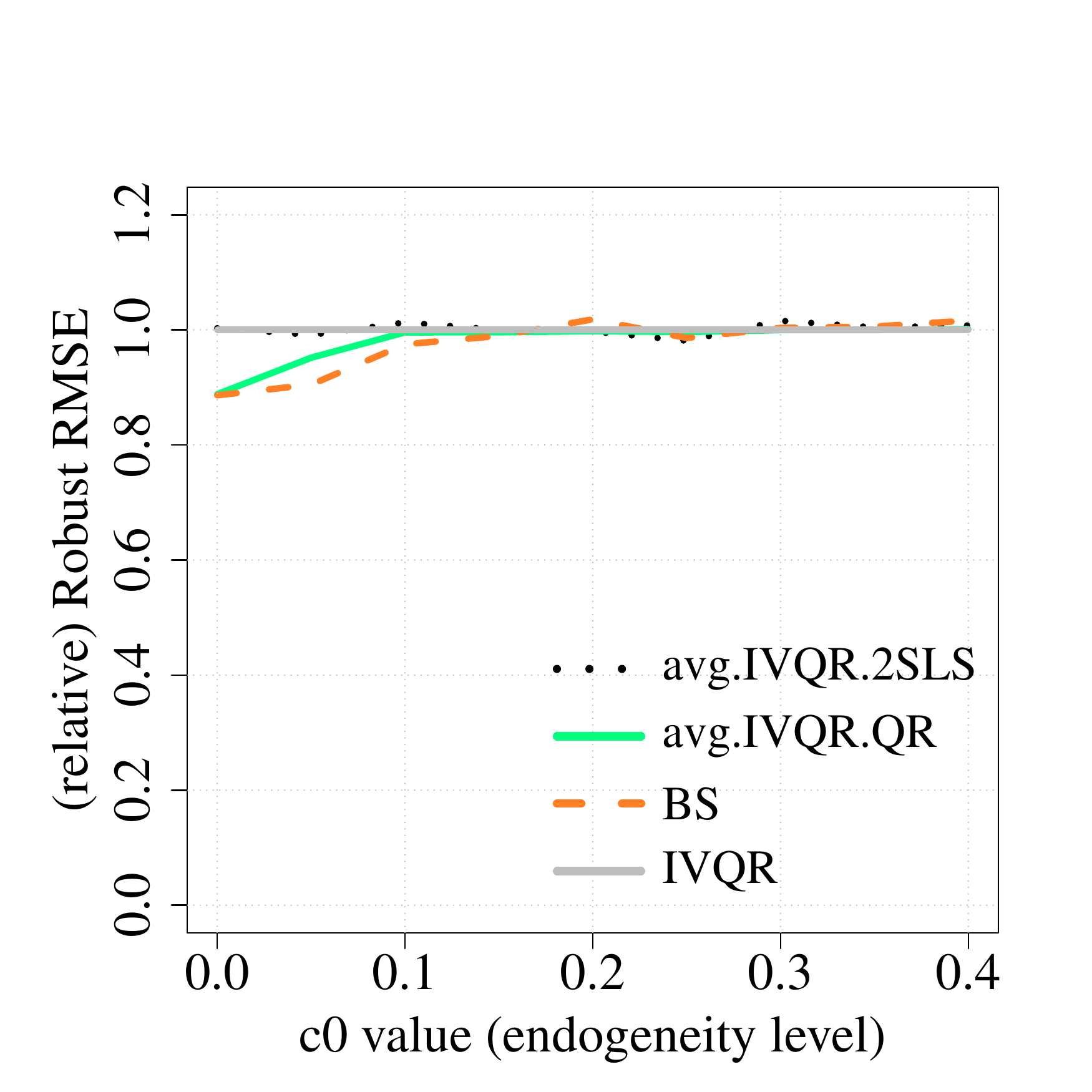}
% \hfill
\caption{\label{fig:M3:S1S22:tau0.5}%
 Relative rRMSE in simulation model 3 at median quantile level in 6 cases: fixed endogeneity level $c_0=0$  (left up), $c_0=0.2$ (left middle), $c_0=0.4$ (left bottom) and varying heterogeneity; and fixed heterogeneity level $hetero=0$ (right up), $hetero=0.5$ (right middle), $hetero=1$ (right bottom) and varying endogeneity,
 based on $\num{200}$ replications and $\num{50}$ bootstraps. Sample size n=1000.} 
\end{figure}

\subsubsection{Results at other quantiles}

\Cref{appdx:M3} includes results like \cref{fig:M3:S1S22:tau0.5,tab:bound:S3:tau=0.5}
at quantiles ranging from $\tau=0.2$ to $\tau=0.8$.
The supplemental appendix provides yet more detailed tables for these cases.

For some quantile levels, for some DGPs, the bootstrap averaging estimator's relative rRMSE is above 1,
but 1.075 is the maximum among 420 values (i.e., 60 DGPs with 7 values of $\tau$ each).
Again, this is partly due to simulation error.
The relative rRMSE exceeds 1.05 only 19 times out of 420.
In 96 out of 420 DGPs, the relative rRMSE is between 1 to 1.04.
In the other 305 out of 420 cases, 
the bootstrap averaging estimator has relative rRMSE less than 1.
At every quantile, 
the bootstrap averaging estimator's relative rRMSE can reach as low as 0.79, and at some quantiles even 0.574.

The QR estimator and the IVQR-QR averaging estimator each have similar patterns across quantiles in the 60 DGPs shown in the table in the supplemental appendix.
One finding is that in the much-endogeneity case 
(i.e., $c_0\ge0.35$),
the IVQR-QR aggressive estimator has a computation problem and performs very poorly.
The IVQR-QR averaging estimator, however, still has relative rRMSE of 1. 
This indicates the empirical weight puts almost all weight on the conservative IVQR estimator and 
zero weight on the IVQR-QR aggressive estimator.
The averaging method works in a desirable way.
The IVQR-QR averaging estimators all show uniform dominance over the IVQR estimator.

\begin{figure}[htbp]
\centering
% \hfill
\includegraphics[width=0.45\textwidth, height=0.3\textheight, trim=35 20 20 70]{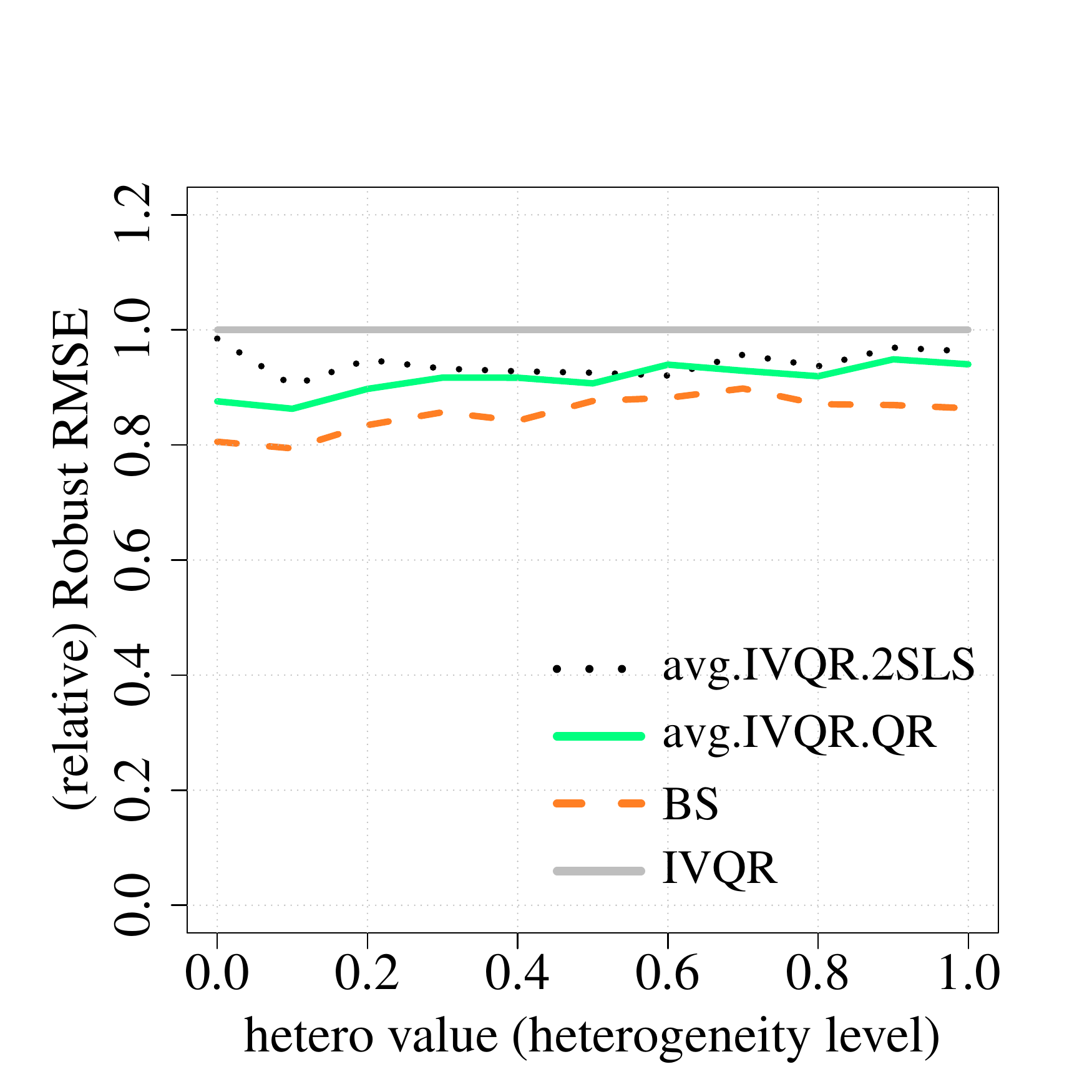}
\hfill
\includegraphics[width=0.45\textwidth, height=0.3\textheight, trim=35 20 20 70]{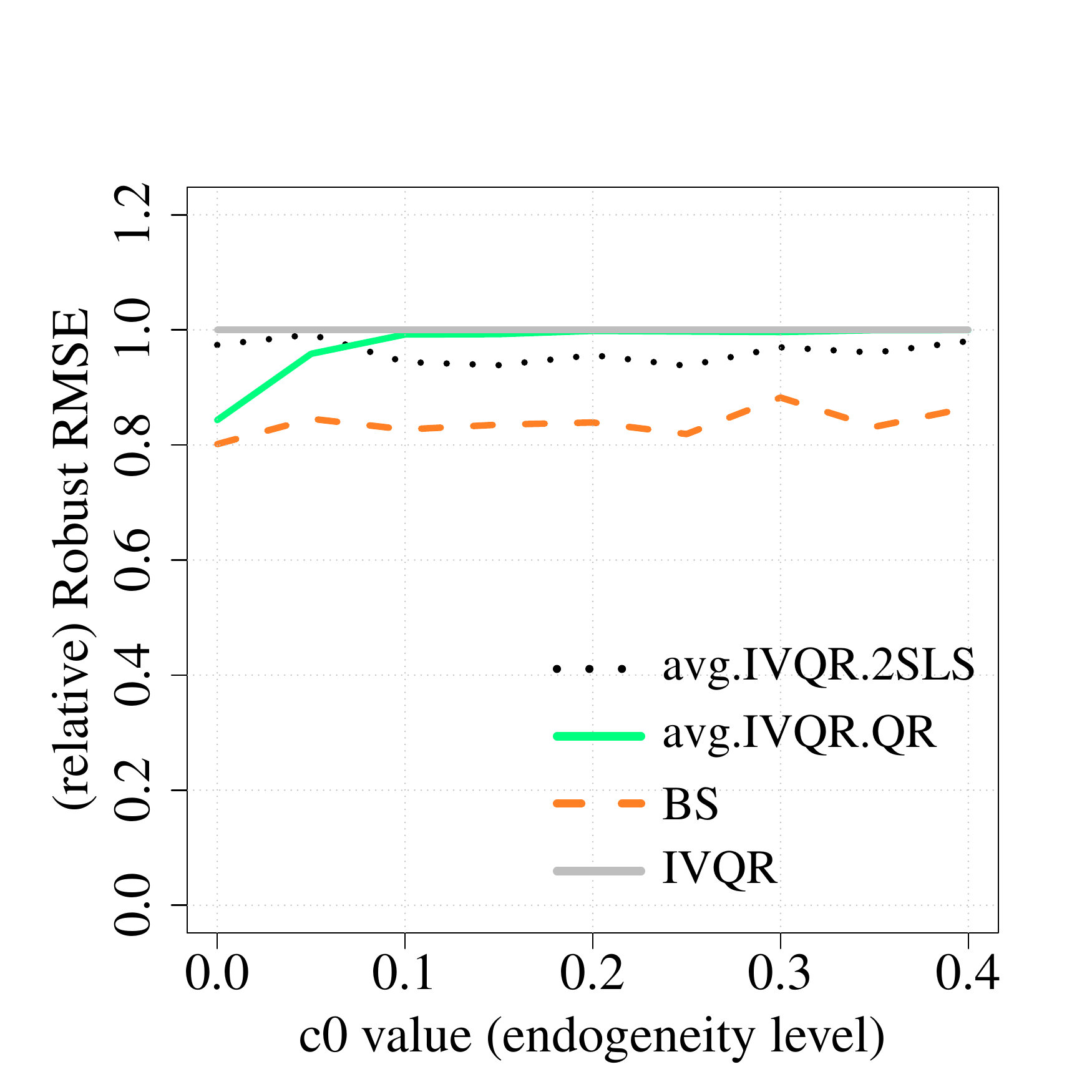}
\\
% \hfill\null
% \centering
% \hfill
\includegraphics[width=0.45\textwidth, height=0.3\textheight, trim=35 20 20 70]{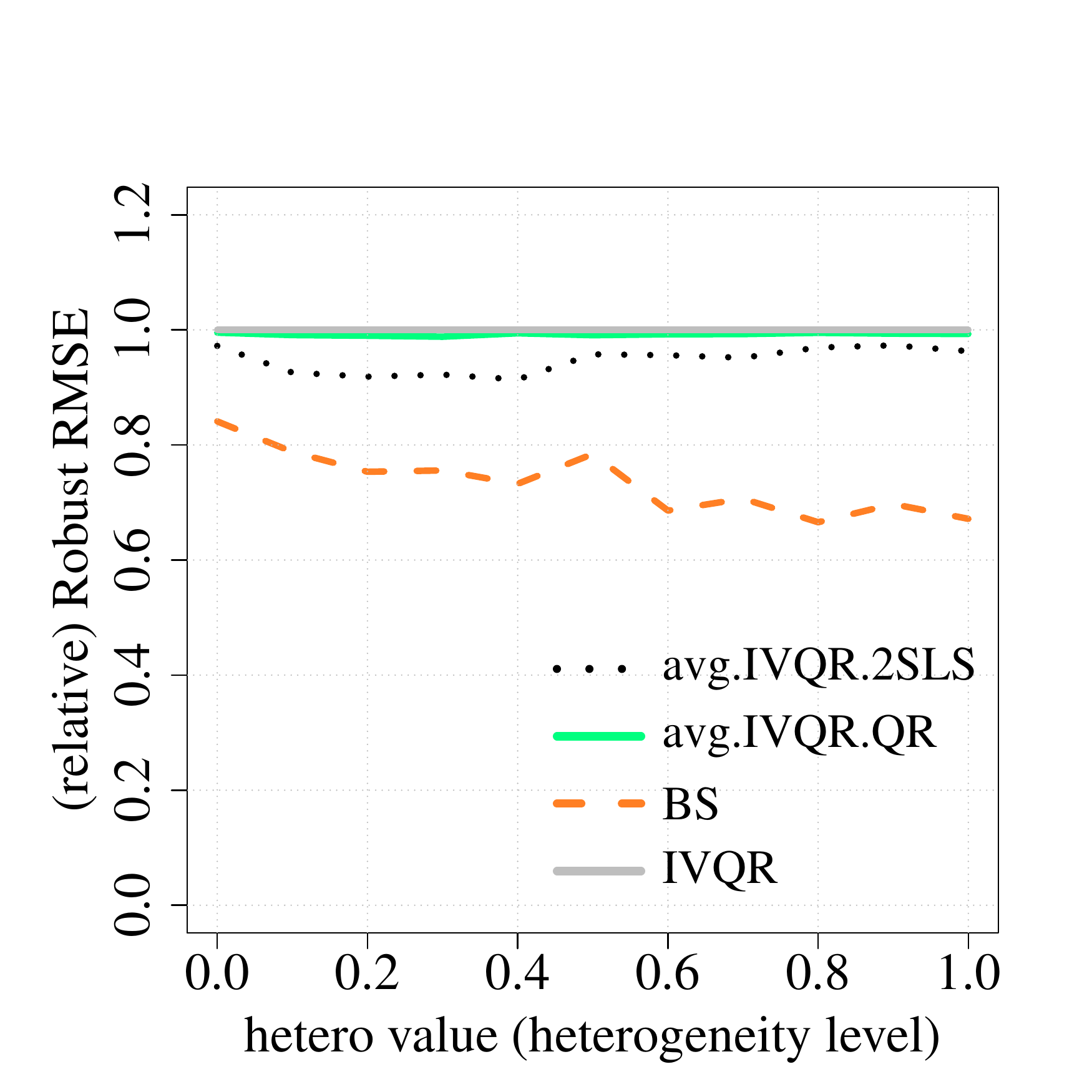}
\hfill
\includegraphics[width=0.45\textwidth, height=0.3\textheight, trim=35 20 20 70 ]{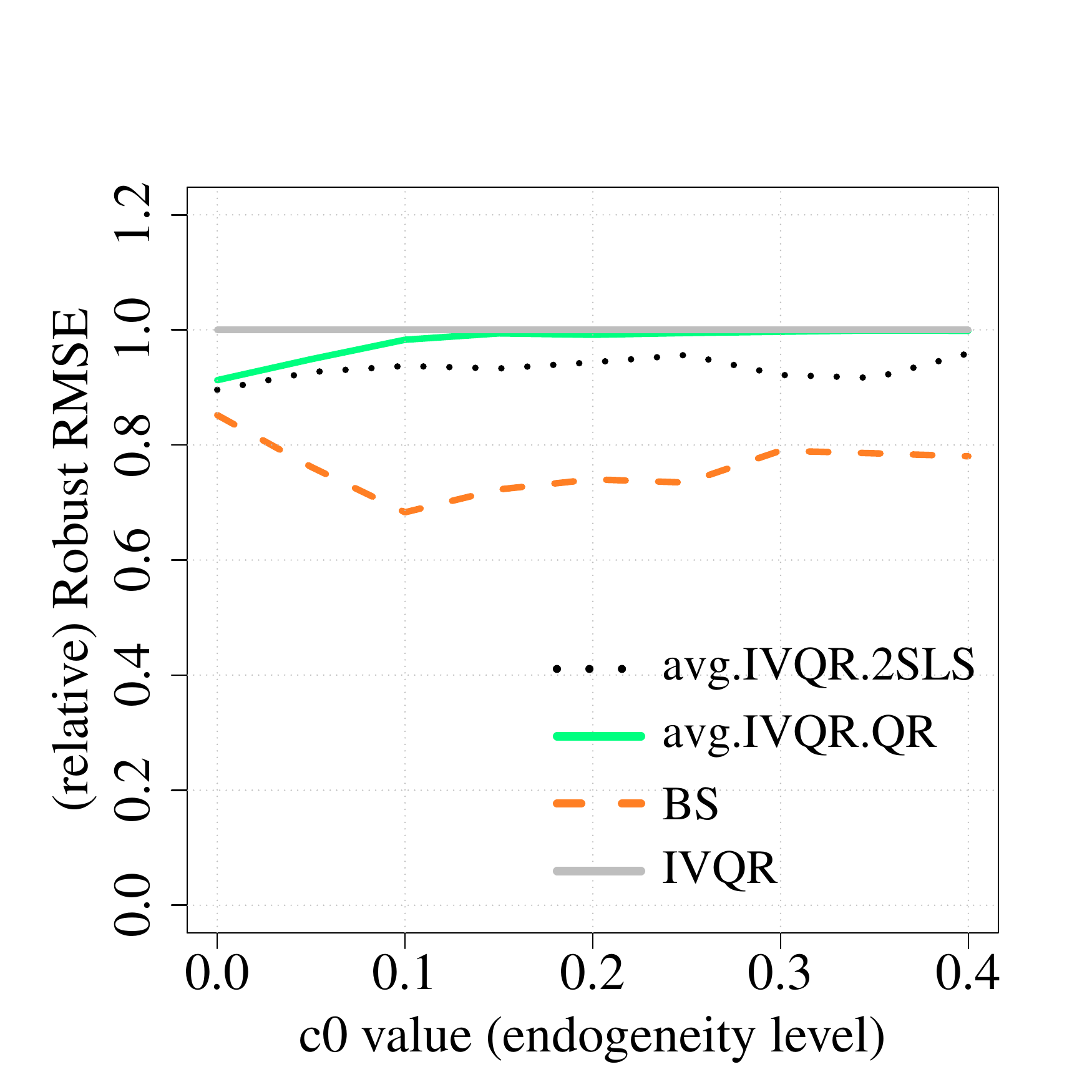}
\hfill\null
\includegraphics[width=0.45\textwidth, height=0.3\textheight, trim=35 20 20 70 ]{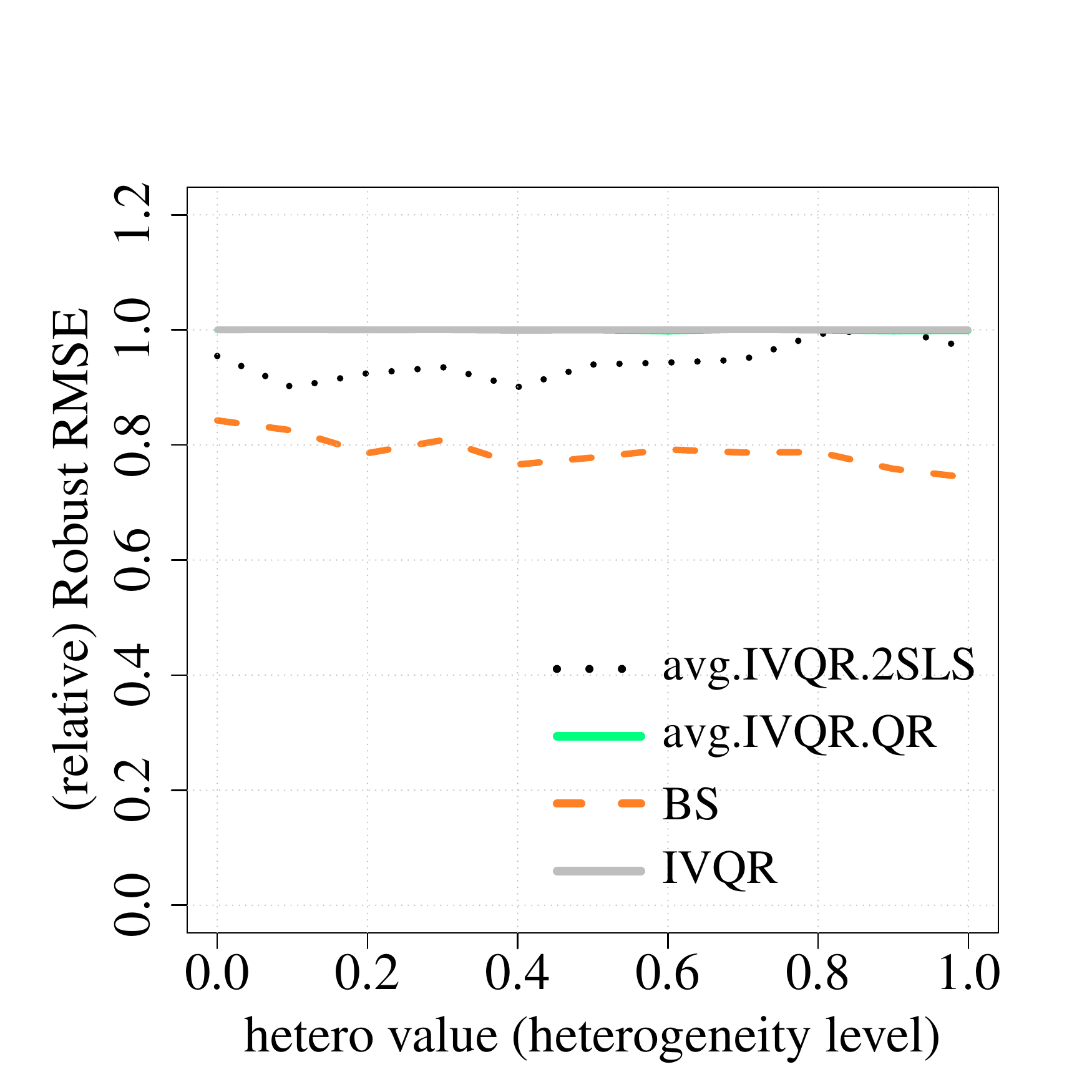}
% \hfill\null
\hfill
\includegraphics[width=0.45\textwidth, height=0.3\textheight, trim=35 20 20 70]{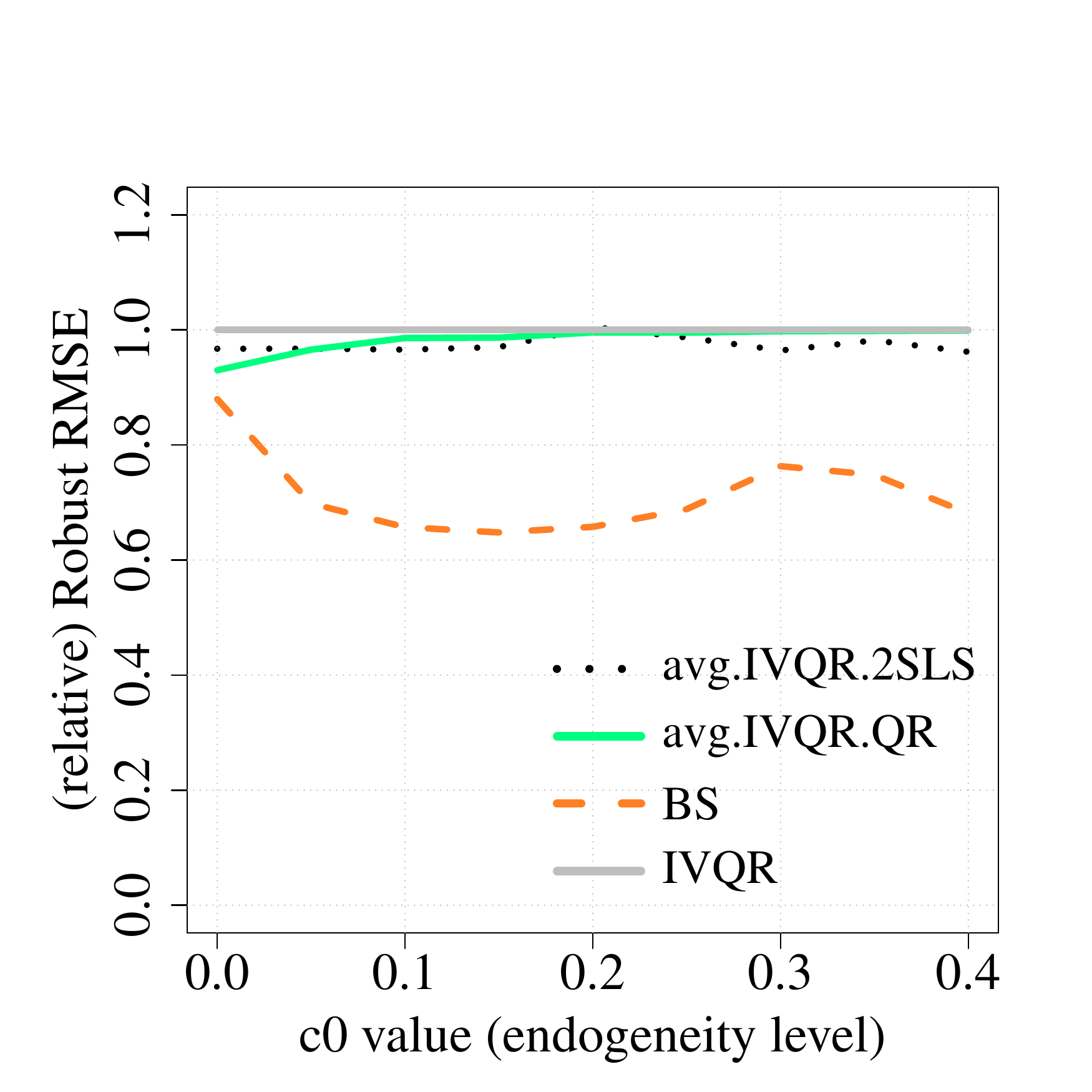}
% \hfill
\caption{\label{fig:M3:S1S22:tau0.7}%
Relative rRMSE in simulation model 3 at $\tau=0.7$ quantile level in 6 cases: fixed endogeneity level $c_0=0$  (left up), $c_0=0.2$ (left middle), $c_0=0.4$ (left bottom) and varying heterogeneity; and fixed heterogeneity level $hetero=0$ (right up), $hetero=0.5$ (right middle), $hetero=1$ (right bottom) and varying endogeneity,
based on $\num{200}$ replications and $\num{50}$ bootstraps. Sample size $n=1000$.} 
\end{figure}

The ``magic quantile'' phenomenon for 2SLS and IVQR-2SLS averaging again applies in simulation model 3, 
as seen when looking at the results across quantiles. 
In simulation model 3, 
the slope is $\theta(u_i)=hetero \times [F(u_i)]^4$ where $u \sim N(0,1)$
and $F(\cdot)$ is its CDF.
The 2SLS population slope is $\E[hetero \times [F(u_i)]^4 ]$.
Since $F(u_i)\sim\UnifDist(0,1)$, this equals $hetero$ times the fourth moment of a standard uniform distribution, which is 0.2.
The IVQR slope is $hetero \times \tau^4$. 
Since $0.68^4=0.20$, the 2SLS and IVQR slopes are equal when $\tau=0.68$.
% Solving the IV slope equal to $\tau$-quantile IVQR slope, 
% \begin{align*}
%         \E[hetero \times [F(u_i)]^4 ] &= hetero \times \tau^4\\
%         hetero \times 0.2 &= hetero \times \tau^4\\
%         \tau^*&=0.68
% \end{align*}
% we get the ``magic quantile'' is 0.68. 
% At $\tau=0.68$, 
% the IV slope is identical to the IVQR slope. 
% At the quantile that near the ``magic tau'',
% the IV slope and IVQR slope are closer,
For $\tau$ near $0.68$, even with much slope heterogeneity across units, the 2SLS and IVQR slopes are similar.
As $\tau$ gets farther from $0.68$, the IVQR slope increasingly differs from the 2SLS slope, with the rate of increase depending on $hetero$.

The simulation results show that at $\tau=0.6,0.7,0.8$, the 2SLS estimator has relative rRMSE less than 1 uniformly in all 60 DGPs, 
even in the DGPs with high $hetero$ values.
Correspondingly, the bootstrap averaging estimator is much lower than 1 at these three quantiles.
\Cref{fig:M3:S1S22:tau0.7} shows the performance of the three averaging estimators at $\tau=0.7$.
The bootstrap averaging estimator's relative rRMSE is significantly below 1,
and much below that of the IVQR-2SLS averaging estimator.
This is probably because the bootstrap method averages the 2SLS estimator directly, rather than through aggressive GMM that may not weight the 2SLS slope moments as heavily as is optimal.

Compared with the ``magic tau'' results in simulation model 1, 
in which the 2SLS and IVQR-2SLS averaging estimators have relative rRMSE much lower than 1 at quantile $\tau=0.7$ only,
here in simulation model 3, the good performance of the 2SLS and IVQR-2SLS averaging estimators
shows up at quantiles $\tau=0.6$, $\tau=0.7$, and $\tau=0.8$.
The ``magic tau'' is the same (0.68) in these two simulated models.
The difference in the performance at $\tau=0.6$ and $\tau=0.8$ 
is because the ``magic tau'' is only related with the (median) bias part of rRMSE.
That is, 
at $\tau=0.7$, 
the 2SLS and IVQR slopes are still very close (almost no bias of 2SLS estimates).
At $\tau=0.6$ and $\tau=0.8$, 
the 2SLS and IVQR slopes are somewhat close (little bias of 2SLS estimates).
In one simulated model,
the bias at $\tau=0.6$ or $\tau=0.8$ is still small compared to the (large) variance, 
whereas in another simulated model with smaller variance, 
the bias may start to dominate variance even at $\tau=0.6$ or $\tau=0.8$.
% It is hard to compare across different models.

\section{Conclusion}
\label{sec:conclusion}

This paper contributes two averaging estimation methods to improve finite-sample efficiency in IVQR estimation.
First, I implement the averaging GMM of \citet{ChengLiaoShi2019}, proposing two types of additional moments based respectively on conventional QR and 2SLS, and considering other important implementation details such as bandwidths, extending a result of \citet{Kato2012} along the way.
Second, 
I propose a new method that uses the bootstrap to estimate optimal weights for averaging IVQR, QR, and 2SLS estimators.

This paper provides simulation evidence that these three averaging estimators outperform the IVQR estimator across all kinds of DGPs in large models with multiple endogenous regressors, as well as across quantile levels from $\tau=0.2$ to $\tau=0.8$.
Bootstrap averaging offers especially substantial efficiency gains.

% This paper also provides a detailed discussion about the smoothing bandwidth used in the instrumental variables quantile regression estimation. 
% It contributes to extend \citeposs{Kato2012} optimal bandwidth formula for estimating Jacobian matrix to a general model. 

Future work could involve developing theory for the practically successful bootstrap method, 
or investigating averaging across quantiles, 
or adding non-trivial smoothing into this averaging framework.
Additionally, an IVQR application of \citet{ArmstrongKolesar2019} may better use additional-but-possibly-misspecified moments when the object of interest is a scalar instead of the full parameter vector.
% Their paper focus on scalar parameters, 
% for which the averaging does not work well, 
% whereas \citetalias{ChengLiaoShi2019} focus on the full parameter vector to get the nice uniform dominance results. 
% It is interesting to apply \citeposs{ArmstrongKolesar2019} method to IVQR in the future work. 

\singlespacing
\bibliographystyle{chicago} %chicago,elsarticle-harv...ecta, econometrica...

\appendix

\onehalfspacing

\section{Bandwidth for estimating {Jacobian} matrix}
\label{sec:app-bandwidth}

\Citet{Kato2012} provides an AMSE-optimal bandwidth for estimating the Jacobian matrix in conventional quantile regression.
He shows that in the linear CQF model
\begin{equation*}
    Y=\vecf{X}' \vecf{\beta}_0 + U \ \text{with } \  Q_{\tau}(U \mid \vecf{X})=0, 
\end{equation*}
the AMSE-optimal bandwidth for estimating $\E[f_0(0\mid\vecf{X})\vecf{X}\vecf{X}']$ is 
\begin{equation}\label{general:h}
    h_{\mathrm{opt}}=n^{-1/5} \left( \frac{ 4.5 \sum_{j,k=1}^{d_X} \E \left[f_0 ( 0 \mid \vecf{X}) X_j^2 X_k^2\right] }{\sum_{j,k=1}^{d_X} \left( \E \left[ f_0^{(2)} (0\mid \vecf{X}) X_j X_k \right]  \right)^2 } \right)^{1/5},
\end{equation}
where $f_0(\cdot \mid \vecf{X})$ is the conditional PDF of $U$, conditional on the regressors $\vecf{X}$; and $f_0^{(2)}(\cdot \mid \vecf{X})$ is the conditional PDF's second derivative.
If additionally $U \independent \vecf{X}$, 
then $f_0(u\mid \vecf{X})=f_0(u)$, and formula \cref{general:h} reduces to
\begin{equation}\label{eqn:h-unconditional}
h_{\mathrm{opt}}=n^{-1/5} \left( \frac{4.5\sum_{j,k=1}^{d_X} \E[f_0(0) X_j^2X_k^2] }{\sum_{j,k=1}^{d_X} \left( \E[ f_0^{(2)} (0) X_j X_k ]  \right)^2 } \right)^{1/5}.
\end{equation}
where $f_0(\cdot)$ is the unconditional PDF of $U$ 
and $f_0^{(2)}(\cdot)$ is its second derivative.

\Citet{Kato2012} further simplifies \cref{eqn:h-unconditional} when $U$ is standard normal.
% applies this general AMSE-optimal bandwidth formula to a specific Gaussian location model with error term following a standard normal distribution. 
That is, in the linear CQF model
\begin{equation*}
Y = \vecf{X}' \vecf{\theta}_0 + U 
\ \text{with} \  
U \mid \vecf{X} \sim \NormDist(0,1) ,
\end{equation*}
the sample analog of the optimal bandwidth \cref{general:h} reduces to 
\begin{equation}\label{bandwidth:Kato:Gaussian}
\hat{h}_{\mathrm{opt}}=n^{-1/5} \left( \frac{4.5\sum_{j,k=1}^{d_X} \left( n^{-1} \sum_{i=1}^{n} X_{ij}^2 X_{ik}^2 \right) }{\alpha(\tau)\sum_{j,k=1}^{d_X} \left( n^{-1} \sum_{i=1}^{n} X_{ij} X_{ik}  \right)^2 } \right)^{1/5},
\end{equation}
where 
\begin{equation}\label{bandwidth:Kato:Gaussian:alpha}
\alpha(\tau) \equiv \left[ 1 - \left( \Phi^{-1} \left( \tau \right) \right)^2     \right]^2 \phi \left(\Phi^{-1} \left( \tau \right) \right) ,
\end{equation}
and $\Phi(\cdot)$ and $\phi(\cdot)$ are standard normal CDF and PDF.%
\footnote{The version in \citet{Kato2012}, $\alpha(\tau)=\left[ 1-(\Phi^{-1}(\tau))     \right]^2 \phi(\Phi^{-1}(\tau) ) $, appears to have a typo; see the below proof of \cref{prop:bandwidth}.}

I extend the optimal bandwidth formula in \cref{bandwidth:Kato:Gaussian} 
to allow for general scale parameter. 

\begin{proposition}\label{prop:bandwidth}
In a linear CQF model
\begin{equation*}
Y=\vecf{X}' \vecf{\beta}_0 + U 
\ \text{with} \ 
U \mid \vecf{X} \sim \NormDist( \mu, \sigma^2 ), 
\ \text{and}\ 
\Q_{\tau}(U \mid \vecf{X})=0,
\end{equation*}
the sample analog of the AMSE-optimal bandwidth for estimating the population Jacobian matrix is
\begin{equation}\label{bandwidth:Gaussian}
\hat{h}_{\mathrm{opt}}
= n^{-1/5} \left( \frac{4.5\sum_{j,k=1}^{d_X} \left( n^{-1} \sum_{i=1}^{n} X_{ij}^2 X_{ik}^2 \right) }{\hat\alpha(\tau)\sum_{j,k=1}^{d_X} \left( n^{-1} \sum_{i=1}^{n} X_{ij} X_{ik}  \right)^2 } \right)^{1/5},
\end{equation}
where 
\begin{equation}\label{bandwidth:Gaussian:alpha}
\hat\alpha(\tau) \equiv
\frac{1}{\hat\sigma^5} \left[ 1-\left(\Phi^{-1} \left( \tau \right) \right)^2     \right]^2 \phi \left(\Phi^{-1} \left( \tau \right) \right) 
\end{equation}
and $\Phi(\cdot)$ and $\phi(\cdot)$ are standard normal CDF and PDF.
\end{proposition}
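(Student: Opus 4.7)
The plan is to specialize \citeposs{Kato2012} general AMSE-optimal bandwidth formula \cref{eqn:h-unconditional} to the stated Gaussian case, then replace population expectations with their sample analogs. Since $U \mid \vecf{X} \sim \NormDist(\mu,\sigma^2)$ makes $U$ independent of $\vecf{X}$, the simplified form \cref{eqn:h-unconditional} already applies, so the only task is to evaluate the constants $f_0(0)$ and $f_0^{(2)}(0)$ that appear outside the sums.

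The first step is to use the quantile normalization to pin down $\mu$ in terms of $\sigma$ and $\tau$. The condition $\Q_\tau(U\mid\vecf{X})=0$ combined with normality gives $\Pr(U\le 0)=\Phi(-\mu/\sigma)=\tau$, hence $\mu=-\sigma\Phi^{-1}(\tau)$ and $(0-\mu)/\sigma=\Phi^{-1}(\tau)$. Writing the density as $f_0(u)=\sigma^{-1}\phi((u-\mu)/\sigma)$, we immediately get $f_0(0)=\sigma^{-1}\phi(\Phi^{-1}(\tau))$. The second step is the chain rule together with $\phi''(z)=(z^2-1)\phi(z)$, which yields $f_0^{(2)}(u)=\sigma^{-3}\phi''((u-\mu)/\sigma)$ and therefore
\begin{equation*}
f_0^{(2)}(0)=\sigma^{-3}\bigl[(\Phi^{-1}(\tau))^2-1\bigr]\phi(\Phi^{-1}(\tau)),\quad
\bigl(f_0^{(2)}(0)\bigr)^2=\sigma^{-6}\bigl[1-(\Phi^{-1}(\tau))^2\bigr]^2\bigl[\phi(\Phi^{-1}(\tau))\bigr]^2.
\end{equation*}
(Incidentally, this is where the typo in \citeposs{Kato2012} definition of $\alpha(\tau)$ is detected and corrected: the square comes from squaring $f_0^{(2)}(0)$, not from the bracketed factor itself.)

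The third step is pure algebra: substitute into \cref{eqn:h-unconditional}. Because $U\independent\vecf{X}$, the constant $f_0(0)$ factors out of the numerator and $(f_0^{(2)}(0))^2$ out of the denominator, leaving only $\E[X_j^2 X_k^2]$ and $(\E[X_jX_k])^2$ inside the sums. Cancelling one copy of $\phi(\Phi^{-1}(\tau))$ between numerator and denominator and collecting the $\sigma$-powers ($\sigma^{-1}$ upstairs against $\sigma^{-6}$ downstairs gives $\sigma^{5}$, which attaches to the $[1-(\Phi^{-1}(\tau))^2]^2\phi(\Phi^{-1}(\tau))$ factor) produces exactly the population version of \cref{bandwidth:Gaussian} with $\alpha(\tau)=\sigma^{-5}[1-(\Phi^{-1}(\tau))^2]^2\phi(\Phi^{-1}(\tau))$. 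The final step is to replace $\E[X_j^2 X_k^2]$ and $\E[X_jX_k]$ by the sample averages $n^{-1}\sum_i X_{ij}^2X_{ik}^2$ and $n^{-1}\sum_i X_{ij}X_{ik}$, and $\sigma$ by a consistent $\hat\sigma$, to produce $\hat h_{\mathrm{opt}}$ and $\hat\alpha(\tau)$ exactly as stated.

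There is no real technical obstacle; the argument is a specialization-and-plug-in calculation on top of \citeposs{Kato2012} formula. The only mildly subtle point, which is worth flagging explicitly in the write-up, is verifying that the quantile normalization gives the correct sign in $\mu=-\sigma\Phi^{-1}(\tau)$ and hence in $(0-\mu)/\sigma=+\Phi^{-1}(\tau)$, so that $f_0(0)$ involves $\phi(\Phi^{-1}(\tau))$ rather than $\phi(-\Phi^{-1}(\tau))$ (these happen to coincide by symmetry of $\phi$, but the intermediate expression for $f_0^{(2)}(0)$ depends on the squared argument, which is why the formula ends up depending on $\tau$ only through $\Phi^{-1}(\tau)$ and $(\Phi^{-1}(\tau))^2$).
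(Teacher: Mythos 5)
Your proposal is correct and follows essentially the same route as the paper's proof: evaluate $f_0(0)$ and $f_0^{(2)}(0)$ for the normal error at its $\tau$-quantile (the paper does this by viewing $\NormDist(\mu,\sigma^2)$ as a shifted $\NormDist(0,\sigma^2)$ and evaluating at $\Phi^{-1}_{N(0,\sigma^2)}(\tau)$, you do it by solving explicitly for $\mu=-\sigma\Phi^{-1}(\tau)$ — the same computation), then factor these constants out of \cref{eqn:h-unconditional} to obtain $\alpha(\tau)=[f_0^{(2)}(0)]^2/f_0(0)=\sigma^{-5}[1-(\Phi^{-1}(\tau))^2]^2\phi(\Phi^{-1}(\tau))$ and take sample analogs. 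No gaps.
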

\begin{proof}%[Proof of \Cref{prop:bandwidth}]
% Given the fact that 
% \begin{equation*}
%   Q_{\tau}(U \mid X)=0\  \textrm{and} \ U \mid X \sim N(\mu, \sigma^2)  
% \end{equation*}
We can consider the $\NormDist(\mu, \sigma^2)$ distribution of $U$ as the $\NormDist(0, \sigma^2)$ distribution with a shift of $\mu$.
We have 
\begin{equation*}
f_U(0)=\phi_{N(0, \sigma^2)}\left(\Phi_{N(0, \sigma^2)}^{-1} \left( \tau \right) \right),
\end{equation*}
where $\phi_{N(0, \sigma^2)}$ and $\Phi_{N(0, \sigma^2)}$ are the $\NormDist(0, \sigma^2)$ PDF and CDF.

For a standard normal distribution, 
\begin{equation*}
    \phi(x)= \frac{1}{\sqrt{2\pi}} e^{-x^2/2} ,
\end{equation*}
which has the property that
\begin{equation*}
    \phi'(x)= \frac{1}{\sqrt{2\pi}} e^{-x^2/2} \left(- \frac{2x}{2} \right)=-x\phi(x) .
\end{equation*}

For a $\NormDist(0, \sigma^2)$ density function, 
\begin{equation*}
    \phi_{N(0, \sigma^2)}(x)= \frac{1}{\sqrt{2\pi \sigma^2 }} e^{- \frac{x^2}{2\sigma^2}  },
\end{equation*}
its first derivative is 
\begin{equation*}
    \phi_{N(0, \sigma^2)}'(x)= \frac{1}{\sqrt{2\pi \sigma^2}} e^{- \frac{x^2}{2\sigma^2}} \left(- \frac{2x}{2\sigma^2 } \right)
    = \left(-\frac{x}{\sigma^2} \right) \phi_{N(0, \sigma^2)}(x).
\end{equation*}
By the chain rule, its second derivative is 
\begin{align*}
    \phi_{N(0, \sigma^2)}^{(2)}(x) & = 
    \left(-\frac{1}{\sigma^2}\right) \phi_{N(0, \sigma^2)}(x)+
    \left(-\frac{x}{\sigma^2}\right) \phi_{N(0, \sigma^2)}'(x)\\
   &= \left(-\frac{1}{\sigma^2}\right) \phi_{N(0, \sigma^2)}(x)+
    \left(-\frac{x}{\sigma^2}\right) 
    \left(-\frac{x}{\sigma^2} \right) \phi_{N(0, \sigma^2)}(x)\\
   &= \frac{1}{\sigma^2} 
    \left(\frac{x^2}{\sigma^2}-1 \right) 
     \phi_{N(0, \sigma^2)}(x)    % \frac{1}{\sqrt{2\pi \sigma^2}} e^{- \frac{x^2}{2\sigma^2}} \left(- \frac{2x}{2\sigma^2 } \right)=-x\phi(x)
.
\end{align*}
Therefore,
\begin{equation*}
    f_U^{(2)}(0) =
    \frac{1}{\sigma^2} \left( \frac{ \left( \Phi_{N(0, \sigma^2)}^{-1} \left( \tau \right) \right)^2 }{\sigma^2}  -1 \right) 
    \phi_{N(0, \sigma^2)}\left(\Phi_{N(0, \sigma^2)}^{-1} \left( \tau \right) \right).
\end{equation*}
Since 
\begin{equation*}
    \Phi_{N(0, \sigma^2)}^{-1}(\tau)= \sigma \Phi_{N(0, 1)}^{-1} \left( \tau \right)
% \end{equation*}
\textrm{ and }
% \begin{equation*}
    \phi_{N(0, \sigma^2)}(x)= \frac{1}{\sigma} \phi_{N(0, 1)} \left(x/\sigma\right),
\end{equation*}
$ f_U(0)$ and $ f_U^{(2)}(0)$ further reduce to 
\begin{align*}
     f_U(0) & = \frac{1}{\sigma} \phi_{N(0, 1)}(\Phi_{N(0, \sigma^2)}^{-1}(\tau)/\sigma)  \\
     & =  \frac{1}{\sigma} \phi_{N(0, 1)}(\sigma \Phi_{N(0, 1)}^{-1}(\tau)/\sigma)\\
     & =  \frac{1}{\sigma} \phi_{N(0, 1)}( \Phi_{N(0, 1)}^{-1}(\tau)),
\end{align*}
and 
\begin{align*}
     f_U^{(2)}(0) 
     & = \frac{1}{\sigma^2} \left( \frac{ \left(\Phi_{N(0, \sigma^2)}^{-1}(\tau)\right)^2 }{\sigma^2}  -1 \right) 
     \frac{1}{\sigma} \phi_{N(0, 1)} 
     \left(\Phi_{N(0, \sigma^2)}^{-1}(\tau)/\sigma \right)  \\
     & =  \frac{1}{\sigma^2} 
     \left( \frac{ \left(\sigma \Phi_{N(0, 1)}^{-1}(\tau)\right)^2 }{\sigma^2}  -1 \right) \frac{1}{\sigma} \phi_{N(0, 1)}
     \left( \Phi_{N(0, 1)}^{-1}(\tau) \right)\\
     & =  \frac{1}{\sigma^3} 
     \left( \left(\Phi_{N(0, 1)}^{-1}(\tau)\right)^2  -1 \right) \phi_{N(0, 1)}\left( \Phi_{N(0, 1)}^{-1}(\tau)\right).
\end{align*}
Then the optimal bandwidth in \cref{general:h} reduces to 
\begin{align*}
h_{\mathrm{opt}}
&= n^{-1/5} \left( \frac{4.5 \sum_{j,k=1}^{d} \E \left[f_0 \left(0 \mid X \right) X_j^2X_k^2 \right] }{\sum_{j,k=1}^{d} \left( \E\left[ f_0^{(2)} \left(0\mid X\right) X_j X_k\right]  \right)^2 } \right)^{1/5}  \\
&=  n^{-1/5} \left( \frac{4.5\sum_{j,k=1}^{d} \E[f_0(0) X_{j}^2X_{k}^2] }{\sum_{j,k=1}^{d} \left( \E[f_0^{(2)} (0) X_{j} X_{k}]   \right)^2 } \right)^{1/5}
\\&= n^{-1/5} \left( \frac{4.5 f_0(0) \sum_{j,k=1}^{d} \E[X_{j}^2 X_{k}^2 }{\left(f_0^{(2)} (0) \right)^2\sum_{j,k=1}^{d} \left( \E[X_{j} X_{k}]  \right)^2 } \right)^{1/5}
\\&= n^{-1/5} \left( \frac{4.5 \sum_{j,k=1}^{d} \E[X_{j}^2 X_{k}^2] }{\alpha(\tau)\sum_{j,k=1}^{d} \left( \E[X_{j} X_{k}]  \right)^2 } \right)^{1/5},    
\end{align*}
where
\begin{align*}
\alpha(\tau)
&= \frac{\left[f_0^{(2)} (0) \right]^2}{f_0(0)}
= \frac{\left\{ \dfrac{1}{\sigma^3} 
     \left[ (\Phi_{N(0, 1)}^{-1}(\tau))^2  -1 \right] \phi_{N(0, 1)}\left( \Phi_{N(0, 1)}^{-1}(\tau)\right) \right\}^2 }{\dfrac{1}{\sigma} \phi_{N(0, 1)}\left( \Phi_{N(0, 1)}^{-1}(\tau)\right)}\\
     & = \frac{1}{\sigma^5} 
     \left[ (\Phi_{N(0, 1)}^{-1}(\tau))^2  -1 \right]^2 \phi_{N(0, 1)}\left( \Phi_{N(0, 1)}^{-1}(\tau)\right).
\end{align*}

In the special case $\sigma^2=1$, it reduces to $\alpha(\tau)=\left[ 1-\left(\Phi^{-1}(\tau)\right)^2     \right]^2 \phi \left(\Phi^{-1}(\tau) \right)$.
The sample analog replaces $\sigma$ with $\hat\sigma$ and replaces the expectations with sample averages.
\end{proof}

The optimal bandwidth formula \cref{bandwidth:Gaussian} is very helpful in practice, 
especially with multiple regressors since nonparametric conditional density derivative estimators converge slowly.
In these cases, using the optimal bandwidth formula \cref{general:h} 
might suffer large estimation error.

\citeposs{Kato2012} optimal bandwidth formula is under the assumption that regressors are exogenous. 
That is, $X$ represents both regressor vector and the full instrument vector.
I modify Kato's optimal bandwidth formula by replacing $X_{j}X_{k}$ term with $X_{j}Z_{k}$ for $j=1, \ldots, d_X$ and $k = 1, \ldots, d_Z$ 
to adapt to IVQR.
I also replace the conditional density functions that are conditioning on regressors to that conditioning on both the regressors and the instruments, since the IVQR object of interest is $\E[f_0(0\mid\vecf{X},\vecf{Z})\vecf{X}\vecf{Z}']$; e.g., see (16) in \citet{KaplanSun2017}.

Following this modification for IVQR, 
the general AMSE-optimal bandwidth formula \cref{general:h} for estimating the Jacobian matrix becomes
\begin{equation}\label{general:h:XZ}
    h_{\mathrm{opt}}=n^{-1/5} \left( \frac{4.5 \sum_{j=1}^{d_X} \sum_{k=1}^{d_Z} \E \left[ f_0 ( 0 \mid \vecf{X}, \vecf{Z}) X_j^2 Z_k^2 \right] }{\sum_{j=1}^{d_X} \sum_{k=1}^{d_Z} \left( \E \left[ f_0^{(2)} (0 \mid \vecf{X}, \vecf{Z} ) X_j Z_k \right]  \right)^2 } \right)^{1/5},
\end{equation}
and the AMSE-optimal plug-in bandwidth formula \cref{bandwidth:Gaussian} in the general Gaussian model becomes 
\begin{equation}\label{bandwidth:Gaussian:XZ}
\hat{h}_{\mathrm{opt}}=
n^{-1/5} \left( \frac{4.5 \sum_{j=1}^{d_X} \sum_{k=1}^{d_Z} \left( n^{-1} \sum_{i=1}^{n} X_{ij}^2 Z_{ik}^2 \right) }{\hat\alpha(\tau)\sum_{j=1}^{d_X} \sum_{k=1}^{d_Z} \left( n^{-1} \sum_{i=1}^{n} X_{ij} Z_{ik}  \right)^2 } \right)^{1/5} ,
\end{equation}
with the same $\hat\alpha(\tau)$ as in \cref{bandwidth:Gaussian:alpha}.

In \cref{sec:sim}, for simulation model 1,
I use optimal bandwidth formula \cref{general:h:XZ} to estimate the Jacobian matrix.
Model 1 is a small model with one binary treatment variable and one binary instrument, so the conditional PDF and its second derivative are relatively easy to estimate nonparametrically.
The optimal bandwidth formula \cref{general:h:XZ} could probably be improved, 
but it seems to work well in the simulations; any further improvement to the bandwidth would only improve the performance of the averaging GMM estimators.

For simulation models 2 and 3, I instead use the optimal bandwidth formula \cref{bandwidth:Gaussian:XZ}. 
These two simulation models are large models with multiple continuous endogenous regressors and instruments.
Fully nonparametric estimation of the conditional density and its second derivative is possible but would have large estimation error given my sample sizes.

% \section{Proofs}
% \label{sec:app-proofs}

\section{Simulation DGP details}

\subsection{Simulation model 1 details}
\label{app:DGP-1}

Simulation model 1 is a DGP based on the Job Training Partnership Act (JTPA) empirical application of \citet{KaplanSun2017}, which in turn is based on \citet{AbadieEtAl2002}.

Consider a structural random coefficient model that describes the impact of a job training program on an individual's earnings,\footnote{This JTPA-based simulation DGP generalizes DGP 1 in \citet{deCastroGalvaoKaplanLiu2019} 
to a class of DGPs that account for 
different combinations of endogeneity, heterogeneity, and fat-tail levels.}
\begin{equation}
Y_i = \beta(U_i)+\gamma(U_i) \times D_i ,
\end{equation}
where $Y_i$ is the outcome variable that denotes individual $i$'s earning.
Here, $U_i \sim \UnifDist(0,1)$ captures individual $i$'s unobserved innate ability.
The intercept $\beta(\cdot)$ and slope $\gamma(\cdot)$ are functions of the individual's innate ability.
That is, 
$\beta(U_i)$ and $\gamma(U_i)$ denote the individual-specific intercept and slope, respectively.

The job training program is randomly offered to individuals, 
denoted as a binary variable $Z_i$.
When $Z_i=1$, it means individual $i$ is offered the job training, and $Z_i=0$ means individual $i$ is not offered the training.
The offer probability is $\Pr(Z_i=1)=1/2$.
% The probability of $Z_i$ in each status is 
% \begin{equation}
% Z_i = \left\{
% \begin{array}{ll}
% 0 & \text{w.p. } 1/2 \\
% 1 & \text{w.p. } 1/2 .
% \end{array}
% \right.
% \end{equation}

When the individual is offered the job training program, he can choose to take this training program or not.
% It is a binary self-selection endogenous variable $D_i$.
If individual $i$ actually takes the training, then $D_i=1$, otherwise $D_i=0$.
There is self-selection based on the unobserved innate ability.
I model the take-up decision as
\begin{equation}\label{appd:SimModel1:ProbFn}
\Pr(D_i=1 \mid Z_i=1, U_i)=0.5+c_1(U_i-0.5) ,
\quad
\Pr(D_i=1 \mid Z_i=0,U_i)=0 .
\end{equation}
It describes the conditional probability of the individual taking the job training program, 
conditional on the fact that the individual is offered, 
and given his innate ability.

The constant $c_1 \in [0, 1] $ captures the endogeneity level of the observed choice variable $D_i$.
When $c_1=0$, 
the conditional probability \cref{appd:SimModel1:ProbFn} becomes a constant $0.5$. 
It means the individual's innate ability is irrelevant to the decision to take the job training offer:
\begin{equation}
\Pr(D_i=1 \mid Z_i=1, U_i)
= \Pr(D_i=1 \mid Z_i=1)
= 0.5.
\end{equation}
It implies 
\begin{equation}
  \Pr(D_i=1 \mid U_i) 
= \Pr(D_i=1) .
\end{equation}
The treatment variable $D_i$ is uncorrelated with the individual's unobserved innate ability. 
There is no endogeneity in the model. 
When $c_1=1$, the conditional probability function becomes 
\begin{equation}
\Pr(D_i=1 \mid Z_i=1, U_i) = U_i ,
\end{equation}
which depends more strongly on innate ability than for any other $c_1<1$.

Each $c_1$ value corresponds to a DGP with a distinct endogeneity level. 
As $c_1$ increases, 
innate ability plays an increasing role in the self-selection process, i.e., increasing endogeneity of $D_i$.
I consider $c_1=0$ and $c_1=1$ as the two endpoints, 
denoting the DGPs with no endogeneity 
and most endogeneity, respectively.

One of the main features of quantile regression is to capture slope heterogeneity. 
I use the individual-specific random coefficient model to represent the slope heterogeneity.  
It assumes 
the treatment effect of training program on each individual is heterogeneous. 
The slope term, 
which measures the treatment effect of the training program, 
is a function of the individual's specific unobserved innate ability. 
Let the slope function be
\begin{equation}
\gamma(U_i)=100c_2 U_i^4.
\end{equation}
The nonnegative constant $c_2$ indicates how much treatment effect heterogeneity there is  across individuals with different ability levels.
Each $c_2$ value corresponds to a different DGP with a specific heterogeneity level.

When $c_2=0$, the slope term becomes a constant (zero) for all individuals.
That means the treatment effect is the same for both high and low ability individuals.
There is no heterogeneity.
% on treatment effect among individuals with different innate ability.

The larger $c_2$ is,
the larger is the difference of the treatment effect between lower and higher ability individuals, i.e., larger heterogeneity.
% The heterogeneity level increases as $c_2$ increases.
In the simulation model,
$c_2$ increases from 0 to 1 to indicate a class of DGPs with different (increasing) heterogeneity levels.
% It treats the value 0 and value 1 as the two endpoints, 
% indicating the no heterogeneity case and the full heterogeneity case, respectively.

% Besides the slope heterogeneity, 
% this simulation model also considers an individual-specific intercept term. 
% It considers two cases. 
Two types of (random) intercept are considered.
% First case is to let the intercept term be 
% \begin{equation}
The first is $\beta(U_i)=60+Q(U_i)$,
% \end{equation} 
where $Q(\cdot)$ is the $\chi^2_3$ quantile function.
It represents that without training, 
higher-ability people have higher earnings, 
and lower-ability people have lower earnings.
The second 
% \begin{equation}
% \beta(U_i)=60+Q(U_i),
% \end{equation}
replaces $Q(\cdot)$ by the quantile function of the $t$-distribution with $c_3\ge1$ degrees of freedom.
%
% \footnote{Consider $c_3$ greater or equal to 1.}
In this second case, the slope term is set to be a constant zero, i.e., no heterogeneous training effects.
When $c_3=1$,
the distribution of earning is very fat-tailed (Cauchy)
with respect to individuals' innate ability.
As $c_3$ increases, 
the distribution of earning becomes less fat-tailed,
approaching a normal distribution. 
Each $c_3$ value represents a specific DGP with a different fat-tail level of the earnings distribution with respect to individuals' ability levels.

Overall, 
this simulation model considers $11$ possible values of each simulation parameter: 
$c_1 \in \{ 0, 0.1,  \ldots, 1 \} $, $c_2 \in \{ 0, 0.1, \ldots, 1 \} $, and 
$c_3 \in \{ 1, 1.5,  \ldots, 4.5, 5, 6, 7 \} $.
% \footnote{That is, 
% $c_1$ has 11 different values;
% $c_2$ has 11 different values;
% and $c_3$ has 11 different values;}
In the first case of the $\chi^2_3$ intercept term, 
there is no $c_3$ involved. 
In total, it has $11\times 11=121$ DGPs of the $(c_1, c_2, c_3=\mathrm{NA}) $ combinations.  
In the second case of the $t$-distributed intercept term, 
different $c_3$ values are involved, but $c_2$ is set to zero.
It also has $11\times 11=121$ DGPs of the $(c_1, c_2=0, c_3) $ combinations. 
Therefore, this simulation model has $121+121=242$ DGPs altogether.

I first run simulations with all 242 DGPs. 
Based on the simulation results, 
% combining the DGPs with similar $(c_1, c_2, c_3)$ values and similar results,
I select 14 representative DGPs with distinct combinations of endogeneity, heterogeneity, and fat-tail levels that span the ranges of each.
I label the endogeneity level as ``No,'' ``Some,'' or ``Much,'' based on the simulation results with respect to $c_1$ values. 
Similarly, I label heterogeneity level in four catogories: ``No,'' ``Little,'' ``Some,'' and ``Much,'' with respect to $c_2$; 
and fat-tail level in four categories: ``NA,'' ``Little,'' ``Some,'' and ``Much,'' based on $c_3$.%
\footnote{Specifically, based on the simulation results, I label $c_1=0$ as ``no endogeneity''; $c_1 \in [0.1, 0.3]$ as ``some endogeneity''; and $c_1 \ge 0.5$ as ``very much endogeneity'' in a DGP.
I label $c_2=0$ as ``no heterogeneity''; $c_2=0.1$ as ``little heterogeneity''; $c_2 \in [0.3, 0.6]$ as ``some heterogeneity''; $c_2=0.9$ as ``very much heterogeneity.''
I label $c_3=1$ as ``very much fat tail''; $c_3=3$ as ``some fat tail''; and the case $c_3 \ge 5$ as ``little fat tail.''}

\subsection{Uniform dominance condition}

The fact that the averaging estimators fail to show uniform dominance over the IVQR estimator in simulation model 1 is essentially because the uniform dominance condition does not hold for model 1.
A few studies give more detailed discussion about the uniform dominance condition. 
\citetalias{ChengLiaoShi2019} Theorem 5.2 gives a sufficient condition for the uniform dominance results in the averaging GMM framework.
In some cases, this condition reduces to a requirement of at least four parameters to be considered, whereas simulation model 1 considers only one parameter.
\citetalias{ChengLiaoShi2019} footnote 9 gives a review about the uniform dominance condition in the Gaussian location model in other studies. 
In most of the studies, the uniform dominance condition requires more than four parameters to be estimated. 
\Citet{Hansen2017} 
shows the averaging estimator between the 2SLS estimator and the OLS estimator will uniformly dominate the 2SLS estimator under the condition that the number of endogenous regressors is at least 3.

A common understanding about the uniform dominance condition in averaging estimation is that 
the averaging method has uniform dominance in large models with at least 4 endogenous regressors.
In contrast, in a small model with only one endogenous regressor, 
the averaging estimator does not have uniform dominance over the conservative estimator.
Both \citetalias{ChengLiaoShi2019} simulated model 3 and this paper's simulated model 1 provide simulation findings that in the model with one endogenous regressor, the averaging estimator does not show any uniform dominance over the conservative estimator. 
The models in \cref{sim:model2,sim:model3} are larger models that have 6 endogenous regressors.
The simulation findings suggest the desirable uniform dominance of the averaging estimator over the IVQR estimator.

In practice, it is quite often that the (endogenous) treatment variables we are interested in have a few interaction terms with other variables. 
In such cases, 
multiple endogenous regressors are involved in the model, 
and the uniform dominance condition is easily satisfied.
Then, using the averaging methods proposed in this paper can improve estimation efficiency over the IVQR estimator.

\section{Additional simulation results}
\label{app:sim}

\subsection{Simulation model 1 results at different quantile levels}
\label{appdx:M1}

\begin{table}[htbp!]
    \centering\caption{\label{tab:S1:IVQR-2SLS:comparetau} Finite-sample relative rRMSE of IVQR-2SLS averaging, aggressive estimators, and IV estimator in JTPA-based simulation model 1.}
            \sisetup{round-precision=2,round-mode=places}
    \begin{threeparttable}
    \begin{tabular}{ccccScccScccccScccc}
    \toprule
    &                     \multicolumn{4}{c}{$\tau=0.2$} &&               \multicolumn{4}{c}{$\tau=0.3$} &&          \multicolumn{4}{c}{$\tau=0.4$} \\
    \cmidrule{2-5} \cmidrule{7-10}  \cmidrule{12-15}  
    &                     \multicolumn{2}{c}{$\mathrm{IVQR.2SLS}$}&               && & \multicolumn{2}{c}{$\mathrm{IVQR.2SLS}$} &               &&  &
    \multicolumn{2}{c}{$\mathrm{IVQR.2SLS}$} &               
      \\
    \cmidrule{2-3} \cmidrule{7-8} \cmidrule{12-13}  
    $\mathrm{DGP }$   
    & $\mathrm{AVG}$ & $\mathrm{AGG}$ & $\mathrm{2SLS}$ & $\mathrm{CON}$ 
    && $\mathrm{AVG}$ & $\mathrm{AGG}$ & $\mathrm{2SLS}$ & $\mathrm{CON}$ 
    && $\mathrm{AVG}$ & $\mathrm{AGG}$ & $\mathrm{2SLS}$ & $\mathrm{CON}$ 
 \\
    \midrule
1            &   $\num[math-rm=\mathbf]{0.941798}$  & $\num{0.871945}$ & $\num{0.597444}$   & $\num{0.200699}$ 
&& $\num{1.004959}$  & $\num{1.016194}$ & $\num{1.332099}$   & $\num{0.222795}$ 
&& $\num[math-rm=\mathbf]{0.968725}$  & $\num{0.967158}$ & $\num{1.132076}$   & $\num{0.262160}$
\\
2            &   $\num[math-rm=\mathbf]{0.909462}$  & $\num{0.873552}$ & $\num{0.801688}$   & $\num{0.238507}$  
&& $\num[math-rm=\mathbf]{0.991528}$  & $\num{0.989130}$ & $\num{1.382659}$   & $\num{0.242750}$ 
&&  $\num{1.007672}$  & $\num{1.030468}$ & $\num{1.219754}$   & $\num{0.275171}$ 
\\
3            &   $\num[math-rm=\mathbf]{0.993463}$  & $\num{1.260438}$ & $\num{1.187777}$   & $\num{0.268979}$
&& $\num{1.006754}$  & $\num{0.984176}$ & $\num{1.124869}$   & $\num{0.289591}$
&& $\num[math-rm=\mathbf]{0.983354}$  & $\num{0.953060}$ & $\num{0.959583}$   & $\num{0.339472}$
\\
4            &   $\num[math-rm=\mathbf]{0.981629}$  & $\num{1.679892}$ & $\num{1.975982}$   & $\num{0.286971}$
&& $\num[math-rm=\mathbf]{0.983002}$  & $\num{1.003114}$ & $\num{1.105015}$   & $\num{0.280525}$
&&  $\num[math-rm=\mathbf]{0.984543}$  & $\num{0.932994}$ & $\num{0.959015}$   & $\num{0.323231}$
\\
5            &   $\num[math-rm=\mathbf]{0.927540}$  & $\num{0.828456}$ & $\num{0.608481}$   & $\num{0.262225}$ 
&& $\num{1.007973}$  & $\num{4.365245}$ & $\num{18.063972}$   & $\num{0.435454}$
&&  $\num{1.038360}$  & $\num{8.001935}$ & $\num{8.480688}$   & $\num{0.845524}$ 
\\
6            &   $\num[math-rm=\mathbf]{0.946775}$  & $\num{0.859464}$ & $\num{0.709994}$   & $\num{0.292543}$ 
&& $\num{1.003724}$  & $\num{7.188967}$ & $\num{17.311542}$   & $\num{0.471132}$ 
&&  $\num{1.013825}$  & $\num{9.459444}$ & $\num{9.905162}$   & $\num{0.753200}$ 
\\
7            &   $\num{1.071536}$  & $\num{1.977770}$ & $\num{2.317601}$   & $\num{0.334383}$ 
&& $\num{1.034245}$  & $\num{4.796462}$ & $\num{14.867341}$   & $\num{0.521376}$
&&  $\num{1.043314}$  & $\num{8.191025}$ & $\num{9.342972}$   & $\num{0.773788}$ 
\\
8            &   $\num{1.114988}$  & $\num{2.230924}$ & $\num{2.519431}$   & $\num{0.392785}$ 
&& $\num{1.013410}$  & $\num{5.689098}$ & $\num{19.541229}$   & $\num{0.664187}$
&&  $\num{1.015020}$  & $\num{10.928502}$ & $\num{12.754568}$   & $\num{0.949278}$ 
\\
9            &    $\num[math-rm=\mathbf]{0.875496}$  & $\num{0.818780}$ & $\num{0.730609}$   & $\num{0.336555}$ 
&& $\num{1.031585}$  & $\num{12.035078}$ & $\num{26.191465}$   & $\num{0.705640}$
&&  $\num{1.032594}$  & $\num{9.957519}$ & $\num{12.605145}$   & $\num{1.341854}$ 
\\
10            &   $\num[math-rm=\mathbf]{0.881206}$  & $\num{0.812385}$ & $\num{0.515840}$   & $\num{0.262022}$  
&& $\num[math-rm=\mathbf]{0.985131}$  & $\num{1.020977}$ & $\num{1.090029}$   & $\num{0.190439}$
&&  $\num{1.014666}$  & $\num{1.010461}$ & $\num{1.114727}$   & $\num{0.186220}$ 
\\
11            &   $\num[math-rm=\mathbf]{0.929318}$  & $\num{0.864419}$ & $\num{0.642252}$   & $\num{0.421388}$ 
&& $\num[math-rm=\mathbf]{0.996671}$  & $\num{0.996941}$ & $\num{20.011687}$   & $\num{0.276590}$ 
&&  $\num[math-rm=\mathbf]{0.992762}$  & $\num{1.011250}$ & $\num{25.908973}$   & $\num{0.213634}$
\\
12            &   $\num[math-rm=\mathbf]{0.992317}$  & $\num{1.473281}$ & $\num{1.680375}$   & $\num{0.296785}$  
&& $\num[math-rm=\mathbf]{0.906470}$  & $\num{0.846264}$ & $\num{0.688699}$   & $\num{0.228425}$ 
&&  $\num[math-rm=\mathbf]{0.937887}$  & $\num{0.888482}$ & $\num{0.865700}$   & $\num{0.181721}$ 
\\
13            &   $\num[math-rm=\mathbf]{0.981839}$  & $\num{1.514421}$ & $\num{1.645674}$   & $\num{0.331279}$ 
&& $\num[math-rm=\mathbf]{0.952824}$  & $\num{0.887062}$ & $\num{0.801376}$   & $\num{0.254676}$ 
&&  $\num[math-rm=\mathbf]{0.970872}$  & $\num{0.997993}$ & $\num{1.050454}$   & $\num{0.194289}$
\\
14            &   $\num[math-rm=\mathbf]{0.850159}$  & $\num{1.156678}$ & $\num{1.171333}$   & $\num{0.793780}$ 
&& $\num{1.006167}$  & $\num{1.004608}$ & $\num{14.901967}$   & $\num{0.406347}$ 
&&  $\num[math-rm=\mathbf]{0.990812}$  & $\num{1.007877}$ & $\num{23.903279}$   & $\num{0.253328}$ 
\\
\midrule
    &                     \multicolumn{4}{c}{$\tau=0.6$} &&               \multicolumn{4}{c}{$\tau=0.7$} &&          \multicolumn{4}{c}{$\tau=0.8$} \\
    \cmidrule{2-5} \cmidrule{7-10}  \cmidrule{12-15}  
    &                     \multicolumn{2}{c}{$\mathrm{IVQR.2SLS}$}&               && & \multicolumn{2}{c}{$\mathrm{IVQR.2SLS}$} &               &&  &
    \multicolumn{2}{c}{$\mathrm{IVQR.2SLS}$} &               
      \\
    \cmidrule{2-3} \cmidrule{7-8} \cmidrule{12-13}  
    DGP   
    & $\mathrm{AVG}$ & $\mathrm{AGG}$ & $\mathrm{2SLS}$ & $\mathrm{CON}$ 
    && $\mathrm{AVG}$ & $\mathrm{AGG}$ & $\mathrm{2SLS}$ & $\mathrm{CON}$ 
    && $\mathrm{AVG}$ & $\mathrm{AGG}$ & $\mathrm{2SLS}$ & $\mathrm{CON}$ 
 \\
    \midrule
1            &    $\num[math-rm=\mathbf]{0.887270}$  & $\num{0.768824}$ & $\num{0.791940}$   & $\num{0.374756}$ 
&& $\num[math-rm=\mathbf]{0.767970}$  & $\num{0.657689}$ & $\num{0.672506}$   & $\num{0.441311}$ 
&&  $\num[math-rm=\mathbf]{0.732586}$  & $\num{0.522260}$ & $\num{0.536476}$  & $\num{0.586038}$  
\\
2            &    $\num[math-rm=\mathbf]{0.968070}$  & $\num{0.897544}$ & $\num{0.868689}$   & $\num{0.386376}$
&& $\num[math-rm=\mathbf]{0.974841}$  & $\num{0.851511}$ & $\num{0.843664}$   & $\num{0.397837}$
&& $\num[math-rm=\mathbf]{0.794872}$  & $\num{0.580228}$ & $\num{0.503935}$   & $\num{0.531690}$ 
\\
3            &   $\num[math-rm=\mathbf]{0.913903}$  & $\num{0.843854}$ & $\num{0.848451}$   & $\num{0.383936}$
&&  $\num[math-rm=\mathbf]{0.852418}$  & $\num{0.818769}$ & $\num{0.800879}$   & $\num{0.406742}$
&&  $\num[math-rm=\mathbf]{0.787704}$  & $\num{0.601809}$ & $\num{0.549141}$   & $\num{0.590445}$ 
\\
4            &   $\num[math-rm=\mathbf]{0.992154}$  & $\num{0.983569}$ & $\num{0.947960}$   & $\num{0.327001}$  
&& $\num[math-rm=\mathbf]{0.873452}$  & $\num{0.827219}$ & $\num{0.756978}$   & $\num{0.409502}$  
&& $\num[math-rm=\mathbf]{0.800401}$  & $\num{0.697760}$ & $\num{0.695507}$   & $\num{0.473768}$
\\
5            &   $\num{1.160737}$  & $\num{1.503458}$ & $\num{1.348899}$   & $\num{2.284178}$  
&&  $\num[math-rm=\mathbf]{0.781347}$  & $\num{0.632111}$ & $\num{0.564522}$   & $\num{3.002037}$ 
&&  $\num{1.102007}$  & $\num{2.351196}$ & $\num{2.186977}$  &  $\num{3.823902}$  
\\
6            &  $\num{1.170188}$  & $\num{1.967256}$ & $\num{1.679097}$   & $\num{2.005065}$ 
&& $\num[math-rm=\mathbf]{0.694072}$  & $\num{0.518959}$ & $\num{0.516103}$   & $\num{2.849327}$
&& $\num{1.026344}$  & $\num{2.234939}$ & $\num{2.082018}$  & $\num{3.823448}$  
\\
7            &   $\num{1.116737}$  & $\num{3.066118}$ & $\num{2.547104}$   & $\num{1.626560}$  
&& $\num[math-rm=\mathbf]{0.807437}$  & $\num{0.608048}$ & $\num{0.556342}$   & $\num{1.970702}$ 
&& $\num{1.124555}$  & $\num{2.232059}$ & $\num{1.998179}$  & $\num{2.182250}$ 
\\
8            &   $\num{1.192011}$  & $\num{3.893396}$ & $\num{3.332523}$   & $\num{2.084858}$ 
&& $\num[math-rm=\mathbf]{0.811062}$  & $\num{0.707622}$ & $\num{0.620690}$   & $\num{2.818585}$ 
&& $\num{1.126199}$  & $\num{2.553890}$ & $\num{2.353221}$  & $\num{3.089752}$ 
\\
9            &  $\num{1.192056}$  & $\num{2.110374}$ & $\num{1.868557}$   & $\num{4.101009}$ 
&&  $\num[math-rm=\mathbf]{0.727050}$  & $\num{0.519994}$ & $\num{0.505922}$   & $\num{5.971537}$ 
&&  $\num{1.097499}$  & $\num{2.533257}$ & $\num{2.401743}$  & $\num{7.506477}$
\\
10            &  $\num[math-rm=\mathbf]{0.982171}$  & $\num{0.900149}$ & $\num{1.079000}$   & $\num{0.192385}$
&& $\num[math-rm=\mathbf]{0.993023}$  & $\num{0.954908}$ & $\num{1.094583}$   & $\num{0.189647}$
&& $\num[math-rm=\mathbf]{0.921341}$  & $\num{0.913100}$ & $\num{0.902282}$   & $\num{0.229356}$ 
\\
11            &  $\num{1.000030}$  & $\num{1.005919}$ & $\num{26.257707}$   & $\num{0.210796}$  
&& $\num[math-rm=\mathbf]{0.994315}$  & $\num{1.017491}$ & $\num{17.331039}$   & $\num{0.319371}$
&& $\num{1.011239}$  & $\num{1.019648}$ & $\num{11.243629}$  & $\num{0.467432}$ 
\\
12            &   $\num[math-rm=\mathbf]{0.986314}$  & $\num{0.963294}$ & $\num{1.034330}$   & $\num{0.152095}$ 
&&  $\num[math-rm=\mathbf]{0.994915}$  & $\num{0.975917}$ & $\num{1.012617}$   & $\num{0.155356}$ 
&&  $\num[math-rm=\mathbf]{0.971818}$  & $\num{0.913364}$ & $\num{1.013293}$   & $\num{0.153497}$
\\
13            &   $\num{1.005983}$  & $\num{1.018743}$ & $\num{1.284126}$   & $\num{0.158934}$   
&&  $\num[math-rm=\mathbf]{0.988944}$  & $\num{0.977869}$ & $\num{1.221540}$   & $\num{0.167077}$  
&&  $\num[math-rm=\mathbf]{0.970965}$  & $\num{1.055403}$ & $\num{1.338357}$   & $\num{0.175413}$ 
\\
14            &   $\num[math-rm=\mathbf]{0.982906}$  & $\num{0.993985}$ & $\num{29.936114}$   & $\num{0.202276}$
&&  $\num[math-rm=\mathbf]{0.995123}$  & $\num{0.978775}$ & $\num{26.130849}$   & $\num{0.231732}$ 
&&  $\num[math-rm=\mathbf]{0.991224}$  & $\num{1.002322}$ & $\num{18.662147}$   & $\num{0.325663}$
\\
\bottomrule
            \end{tabular}
            \begin{tablenotes}
            \item $\num{400}$ replications. $\num{50}$ bootstraps. Sample size is 1000.\\
            \end{tablenotes}
            \end{threeparttable}
            \end{table}
% total time elapsed: 7.185553 hours

\begin{table}[htbp]
    \centering\caption{\label{tab:S1:IVQR-QR:comparetau} Finite-sample relative rRMSE of IVQR-QR averaging, aggressive estimators, and QR estimator in JTPA-based simulation model 1.}
            \sisetup{round-precision=2,round-mode=places}
    \begin{threeparttable}
    \begin{tabular}{ccccScccScccccScccc}
    \toprule
    &                     \multicolumn{4}{c}{$\tau=0.2$} &&               \multicolumn{4}{c}{$\tau=0.3$} &&          \multicolumn{4}{c}{$\tau=0.4$} \\
    \cmidrule{2-5} \cmidrule{7-10}  \cmidrule{12-15}  
    &                     \multicolumn{2}{c}{$\mathrm{IVQR.QR}$}&               && & \multicolumn{2}{c}{$\mathrm{IVQR.QR}$} &               &&  &
    \multicolumn{2}{c}{$\mathrm{IVQR.QR}$} &               
      \\
    \cmidrule{2-3} \cmidrule{7-8} \cmidrule{12-13}  
    $\mathrm{DGP }$   
    & $\mathrm{AVG}$ & $\mathrm{AGG}$ & $\mathrm{QR}$ & $\mathrm{CON}$ 
    && $\mathrm{AVG}$ & $\mathrm{AGG}$ & $\mathrm{QR}$ & $\mathrm{CON}$ 
    && $\mathrm{AVG}$ & $\mathrm{AGG}$ & $\mathrm{QR}$ & $\mathrm{CON}$ 
 \\
    \midrule
1            &   $\num[math-rm=\mathbf]{0.941798}$  & $\num{0.871945}$ & $\num{0.597444}$   & $\num{0.200699}$ 
&&  $\num[math-rm=\mathbf]{0.949044}$  & $\num{0.915647}$ & $\num{0.644577}$   & $\num{0.222795}$
&&  $\num[math-rm=\mathbf]{0.893952}$  & $\num{0.870897}$ & $\num{0.640454}$   & $\num{0.262160}$  
\\
2            &   $\num[math-rm=\mathbf]{0.909462}$  & $\num{0.873552}$ & $\num{0.801688}$   & $\num{0.238507}$ 
&& $\num[math-rm=\mathbf]{0.997286}$  & $\num{1.102791}$ & $\num{1.017636}$   & $\num{0.242750}$  
&&  $\num{1.001453}$  & $\num{1.002885}$ & $\num{1.077115}$   & $\num{0.275171}$ 
\\
3            &    $\num[math-rm=\mathbf]{0.993463}$  & $\num{1.260438}$ & $\num{1.187777}$   & $\num{0.268979}$
&&  $\num{1.052697}$  & $\num{1.348312}$ & $\num{1.471804}$   & $\num{0.289591}$  
&&  $\num{1.052231}$  & $\num{1.284234}$ & $\num{1.470588}$   & $\num{0.339472}$ 
\\
4            &   $\num[math-rm=\mathbf]{0.981629}$  & $\num{1.679892}$ & $\num{1.975982}$   & $\num{0.286971}$  
&&  $\num{1.050814}$  & $\num{1.798038}$ & $\num{2.512307}$   & $\num{0.280525}$ 
&&  $\num[math-rm=\mathbf]{0.973294}$  & $\num{2.005678}$ & $\num{2.517579}$   & $\num{0.323231}$ 
\\
5            &   $\num[math-rm=\mathbf]{0.927540}$  & $\num{0.828456}$ & $\num{0.608481}$   & $\num{0.262225}$
&&  $\num[math-rm=\mathbf]{0.884933}$  & $\num{0.809720}$ & $\num{0.584871}$   & $\num{0.435454}$ 
&&  $\num[math-rm=\mathbf]{0.868485}$  & $\num{0.745226}$ & $\num{0.568249}$   & $\num{0.845524}$  
\\
6            &  $\num[math-rm=\mathbf]{0.946775}$  & $\num{0.859464}$ & $\num{0.709994}$   & $\num{0.292543}$ 
&& $\num[math-rm=\mathbf]{0.969298}$  & $\num{0.904188}$ & $\num{0.806262}$   & $\num{0.471132}$ 
&& $\num[math-rm=\mathbf]{0.924969}$  & $\num{0.881627}$ & $\num{0.862383}$   & $\num{0.753200}$ 
\\
7            &  $\num{1.071536}$  & $\num{1.977770}$ & $\num{2.317601}$   & $\num{0.334383}$  
&& $\num{1.022212}$  & $\num{2.057648}$ & $\num{2.736132}$   & $\num{0.521376}$ 
&& $\num{1.053508}$  & $\num{2.748242}$ & $\num{3.140756}$   & $\num{0.773788}$  
\\
8            &    $\num{1.114988}$  & $\num{2.230924}$ & $\num{2.519431}$   & $\num{0.392785}$ 
&&   $\num{1.106857}$  & $\num{2.358828}$ & $\num{2.926117}$   & $\num{0.664187}$
&&   $\num{1.100109}$  & $\num{3.339572}$ & $\num{3.596283}$   & $\num{0.949278}$ 
\\
9            &     $\num[math-rm=\mathbf]{0.875496}$  & $\num{0.818780}$ & $\num{0.730609}$   & $\num{0.336555}$
&&  $\num[math-rm=\mathbf]{0.966692}$  & $\num{0.902254}$ & $\num{0.836768}$   & $\num{0.705640}$ 
&&  $\num{1.014700}$  & $\num{1.032971}$ & $\num{0.944072}$   & $\num{1.341854}$ 
\\
10            &   $\num[math-rm=\mathbf]{0.881206}$  & $\num{0.812385}$ & $\num{0.515840}$   & $\num{0.262022}$
&&  $\num[math-rm=\mathbf]{0.921218}$  & $\num{0.858971}$ & $\num{0.578228}$   & $\num{0.190439}$ 
&&  $\num[math-rm=\mathbf]{0.838552}$  & $\num{0.823072}$ & $\num{0.497556}$   & $\num{0.186220}$  
\\
11            &  $\num[math-rm=\mathbf]{0.929318}$  & $\num{0.864419}$ & $\num{0.642252}$   & $\num{0.421388}$ 
&&  $\num[math-rm=\mathbf]{0.926744}$  & $\num{0.840139}$ & $\num{0.555495}$   & $\num{0.276590}$   
&&  $\num[math-rm=\mathbf]{0.927355}$  & $\num{0.839659}$ & $\num{0.566140}$   & $\num{0.213634}$  
\\
12            &    $\num[math-rm=\mathbf]{0.992317}$  & $\num{1.473281}$ & $\num{1.680375}$   & $\num{0.296785}$  
&&  $\num[math-rm=\mathbf]{0.954414}$  & $\num{1.446733}$ & $\num{2.097819}$   & $\num{0.228425}$  
&&  $\num{1.013899}$  & $\num{1.721345}$ & $\num{2.542312}$   & $\num{0.181721}$  
\\
13            &   $\num[math-rm=\mathbf]{0.981839}$  & $\num{1.514421}$ & $\num{1.645674}$   & $\num{0.331279}$ 
&& $\num[math-rm=\mathbf]{0.977167}$  & $\num{1.598036}$ & $\num{1.918343}$   & $\num{0.254676}$ 
&&  $\num{1.050460}$  & $\num{1.875202}$ & $\num{2.386289}$   & $\num{0.194289}$ 
\\
14            &  $\num[math-rm=\mathbf]{0.850159}$  & $\num{1.156678}$ & $\num{1.171333}$   & $\num{0.793780}$  
&&  $\num[math-rm=\mathbf]{0.999237}$  & $\num{1.386553}$ & $\num{1.608324}$   & $\num{0.406347}$
&&  $\num[math-rm=\mathbf]{0.930318}$  & $\num{1.783418}$ & $\num{2.158330}$   & $\num{0.253328}$ 
\\
\midrule
    &                     \multicolumn{4}{c}{$\tau=0.6$} &&               \multicolumn{4}{c}{$\tau=0.7$} &&          \multicolumn{4}{c}{$\tau=0.8$} \\
    \cmidrule{2-5} \cmidrule{7-10}  \cmidrule{12-15}  
    &                     \multicolumn{2}{c}{$\mathrm{IVQR.QR}$}&               && & \multicolumn{2}{c}{$\mathrm{IVQR.QR}$} &               &&  &
    \multicolumn{2}{c}{$\mathrm{IVQR.QR}$} &               
      \\
    \cmidrule{2-3} \cmidrule{7-8} \cmidrule{12-13}  
    $\mathrm{DGP }$   
    & $\mathrm{AVG}$ & $\mathrm{AGG}$ & $\mathrm{QR}$ & $\mathrm{CON}$ 
    && $\mathrm{AVG}$ & $\mathrm{AGG}$ & $\mathrm{QR}$ & $\mathrm{CON}$ 
    && $\mathrm{AVG}$ & $\mathrm{AGG}$ & $\mathrm{QR}$ & $\mathrm{CON}$ 
 \\
    \midrule
1            &   $\num[math-rm=\mathbf]{0.911259}$  & $\num{0.883685}$ & $\num{0.559458}$   & $\num{0.374756}$  
&& $\num[math-rm=\mathbf]{0.886677}$  & $\num{0.818061}$ & $\num{0.544955}$   & $\num{0.441311}$
&& $\num[math-rm=\mathbf]{0.951114}$  &  $\num{0.881630}$ & $\num{0.615108}$ & $\num{0.586038}$ 
\\
2            &    $\num{1.041292}$  & $\num{1.113628}$ & $\num{0.984880}$   & $\num{0.386376}$ 
&&  $\num{1.035867}$  & $\num{1.157675}$ & $\num{1.084132}$   & $\num{0.397837}$
&&  $\num[math-rm=\mathbf]{0.961038}$  &  $\num{1.009065}$ & $\num{0.969146}$ & $\num{0.531690}$ 
\\
3            &   $\num{1.046038}$  & $\num{1.423761}$ & $\num{1.728843}$   & $\num{0.383936}$ 
&&  $\num{1.047463}$  & $\num{1.469343}$ & $\num{1.753731}$   & $\num{0.406742}$   
&&  $\num{1.006336}$  & $\num{1.217522}$ & $\num{1.377888}$   & $\num{0.590445}$ 
\\
4            &   $\num{1.062945}$  & $\num{2.097041}$ & $\num{3.055599}$   & $\num{0.327001}$  
&&  $\num{1.014734}$  & $\num{1.588099}$ & $\num{2.620650}$   & $\num{0.409502}$   
&&  $\num{1.045229}$  & $\num{1.728266}$ & $\num{2.511686}$   & $\num{0.473768}$ 
\\
5            &   $\num[math-rm=\mathbf]{0.816930}$  & $\num{0.711175}$ & $\num{0.577266}$   & $\num{2.284178}$    
&&  $\num[math-rm=\mathbf]{0.867164}$  & $\num{0.787746}$ & $\num{0.591234}$   & $\num{3.002037}$ 
&&  $\num[math-rm=\mathbf]{0.873461}$  & $\num{0.775237}$ & $\num{0.636861}$   & $\num{3.823902}$ 
\\
6            &  $\num[math-rm=\mathbf]{0.894223}$  & $\num{0.907162}$ & $\num{0.815450}$   & $\num{2.005065}$ 
&&  $\num[math-rm=\mathbf]{0.957873}$  & $\num{0.943390}$ & $\num{0.812931}$   & $\num{2.849327}$ 
&&  $\num[math-rm=\mathbf]{0.891685}$  & $\num{0.878466}$ & $\num{0.681488}$   & $\num{3.823448}$ 
\\
7            &   $\num{1.013931}$  & $\num{2.403023}$ & $\num{2.863077}$   & $\num{1.626560}$  
&&  $\num{1.012977}$  & $\num{2.376025}$ & $\num{2.791606}$   & $\num{1.970702}$ 
&&  $\num{1.019857}$  & $\num{2.084913}$ & $\num{2.598556}$   & $\num{2.182250}$
\\
8            &    $\num{1.102689}$  & $\num{3.338470}$ & $\num{3.367218}$   & $\num{2.084858}$  
&&  $\num{1.048446}$  & $\num{2.759596}$ & $\num{3.002576}$   & $\num{2.818585}$ 
&&  $\num{1.000867}$  & $\num{2.353455}$ & $\num{2.721317}$   & $\num{3.089752}$
\\
9            &  $\num[math-rm=\mathbf]{0.973719}$  & $\num{0.888872}$ & $\num{0.837853}$   & $\num{4.101009}$ 
&&  $\num[math-rm=\mathbf]{0.919984}$  & $\num{0.870516}$ & $\num{0.797345}$   & $\num{5.971537}$ 
&&  $\num[math-rm=\mathbf]{0.889460}$  & $\num{0.797108}$ & $\num{0.808708}$   & $\num{7.506477}$ 
\\
10            &  $\num[math-rm=\mathbf]{0.906869}$  & $\num{0.787995}$ & $\num{0.515212}$   & $\num{0.192385}$
&&  $\num[math-rm=\mathbf]{0.935285}$  & $\num{0.906043}$ & $\num{0.586668}$   & $\num{0.189647}$ 
&&  $\num[math-rm=\mathbf]{0.932533}$  & $\num{0.834646}$ & $\num{0.599918}$   & $\num{0.229356}$  
\\
11            &  $\num[math-rm=\mathbf]{0.911134}$  & $\num{0.797678}$ & $\num{0.552700}$   & $\num{0.210796}$
&&  $\num[math-rm=\mathbf]{0.922679}$  & $\num{0.793259}$ & $\num{0.495643}$   & $\num{0.319371}$ 
&&  $\num[math-rm=\mathbf]{0.781523}$  & $\num{0.623178}$ & $\num{0.744593}$  & $\num{0.467432}$ 
\\
12            &  $\num{1.027274}$  & $\num{1.933319}$ & $\num{2.812847}$   & $\num{0.152095}$ 
&&  $\num[math-rm=\mathbf]{0.996047}$  & $\num{1.338984}$ & $\num{2.692791}$   & $\num{0.155356}$
&&  $\num{1.030904}$  &  $\num{2.033685}$ & $\num{2.779391}$ & $\num{0.153497}$
\\
13            &  $\num[math-rm=\mathbf]{0.986569}$  & $\num{2.105844}$ & $\num{2.797674}$   & $\num{0.158934}$    
&&  $\num[math-rm=\mathbf]{0.977001}$  & $\num{1.534452}$ & $\num{2.741115}$   & $\num{0.167077}$
&&  $\num{1.046166}$  &  $\num{2.012718}$ & $\num{2.865149}$ & $\num{0.175413}$ 
\\
14            &   $\num{1.012414}$  & $\num{1.977426}$ & $\num{2.876150}$   & $\num{0.202276}$
&&  $\num{1.026333}$  & $\num{2.106319}$ & $\num{3.042188}$   & $\num{0.231732}$ 
&&  $\num{1.050328}$  & $\num{2.367321}$ & $\num{3.166435}$   & $\num{0.325663}$ 
\\
\bottomrule
            \end{tabular}
            \begin{tablenotes}
            \item $\num{400}$ replications. $\num{50}$ bootstraps. Sample size is 1000.\\
            \end{tablenotes}
            \end{threeparttable}
            \end{table}
% total time elapsed: 7.185553 hours

\pagebreak

\subsection{Simulation model 2 results at different quantile levels}

\subsubsection{Gaussian error case}
\label{appdx:M2:Gaussian}

\begin{table}[htbp]
    \centering\caption{\label{tab:M2:S1S22:tau=0.2} Relative rRMSE in fixed-theta simulation model 2.1 at tau=0.2.}
            \sisetup{round-precision=2,round-mode=places}
    \begin{threeparttable}
    \begin{tabular}{cSScccScccSccccc}
    \toprule
    &                    && \multicolumn{2}{c}{$\mathrm{IVQR.2SLS}$}&               && \multicolumn{2}{c}{$\mathrm{IVQR.QR} $} &               &&  &  &   \\
    \cmidrule{4-5} \cmidrule{8-9}
    $\mathrm{DGP }$  &  $\mathrm{Endog}$  &&  $\mathrm{AVG} $ & $\mathrm{AGG}$       & $\mathrm{2SLS}$ && $\mathrm{AVG}$ & $\mathrm{AGG}$         & $\mathrm{QR}$ && $\mathrm{BS} $    & $\mathrm{SEE}$& \\
    \midrule
1          &    0  &&   $\num[math-rm=\mathbf]{0.912554}$  & $\num{0.904838}$ & $\num{0.695853}$  && $\num[math-rm=\mathbf]{0.632489}$  &  $\num{0.472075}$ & $\num{0.460235}$ && $\num[math-rm=\mathbf]{0.563689}$ &  $\num{0.694144}$ & \\
2          & 0.05  &&   $\num[math-rm=\mathbf]{0.899496}$  & $\num{0.881015}$ & $\num{0.657251}$  && $\num[math-rm=\mathbf]{0.661141}$  &  $\num{0.527252}$ & $\num{0.519990}$ && $\num[math-rm=\mathbf]{0.574081}$ &  $\num{0.656300}$ & \\
3          &  0.1  &&   $\num[math-rm=\mathbf]{0.948484}$  & $\num{0.948501}$ & $\num{0.732717}$  && $\num[math-rm=\mathbf]{0.784556}$  &  $\num{0.806223}$ & $\num{0.795545}$ && $\num[math-rm=\mathbf]{0.717411}$ &  $\num{0.746167}$ & \\
4          & 0.15  &&   $\num[math-rm=\mathbf]{0.923230}$  & $\num{0.944788}$ & $\num{0.680327}$  && $\num[math-rm=\mathbf]{0.832448}$  &  $\num{1.017197}$ & $\num{0.994308}$ && $\num[math-rm=\mathbf]{0.735321}$ &  $\num{0.671508}$ & \\
5          &  0.2  &&   $\num[math-rm=\mathbf]{0.949964}$  & $\num{0.926758}$ & $\num{0.666606}$  && $\num[math-rm=\mathbf]{0.887786}$  &  $\num{1.282756}$ & $\num{1.267169}$ && $\num[math-rm=\mathbf]{0.765192}$ &  $\num{0.671165}$ & \\
6          & 0.25  &&   $\num[math-rm=\mathbf]{0.942025}$  & $\num{0.910500}$ & $\num{0.719735}$  && $\num[math-rm=\mathbf]{0.913747}$  &  $\num{1.610236}$ & $\num{1.579899}$ && $\num[math-rm=\mathbf]{0.770856}$ &  $\num{0.725537}$ & \\
7          &  0.3  &&   $\num[math-rm=\mathbf]{0.934619}$  & $\num{0.972465}$ & $\num{0.681975}$  && $\num[math-rm=\mathbf]{0.946881}$  &  $\num{1.832664}$ & $\num{1.792756}$ && $\num[math-rm=\mathbf]{0.773705}$ &  $\num{0.692967}$ & \\
8          & 0.35  &&   $\num[math-rm=\mathbf]{0.947837}$  & $\num{0.881156}$ & $\num{0.668714}$  && $\num[math-rm=\mathbf]{0.958236}$  &  $\num{2.157011}$ & $\num{2.132194}$ && $\num[math-rm=\mathbf]{0.771860}$ &  $\num{0.674264}$ & \\
9          &  0.4  &&   $\num[math-rm=\mathbf]{0.906140}$  & $\num{0.932177}$ & $\num{0.655680}$  && $\num[math-rm=\mathbf]{0.962512}$  &  $\num{2.557154}$ & $\num{2.477202}$ && $\num[math-rm=\mathbf]{0.741929}$ &  $\num{0.662283}$ & \\
\bottomrule
            \end{tabular}
            \begin{tablenotes}
            \item $\num{200}$ replications. $\num{50}$ bootstraps. Sample size is 1000.\\
            \end{tablenotes}
            \end{threeparttable}
            \end{table}
% total time elapsed: 7.185553 hours

\begin{table}[htbp!]
    \centering\caption{\label{tab:M2:S1S22:tau=0.3} Relative rRMSE in fixed-theta simulation model 2.1 at tau=0.3.}
            \sisetup{round-precision=2,round-mode=places}
    \begin{threeparttable}
    \begin{tabular}{cSScccScccSccccc}
    \toprule
    &                    && \multicolumn{2}{c}{$\mathrm{IVQR.2SLS}$}&               && \multicolumn{2}{c}{$\mathrm{IVQR.QR} $} &               &&  &  &   \\
    \cmidrule{4-5} \cmidrule{8-9}
    $\mathrm{DGP }$  &  $\mathrm{Endog}$  &&  $\mathrm{AVG} $ & $\mathrm{AGG}$       & $\mathrm{2SLS}$ && $\mathrm{AVG}$ & $\mathrm{AGG}$         & $\mathrm{QR}$ && $\mathrm{BS} $    & $\mathrm{SEE}$& \\
    \midrule
1          &    0  &&   $\num[math-rm=\mathbf]{0.930014}$  & $\num{0.930419}$ & $\num{0.745751}$  && $\num[math-rm=\mathbf]{0.628842}$  &  $\num{0.454545}$ & $\num{0.454717}$ && $\num[math-rm=\mathbf]{0.593561}$ &  $\num{0.748013}$ & \\
2          & 0.05  &&   $\num[math-rm=\mathbf]{0.986889}$  & $\num{0.955293}$ & $\num{0.783030}$  && $\num[math-rm=\mathbf]{0.705617}$  &  $\num{0.596166}$ & $\num{0.581637}$ && $\num[math-rm=\mathbf]{0.656178}$ &  $\num{0.787999}$ & \\
3          &  0.1  &&   $\num[math-rm=\mathbf]{0.918447}$  & $\num{0.916201}$ & $\num{0.777544}$  && $\num[math-rm=\mathbf]{0.788658}$  &  $\num{0.823297}$ & $\num{0.807444}$ && $\num[math-rm=\mathbf]{0.744570}$ &  $\num{0.779326}$ & \\
4          & 0.15  &&   $\num[math-rm=\mathbf]{0.929851}$  & $\num{0.931470}$ & $\num{0.663141}$  && $\num[math-rm=\mathbf]{0.839917}$  &  $\num{1.027878}$ & $\num{1.009847}$ && $\num[math-rm=\mathbf]{0.734071}$ &  $\num{0.671870}$ & \\
5          &  0.2  &&   $\num[math-rm=\mathbf]{0.952740}$  & $\num{0.972782}$ & $\num{0.800605}$  && $\num[math-rm=\mathbf]{0.945531}$  &  $\num{1.469600}$ & $\num{1.454345}$ && $\num[math-rm=\mathbf]{0.866432}$ &  $\num{0.799774}$ & \\
6          & 0.25  &&   $\num[math-rm=\mathbf]{0.911091}$  & $\num{0.849382}$ & $\num{0.698091}$  && $\num[math-rm=\mathbf]{0.946628}$  &  $\num{1.594224}$ & $\num{1.568749}$ && $\num[math-rm=\mathbf]{0.818973}$ &  $\num{0.705457}$ & \\
7          &  0.3  &&   $\num[math-rm=\mathbf]{0.945599}$  & $\num{0.953980}$ & $\num{0.718174}$  && $\num[math-rm=\mathbf]{0.960896}$  &  $\num{2.036718}$ & $\num{2.003893}$ && $\num[math-rm=\mathbf]{0.796740}$ &  $\num{0.719578}$ & \\
8          & 0.35  &&   $\num[math-rm=\mathbf]{0.944680}$  & $\num{0.960956}$ & $\num{0.779274}$  && $\num[math-rm=\mathbf]{0.982549}$  &  $\num{2.476480}$ & $\num{2.443201}$ && $\num[math-rm=\mathbf]{0.860426}$ &  $\num{0.786235}$ & \\
9          &  0.4  &&   $\num[math-rm=\mathbf]{0.925776}$  & $\num{0.900187}$ & $\num{0.711287}$  && $\num[math-rm=\mathbf]{0.966771}$  &  $\num{2.594347}$ & $\num{2.530612}$ && $\num[math-rm=\mathbf]{0.795334}$ &  $\num{0.719044}$ & \\
\bottomrule
            \end{tabular}
            \begin{tablenotes}
            \item $\num{200}$ replications. $\num{50}$ bootstraps. Sample size is 1000.\\
            \end{tablenotes}
            \end{threeparttable}
            \end{table}
% total time elapsed: 7.110866 hours

\begin{table}[htbp]
    \centering\caption{\label{tab:M2:S1S22:tau=0.4} Relative rRMSE in fixed-theta simulation model 2.1 at tau=0.4.}
            \sisetup{round-precision=2,round-mode=places}
    \begin{threeparttable}
    \begin{tabular}{cSScccScccSccccc}
    \toprule
    &                    && \multicolumn{2}{c}{$\mathrm{IVQR.2SLS}$}&               && \multicolumn{2}{c}{$\mathrm{IVQR.QR} $} &               &&  &  &   \\
    \cmidrule{4-5} \cmidrule{8-9}
    $\mathrm{DGP }$  &  $\mathrm{Endog}$  &&  $\mathrm{AVG} $ & $\mathrm{AGG}$       & $\mathrm{2SLS}$ && $\mathrm{AVG}$ & $\mathrm{AGG}$         & $\mathrm{QR}$ && $\mathrm{BS} $    & $\mathrm{SEE}$ & \\
    \midrule
1          &    0  &&   $\num[math-rm=\mathbf]{0.941300}$  & $\num{0.941004}$ & $\num{0.768011}$  && $\num[math-rm=\mathbf]{0.642046}$  &  $\num{0.478941}$ & $\num{0.466754}$ && $\num[math-rm=\mathbf]{0.582446}$ &  $\num{0.852524}$ & \\
2          & 0.05  &&   $\num[math-rm=\mathbf]{0.914320}$  & $\num{0.886529}$ & $\num{0.756304}$  && $\num[math-rm=\mathbf]{0.671020}$  &  $\num{0.574163}$ & $\num{0.564033}$ && $\num[math-rm=\mathbf]{0.621477}$ &  $\num{0.841034}$ & \\
3          &  0.1  &&   $\num[math-rm=\mathbf]{0.888317}$  & $\num{0.921065}$ & $\num{0.766891}$  && $\num[math-rm=\mathbf]{0.758524}$  &  $\num{0.802277}$ & $\num{0.795053}$ && $\num[math-rm=\mathbf]{0.676812}$ &  $\num{0.830918}$ & \\
4          & 0.15  &&   $\num[math-rm=\mathbf]{0.922819}$  & $\num{0.955135}$ & $\num{0.822389}$  && $\num[math-rm=\mathbf]{0.860452}$  &  $\num{1.235585}$ & $\num{1.210753}$ && $\num[math-rm=\mathbf]{0.812073}$ &  $\num{0.891056}$ & \\
5          &  0.2  &&   $\num[math-rm=\mathbf]{0.931372}$  & $\num{0.926990}$ & $\num{0.761819}$  && $\num[math-rm=\mathbf]{0.883927}$  &  $\num{1.437655}$ & $\num{1.435180}$ && $\num[math-rm=\mathbf]{0.814931}$ &  $\num{0.839641}$ & \\
6          & 0.25  &&   $\num[math-rm=\mathbf]{0.893593}$  & $\num{0.894784}$ & $\num{0.738380}$  && $\num[math-rm=\mathbf]{0.942634}$  &  $\num{1.665645}$ & $\num{1.636619}$ && $\num[math-rm=\mathbf]{0.819394}$ &  $\num{0.815148}$ & \\
7          &  0.3  &&   $\num[math-rm=\mathbf]{0.938526}$  & $\num{0.908385}$ & $\num{0.824919}$  && $\num[math-rm=\mathbf]{0.960790}$  &  $\num{2.191715}$ & $\num{2.156783}$ && $\num[math-rm=\mathbf]{0.890250}$ &  $\num{0.877262}$ & \\
8          & 0.35  &&   $\num[math-rm=\mathbf]{0.922473}$  & $\num{0.927142}$ & $\num{0.736169}$  && $\num[math-rm=\mathbf]{0.968999}$  &  $\num{2.399767}$ & $\num{2.358589}$ && $\num[math-rm=\mathbf]{0.822829}$ &  $\num{0.814069}$ & \\
9          &  0.4  &&   $\num[math-rm=\mathbf]{0.911974}$  & $\num{0.920334}$ & $\num{0.804927}$  && $\num{1.005281}$  &  $\num{2.964080}$ & $\num{2.912014}$ && $\num[math-rm=\mathbf]{0.889765}$ &  $\num{0.864508}$ & \\
\bottomrule
            \end{tabular}
            \begin{tablenotes}
            \item $\num{200}$ replications. $\num{50}$ bootstraps. Sample size is 1000.\\
            \end{tablenotes}
            \end{threeparttable}
            \end{table}
% total time elapsed: 1.302318 days

\begin{table}[htbp]
    \centering\caption{\label{tab:M2:S1S22:tau=0.6} Relative rRMSE in fixed-theta simulation model 2.1 at tau=0.6.}
            \sisetup{round-precision=2,round-mode=places}
    \begin{threeparttable}
    \begin{tabular}{cSScccScccSccccc}
    \toprule
    &                    && \multicolumn{2}{c}{$\mathrm{IVQR.2SLS}$}&               && \multicolumn{2}{c}{$\mathrm{IVQR.QR} $} &               &&  &  &   \\
    \cmidrule{4-5} \cmidrule{8-9}
    $\mathrm{DGP }$  &  $\mathrm{Endog}$  &&  $\mathrm{AVG} $ & $\mathrm{AGG}$       & $\mathrm{2SLS}$ && $\mathrm{AVG}$ & $\mathrm{AGG}$         & $\mathrm{QR}$ && $\mathrm{BS} $    & $\mathrm{SEE}$ & \\
    \midrule
1          &    0  &&   $\num[math-rm=\mathbf]{0.932295}$  & $\num{0.972621}$ & $\num{0.785067}$  && $\num[math-rm=\mathbf]{0.627050}$  &  $\num{0.428602}$ & $\num{0.421007}$ && $\num[math-rm=\mathbf]{0.568392}$ &  $\num{0.821139}$ & \\
2          & 0.05  &&   $\num[math-rm=\mathbf]{0.907214}$  & $\num{0.903169}$ & $\num{0.750760}$  && $\num[math-rm=\mathbf]{0.665684}$  &  $\num{0.566199}$ & $\num{0.537802}$ && $\num[math-rm=\mathbf]{0.626208}$ &  $\num{0.827579}$ & \\
3          &  0.1  &&   $\num[math-rm=\mathbf]{0.924905}$  & $\num{0.943333}$ & $\num{0.740025}$  && $\num[math-rm=\mathbf]{0.757536}$  &  $\num{0.801435}$ & $\num{0.793176}$ && $\num[math-rm=\mathbf]{0.672323}$ &  $\num{0.818483}$ & \\
4          & 0.15  &&   $\num[math-rm=\mathbf]{0.910625}$  & $\num{0.942470}$ & $\num{0.784000}$  && $\num[math-rm=\mathbf]{0.862959}$  &  $\num{1.170549}$ & $\num{1.158481}$ && $\num[math-rm=\mathbf]{0.795096}$ &  $\num{0.840943}$ & \\
5          &  0.2  &&   $\num[math-rm=\mathbf]{0.934916}$  & $\num{0.929970}$ & $\num{0.741476}$  && $\num[math-rm=\mathbf]{0.929153}$  &  $\num{1.438181}$ & $\num{1.430027}$ && $\num[math-rm=\mathbf]{0.821462}$ &  $\num{0.869129}$ & \\
6          & 0.25  &&   $\num[math-rm=\mathbf]{0.902626}$  & $\num{0.891302}$ & $\num{0.775917}$  && $\num[math-rm=\mathbf]{0.955691}$  &  $\num{1.781201}$ & $\num{1.765892}$ && $\num[math-rm=\mathbf]{0.843190}$ &  $\num{0.860525}$ & \\
7          &  0.3  &&   $\num[math-rm=\mathbf]{0.956031}$  & $\num{0.976060}$ & $\num{0.811011}$  && $\num[math-rm=\mathbf]{0.990251}$  &  $\num{2.182120}$ & $\num{2.138888}$ && $\num[math-rm=\mathbf]{0.909753}$ &  $\num{0.899843}$ & \\
8          & 0.35  &&   $\num[math-rm=\mathbf]{0.919713}$  & $\num{0.919871}$ & $\num{0.751771}$  && $\num[math-rm=\mathbf]{0.974633}$  &  $\num{2.447289}$ & $\num{2.396635}$ && $\num[math-rm=\mathbf]{0.829006}$ &  $\num{0.840090}$ & \\
9          &  0.4  &&   $\num[math-rm=\mathbf]{0.936625}$  & $\num{0.931761}$ & $\num{0.732867}$  && $\num[math-rm=\mathbf]{0.976737}$  &  $\num{2.852819}$ & $\num{2.799648}$ && $\num[math-rm=\mathbf]{0.810845}$ &  $\num{0.804491}$ & \\
\bottomrule
            \end{tabular}
            \begin{tablenotes}
            \item $\num{200}$ replications. $\num{50}$ bootstraps. Sample size is 1000.\\
            \end{tablenotes}
            \end{threeparttable}
            \end{table}
% total time elapsed: 1.290794 days

\begin{table}[htbp]
    \centering\caption{\label{tab:M2:S1S22:tau=0.7} Relative rRMSE in fixed-theta simulation model 2.1 at tau=0.7.}
            \sisetup{round-precision=2,round-mode=places}
    \begin{threeparttable}
    \begin{tabular}{cSScccScccSccccc}
    \toprule
    &                    && \multicolumn{2}{c}{$\mathrm{IVQR.2SLS}$}&               && \multicolumn{2}{c}{$\mathrm{IVQR.QR} $} &               &&  &  &   \\
    \cmidrule{4-5} \cmidrule{8-9}
    $\mathrm{DGP }$  &  $\mathrm{Endog}$  &&  $\mathrm{AVG} $ & $\mathrm{AGG}$       & $\mathrm{2SLS}$ && $\mathrm{AVG}$ & $\mathrm{AGG}$         & $\mathrm{QR}$ && $\mathrm{BS} $    & $\mathrm{SEE}$ & \\
    \midrule
1          &    0  &&   $\num[math-rm=\mathbf]{0.938639}$  & $\num{0.929722}$ & $\num{0.732812}$  && $\num[math-rm=\mathbf]{0.665792}$  &  $\num{0.446796}$ & $\num{0.427769}$ && $\num[math-rm=\mathbf]{0.640949}$ &  $\num{0.738509}$ & \\
2          & 0.05  &&   $\num[math-rm=\mathbf]{0.932310}$  & $\num{0.930205}$ & $\num{0.701639}$  && $\num[math-rm=\mathbf]{0.683981}$  &  $\num{0.555895}$ & $\num{0.546676}$ && $\num[math-rm=\mathbf]{0.628711}$ &  $\num{0.698239}$ & \\
3          &  0.1  &&   $\num[math-rm=\mathbf]{0.937098}$  & $\num{0.936386}$ & $\num{0.700403}$  && $\num[math-rm=\mathbf]{0.759962}$  &  $\num{0.794360}$ & $\num{0.790327}$ && $\num[math-rm=\mathbf]{0.717073}$ &  $\num{0.694108}$ & \\
4          & 0.15  &&   $\num[math-rm=\mathbf]{0.913315}$  & $\num{0.931323}$ & $\num{0.760025}$  && $\num[math-rm=\mathbf]{0.886380}$  &  $\num{1.142344}$ & $\num{1.119383}$ && $\num[math-rm=\mathbf]{0.769081}$ &  $\num{0.756690}$ & \\
5          &  0.2  &&   $\num[math-rm=\mathbf]{0.926945}$  & $\num{0.922710}$ & $\num{0.779164}$  && $\num[math-rm=\mathbf]{0.949700}$  &  $\num{1.493225}$ & $\num{1.482712}$ && $\num[math-rm=\mathbf]{0.844859}$ &  $\num{0.773334}$ & \\
6          & 0.25  &&   $\num[math-rm=\mathbf]{0.956132}$  & $\num{0.992639}$ & $\num{0.740241}$  && $\num[math-rm=\mathbf]{0.939966}$  &  $\num{1.672360}$ & $\num{1.668143}$ && $\num[math-rm=\mathbf]{0.832532}$ &  $\num{0.746445}$ & \\
7          &  0.3  &&   $\num[math-rm=\mathbf]{0.905105}$  & $\num{0.911932}$ & $\num{0.761440}$  && $\num[math-rm=\mathbf]{0.961689}$  &  $\num{2.156402}$ & $\num{2.127319}$ && $\num[math-rm=\mathbf]{0.823798}$ &  $\num{0.762869}$ & \\
8          & 0.35  &&   $\num[math-rm=\mathbf]{0.956419}$  & $\num{0.976421}$ & $\num{0.767141}$  && $\num{1.014680}$  &  $\num{2.592047}$ & $\num{2.557790}$ && $\num[math-rm=\mathbf]{0.851999}$ &  $\num{0.763230}$ & \\
9          &  0.4  &&   $\num[math-rm=\mathbf]{0.908085}$  & $\num{0.890684}$ & $\num{0.721491}$  && $\num[math-rm=\mathbf]{0.976115}$  &  $\num{2.651701}$ & $\num{2.584554}$ && $\num[math-rm=\mathbf]{0.803980}$ &  $\num{0.727223}$ & \\
\bottomrule
            \end{tabular}
            \begin{tablenotes}
            \item $\num{200}$ replications. $\num{50}$ bootstraps. Sample size is 1000.\\
            \end{tablenotes}
            \end{threeparttable}
            \end{table}
% total time elapsed: 14.00056 hours

\begin{table}[htbp]
    \centering\caption{\label{tab:M2:S1S22:tau=0.8} Relative rRMSE in fixed-theta simulation model 2.1 at tau=0.8.}
            \sisetup{round-precision=2,round-mode=places}
    \begin{threeparttable}
    \begin{tabular}{cSScccScccSccccc}
    \toprule
    &                    && \multicolumn{2}{c}{$\mathrm{IVQR.2SLS}$}&               && \multicolumn{2}{c}{$\mathrm{IVQR.QR} $} &               &&  &  &   \\
    \cmidrule{4-5} \cmidrule{8-9}
    $\mathrm{DGP }$  &  $\mathrm{Endog}$  &&  $\mathrm{AVG} $ & $\mathrm{AGG}$       & $\mathrm{2SLS}$ && $\mathrm{AVG}$ & $\mathrm{AGG}$         & $\mathrm{QR}$ && $\mathrm{BS} $    & $\mathrm{SEE}$ & \\
    \midrule
1          &    0  &&   $\num[math-rm=\mathbf]{0.938936}$  & $\num{0.903381}$ & $\num{0.700645}$  && $\num[math-rm=\mathbf]{0.644842}$  &  $\num{0.453057}$ & $\num{0.427055}$ && $\num[math-rm=\mathbf]{0.570289}$ &  $\num{0.692649}$ & \\
2          & 0.05  &&   $\num[math-rm=\mathbf]{0.989837}$  & $\num{1.003194}$ & $\num{0.745132}$  && $\num[math-rm=\mathbf]{0.706387}$  &  $\num{0.593060}$ & $\num{0.578406}$ && $\num[math-rm=\mathbf]{0.656623}$ &  $\num{0.743466}$ & \\
3          &  0.1  &&   $\num[math-rm=\mathbf]{0.939553}$  & $\num{0.918079}$ & $\num{0.695216}$  && $\num[math-rm=\mathbf]{0.751819}$  &  $\num{0.741166}$ & $\num{0.727343}$ && $\num[math-rm=\mathbf]{0.667569}$ &  $\num{0.705668}$ & \\
4          & 0.15  &&   $\num[math-rm=\mathbf]{0.926631}$  & $\num{0.941766}$ & $\num{0.661339}$  && $\num[math-rm=\mathbf]{0.822981}$  &  $\num{1.007703}$ & $\num{0.993716}$ && $\num[math-rm=\mathbf]{0.753312}$ &  $\num{0.666738}$ & \\
5          &  0.2  &&   $\num[math-rm=\mathbf]{0.927442}$  & $\num{0.912672}$ & $\num{0.732559}$  && $\num[math-rm=\mathbf]{0.912402}$  &  $\num{1.306970}$ & $\num{1.287492}$ && $\num[math-rm=\mathbf]{0.788009}$ &  $\num{0.736738}$ & \\
6          & 0.25  &&   $\num[math-rm=\mathbf]{0.926608}$  & $\num{0.957610}$ & $\num{0.712921}$  && $\num[math-rm=\mathbf]{0.930087}$  &  $\num{1.577030}$ & $\num{1.561710}$ && $\num[math-rm=\mathbf]{0.819698}$ &  $\num{0.709635}$ & \\
7          &  0.3  &&   $\num[math-rm=\mathbf]{0.910973}$  & $\num{0.943295}$ & $\num{0.690537}$  && $\num[math-rm=\mathbf]{0.950223}$  &  $\num{1.888220}$ & $\num{1.894471}$ && $\num[math-rm=\mathbf]{0.780809}$ &  $\num{0.696275}$ & \\
8          & 0.35  &&   $\num[math-rm=\mathbf]{0.899239}$  & $\num{0.888192}$ & $\num{0.722863}$  && $\num[math-rm=\mathbf]{0.944672}$  &  $\num{2.197099}$ & $\num{2.135770}$ && $\num[math-rm=\mathbf]{0.784284}$ &  $\num{0.728300}$ & \\
9          &  0.4  &&   $\num[math-rm=\mathbf]{0.951325}$  & $\num{0.970825}$ & $\num{0.749307}$  && $\num[math-rm=\mathbf]{0.995542}$  &  $\num{2.727588}$ & $\num{2.668856}$ && $\num[math-rm=\mathbf]{0.828182}$ &  $\num{0.738477}$ & \\
\bottomrule
            \end{tabular}
            \begin{tablenotes}
            \item $\num{200}$ replications. $\num{50}$ bootstraps. Sample size is 1000.\\
            \end{tablenotes}
            \end{threeparttable}
            \end{table}
% total time elapsed: 13.66402 hours

\begin{figure}[htbp]
% \centering
% \hfill
\includegraphics[width=0.45\textwidth, height=0.3\textheight, trim=35 20 20 70]{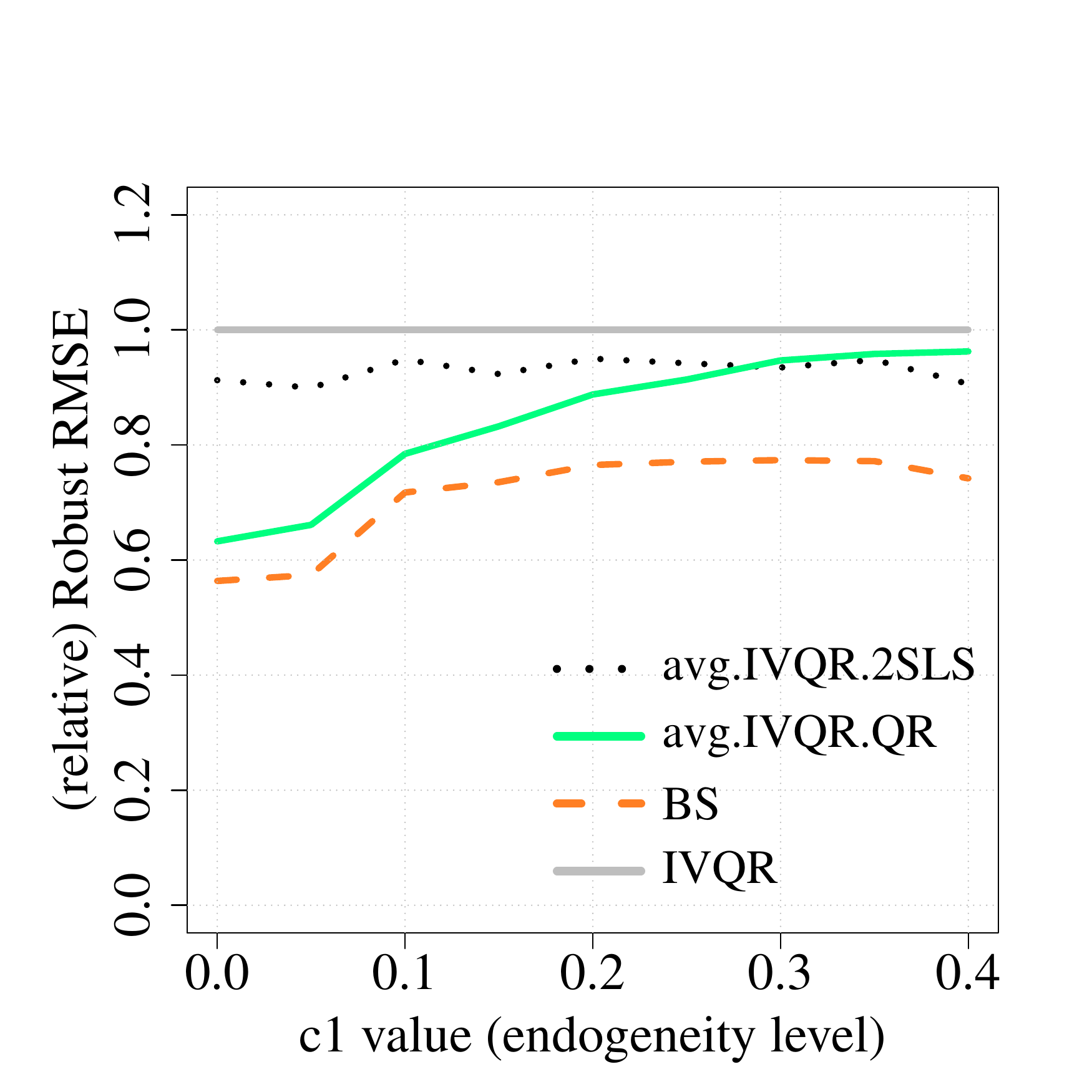}
\hfill
\includegraphics[width=0.45\textwidth, height=0.3\textheight, trim=35 20 20 70]{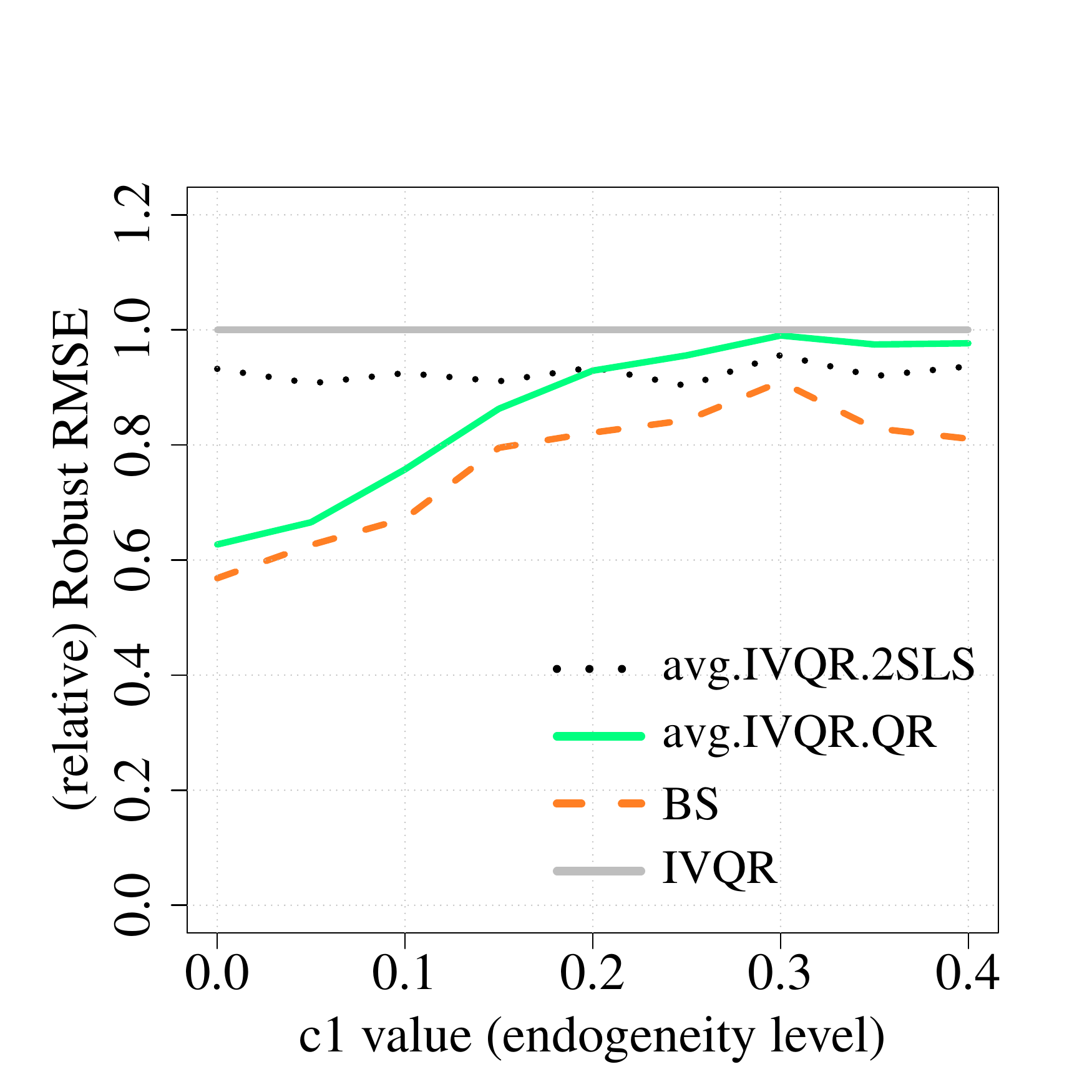}
\hfill\null
% \hfill
\includegraphics[width=0.45\textwidth, height=0.3\textheight, trim=35 20 20 70]{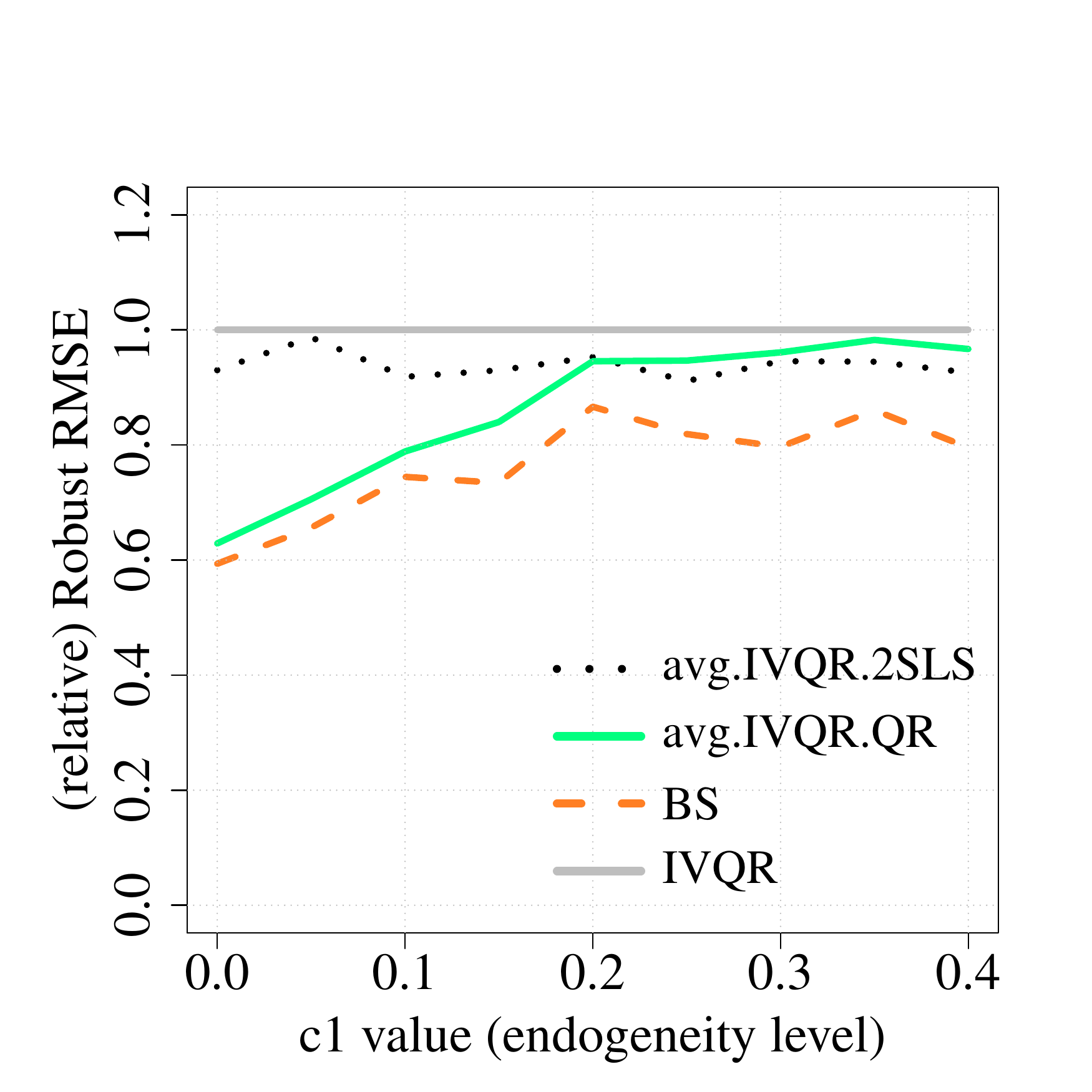}
\hfill
\includegraphics[width=0.45\textwidth, height=0.3\textheight, trim=35 20 20 70 ]{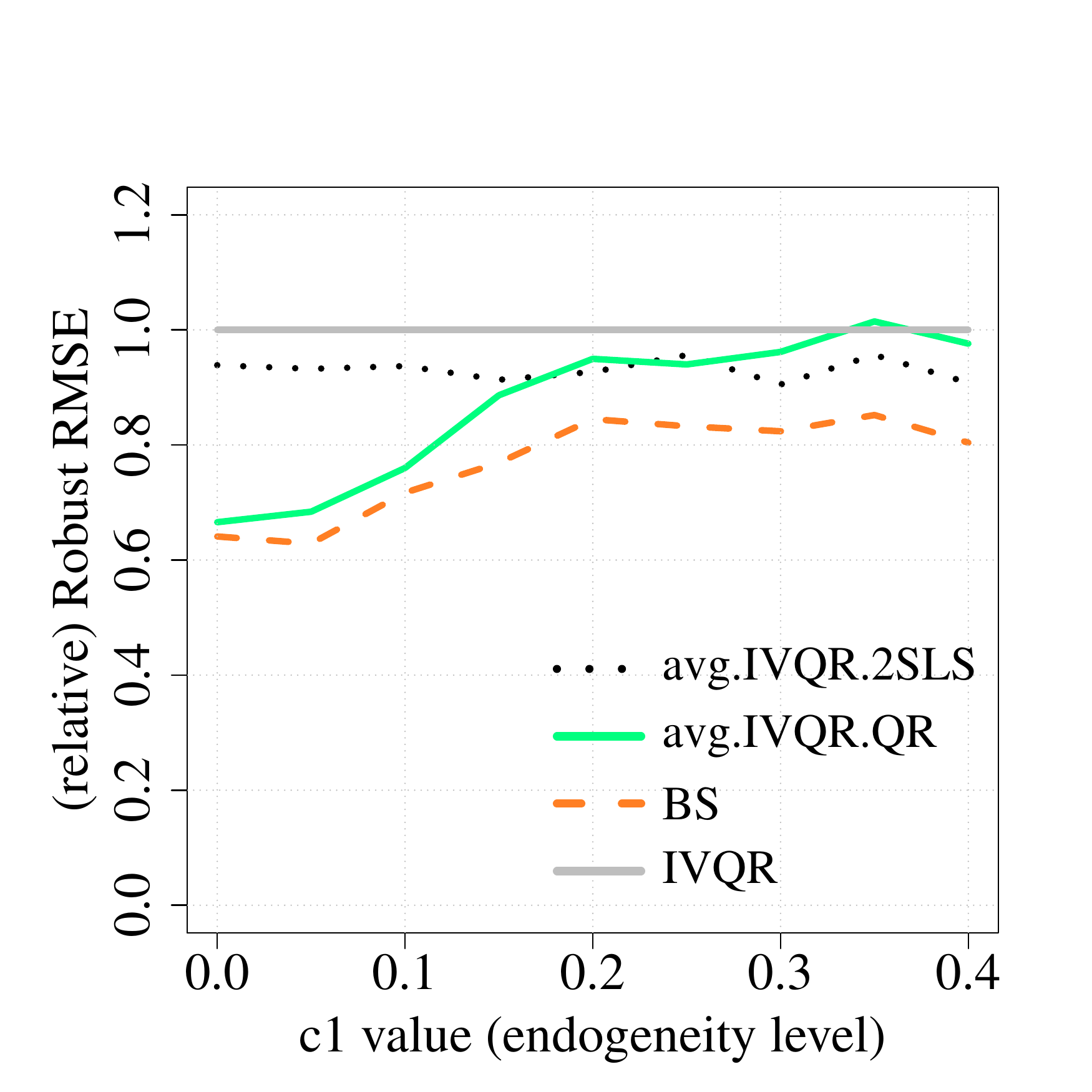}
\hfill\null
\includegraphics[width=0.45\textwidth, height=0.3\textheight, trim=35 20 20 70 ]{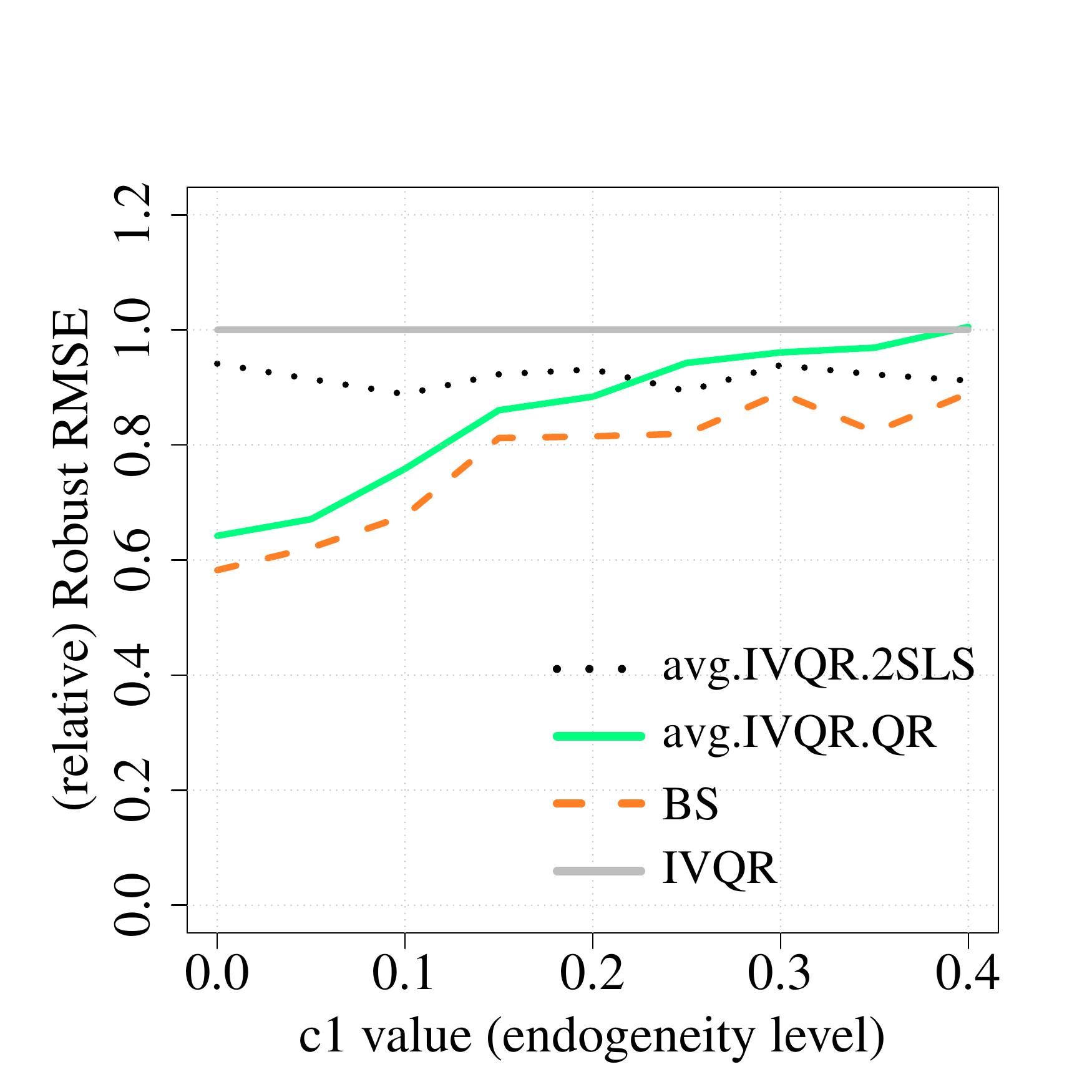}
% \hfill\null
\hfill
\includegraphics[width=0.45\textwidth, height=0.3\textheight, trim=35 20 20 70]{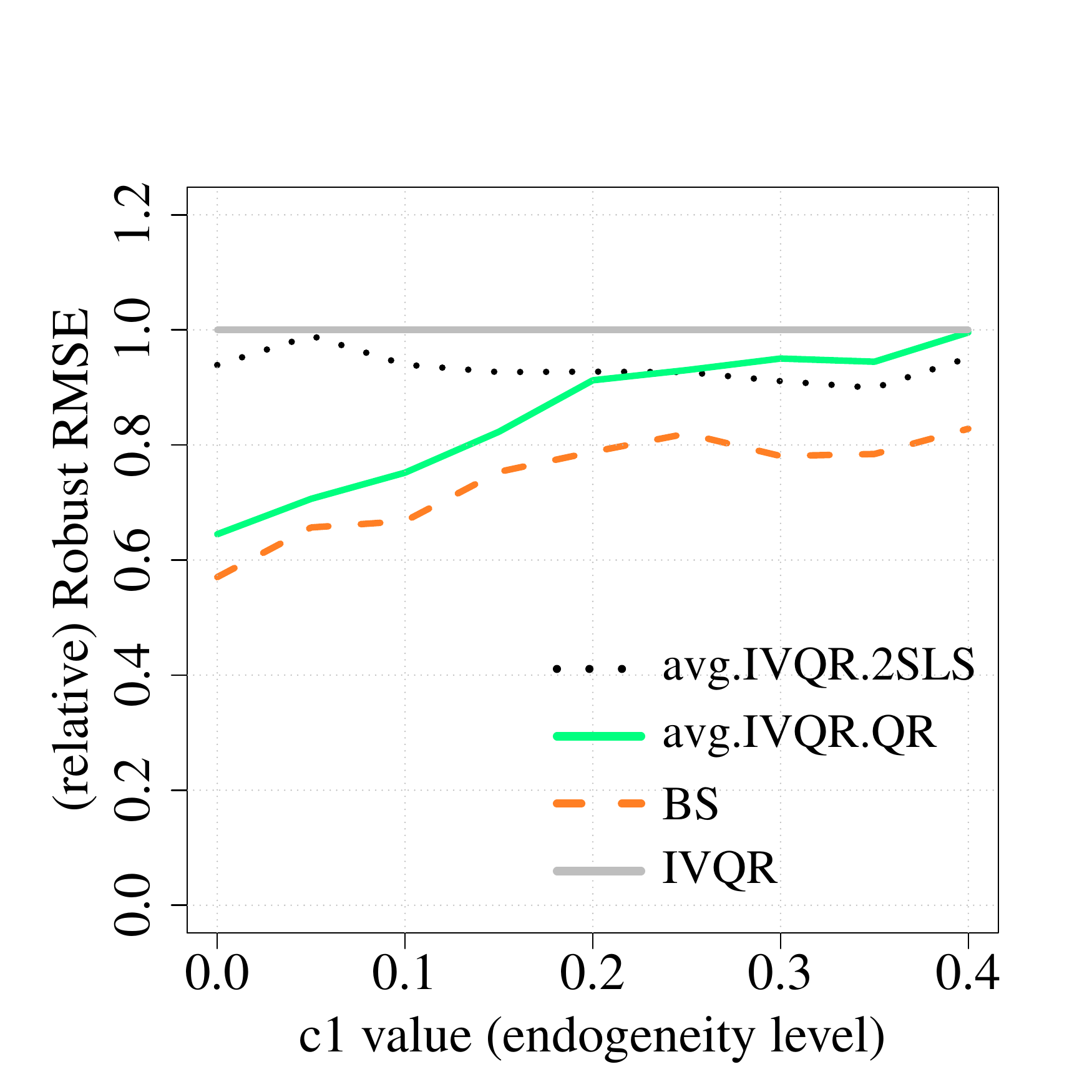}
% \hfill
\caption{\label{fig:M2:S1S22:compare_tau}%
Relative rRMSE in simulation model 2.1 (Gaussian error) at $\tau=0.2$ (left up), $\tau=0.3$ (left middle), $\tau=0.4$ (up bottom), $\tau=0.6$ (right up), $\tau=0.7$ (right middle),$\tau=0.8$ (right bottom), based on $\num{200}$ replications and $\num{50}$ bootstraps. Sample size n=1000.} 
\end{figure}

\pagebreak

\subsubsection{Non-Gaussian error case}
\label{appdx:M2:non-Gaussian}

\begin{table}[htbp]
    \centering\caption{\label{tab:M2:S1S21:tau=0.5} Relative rRMSE in fixed-theta simulation model 2.2 at tau=0.5 (median).}
            \sisetup{round-precision=2,round-mode=places}
    \begin{threeparttable}
    \begin{tabular}{cSScccScccSccccc}
    \toprule
    &                    && \multicolumn{2}{c}{$\mathrm{IVQR.2SLS}$}&               && \multicolumn{2}{c}{$\mathrm{IVQR.QR} $} &               &&  &  &   \\
    \cmidrule{4-5} \cmidrule{8-9}
    $\mathrm{DGP }$  &  $\mathrm{Endog}$  &&  $\mathrm{AVG} $ & $\mathrm{AGG}$       & $\mathrm{2SLS}$ && $\mathrm{AVG}$ & $\mathrm{AGG}$         & $\mathrm{QR}$ && $\mathrm{BS} $    & $\mathrm{SEE}$ & \\
    \midrule
1          &    0  &&   $\num[math-rm=\mathbf]{0.933166}$  & $\num{0.977039}$ & $\num{0.829148}$  && $\num[math-rm=\mathbf]{0.649330}$  &  $\num{0.433104}$ & $\num{0.415311}$ && $\num[math-rm=\mathbf]{0.581875}$ &  $\num{0.871879}$ & \\
2          & 0.05  &&   $\num[math-rm=\mathbf]{0.950633}$  & $\num{0.977121}$ & $\num{0.835832}$  && $\num[math-rm=\mathbf]{0.678719}$  &  $\num{0.548316}$ & $\num{0.548175}$ && $\num[math-rm=\mathbf]{0.608213}$ &  $\num{0.870095}$ & \\
3          &  0.10  &&   $\num[math-rm=\mathbf]{0.942768}$  & $\num{0.994438}$ & $\num{0.880619}$  && $\num[math-rm=\mathbf]{0.787032}$  &  $\num{0.827284}$ & $\num{0.816406}$ && $\num[math-rm=\mathbf]{0.733478}$ &  $\num{0.866133}$ & \\
4          & 0.15  &&   $\num[math-rm=\mathbf]{0.909213}$  & $\num{0.952970}$ & $\num{0.847706}$  && $\num[math-rm=\mathbf]{0.846951}$  &  $\num{1.108374}$ & $\num{1.089305}$ && $\num[math-rm=\mathbf]{0.781893}$ &  $\num{0.849499}$ & \\
5          &  0.20  &&   $\num[math-rm=\mathbf]{0.946654}$  & $\num{1.022409}$ & $\num{0.886523}$  && $\num[math-rm=\mathbf]{0.910879}$  &  $\num{1.530068}$ & $\num{1.524478}$ && $\num[math-rm=\mathbf]{0.855920}$ &  $\num{0.901316}$ & \\
6          & 0.25  &&   $\num[math-rm=\mathbf]{0.963794}$  & $\num{1.012445}$ & $\num{0.904009}$  && $\num[math-rm=\mathbf]{0.934694}$  &  $\num{1.739299}$ & $\num{1.733614}$ && $\num[math-rm=\mathbf]{0.880215}$ &  $\num{0.915744}$ & \\
7          &  0.30  &&   $\num[math-rm=\mathbf]{0.936276}$  & $\num{0.976490}$ & $\num{0.882975}$  && $\num[math-rm=\mathbf]{0.981626}$  &  $\num{2.090170}$ & $\num{2.066818}$ && $\num[math-rm=\mathbf]{0.891600}$ &  $\num{0.868277}$ & \\
8          & 0.35  &&   $\num[math-rm=\mathbf]{0.967486}$  & $\num{1.024155}$ & $\num{0.958527}$  && $\num[math-rm=\mathbf]{0.965456}$  &  $\num{2.480717}$ & $\num{2.427608}$ && $\num[math-rm=\mathbf]{0.911479}$ &  $\num{0.887186}$ & \\
9          &  0.40  &&   $\num[math-rm=\mathbf]{0.956044}$  & $\num{0.974692}$ & $\num{0.874485}$  && $\num[math-rm=\mathbf]{0.943147}$  &  $\num{2.788741}$ & $\num{2.706202}$ && $\num[math-rm=\mathbf]{0.887236}$ &  $\num{0.892896}$ & \\
\bottomrule
            \end{tabular}
            \begin{tablenotes}
            \item $\num{200}$ replications. $\num{50}$ bootstraps. Sample size is 1000.\\
            \end{tablenotes}
            \end{threeparttable}
            \end{table}
% total time elapsed: 13.58389 hours

\begin{figure}[htbp!]
 \centering
% \hfill
\includegraphics[width=0.5\textwidth, height=3.5in]{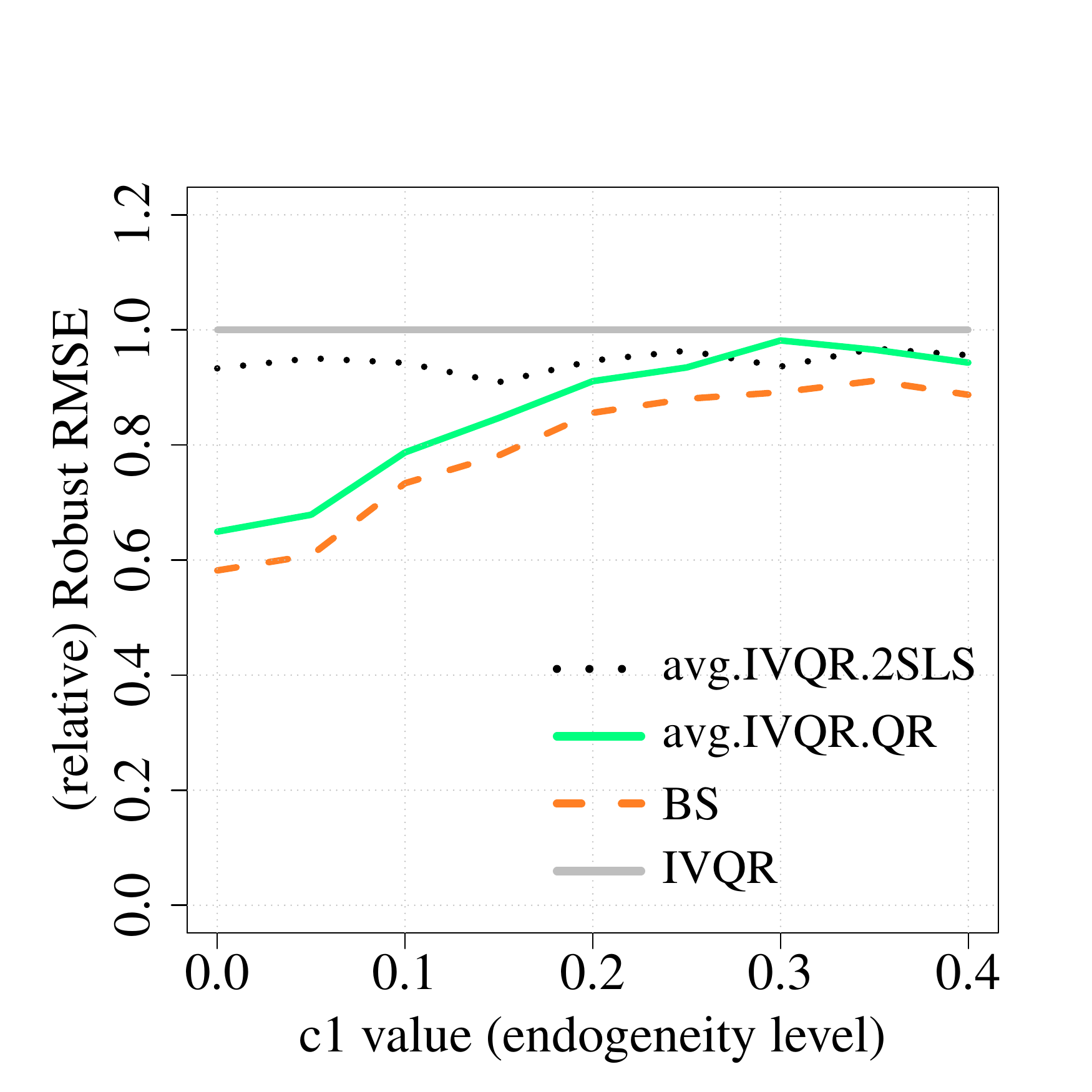}
% \hfill\null
\caption{\label{fig:sim-M2:CLS-FH-S1S21-tau0.5}%
Relative rRMSE in simulation model 2.2 (non-Gaussian error) at $\tau=0.5$ based on $\num{200}$ replications and $\num{50}$ bootstraps. Sample size n=1000.} 
\end{figure}

\begin{table}[htbp]
    \centering\caption{\label{tab:M2:S1S21:tau=0.2} Relative rRMSE in fixed-theta simulation model 2.2 at tau=0.2.}
            \sisetup{round-precision=2,round-mode=places}
    \begin{threeparttable}
    \begin{tabular}{ccScccScccSccccc}
    \toprule
    &                    && \multicolumn{2}{c}{$\mathrm{IVQR.2SLS}$}&               && \multicolumn{2}{c}{$\mathrm{IVQR.QR} $} &               &&  &  &   \\
    \cmidrule{4-5} \cmidrule{8-9}
    $\mathrm{DGP }$  &  $\mathrm{Endog}$  &&  $\mathrm{AVG} $ & $\mathrm{AGG}$       & $\mathrm{2SLS}$ && $\mathrm{AVG}$ & $\mathrm{AGG}$         & $\mathrm{QR}$ && $\mathrm{BS} $    & $\mathrm{SEE}$ & \\
    \midrule
1          & $   0$  &&   $\num[math-rm=\mathbf]{0.985965}$  & $\num{0.986148}$ & $\num{1.164131}$  && $\num[math-rm=\mathbf]{0.605033}$  &  $\num{0.461143}$ & $\num{0.443345}$ && $\num[math-rm=\mathbf]{0.635738}$ &  $\num{1.037887}$ & \\
2          & $0.05$  &&   $\num{1.032437}$  & $\num{1.055472}$ & $\num{1.366091}$  && $\num[math-rm=\mathbf]{0.695882}$  &  $\num{0.622655}$ & $\num{0.590196}$ && $\num[math-rm=\mathbf]{0.714712}$ &  $\num{1.195117}$ & \\
3          & $ 0.10$  &&   $\num[math-rm=\mathbf]{0.992888}$  & $\num{1.024126}$ & $\num{1.280595}$  && $\num[math-rm=\mathbf]{0.718035}$  &  $\num{0.791446}$ & $\num{0.756270}$ && $\num[math-rm=\mathbf]{0.786995}$ &  $\num{1.088795}$ & \\
4          & $0.15$  &&   $\num[math-rm=\mathbf]{0.984362}$  & $\num{1.003759}$ & $\num{1.211431}$  && $\num[math-rm=\mathbf]{0.823548}$  &  $\num{1.016474}$ & $\num{0.992095}$ && $\num[math-rm=\mathbf]{0.851530}$ &  $\num{1.092944}$ & \\
5          & $ 0.20$  &&   $\num[math-rm=\mathbf]{0.983763}$  & $\num{1.027150}$ & $\num{1.214644}$  && $\num[math-rm=\mathbf]{0.876411}$  &  $\num{1.221520}$ & $\num{1.183604}$ && $\num[math-rm=\mathbf]{0.928679}$ &  $\num{1.061601}$ & \\
6          & $0.25$  &&   $\num{1.019296}$  & $\num{1.034378}$ & $\num{1.302392}$  && $\num[math-rm=\mathbf]{0.958077}$  &  $\num{1.667269}$ & $\num{1.634471}$ && $\num[math-rm=\mathbf]{0.942775}$ &  $\num{1.155674}$ & \\
7          & $ 0.30$  &&   $\num{1.005048}$  & $\num{1.037079}$ & $\num{1.268411}$  && $\num[math-rm=\mathbf]{0.952193}$  &  $\num{1.949369}$ & $\num{1.908922}$ && $\num[math-rm=\mathbf]{0.938604}$ &  $\num{1.095906}$ & \\
8          & $0.35$  &&   $\num[math-rm=\mathbf]{0.977357}$  & $\num{1.004641}$ & $\num{1.235724}$  && $\num[math-rm=\mathbf]{0.959564}$  &  $\num{2.345582}$ & $\num{2.253235}$ && $\num[math-rm=\mathbf]{0.974943}$ &  $\num{1.089091}$ & \\
9          & $ 0.40$  &&   $\num[math-rm=\mathbf]{0.984137}$  & $\num{0.986364}$ & $\num{1.330826}$  && $\num[math-rm=\mathbf]{0.975818}$  &  $\num{2.718775}$ & $\num{2.604009}$ && $\num[math-rm=\mathbf]{0.985004}$ &  $\num{1.123513}$ & \\
\bottomrule
            \end{tabular}
            \begin{tablenotes}
            \item $\num{200}$ replications. $\num{50}$ bootstraps. Sample size is 1000.\\
            \end{tablenotes}
            \end{threeparttable}
            \end{table}
% total time elapsed: 7.156501 hours

\begin{table}[htbp]
    \centering\caption{\label{tab:M2:S1S21:tau=0.3} Relative rRMSE in fixed-theta simulation model 2.2 at tau=0.3.}
            \sisetup{round-precision=2,round-mode=places}
    \begin{threeparttable}
    \begin{tabular}{ccScccScccSccccc}
    \toprule
    &                    && \multicolumn{2}{c}{$\mathrm{IVQR.2SLS}$}&               && \multicolumn{2}{c}{$\mathrm{IVQR.QR} $} &               &&  &  &   \\
    \cmidrule{4-5} \cmidrule{8-9}
    $\mathrm{DGP }$  &  $\mathrm{Endog}$  &&  $\mathrm{AVG} $ & $\mathrm{AGG}$       & $\mathrm{2SLS}$ && $\mathrm{AVG}$ & $\mathrm{AGG}$         & $\mathrm{QR}$ && $\mathrm{BS} $    & $\mathrm{SEE}$ & \\
    \midrule
1          & $   0$  &&   $\num[math-rm=\mathbf]{0.971663}$  & $\num{1.048237}$ & $\num{1.125891}$  && $\num[math-rm=\mathbf]{0.634262}$  &  $\num{0.456787}$ & $\num{0.442481}$ && $\num[math-rm=\mathbf]{0.631906}$ &  $\num{0.823319}$ & \\
2          & $0.05$  &&   $\num{1.003301}$  & $\num{1.053934}$ & $\num{1.109154}$  && $\num[math-rm=\mathbf]{0.686522}$  &  $\num{0.589211}$ & $\num{0.565426}$ && $\num[math-rm=\mathbf]{0.715400}$ &  $\num{0.908110}$ & \\
3          & $ 0.10$  &&   $\num[math-rm=\mathbf]{0.985405}$  & $\num{1.034346}$ & $\num{1.084259}$  && $\num[math-rm=\mathbf]{0.742147}$  &  $\num{0.730333}$ & $\num{0.722185}$ && $\num[math-rm=\mathbf]{0.753168}$ &  $\num{0.848187}$ & \\
4          & $0.15$  &&   $\num[math-rm=\mathbf]{0.984451}$  & $\num{1.041436}$ & $\num{1.114378}$  && $\num[math-rm=\mathbf]{0.850875}$  &  $\num{1.088469}$ & $\num{1.083617}$ && $\num[math-rm=\mathbf]{0.845325}$ &  $\num{0.848937}$ & \\
5          & $ 0.20$  &&   $\num{1.006103}$  & $\num{1.047927}$ & $\num{1.163181}$  && $\num[math-rm=\mathbf]{0.929411}$  &  $\num{1.488540}$ & $\num{1.487896}$ && $\num[math-rm=\mathbf]{0.937252}$ &  $\num{0.926147}$ & \\
6          & $0.25$  &&   $\num[math-rm=\mathbf]{0.998943}$  & $\num{1.062765}$ & $\num{1.113493}$  && $\num[math-rm=\mathbf]{0.936973}$  &  $\num{1.662090}$ & $\num{1.647114}$ && $\num[math-rm=\mathbf]{0.908682}$ &  $\num{0.840307}$ & \\
7          & $ 0.30$  &&   $\num[math-rm=\mathbf]{0.990134}$  & $\num{1.057446}$ & $\num{1.060950}$  && $\num[math-rm=\mathbf]{0.955772}$  &  $\num{1.996429}$ & $\num{1.952347}$ && $\num[math-rm=\mathbf]{0.960992}$ &  $\num{0.857752}$ & \\
8          & $0.35$  &&   $\num[math-rm=\mathbf]{0.988894}$  & $\num{1.071067}$ & $\num{1.201547}$  && $\num[math-rm=\mathbf]{0.974663}$  &  $\num{2.445913}$ & $\num{2.401229}$ && $\num{1.003178}$ &  $\num{0.893437}$ & \\
9          & $ 0.40$  &&   $\num[math-rm=\mathbf]{0.986867}$  & $\num{1.033834}$ & $\num{1.087412}$  && $\num[math-rm=\mathbf]{0.951769}$  &  $\num{2.698196}$ & $\num{2.610765}$ && $\num[math-rm=\mathbf]{0.978063}$ &  $\num{0.865174}$ & \\
\bottomrule
            \end{tabular}
            \begin{tablenotes}
            \item $\num{200}$ replications. $\num{50}$ bootstraps. Sample size is 1000.\\
            \end{tablenotes}
            \end{threeparttable}
            \end{table}
% total time elapsed: 13.72229 hours

\begin{table}[htbp]
    \centering\caption{\label{tab:M2:S1S21:tau=0.4} Relative rRMSE in fixed-theta simulation model 2.2 at tau=0.4.}
            \sisetup{round-precision=2,round-mode=places}
    \begin{threeparttable}
    \begin{tabular}{ccScccScccSccccc}
    \toprule
    &                    && \multicolumn{2}{c}{$\mathrm{IVQR.2SLS}$}&               && \multicolumn{2}{c}{$\mathrm{IVQR.QR} $} &               &&  &  &   \\
    \cmidrule{4-5} \cmidrule{8-9}
    $\mathrm{DGP }$  &  $\mathrm{Endog}$  &&  $\mathrm{AVG} $ & $\mathrm{AGG}$       & $\mathrm{2SLS}$ && $\mathrm{AVG}$ & $\mathrm{AGG}$         & $\mathrm{QR}$ && $\mathrm{BS} $    & $\mathrm{SEE}$ & \\
    \midrule
1          & $   0$  &&   $\num[math-rm=\mathbf]{0.942432}$  & $\num{0.986965}$ & $\num{1.001899}$  && $\num[math-rm=\mathbf]{0.640468}$  &  $\num{0.469533}$ & $\num{0.457496}$ && $\num[math-rm=\mathbf]{0.613198}$ &  $\num{0.886612}$ & \\
2          & $0.05$  &&   $\num[math-rm=\mathbf]{0.975302}$  & $\num{1.056243}$ & $\num{1.012504}$  && $\num[math-rm=\mathbf]{0.685716}$  &  $\num{0.558249}$ & $\num{0.545467}$ && $\num[math-rm=\mathbf]{0.666865}$ &  $\num{0.870498}$ & \\
3          & $ 0.10$  &&   $\num[math-rm=\mathbf]{0.977430}$  & $\num{1.009010}$ & $\num{0.944167}$  && $\num[math-rm=\mathbf]{0.774485}$  &  $\num{0.805500}$ & $\num{0.784705}$ && $\num[math-rm=\mathbf]{0.749540}$ &  $\num{0.862592}$ & \\
4          & $0.15$  &&   $\num[math-rm=\mathbf]{0.983445}$  & $\num{1.102423}$ & $\num{1.035851}$  && $\num[math-rm=\mathbf]{0.855253}$  &  $\num{1.144311}$ & $\num{1.136162}$ && $\num[math-rm=\mathbf]{0.846393}$ &  $\num{0.822742}$ & \\
5          & $ 0.20$  &&   $\num[math-rm=\mathbf]{0.959831}$  & $\num{1.066556}$ & $\num{1.045529}$  && $\num[math-rm=\mathbf]{0.933007}$  &  $\num{1.473200}$ & $\num{1.458715}$ && $\num[math-rm=\mathbf]{0.908898}$ &  $\num{0.864985}$ & \\
6          & $0.25$  &&   $\num[math-rm=\mathbf]{0.983220}$  & $\num{1.022076}$ & $\num{0.959585}$  && $\num[math-rm=\mathbf]{0.979510}$  &  $\num{1.643607}$ & $\num{1.631658}$ && $\num[math-rm=\mathbf]{0.945834}$ &  $\num{0.843035}$ & \\
7          & $ 0.30$  &&   $\num[math-rm=\mathbf]{0.969065}$  & $\num{1.055938}$ & $\num{1.005546}$  && $\num[math-rm=\mathbf]{0.943129}$  &  $\num{2.039433}$ & $\num{2.007738}$ && $\num[math-rm=\mathbf]{0.930300}$ &  $\num{0.858890}$ & \\
8          & $0.35$  &&   $\num[math-rm=\mathbf]{0.992367}$  & $\num{1.043289}$ & $\num{1.030924}$  && $\num[math-rm=\mathbf]{0.954163}$  &  $\num{2.329011}$ & $\num{2.284743}$ && $\num[math-rm=\mathbf]{0.934160}$ &  $\num{0.840382}$ & \\
9          & $ 0.40$  &&   $\num[math-rm=\mathbf]{0.932130}$  & $\num{1.030064}$ & $\num{0.982342}$  && $\num[math-rm=\mathbf]{0.973219}$  &  $\num{2.809253}$ & $\num{2.718179}$ && $\num[math-rm=\mathbf]{0.921131}$ &  $\num{0.829017}$ & \\
\bottomrule
            \end{tabular}
            \begin{tablenotes}
            \item $\num{200}$ replications. $\num{50}$ bootstraps. Sample size is 1000.\\
            \end{tablenotes}
            \end{threeparttable}
            \end{table}
% total time elapsed: 13.66576 hours

\begin{table}[htbp]
    \centering\caption{\label{tab:M2:S1S21:tau=0.6} Relative rRMSE in fixed-theta simulation model 2.2 at tau=0.6.}
            \sisetup{round-precision=2,round-mode=places}
    \begin{threeparttable}
    \begin{tabular}{ccScccScccSccccc}
    \toprule
    &                    && \multicolumn{2}{c}{$\mathrm{IVQR.2SLS}$}&               && \multicolumn{2}{c}{$\mathrm{IVQR.QR} $} &               &&  &  &   \\
    \cmidrule{4-5} \cmidrule{8-9}
    $\mathrm{DGP }$  &  $\mathrm{Endog}$  &&  $\mathrm{AVG} $ & $\mathrm{AGG}$       & $\mathrm{2SLS}$ && $\mathrm{AVG}$ & $\mathrm{AGG}$         & $\mathrm{QR}$ && $\mathrm{BS} $    & $\mathrm{SEE}$ & \\
    \midrule
1          & $   0$  &&   $\num[math-rm=\mathbf]{0.901977}$  & $\num{0.928575}$ & $\num{0.770761}$  && $\num[math-rm=\mathbf]{0.645696}$  &  $\num{0.477261}$ & $\num{0.463431}$ && $\num[math-rm=\mathbf]{0.587616}$ &  $\num{0.902581}$ & \\
2          & $0.05$  &&   $\num[math-rm=\mathbf]{0.848807}$  & $\num{0.847127}$ & $\num{0.691109}$  && $\num[math-rm=\mathbf]{0.641421}$  &  $\num{0.537181}$ & $\num{0.494978}$ && $\num[math-rm=\mathbf]{0.594821}$ &  $\num{0.787453}$ & \\
3          & $ 0.10$  &&   $\num[math-rm=\mathbf]{0.891112}$  & $\num{0.899207}$ & $\num{0.741558}$  && $\num[math-rm=\mathbf]{0.771001}$  &  $\num{0.837923}$ & $\num{0.842538}$ && $\num[math-rm=\mathbf]{0.704470}$ &  $\num{0.844147}$ & \\
4          & $0.15$  &&   $\num[math-rm=\mathbf]{0.899067}$  & $\num{0.905653}$ & $\num{0.797338}$  && $\num[math-rm=\mathbf]{0.846570}$  &  $\num{1.208408}$ & $\num{1.187941}$ && $\num[math-rm=\mathbf]{0.795798}$ &  $\num{0.899594}$ & \\
5          & $ 0.20$  &&   $\num[math-rm=\mathbf]{0.888571}$  & $\num{0.861603}$ & $\num{0.766086}$  && $\num[math-rm=\mathbf]{0.903913}$  &  $\num{1.404869}$ & $\num{1.389336}$ && $\num[math-rm=\mathbf]{0.839882}$ &  $\num{0.861214}$ & \\
6          & $0.25$  &&   $\num[math-rm=\mathbf]{0.920350}$  & $\num{0.946757}$ & $\num{0.783695}$  && $\num[math-rm=\mathbf]{0.954796}$  &  $\num{1.760076}$ & $\num{1.739580}$ && $\num[math-rm=\mathbf]{0.886149}$ &  $\num{0.861654}$ & \\
7          & $ 0.30$  &&   $\num[math-rm=\mathbf]{0.920773}$  & $\num{0.922389}$ & $\num{0.751853}$  && $\num[math-rm=\mathbf]{0.969283}$  &  $\num{2.079273}$ & $\num{2.040869}$ && $\num[math-rm=\mathbf]{0.878936}$ &  $\num{0.910682}$ & \\
8          & $0.35$  &&   $\num[math-rm=\mathbf]{0.857746}$  & $\num{0.819161}$ & $\num{0.741572}$  && $\num[math-rm=\mathbf]{0.939499}$  &  $\num{2.267172}$ & $\num{2.228388}$ && $\num[math-rm=\mathbf]{0.810939}$ &  $\num{0.807294}$ & \\
9          & $ 0.40$  &&   $\num[math-rm=\mathbf]{0.865908}$  & $\num{0.847995}$ & $\num{0.777066}$  && $\num{1.003632}$  &  $\num{2.692668}$ & $\num{2.630881}$ && $\num[math-rm=\mathbf]{0.869565}$ &  $\num{0.883217}$ & \\
\bottomrule
            \end{tabular}
            \begin{tablenotes}
            \item $\num{200}$ replications. $\num{50}$ bootstraps. Sample size is 1000.\\
            \end{tablenotes}
            \end{threeparttable}
            \end{table}
% total time elapsed: 1.523026 days

\begin{table}[htbp]
    \centering\caption{\label{tab:M2:S1S21:tau=0.7} Relative rRMSE in fixed-theta simulation model 2.2 at tau=0.7.}
            \sisetup{round-precision=2,round-mode=places}
    \begin{threeparttable}
    \begin{tabular}{ccScccScccSccccc}
    \toprule
    &                    && \multicolumn{2}{c}{$\mathrm{IVQR.2SLS}$}&               && \multicolumn{2}{c}{$\mathrm{IVQR.QR} $} &               &&  &  &   \\
    \cmidrule{4-5} \cmidrule{8-9}
    $\mathrm{DGP }$  &  $\mathrm{Endog}$  &&  $\mathrm{AVG} $ & $\mathrm{AGG}$       & $\mathrm{2SLS}$ && $\mathrm{AVG}$ & $\mathrm{AGG}$         & $\mathrm{QR}$ && $\mathrm{BS} $    & $\mathrm{SEE}$ & \\
    \midrule
1          & $   0$  &&   $\num[math-rm=\mathbf]{0.818808}$  & $\num{0.774343}$ & $\num{0.609321}$  && $\num[math-rm=\mathbf]{0.625038}$  &  $\num{0.450636}$ & $\num{0.431199}$ && $\num[math-rm=\mathbf]{0.564420}$ &  $\num{0.852658}$ & \\
2          & $0.05$  &&   $\num[math-rm=\mathbf]{0.880588}$  & $\num{0.845390}$ & $\num{0.690303}$  && $\num[math-rm=\mathbf]{0.694359}$  &  $\num{0.564080}$ & $\num{0.565038}$ && $\num[math-rm=\mathbf]{0.629387}$ &  $\num{0.928183}$ & \\
3          & $ 0.10$  &&   $\num[math-rm=\mathbf]{0.816081}$  & $\num{0.748445}$ & $\num{0.633686}$  && $\num[math-rm=\mathbf]{0.773877}$  &  $\num{0.761339}$ & $\num{0.757124}$ && $\num[math-rm=\mathbf]{0.694770}$ &  $\num{0.886899}$ & \\
4          & $0.15$  &&   $\num[math-rm=\mathbf]{0.803565}$  & $\num{0.772054}$ & $\num{0.628178}$  && $\num[math-rm=\mathbf]{0.808699}$  &  $\num{0.995587}$ & $\num{0.984468}$ && $\num[math-rm=\mathbf]{0.695331}$ &  $\num{0.865434}$ & \\
5          & $ 0.20$  &&   $\num[math-rm=\mathbf]{0.818873}$  & $\num{0.815652}$ & $\num{0.637650}$  && $\num[math-rm=\mathbf]{0.897051}$  &  $\num{1.382942}$ & $\num{1.363840}$ && $\num[math-rm=\mathbf]{0.779155}$ &  $\num{0.888535}$ & \\
6          & $0.25$  &&   $\num[math-rm=\mathbf]{0.846026}$  & $\num{0.825192}$ & $\num{0.683328}$  && $\num[math-rm=\mathbf]{0.962380}$  &  $\num{1.769683}$ & $\num{1.737131}$ && $\num[math-rm=\mathbf]{0.778568}$ &  $\num{0.860926}$ & \\
7          & $ 0.30$  &&   $\num[math-rm=\mathbf]{0.844457}$  & $\num{0.843036}$ & $\num{0.656264}$  && $\num[math-rm=\mathbf]{0.991065}$  &  $\num{2.119637}$ & $\num{2.068324}$ && $\num[math-rm=\mathbf]{0.853562}$ &  $\num{0.919079}$ & \\
8          & $0.35$  &&   $\num[math-rm=\mathbf]{0.790992}$  & $\num{0.744770}$ & $\num{0.604874}$  && $\num[math-rm=\mathbf]{0.937999}$  &  $\num{2.031738}$ & $\num{1.987206}$ && $\num[math-rm=\mathbf]{0.752715}$ &  $\num{0.868969}$ & \\
9          & $ 0.40$  &&   $\num[math-rm=\mathbf]{0.840503}$  & $\num{0.806064}$ & $\num{0.646829}$  && $\num[math-rm=\mathbf]{0.963600}$  &  $\num{2.498278}$ & $\num{2.419822}$ && $\num[math-rm=\mathbf]{0.774546}$ &  $\num{0.900748}$ & \\
\bottomrule
            \end{tabular}
            \begin{tablenotes}
            \item $\num{200}$ replications. $\num{50}$ bootstraps. Sample size is 1000.\\
            \end{tablenotes}
            \end{threeparttable}
            \end{table}
% total time elapsed: 13.59102 hours

\begin{table}[htbp]
    \centering\caption{\label{tab:M2:S1S21:tau=0.8} Relative rRMSE in fixed-theta simulation model 2.2 at tau=0.8.}
            \sisetup{round-precision=2,round-mode=places}
    \begin{threeparttable}
    \begin{tabular}{ccScccScccSccccc}
    \toprule
    &                    && \multicolumn{2}{c}{$\mathrm{IVQR.2SLS}$}&               && \multicolumn{2}{c}{$\mathrm{IVQR.QR} $} &               &&  &  &   \\
    \cmidrule{4-5} \cmidrule{8-9}
    $\mathrm{DGP }$  &  $\mathrm{Endog}$  &&  $\mathrm{AVG} $ & $\mathrm{AGG}$       & $\mathrm{2SLS}$ && $\mathrm{AVG}$ & $\mathrm{AGG}$         & $\mathrm{QR}$ && $\mathrm{BS} $    & $\mathrm{SEE}$ & \\
    \midrule
1          & $   0$  &&   $\num[math-rm=\mathbf]{0.821933}$  & $\num{0.733994}$ & $\num{0.509435}$  && $\num[math-rm=\mathbf]{0.643116}$  &  $\num{0.472628}$ & $\num{0.436924}$ && $\num[math-rm=\mathbf]{0.519343}$ &  $\num{0.563281}$ & \\
2          & $0.05$  &&   $\num[math-rm=\mathbf]{0.802569}$  & $\num{0.719385}$ & $\num{0.525693}$  && $\num[math-rm=\mathbf]{0.665014}$  &  $\num{0.530699}$ & $\num{0.520795}$ && $\num[math-rm=\mathbf]{0.559490}$ &  $\num{0.581545}$ & \\
3          & $ 0.10$  &&   $\num[math-rm=\mathbf]{0.833646}$  & $\num{0.768216}$ & $\num{0.526505}$  && $\num[math-rm=\mathbf]{0.768155}$  &  $\num{0.802893}$ & $\num{0.768267}$ && $\num[math-rm=\mathbf]{0.642832}$ &  $\num{0.591115}$ & \\
4          & $0.15$  &&   $\num[math-rm=\mathbf]{0.832923}$  & $\num{0.756548}$ & $\num{0.529786}$  && $\num[math-rm=\mathbf]{0.838635}$  &  $\num{1.011841}$ & $\num{1.002279}$ && $\num[math-rm=\mathbf]{0.676322}$ &  $\num{0.584173}$ & \\
5          & $ 0.20$  &&   $\num[math-rm=\mathbf]{0.813924}$  & $\num{0.725903}$ & $\num{0.509276}$  && $\num[math-rm=\mathbf]{0.866982}$  &  $\num{1.284544}$ & $\num{1.251119}$ && $\num[math-rm=\mathbf]{0.654457}$ & $\num{0.571413}$ & \\
6          & $0.25$  &&   $\num[math-rm=\mathbf]{0.840842}$  & $\num{0.764893}$ & $\num{0.508802}$  && $\num[math-rm=\mathbf]{0.924171}$  &  $\num{1.464504}$ & $\num{1.427234}$ && $\num[math-rm=\mathbf]{0.720800}$ &  $\num{0.560036}$ & \\
7          & $ 0.30$  &&   $\num[math-rm=\mathbf]{0.801013}$  & $\num{0.727230}$ & $\num{0.486811}$  && $\num[math-rm=\mathbf]{0.937894}$  &  $\num{1.698238}$ & $\num{1.669410}$ && $\num[math-rm=\mathbf]{0.680512}$ &  $\num{0.548156}$ & \\
8          & $0.35$  &&   $\num[math-rm=\mathbf]{0.835048}$  & $\num{0.757762}$ & $\num{0.530311}$  && $\num{1.001051}$  &  $\num{2.084615}$ & $\num{2.027790}$ && $\num[math-rm=\mathbf]{0.728396}$ &  $\num{0.587661}$ & \\
9          & $ 0.40$  &&   $\num[math-rm=\mathbf]{0.839215}$  & $\num{0.736296}$ & $\num{0.541051}$  && $\num[math-rm=\mathbf]{0.985016}$  &  $\num{2.264286}$ & $\num{2.168639}$ && $\num[math-rm=\mathbf]{0.689692}$ &  $\num{0.594530}$ & \\
\bottomrule
            \end{tabular}
            \begin{tablenotes}
            \item $\num{200}$ replications. $\num{50}$ bootstraps. Sample size is 1000.\\
            \end{tablenotes}
            \end{threeparttable}
            \end{table}
% total time elapsed: 13.64626 hours

\begin{figure}[htbp]
% \centering
% \hfill
\includegraphics[width=0.45\textwidth, height=0.3\textheight, trim=35 20 20 70]{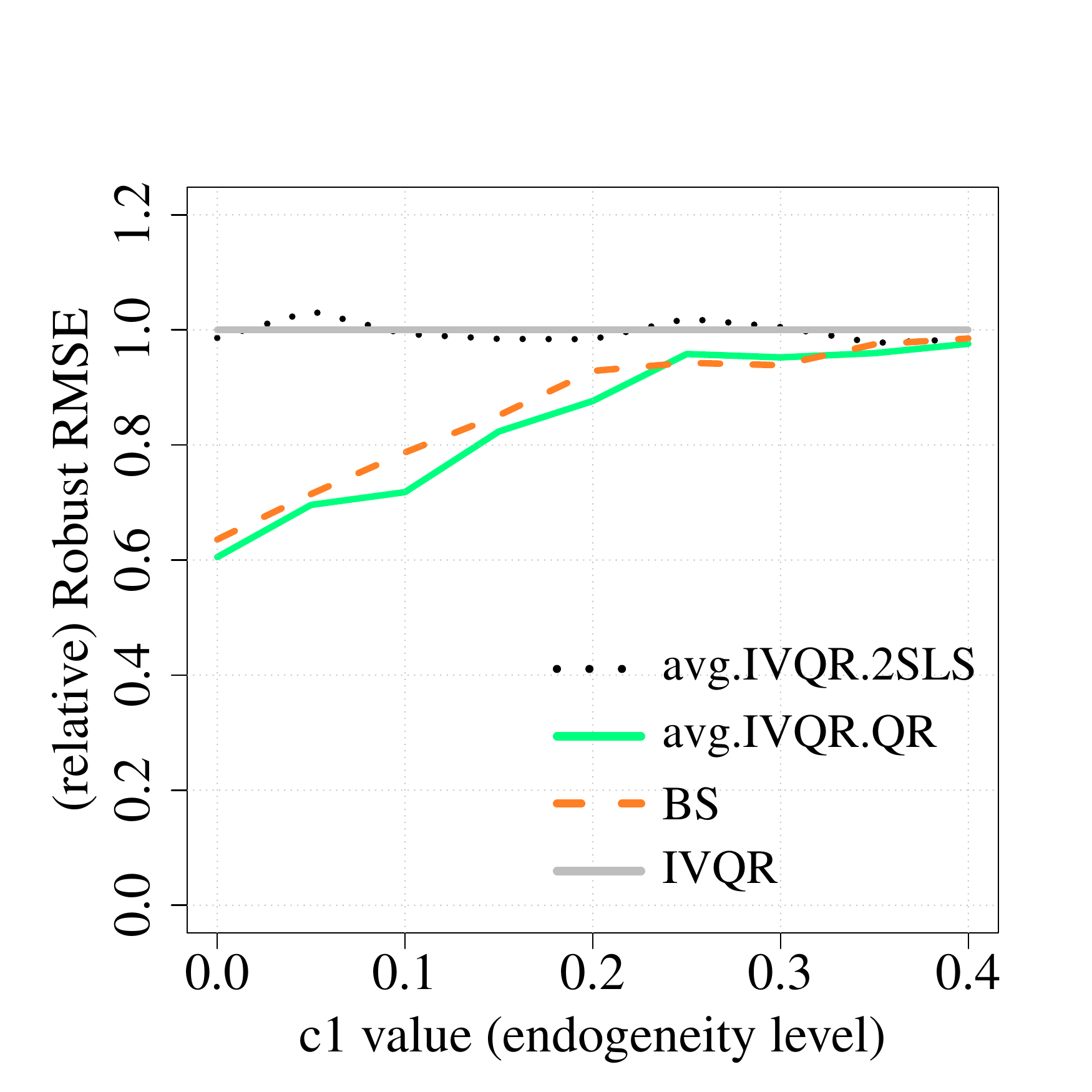}
\hfill
\includegraphics[width=0.45\textwidth, height=0.3\textheight, trim=35 20 20 70]{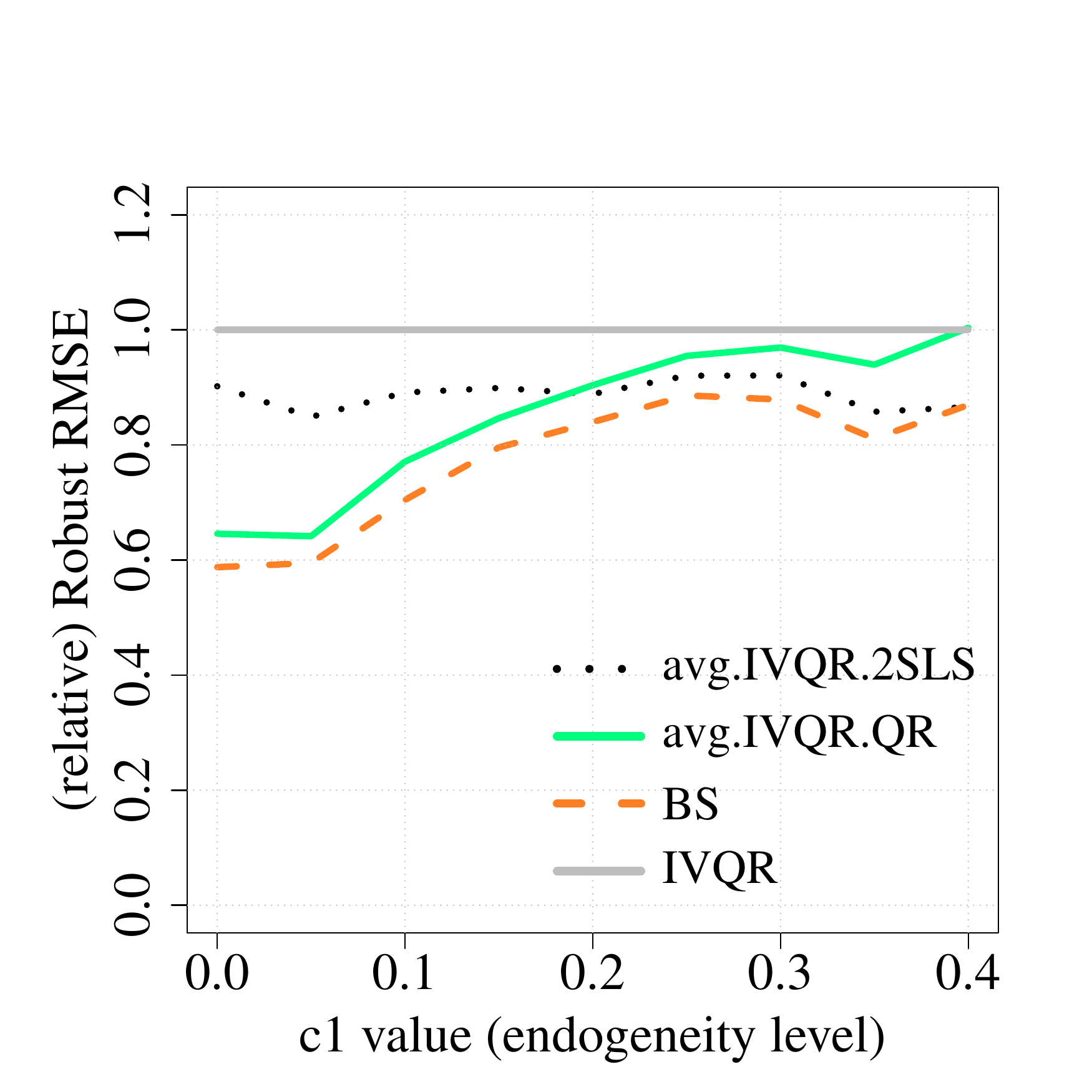}
\hfill\null
% \hfill
\includegraphics[width=0.45\textwidth, height=0.3\textheight, trim=35 20 20 70]{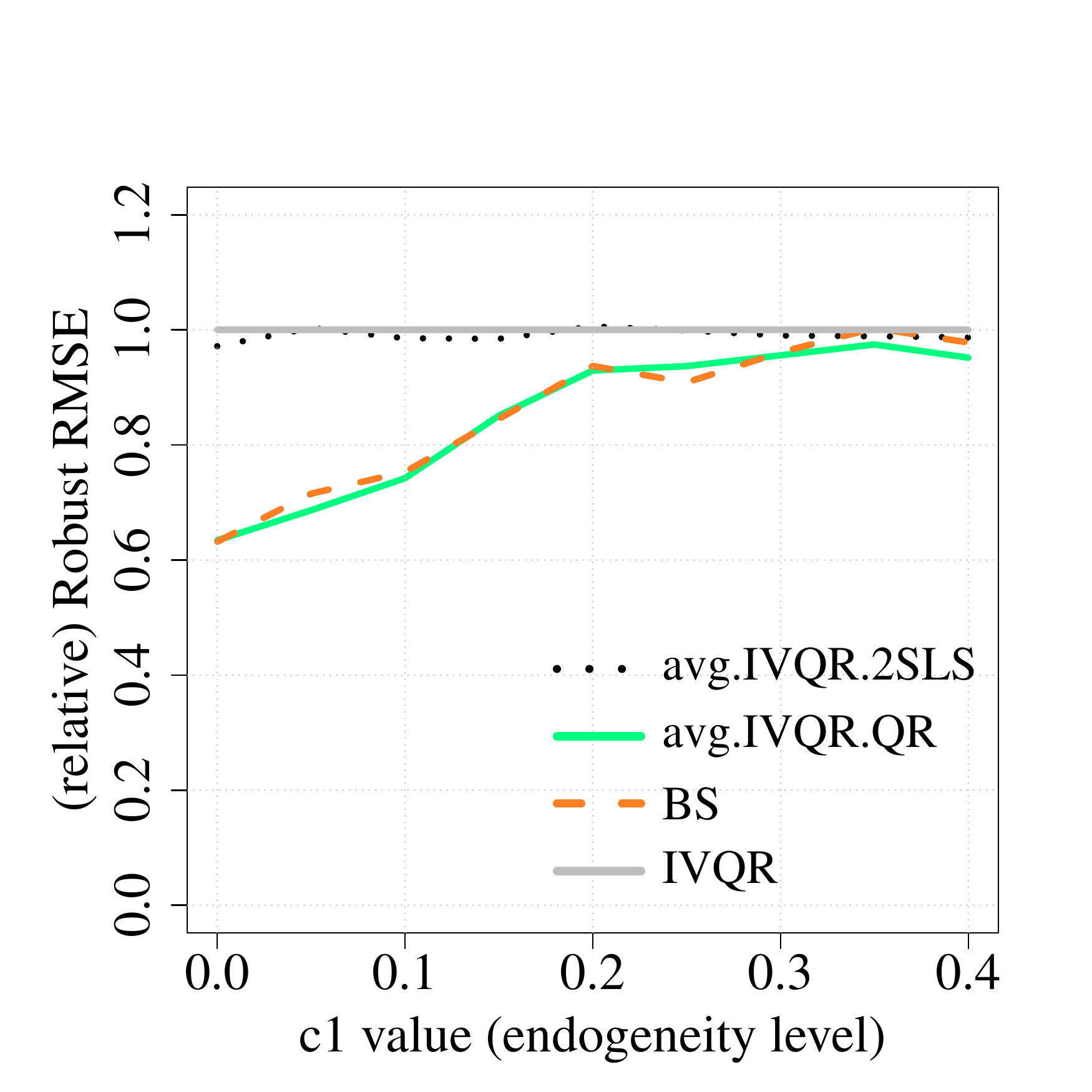}
\hfill
\includegraphics[width=0.45\textwidth, height=0.3\textheight, trim=35 20 20 70 ]{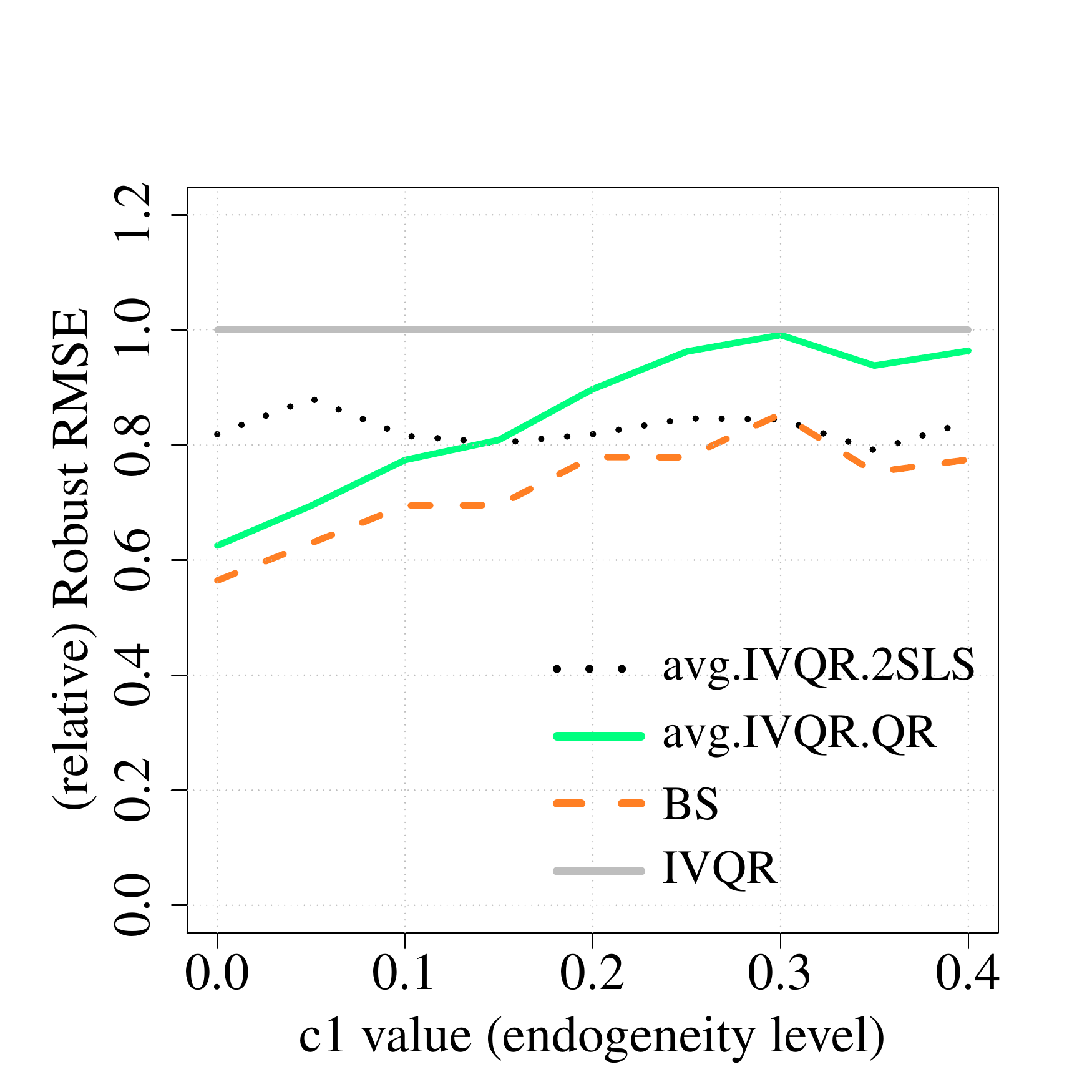}
\hfill\null
\includegraphics[width=0.45\textwidth, height=0.3\textheight, trim=35 20 20 70 ]{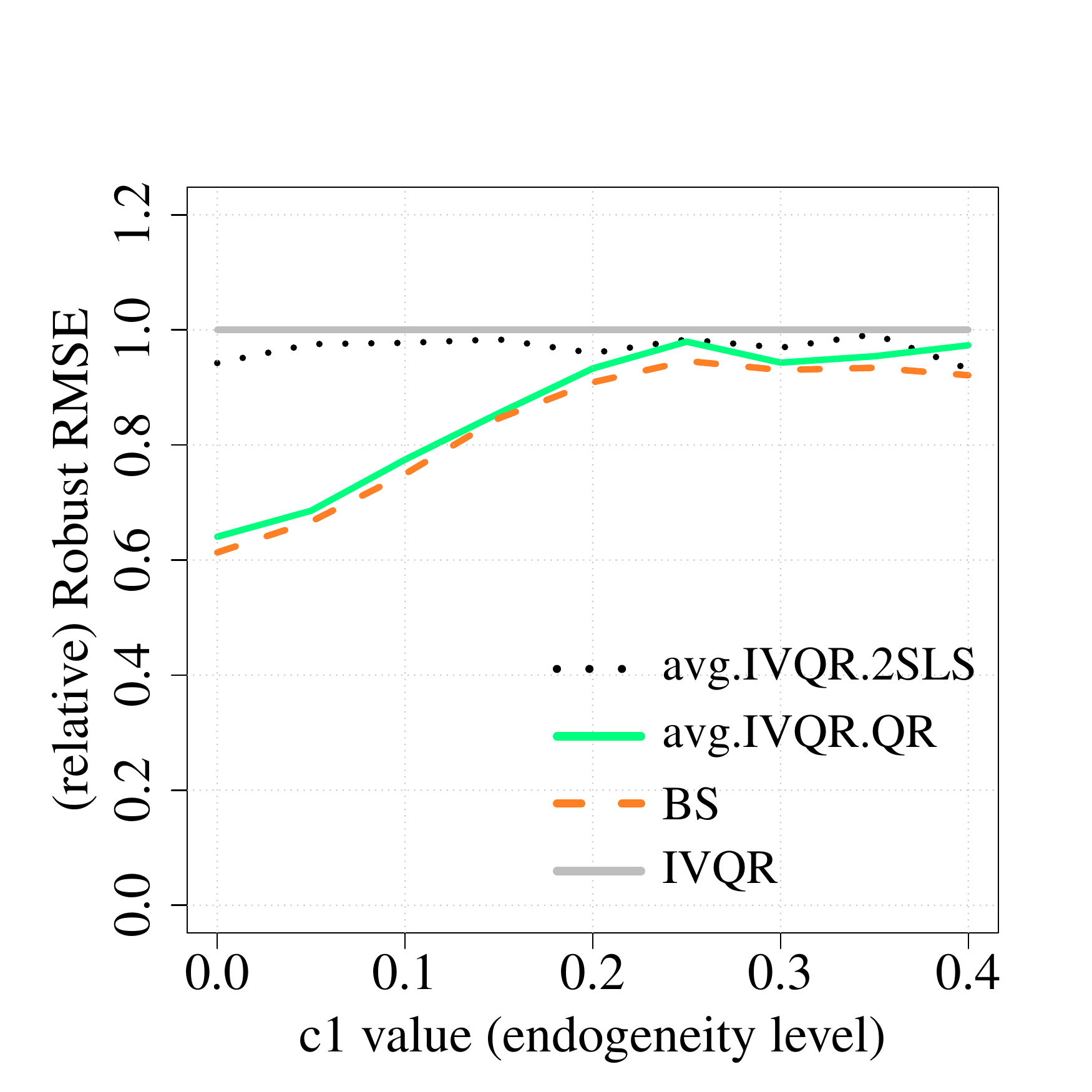}
% \hfill\null
\hfill
\includegraphics[width=0.45\textwidth, height=0.3\textheight, trim=35 20 20 70]{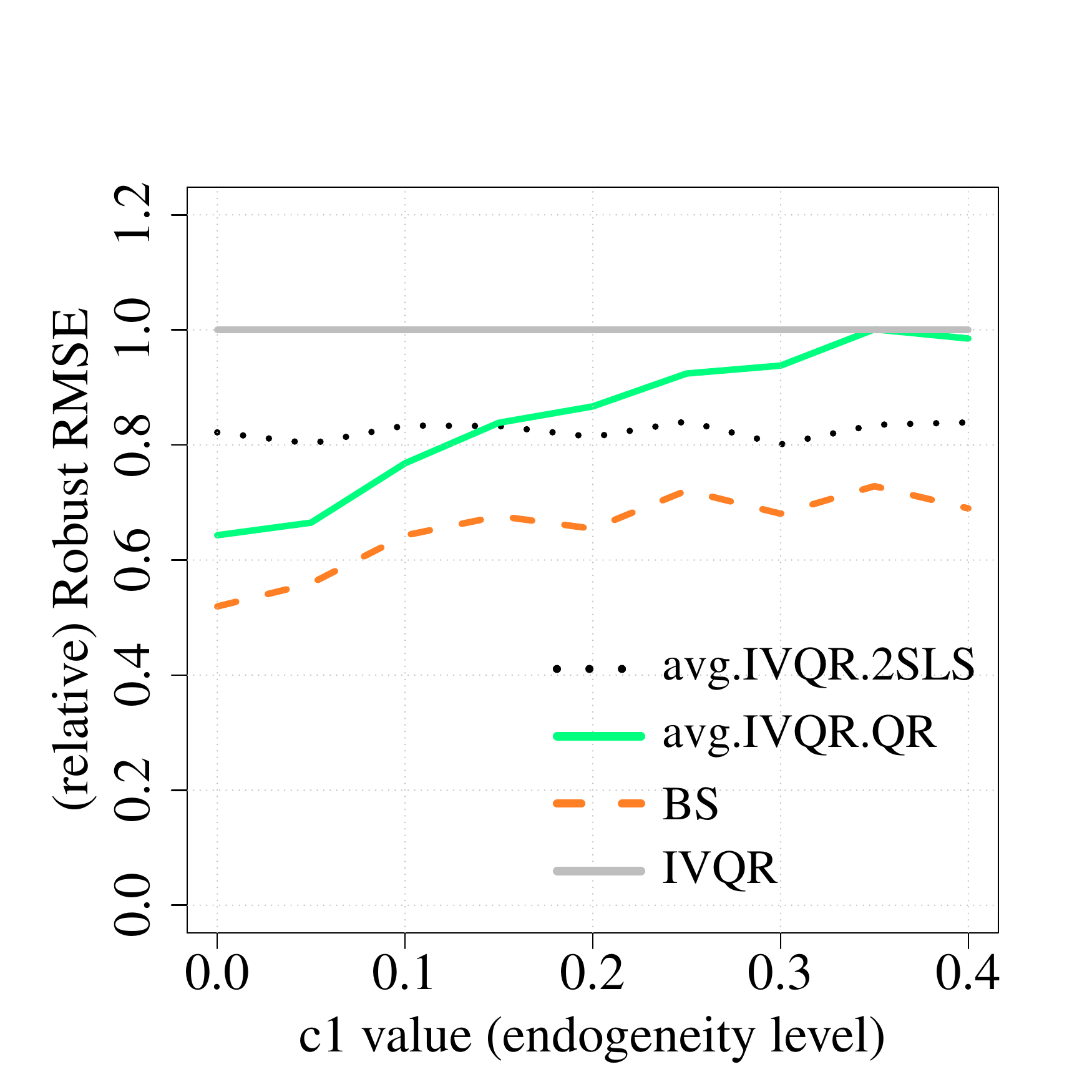}\\
% \hfill
\caption{\label{fig:M2:S1S21:compare_tau}%
Relative rRMSE in simulation model 2.2 (non-Gaussian error) at $\tau=0.2$ (left up), $\tau=0.3$ (left middle), $\tau=0.4$ (left bottom), $\tau=0.6$ (right up), $\tau=0.7$ (right middle),$\tau=0.8$ (right bottom), based on $\num{200}$ replications and $\num{50}$ bootstraps. Sample size n=1000.} 
\end{figure}

% \section{Computational details}\label{sec:app-comp}

% XXX

\subsection{Simulation model 3 results at different quantile levels}
\label{appdx:M3}

Additional tables corresponding to the figures in \cref{appdx:M3} are available in the supplemental appendix. 

\begin{table}[htbp!]
    \centering\caption{\label{tab:bound:S3:vary hetero} The lower and upper bounds of relative rRMSE in simulation model 3 at three fixed endogeneity and varying heterogeneity.}
            \sisetup{round-precision=3,round-mode=places}
    \begin{threeparttable}
    \begin{tabular}{cccScccScccSccc}
    \toprule
    &&          && \multicolumn{8}{c}{$\mathrm{Fixed \ Endog }$}               \\
    \cmidrule{5-12}
    &&          && \multicolumn{2}{c}{$\mathrm{No}\ (c=0)$}              && \multicolumn{2}{c}{$\mathrm{Some}\ (c=0.2) $}     && \multicolumn{2}{c}{$\mathrm{Much}\ (c=0.4) $}    \\
     \cmidrule{5-6} \cmidrule{8-9} \cmidrule{11-12}
    &&          && $\mathrm{Lower}$   & $\mathrm{Upper}$   &&   $\mathrm{Lower}$  & $\mathrm{Upper}$    && $\mathrm{Lower}$  & $\mathrm{Upper}$            &&    \\
    \cmidrule{3-12}
      && $\hat{\theta}_{\mathrm{AVG.2SLS}}$ &&  $\num{0.969577}$  & $\num{1.021010}$  && $\num{0.942747}$  &  $\num{1.044357}$ && $\num{0.958864}$ & $\num{1.032918}$  \\
$\tau=0.2$ && $\hat{\theta}_{\mathrm{AVG.QR}}$ &&  $\num{0.821836}$  & $\num{0.931055}$  && $\num{0.993061}$  &  $\num{1.018057}$ && $\num{0.988507}$ & $\num{1.001809}$  \\
      && $\hat{\theta}_{\mathrm{BS}}$ &&  $\num{0.795371}$  & $\num{1.065354}$  && $\num{0.789777}$  &  $\num{1.074997}$ && $\num{0.814863}$ & $\num{1.072128}$  \\
    \cmidrule{3-12}
      && $\hat{\theta}_{\mathrm{AVG.2SLS}}$ &&  $\num{0.967765}$  & $\num{1.021182}$  && $\num{0.935481}$  &  $\num{1.007504}$ && $\num{0.955472}$ & $\num{1.025268}$  \\
$\tau=0.3$ && $\hat{\theta}_{\mathrm{AVG.QR}}$ &&  $\num{0.823643}$  & $\num{0.889685}$  && $\num{0.997770}$  &  $\num{1.006209}$ && $\num{0.990556}$ & $\num{1.000673}$  \\
      && $\hat{\theta}_{\mathrm{BS}}$ &&  $\num{0.822331}$  & $\num{1.019420}$  && $\num{0.786762}$  &  $\num{1.037749}$ && $\num{0.816520}$ & $\num{1.062801}$  \\
    \cmidrule{3-12}
      && $\hat{\theta}_{\mathrm{AVG.2SLS}}$ &&  $\num{0.986476}$  & $\num{1.011116}$  && $\num{0.949473}$  &  $\num{1.020236}$ && $\num{0.960573}$ & $\num{1.011081}$  \\
$\tau=0.4$ && $\hat{\theta}_{\mathrm{AVG.QR}}$ &&  $\num{0.811193}$  & $\num{0.892960}$  && $\num{0.996253}$  &  $\num{1.004808}$ && $\num{0.993039}$ & $\num{1.001408}$  \\
      && $\hat{\theta}_{\mathrm{BS}}$ &&  $\num{0.795891}$  & $\num{0.989374}$  && $\num{0.814171}$  &  $\num{1.053954}$ && $\num{0.879285}$ & $\num{1.051338}$  \\
    \cmidrule{3-12}
      && $\hat{\theta}_{\mathrm{AVG.2SLS}}$ &&  $\num{0.921558}$  & $\num{1.006412}$  && $\num{0.960113}$  &  $\num{1.024411}$ && $\num{0.958913}$ & $\num{0.996904}$  \\
$\tau=0.5$ && $\hat{\theta}_{\mathrm{AVG.QR}}$ &&  $\num{0.830799}$  & $\num{0.919396}$  && $\num{0.996671}$  &  $\num{1.002288}$ && $\num{0.998377}$ & $\num{1.001428}$  \\
      && $\hat{\theta}_{\mathrm{BS}}$ &&  $\num{0.769622}$  & $\num{0.929397}$  && $\num{0.850083}$  &  $\num{1.014216}$ && $\num{0.859666}$ & $\num{1.006431}$  \\
    \cmidrule{3-12}
      && $\hat{\theta}_{\mathrm{AVG.2SLS}}$ &&  $\num{0.941824}$  & $\num{0.999079}$  && $\num{0.925064}$  &  $\num{0.994143}$ && $\num{0.954774}$ & $\num{1.015155}$  \\
$\tau=0.6$ && $\hat{\theta}_{\mathrm{AVG.QR}}$ &&  $\num{0.842839}$  & $\num{0.928874}$  && $\num{0.992314}$  &  $\num{1.000468}$ && $\num{0.995785}$ & $\num{1.000706}$  \\
      && $\hat{\theta}_{\mathrm{BS}}$ &&  $\num{0.792263}$  & $\num{0.879152}$  && $\num{0.797588}$  &  $\num{0.972436}$ && $\num{0.826679}$ & $\num{0.977792}$  \\
    \cmidrule{3-12}
      && $\hat{\theta}_{\mathrm{AVG.2SLS}}$ &&  $\num{0.903325}$  & $\num{0.984811}$  && $\num{0.913862}$  &  $\num{0.973105}$ && $\num{0.900183}$ & $\num{1.001527}$  \\
$\tau=0.7$ && $\hat{\theta}_{\mathrm{AVG.QR}}$ &&  $\num{0.862907}$  & $\num{0.948820}$  && $\num{0.988170}$  &  $\num{0.995181}$ && $\num{0.997861}$ & $\num{1.000398}$  \\
      && $\hat{\theta}_{\mathrm{BS}}$ &&  $\num{0.794074}$  & $\num{0.898076}$  && $\num{0.665717}$  &  $\num{0.841015}$ && $\num{0.743293}$ & $\num{0.842545}$  \\
    \cmidrule{3-12}
      && $\hat{\theta}_{\mathrm{AVG.2SLS}}$ &&  $\num{0.927831}$  & $\num{0.976097}$  && $\num{0.919111}$  &  $\num{0.982062}$ && $\num{0.920880}$ & $\num{0.983158}$  \\
$\tau=0.8$ && $\hat{\theta}_{\mathrm{AVG.QR}}$ &&  $\num{0.877183}$  & $\num{0.977167}$  && $\num{0.989498}$  &  $\num{0.995521}$ && $\num{0.997245}$ & $\num{0.999976}$  \\
      && $\hat{\theta}_{\mathrm{BS}}$ &&  $\num{0.751498}$  & $\num{0.942940}$  && $\num{0.594733}$  &  $\num{0.822040}$ && $\num{0.659582}$ & $\num{0.797359}$  \\

      \bottomrule
            \end{tabular}
            \begin{tablenotes}
            \item $\num{200}$ replications. $\num{50}$ bootstraps. Sample size is 1000.\\
            \end{tablenotes}
            \end{threeparttable}
            \end{table}
% total time elapsed: 2.050716 days

\begin{table}[htbp]
      \centering\caption{\label{tab:bound:S3:vary endo} The lower and upper bounds of relative rRMSE in simulation model 3 at three fixed heterogeneity and varying endogeneity.}
              \sisetup{round-precision=3,round-mode=places}
      \begin{threeparttable}
      \begin{tabular}{cccScccScccSccc}
      \toprule
      &&          && \multicolumn{8}{c}{$\mathrm{Fixed \ Hetero }$}               \\
      \cmidrule{5-12}
      &&          && \multicolumn{2}{c}{$\mathrm{No}\ (\mathrm{hetero}=0)$}              && \multicolumn{2}{c}{$\mathrm{Some}\ (\mathrm{hetero}=0.5) $}     && \multicolumn{2}{c}{$\mathrm{Much}\ (\mathrm{hetero}=1) $}    \\
      \cmidrule{5-6} \cmidrule{8-9} \cmidrule{11-12}
      &&          && $\mathrm{Lower}$   & $\mathrm{Upper}$   &&   $\mathrm{Lower}$  & $\mathrm{Upper}$    && $\mathrm{Lower}$  & $\mathrm{Upper}$            &&    \\
      \cmidrule{3-12}
      && $\hat{\theta}_{\mathrm{AVG.2SLS}}$ &&  $\num{0.940338}$  & $\num{0.989084}$  && $\num{0.969949}$  &  $\num{1.023346}$ && $\num{0.997005}$ & $\num{1.034830}$  \\
$\tau=0.2$ && $\hat{\theta}_{\mathrm{AVG.QR}}$ &&  $\num{0.860337}$  & $\num{1.002612}$  && $\num{0.841141}$  &  $\num{1.012564}$ && $\num{0.866528}$ & $\num{1.013626}$  \\
      && $\hat{\theta}_{\mathrm{BS}}$ &&  $\num{0.776494}$  & $\num{0.824634}$  && $\num{1.008582}$  &  $\num{1.075296}$ && $\num{1.012550}$ & $\num{1.069652}$  \\
      \cmidrule{3-12}
      && $\hat{\theta}_{\mathrm{AVG.2SLS}}$ &&  $\num{0.930176}$  & $\num{1.001961}$  && $\num{0.978666}$  &  $\num{1.006226}$ && $\num{0.981683}$ & $\num{1.026054}$  \\
$\tau=0.3$ && $\hat{\theta}_{\mathrm{AVG.QR}}$ &&  $\num{0.849978}$  & $\num{0.999664}$  && $\num{0.834134}$  &  $\num{1.004380}$ && $\num{0.834125}$ & $\num{1.002484}$  \\
      && $\hat{\theta}_{\mathrm{BS}}$ &&  $\num{0.784113}$  & $\num{0.858217}$  && $\num{0.998901}$  &  $\num{1.054327}$ && $\num{0.970440}$ & $\num{1.031315}$  \\
      \cmidrule{3-12}
      && $\hat{\theta}_{\mathrm{AVG.2SLS}}$ &&  $\num{0.919255}$  & $\num{1.011570}$  && $\num{0.976497}$  &  $\num{1.020466}$ && $\num{0.985229}$ & $\num{1.017733}$  \\
$\tau=0.4$ && $\hat{\theta}_{\mathrm{AVG.QR}}$ &&  $\num{0.841303}$  & $\num{1.009771}$  && $\num{0.845490}$  &  $\num{1.003508}$ && $\num{0.880731}$ & $\num{1.002516}$  \\
      && $\hat{\theta}_{\mathrm{BS}}$ &&  $\num{0.804544}$  & $\num{0.940342}$  && $\num{0.963760}$  &  $\num{1.032182}$ && $\num{0.978236}$ & $\num{1.033021}$  \\
      \cmidrule{3-12}
      && $\hat{\theta}_{\mathrm{AVG.2SLS}}$ &&  $\num{0.906251}$  & $\num{0.991062}$  && $\num{0.973068}$  &  $\num{1.021872}$ && $\num{0.981040}$ & $\num{1.015973}$  \\
$\tau=0.5$ && $\hat{\theta}_{\mathrm{AVG.QR}}$ &&  $\num{0.857462}$  & $\num{1.000541}$  && $\num{0.875519}$  &  $\num{1.001435}$ && $\num{0.888192}$ & $\num{1.001399}$  \\
      && $\hat{\theta}_{\mathrm{BS}}$ &&  $\num{0.802951}$  & $\num{0.883027}$  && $\num{0.877296}$  &  $\num{1.003506}$ && $\num{0.886404}$ & $\num{1.018404}$  \\
      \cmidrule{3-12}
      && $\hat{\theta}_{\mathrm{AVG.2SLS}}$ &&  $\num{0.889529}$  & $\num{0.996740}$  && $\num{0.965629}$  &  $\num{1.003244}$ && $\num{0.962025}$ & $\num{1.015973}$  \\
$\tau=0.6$ && $\hat{\theta}_{\mathrm{AVG.QR}}$ &&  $\num{0.842899}$  & $\num{0.999995}$  && $\num{0.907748}$  &  $\num{1.000420}$ && $\num{0.895110}$ & $\num{1.000699}$  \\
      && $\hat{\theta}_{\mathrm{BS}}$ &&  $\num{0.790830}$  & $\num{0.894048}$  && $\num{0.845967}$  &  $\num{0.975165}$ && $\num{0.751630}$ & $\num{0.954729}$  \\
      \cmidrule{3-12}
      && $\hat{\theta}_{\mathrm{AVG.2SLS}}$ &&  $\num{0.937520}$  & $\num{0.992026}$  && $\num{0.895892}$  &  $\num{0.959120}$ && $\num{0.961816}$ & $\num{1.004971}$  \\
$\tau=0.7$ && $\hat{\theta}_{\mathrm{AVG.QR}}$ &&  $\num{0.843241}$  & $\num{0.999671}$  && $\num{0.912638}$  &  $\num{0.999054}$ && $\num{0.929948}$ & $\num{0.999062}$  \\
      && $\hat{\theta}_{\mathrm{BS}}$ &&  $\num{0.801435}$  & $\num{0.882384}$  && $\num{0.682959}$  &  $\num{0.851977}$ && $\num{0.647930}$ & $\num{0.879719}$  \\
      \cmidrule{3-12}
      && $\hat{\theta}_{\mathrm{AVG.2SLS}}$ &&  $\num{0.948435}$  & $\num{1.005422}$  && $\num{0.930249}$  &  $\num{0.954654}$ && $\num{0.955824}$ & $\num{0.989458}$  \\
$\tau=0.8$ && $\hat{\theta}_{\mathrm{AVG.QR}}$ &&  $\num{0.868048}$  & $\num{1.000797}$  && $\num{0.969205}$  &  $\num{1.000430}$ && $\num{0.976276}$ & $\num{0.999664}$  \\
      && $\hat{\theta}_{\mathrm{BS}}$ &&  $\num{0.766116}$  & $\num{0.830854}$  && $\num{0.662789}$  &  $\num{0.907786}$ && $\num{0.573559}$ & $\num{0.943787}$  \\

\bottomrule
              \end{tabular}
              \begin{tablenotes}
              \item $\num{200}$ replications. $\num{50}$ bootstraps. Sample size is 1000.\\
              \end{tablenotes}
              \end{threeparttable}
              \end{table}
% total time elapsed: 2.050716 days

\begin{figure}[htbp]
% \centering
% \hfill
\includegraphics[width=0.45\textwidth, height=0.3\textheight, trim=35 20 20 70]{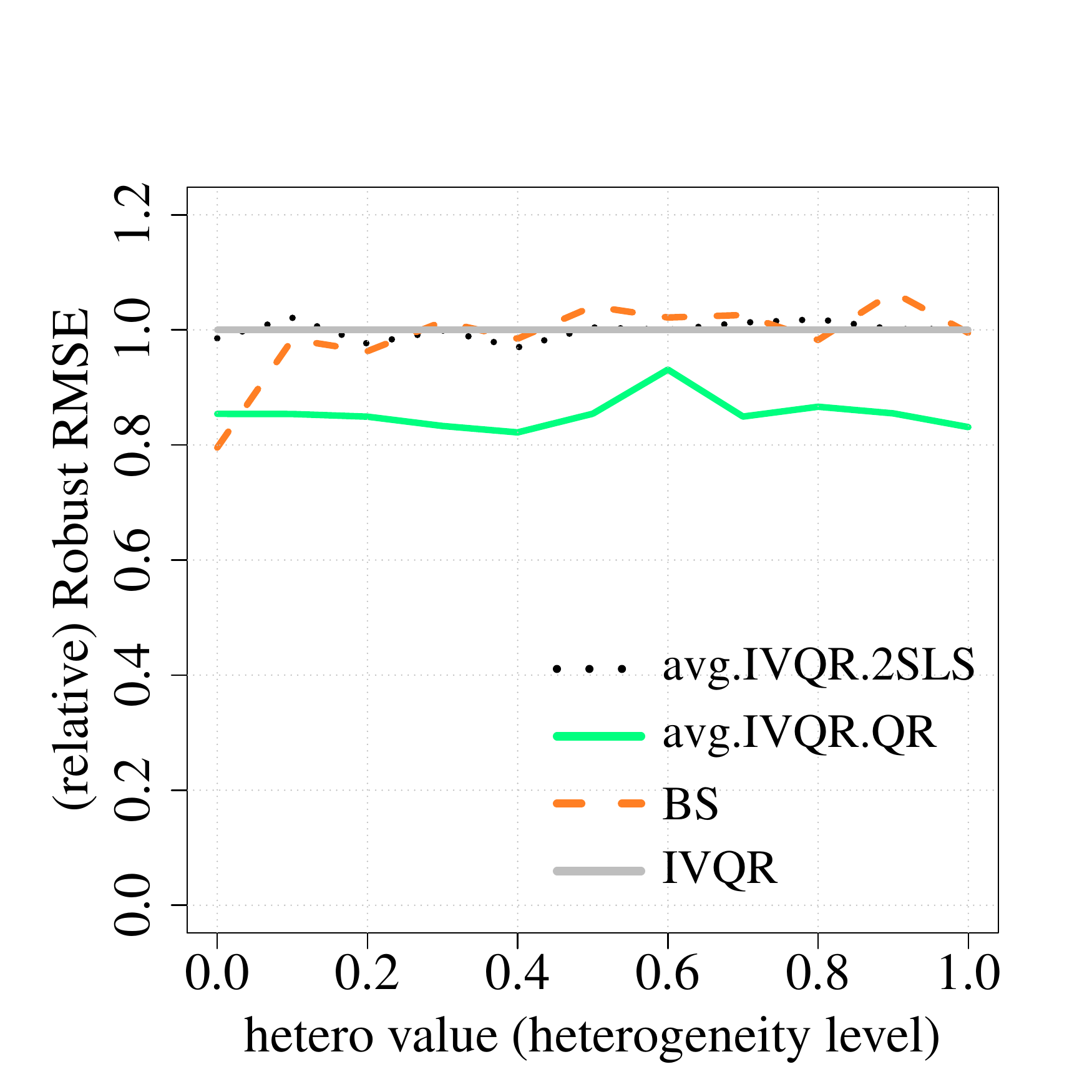}
\hfill
\includegraphics[width=0.45\textwidth, height=0.3\textheight, trim=35 20 20 70]{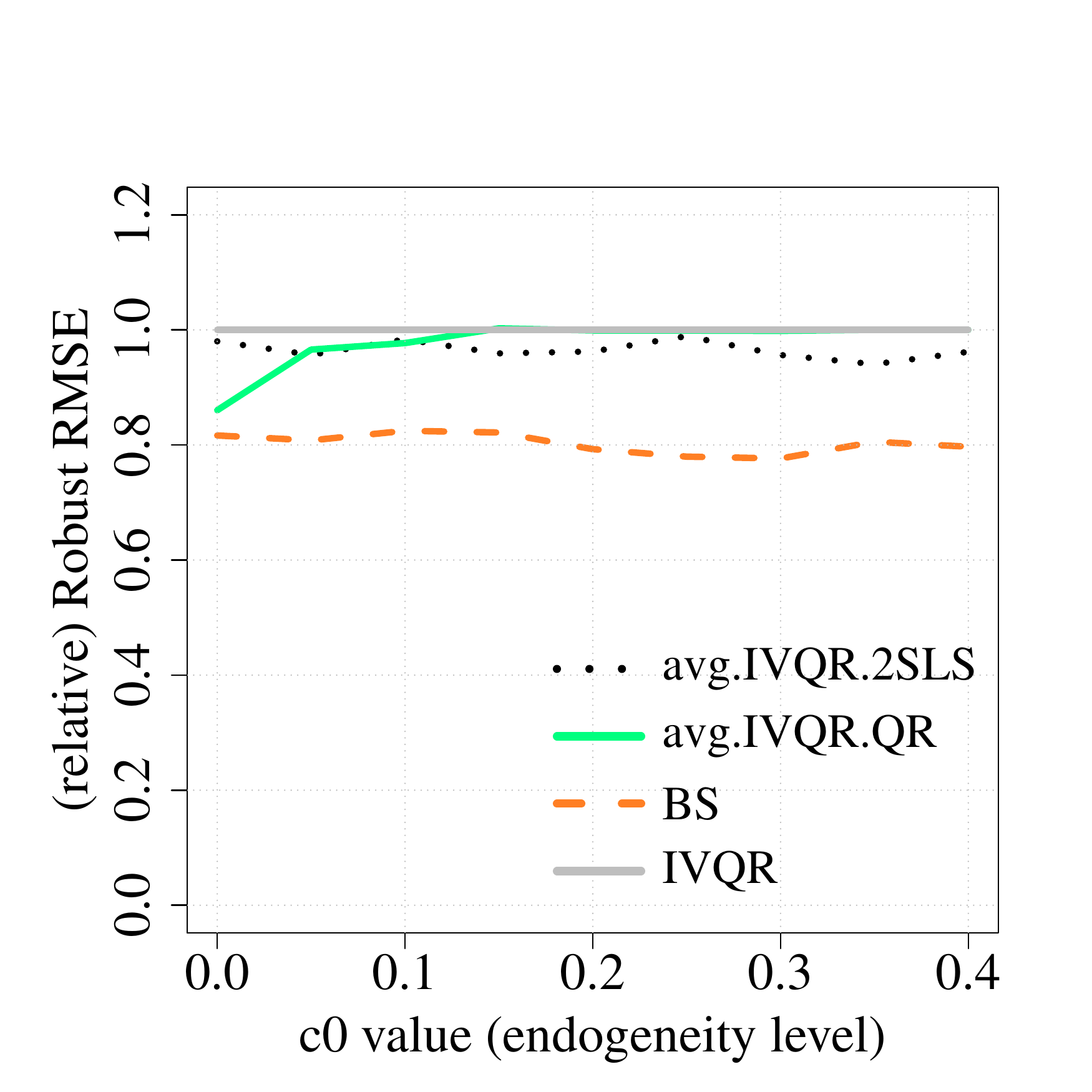}
\hfill\null
% \centering
% \hfill
\includegraphics[width=0.45\textwidth, height=0.3\textheight, trim=35 20 20 70]{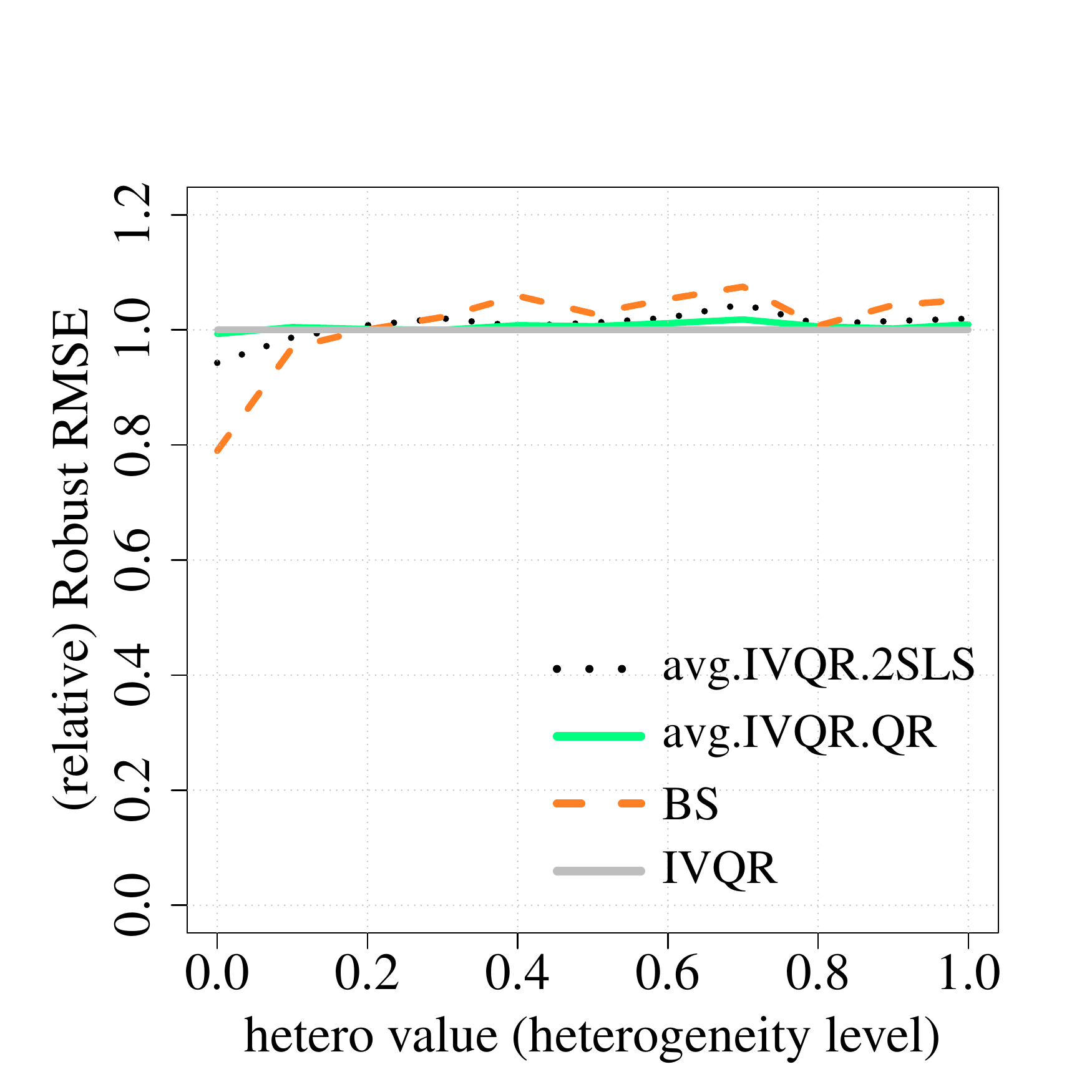}
\hfill
\includegraphics[width=0.45\textwidth, height=0.3\textheight, trim=35 20 20 70 ]{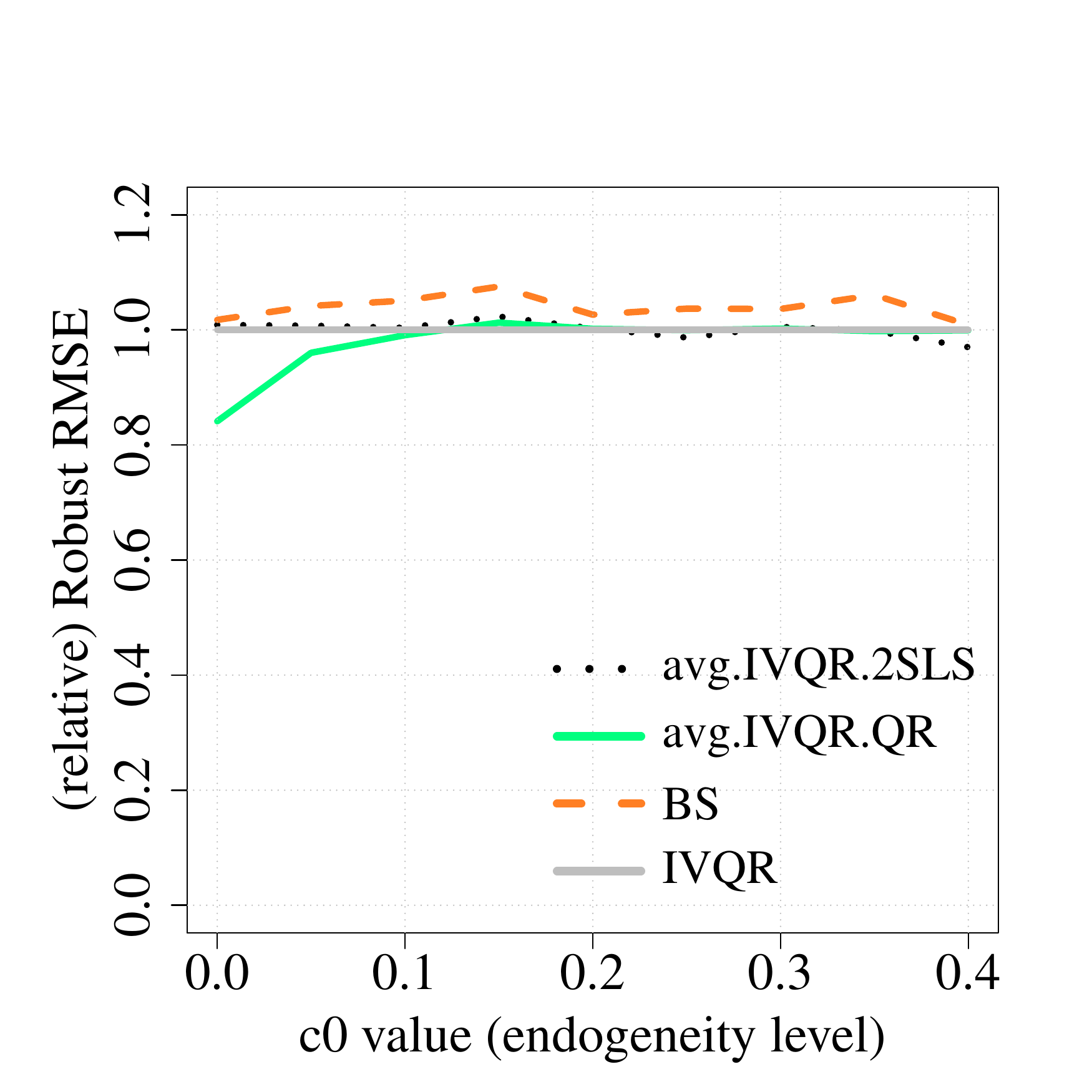}
\hfill\null
\includegraphics[width=0.45\textwidth, height=0.3\textheight, trim=35 20 20 70 ]{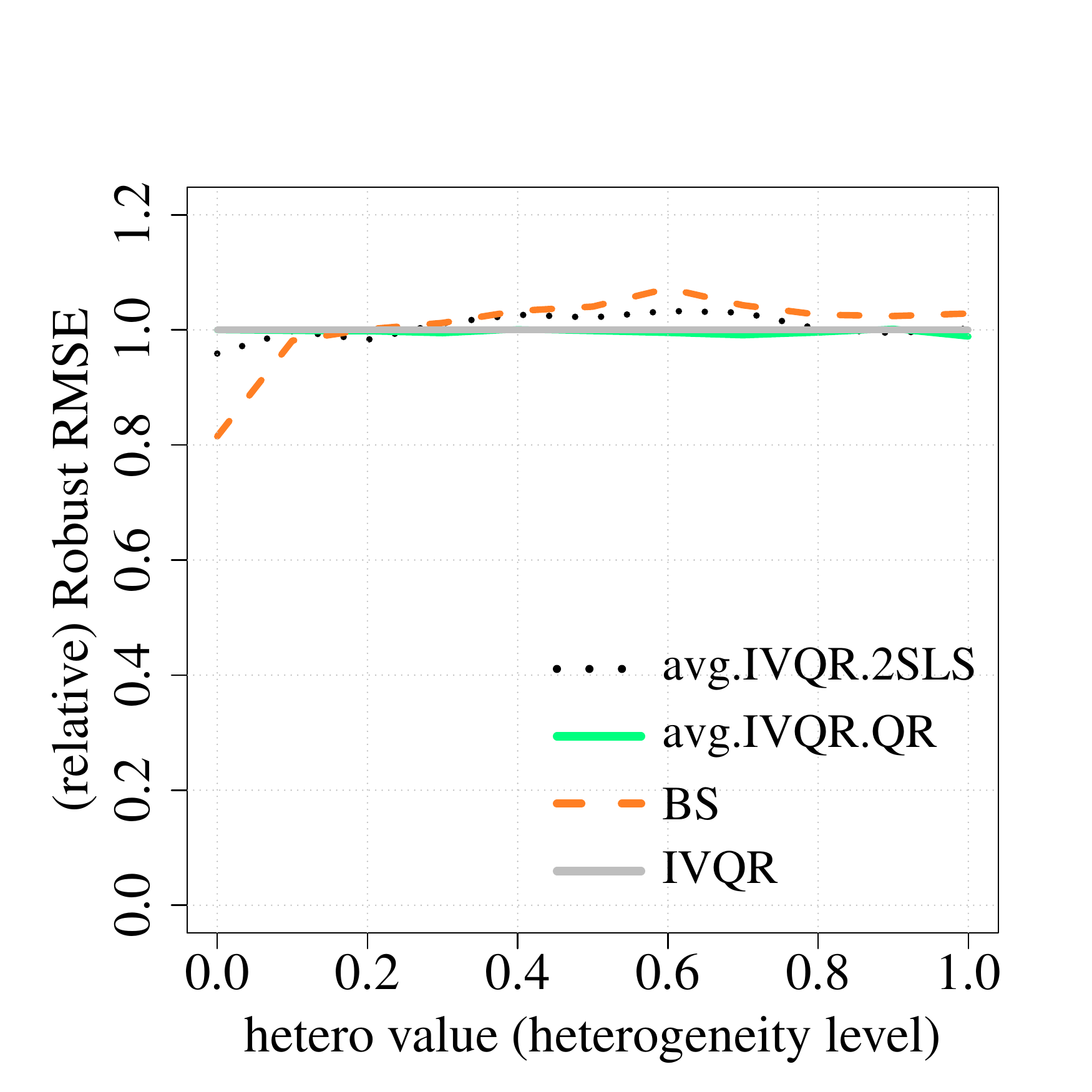}
% \hfill\null
\hfill
\includegraphics[width=0.45\textwidth, height=0.3\textheight, trim=35 20 20 70]{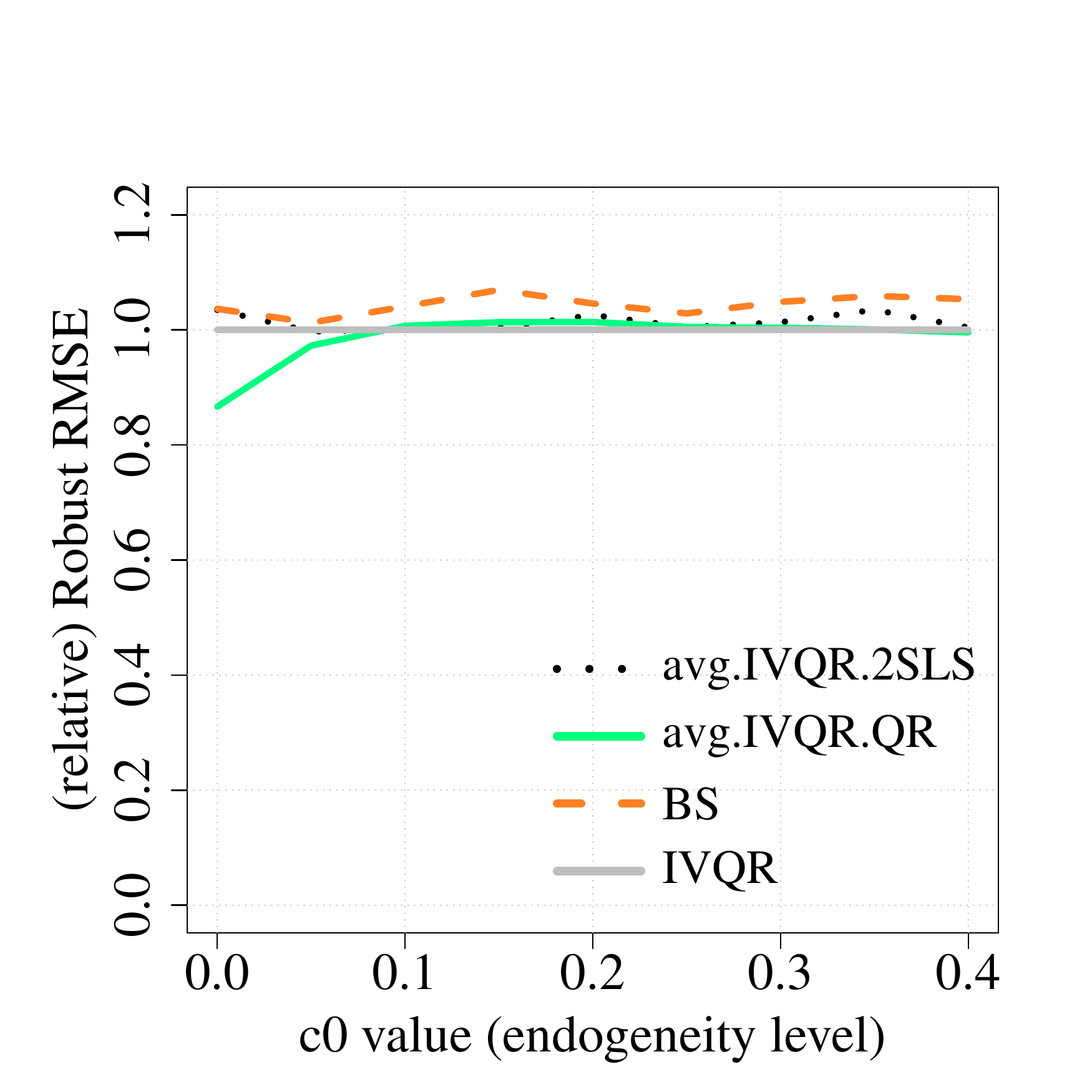}
% \hfill
\caption{\label{fig:M3:S1S22:tau0.2}%
Relative rRMSE in simulation model 3 at $\tau=0.2$ quantile level in 6 cases: fixed endogeneity level $c_0=0$  (left up), $c_0=0.2$ (left middle), $c_0=0.4$ (left bottom) and varying heterogeneity; and fixed heterogeneity level $hetero=0$ (right up), $hetero=0.5$ (right middle), $hetero=1$ (right bottom) and varying endogeneity,
% , at endogeneity level $c_0=0$ (left up), $c_0=0.2$  (left middle), $c_0=0.4$ (left bottom), and at heterogeneity level $hetero=0$ (right up), $hetero=0.5$ (right middle),$hetero=1$ (right bottom), 
based on $\num{200}$ replications and $\num{50}$ bootstraps. Sample size n=1000.} 
\end{figure}

\begin{figure}[htbp]
% \centering
% \hfill
\includegraphics[width=0.45\textwidth, height=0.3\textheight, trim=35 20 20 70]{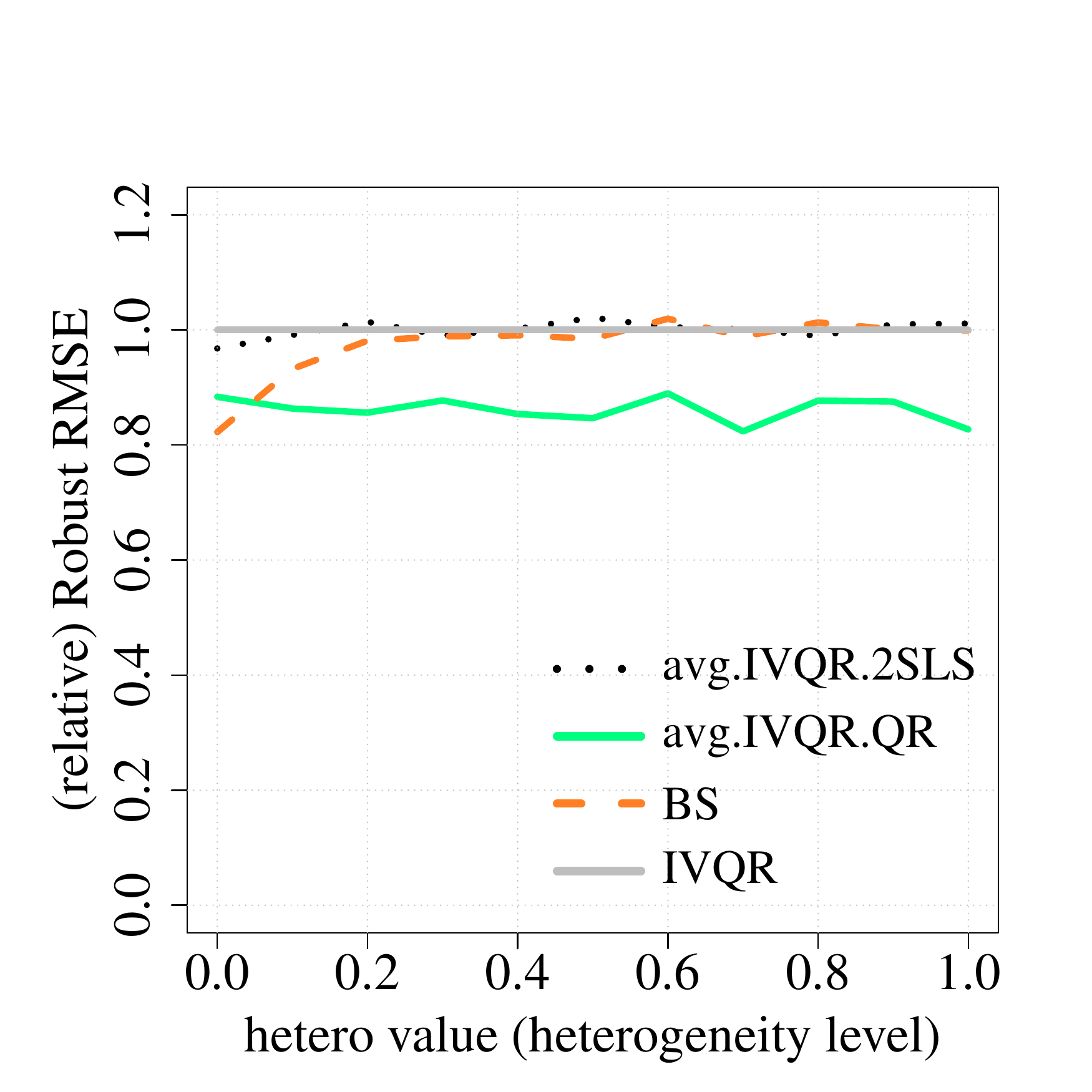}
\hfill
\includegraphics[width=0.45\textwidth, height=0.3\textheight, trim=35 20 20 70]{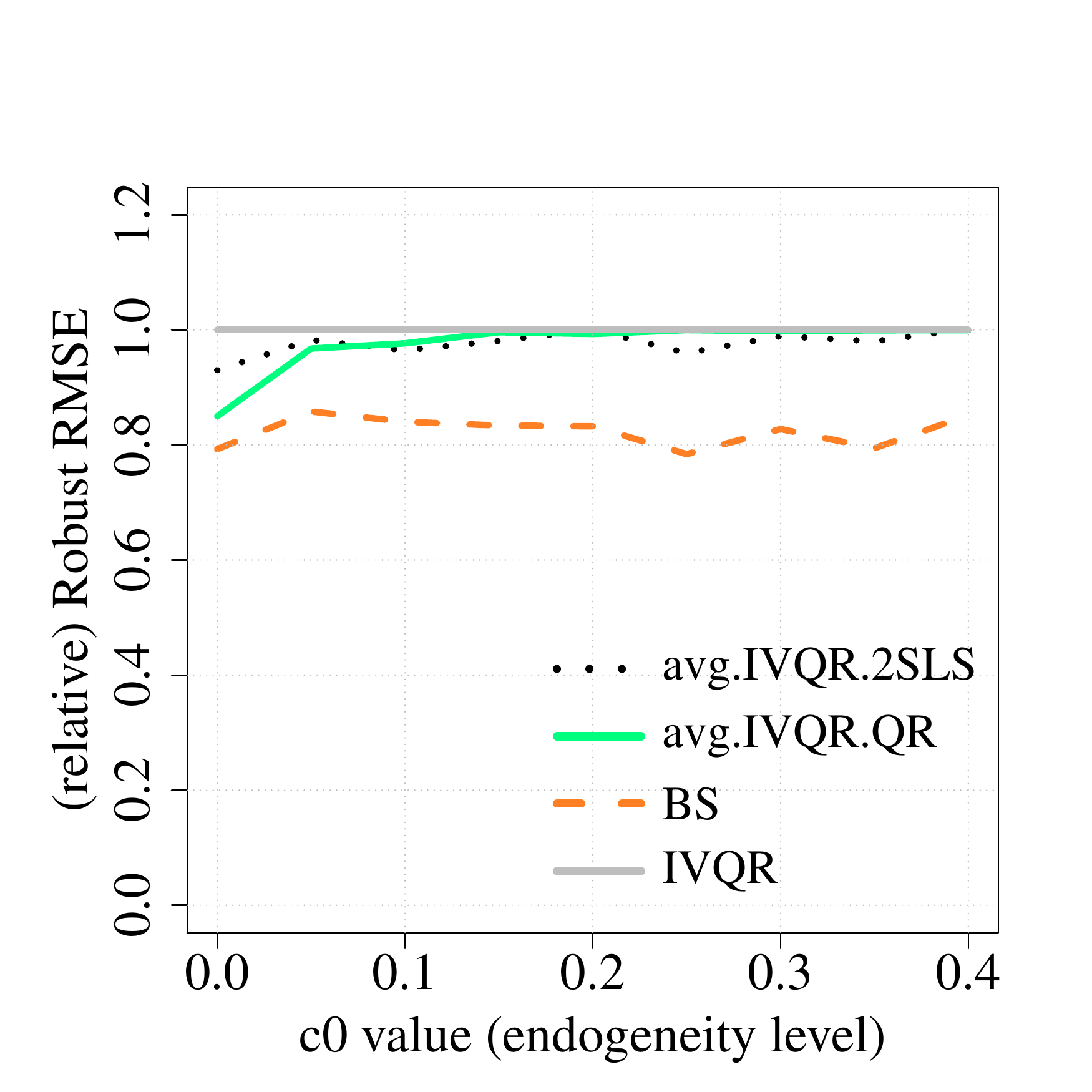}
\hfill\null
% \centering
% \hfill
\includegraphics[width=0.45\textwidth, height=0.3\textheight, trim=35 20 20 70]{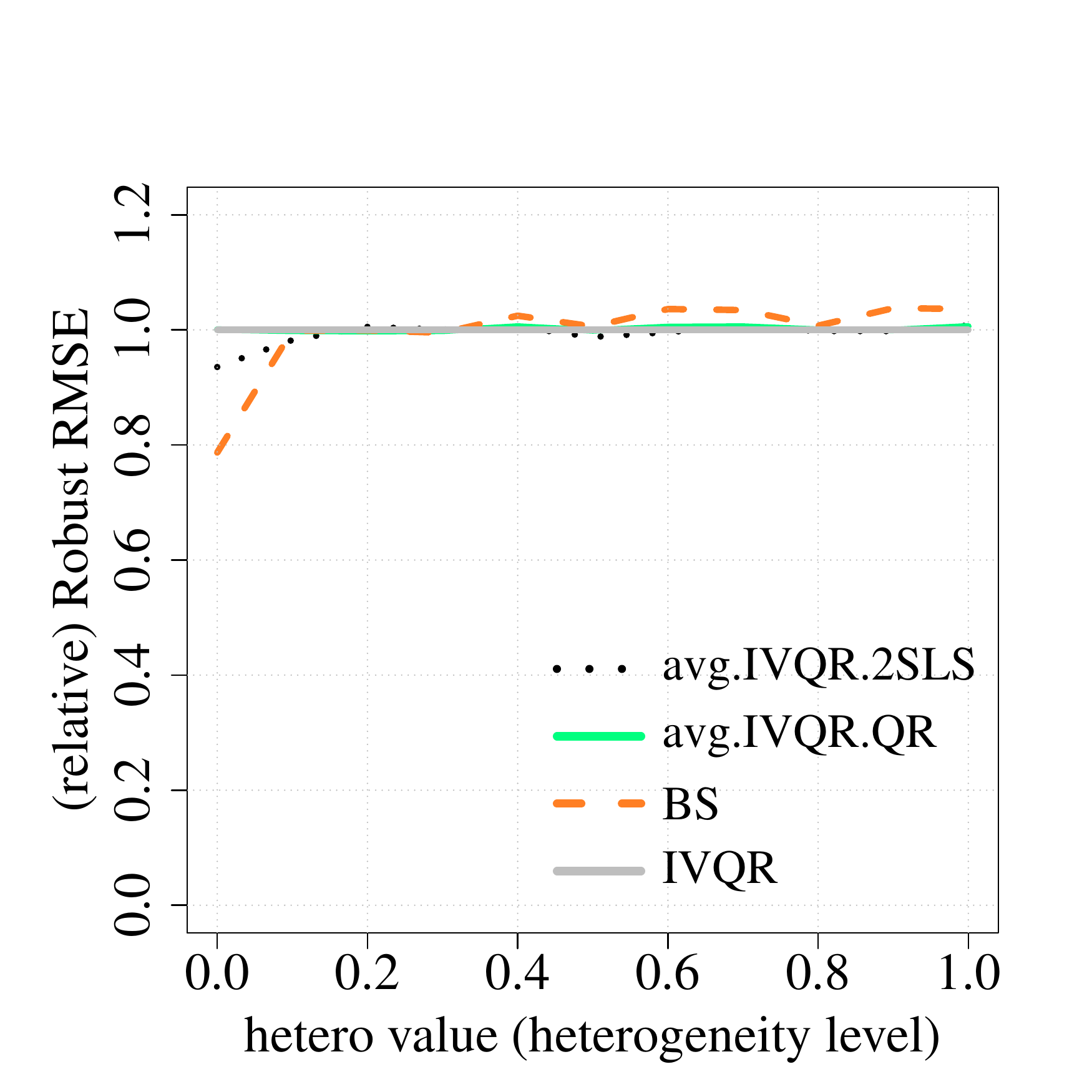}
\hfill
\includegraphics[width=0.45\textwidth, height=0.3\textheight, trim=35 20 20 70 ]{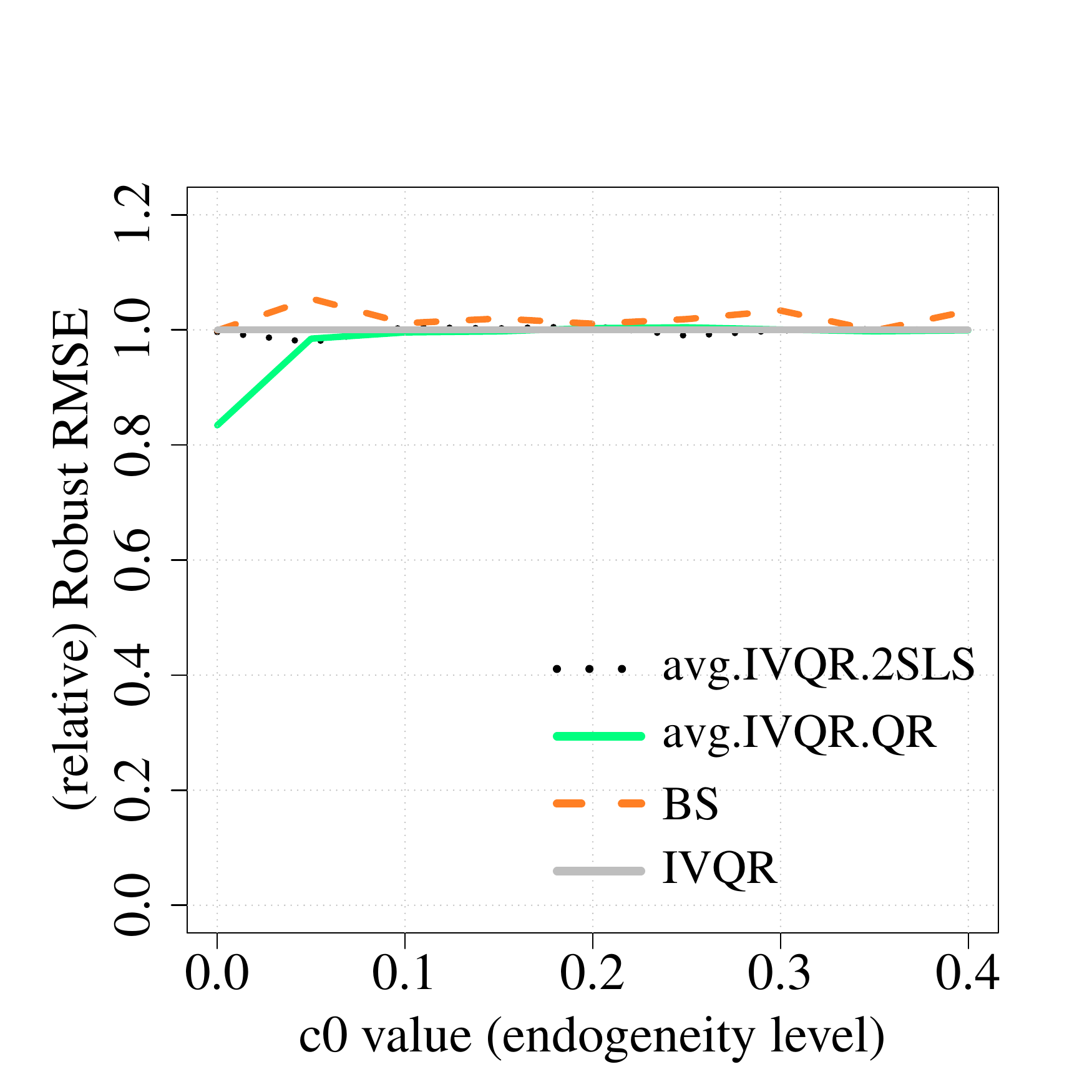}
\hfill\null
\includegraphics[width=0.45\textwidth, height=0.3\textheight, trim=35 20 20 70 ]{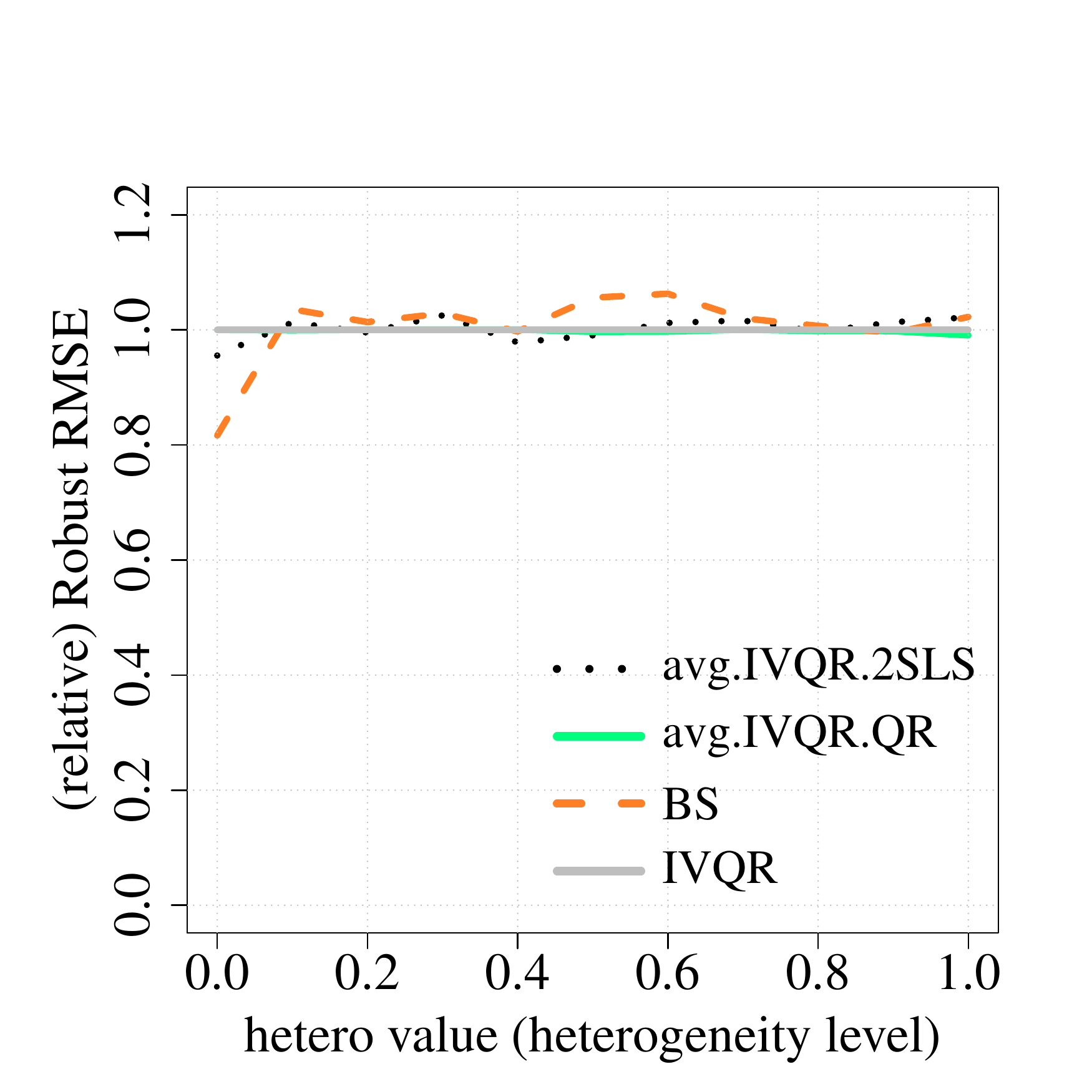}
% \hfill\null
\hfill
\includegraphics[width=0.45\textwidth, height=0.3\textheight, trim=35 20 20 70]{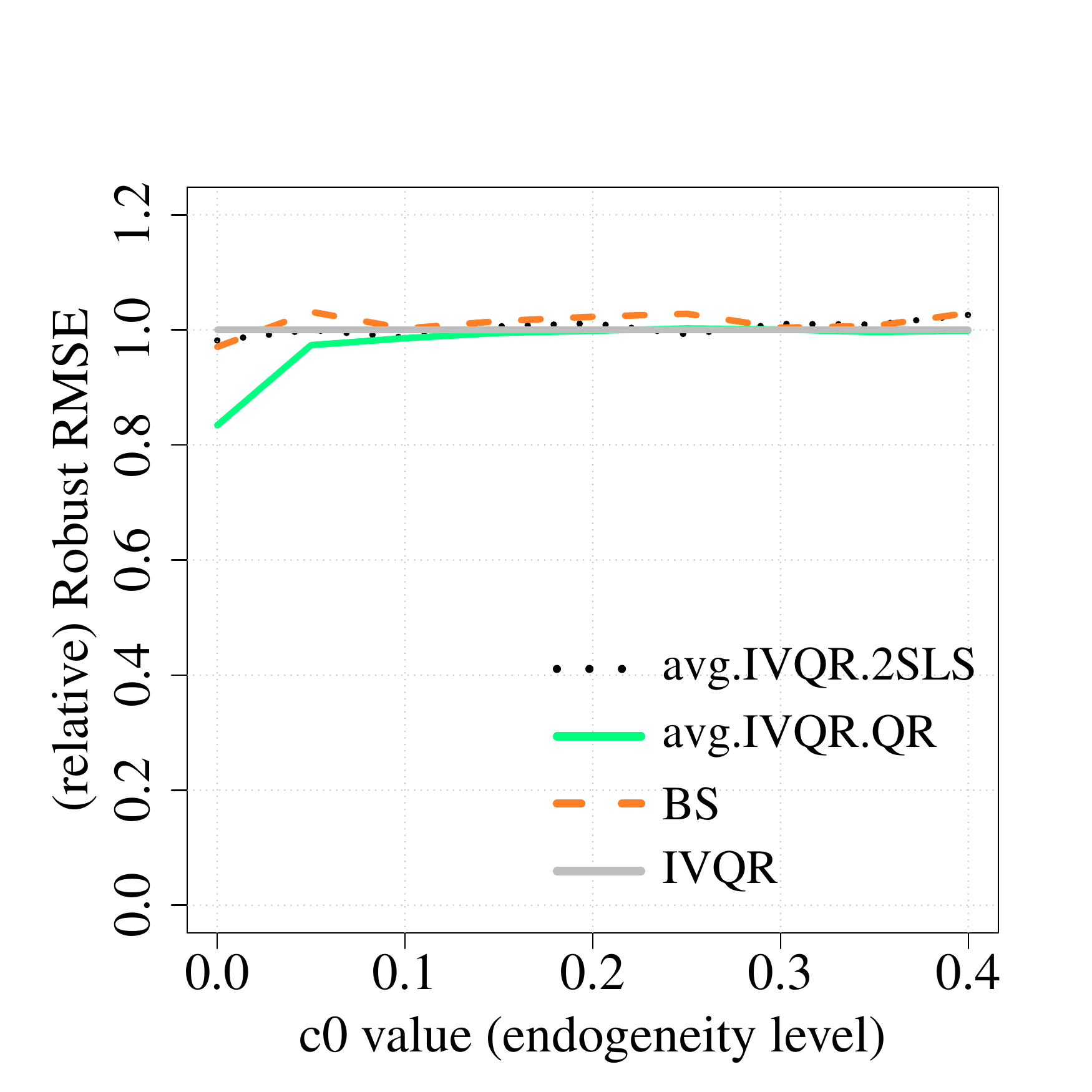}
% \hfill
\caption{\label{fig:M3:S1S22:tau0.3}%
Relative rRMSE in simulation model 3 at $\tau=0.3$ quantile level in 6 cases: fixed endogeneity level $c_0=0$  (left up), $c_0=0.2$ (left middle), $c_0=0.4$ (left bottom) and varying heterogeneity; and fixed heterogeneity level $hetero=0$ (right up), $hetero=0.5$ (right middle), $hetero=1$ (right bottom) and varying endogeneity,
based on $\num{200}$ replications and $\num{50}$ bootstraps. Sample size n=1000.} 
\end{figure}

\begin{figure}[htbp]
% \centering
% \hfill
\includegraphics[width=0.45\textwidth, height=0.3\textheight, trim=35 20 20 70]{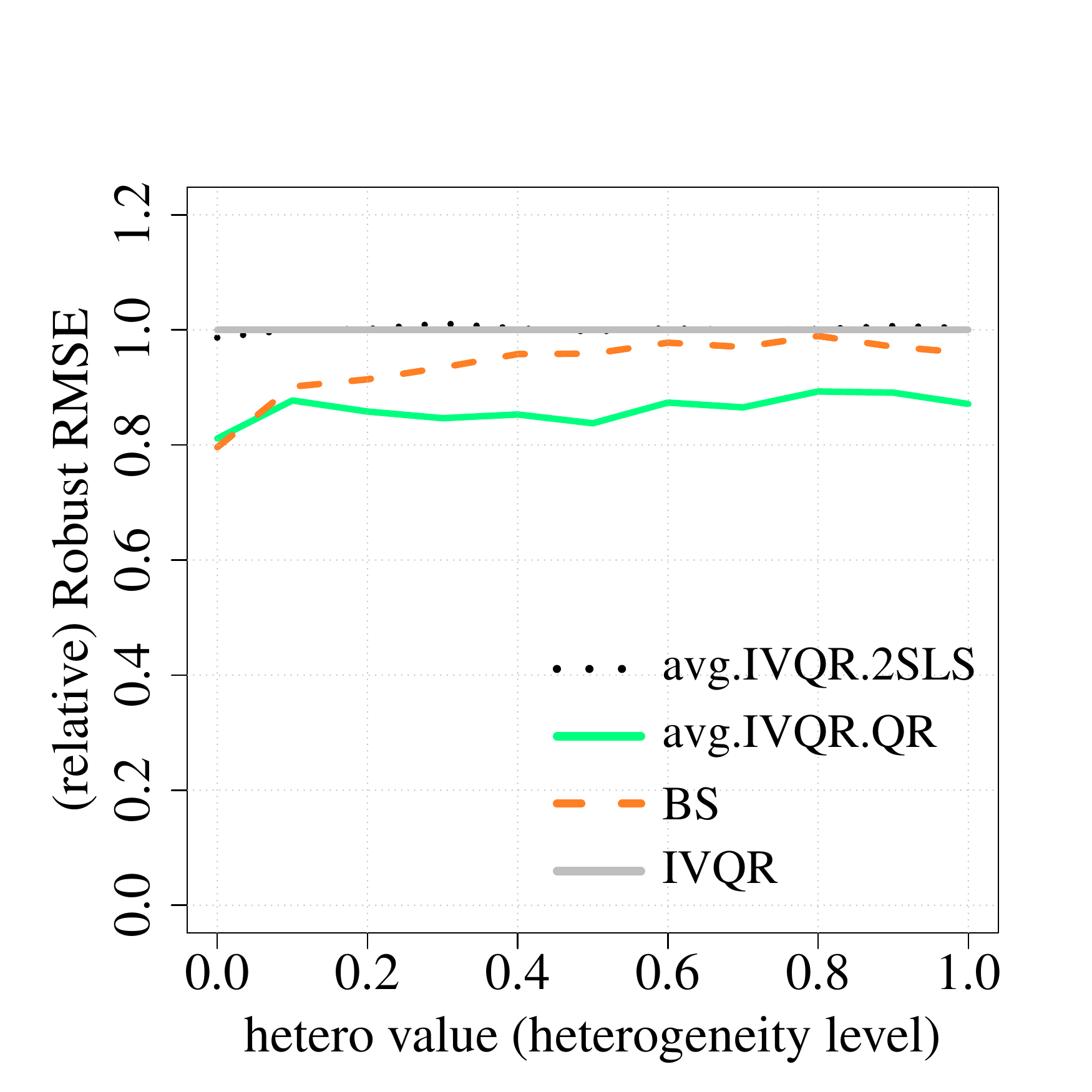}
\hfill
\includegraphics[width=0.45\textwidth, height=0.3\textheight, trim=35 20 20 70]{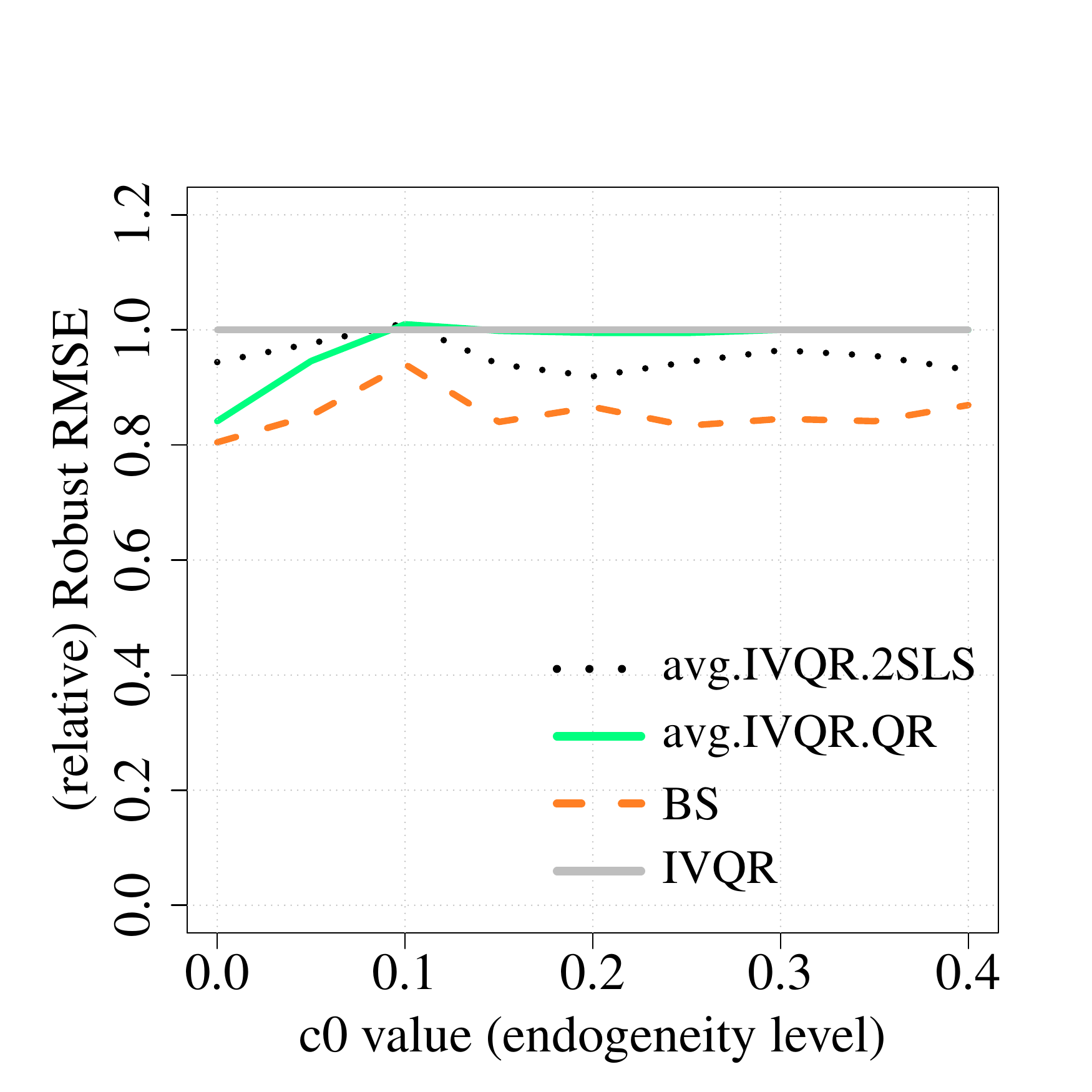}
\hfill\null
% \hfill
\includegraphics[width=0.45\textwidth, height=0.3\textheight, trim=35 20 20 70]{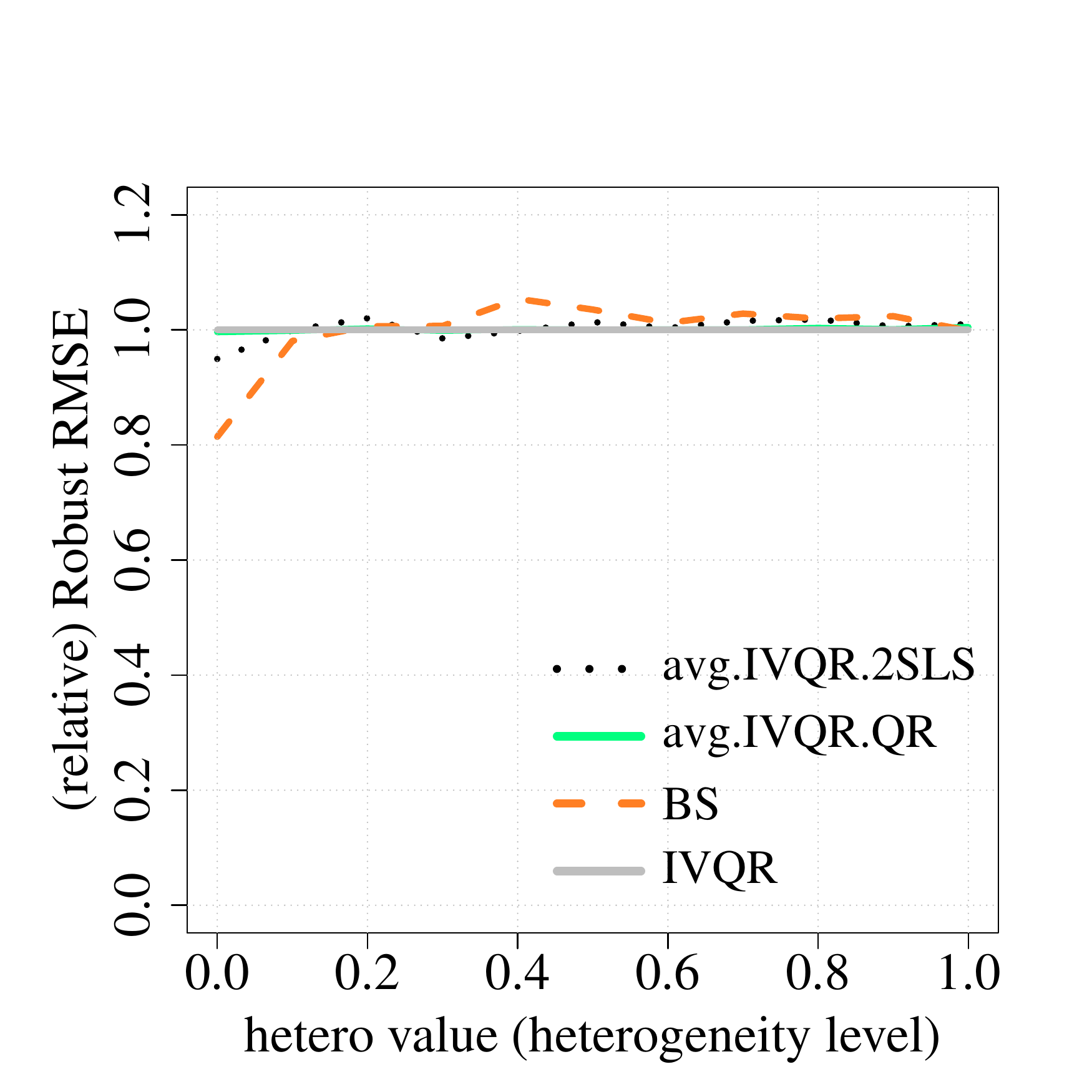}
\hfill
\includegraphics[width=0.45\textwidth, height=0.3\textheight, trim=35 20 20 70 ]{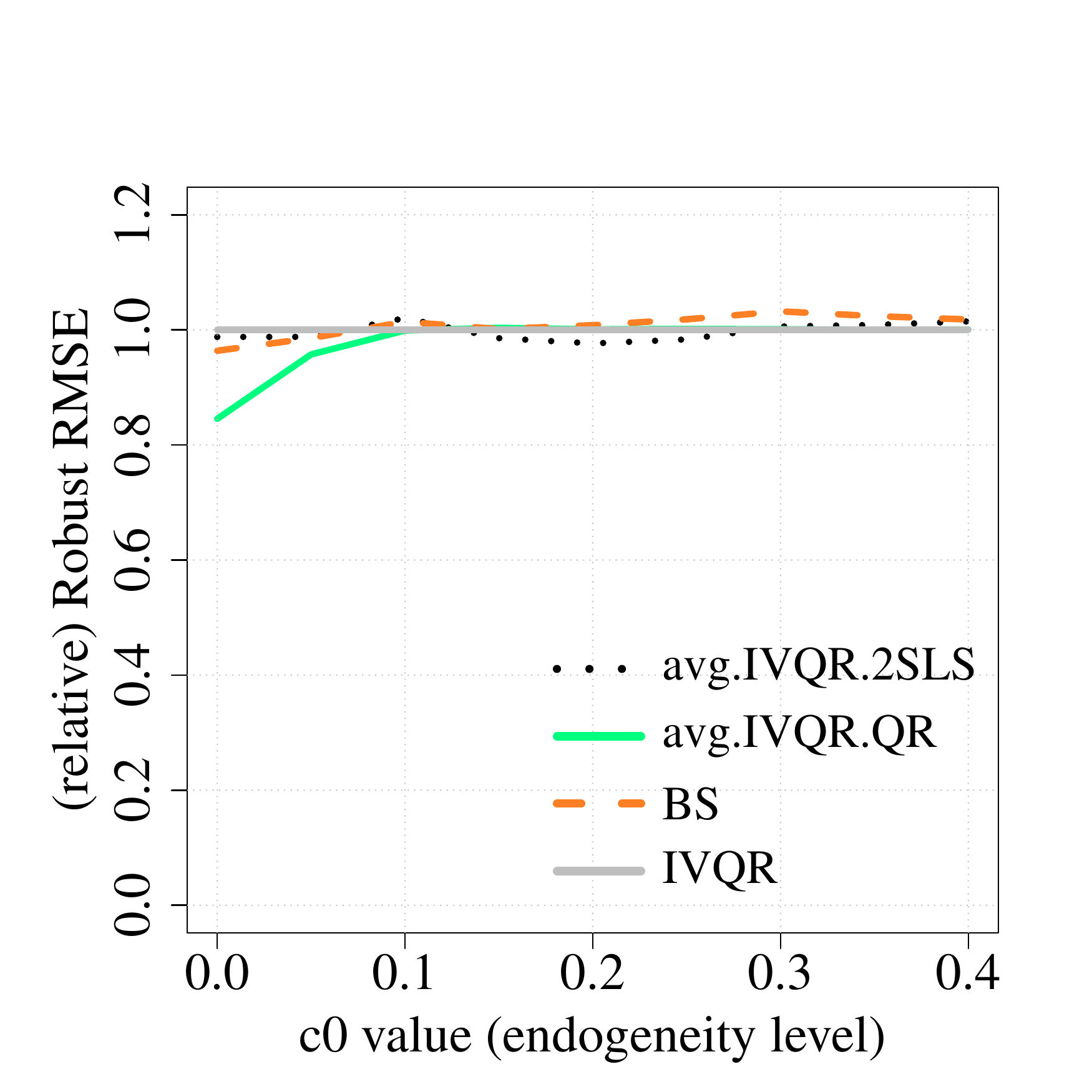}
\hfill\null
\includegraphics[width=0.45\textwidth, height=0.3\textheight, trim=35 20 20 70 ]{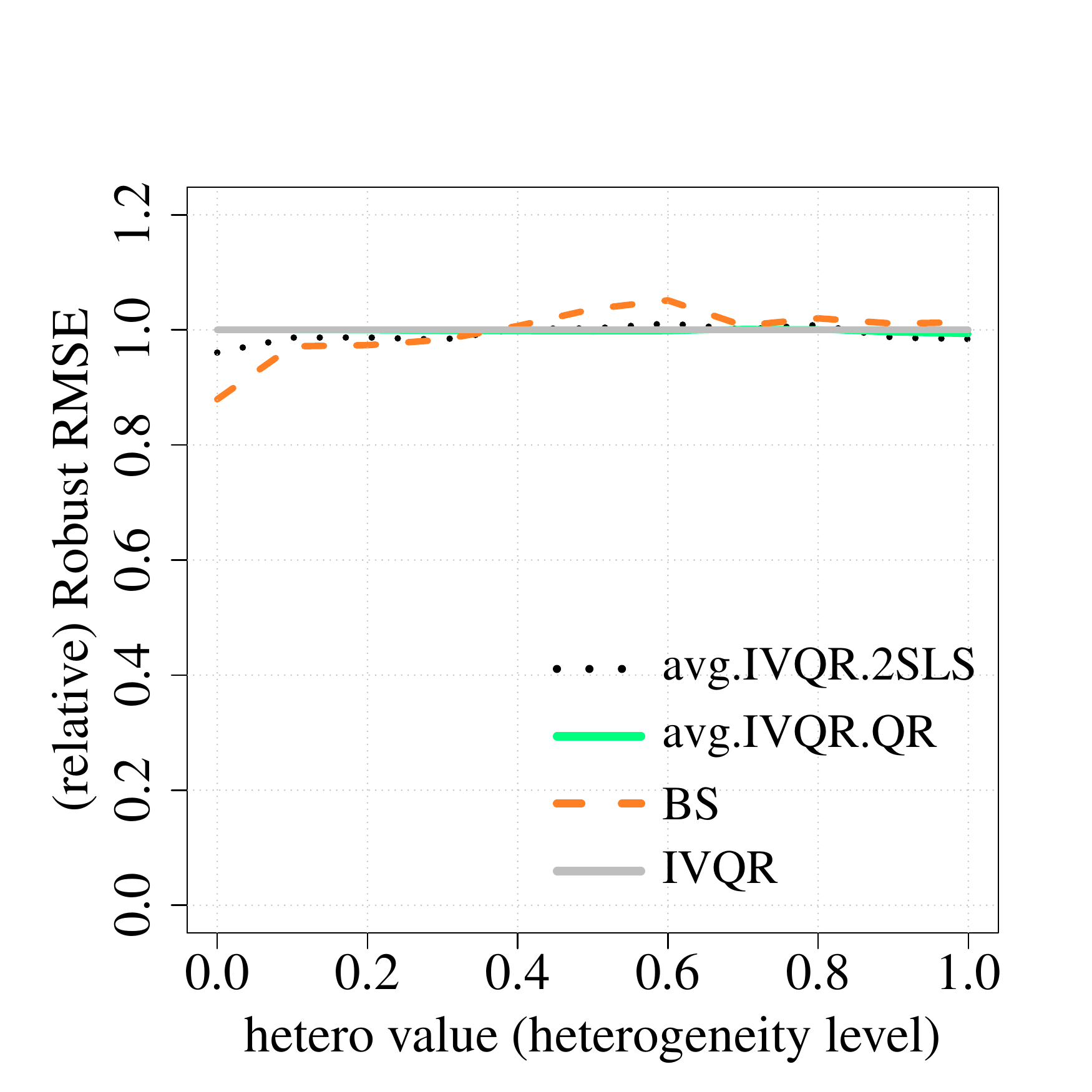}
% \hfill\null
\hfill
\includegraphics[width=0.45\textwidth, height=0.3\textheight, trim=35 20 20 70]{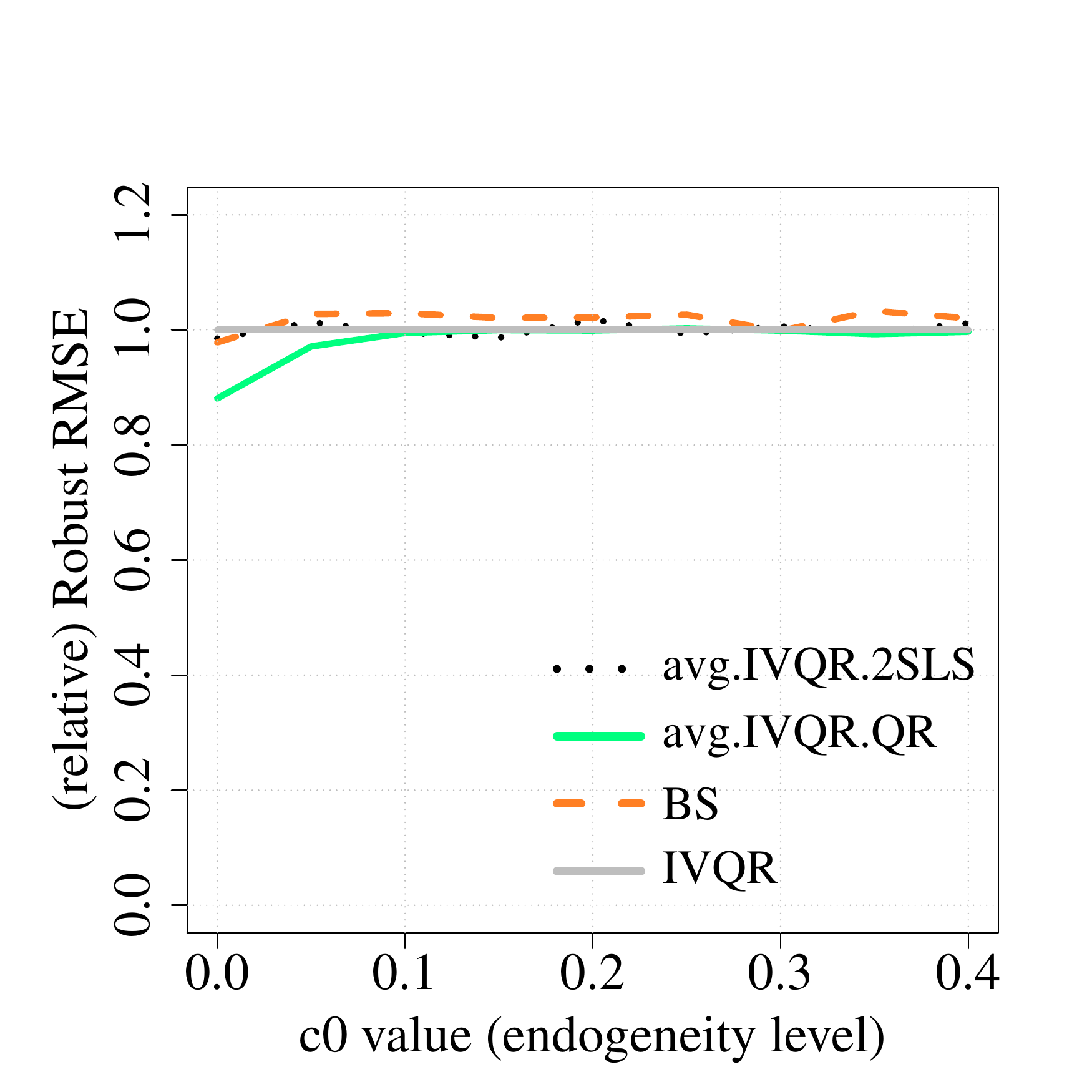}
% \hfill
\caption{\label{fig:M3:S1S22:tau0.4}%
Relative rRMSE in simulation model 3 at $\tau=0.4$ quantile level in 6 cases: fixed endogeneity level $c_0=0$  (left up), $c_0=0.2$ (left middle), $c_0=0.4$ (left bottom) and varying heterogeneity; and fixed heterogeneity level $hetero=0$ (right up), $hetero=0.5$ (right middle), $hetero=1$ (right bottom) and varying endogeneity,
% , at endogeneity level $c_0=0$ (left up), $c_0=0.2$  (left middle), $c_0=0.4$ (up bottom), and at heterogeneity level $hetero=0$ (right up), $hetero=0.5$ (right middle),$hetero=1$ (right bottom), 
based on $\num{200}$ replications and $\num{50}$ bootstraps. Sample size n=1000.} 
\end{figure}

\begin{figure}[htbp]
% \centering
% \hfill
\includegraphics[width=0.45\textwidth, height=0.3\textheight, trim=35 20 20 70]{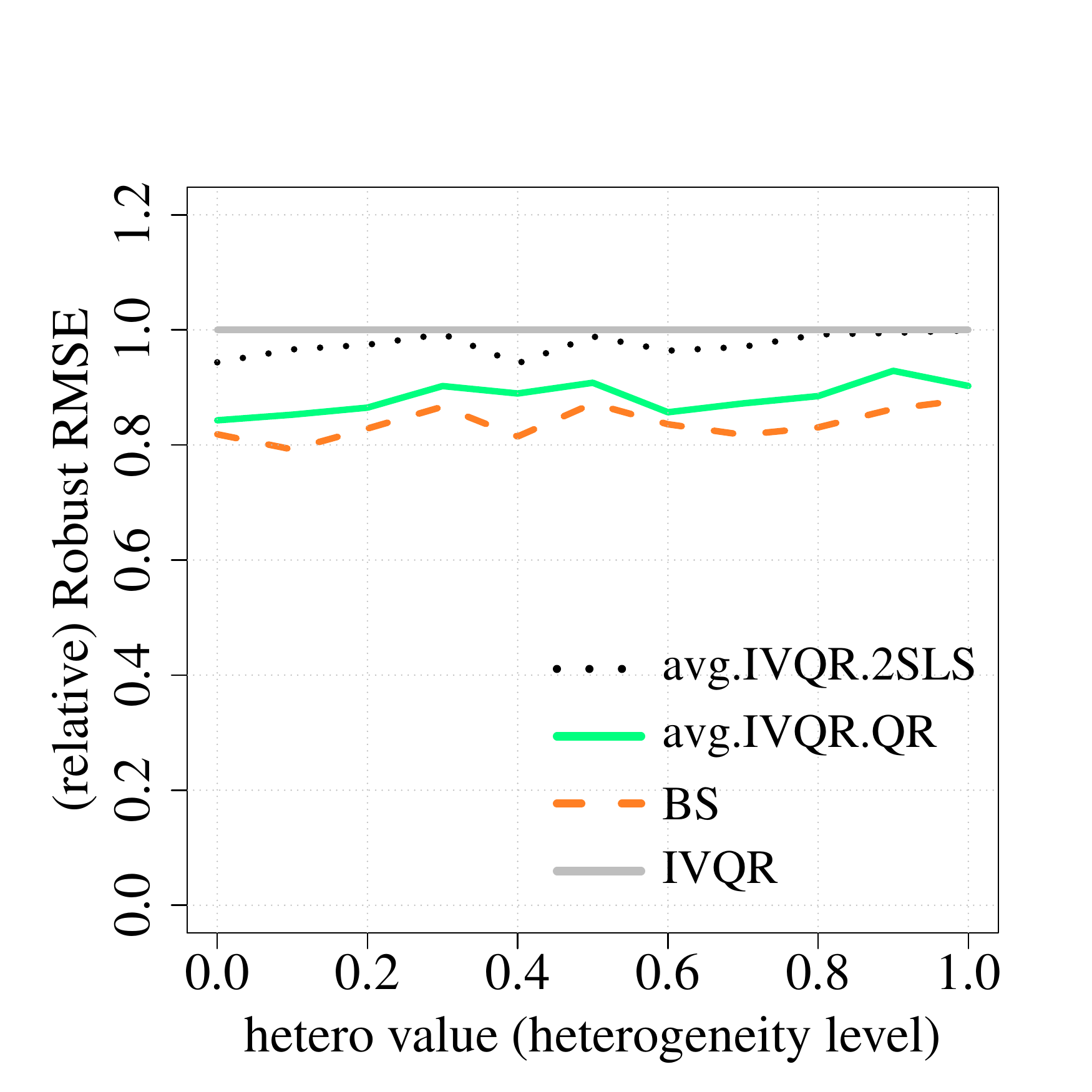}
\hfill
\includegraphics[width=0.45\textwidth, height=0.3\textheight, trim=35 20 20 70]{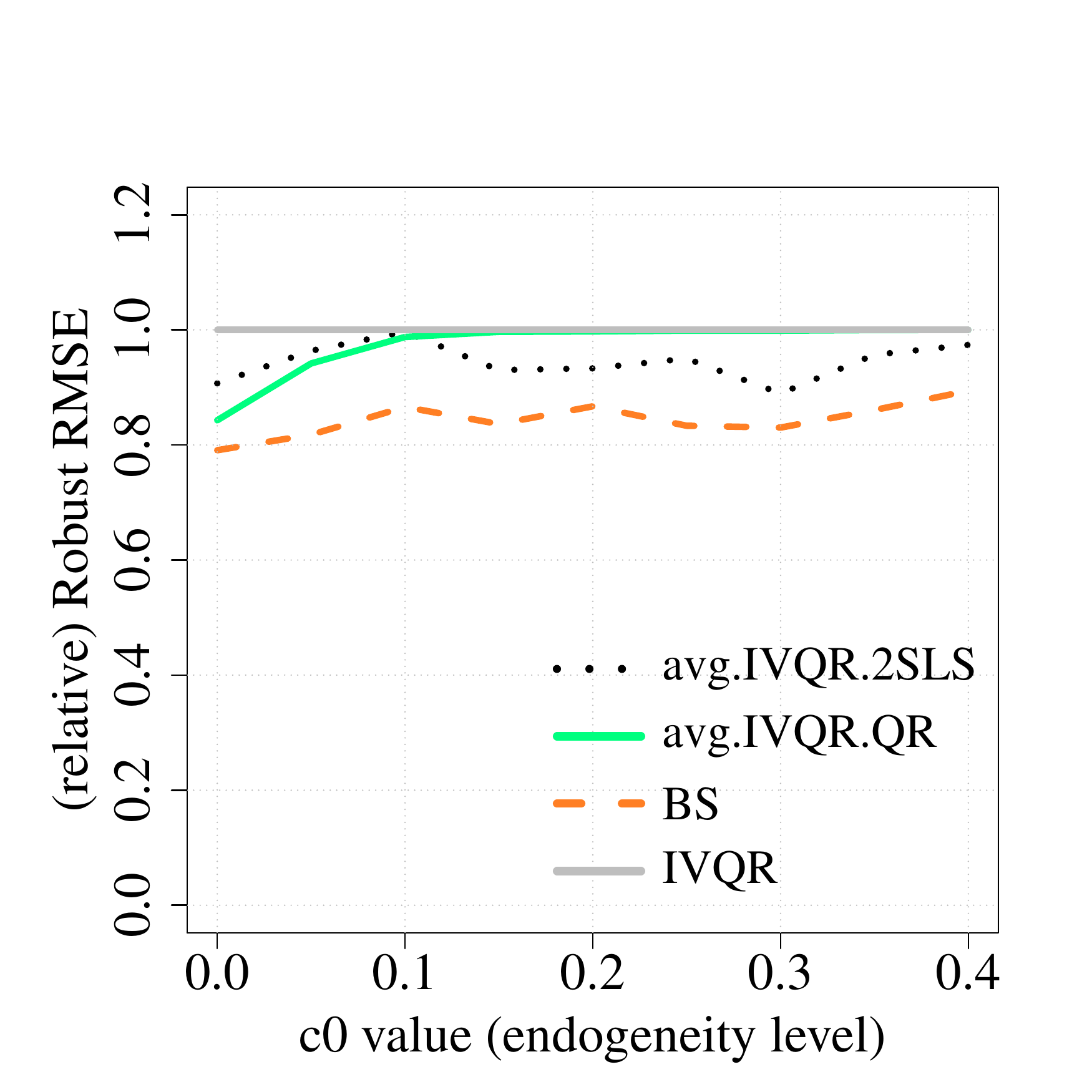}
\hfill\null
% \centering
% \hfill
\includegraphics[width=0.45\textwidth, height=0.3\textheight, trim=35 20 20 70]{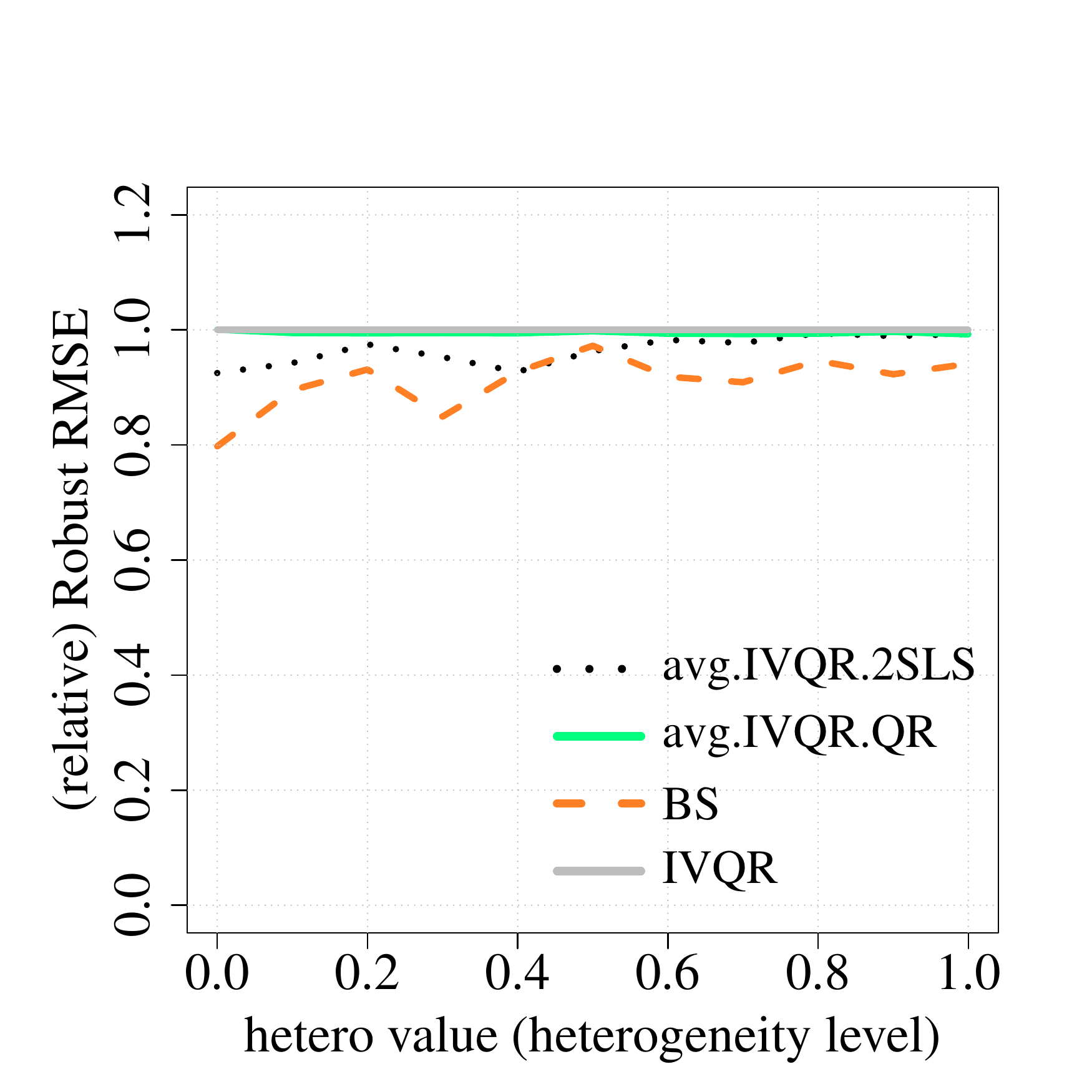}
\hfill
\includegraphics[width=0.45\textwidth, height=0.3\textheight, trim=35 20 20 70 ]{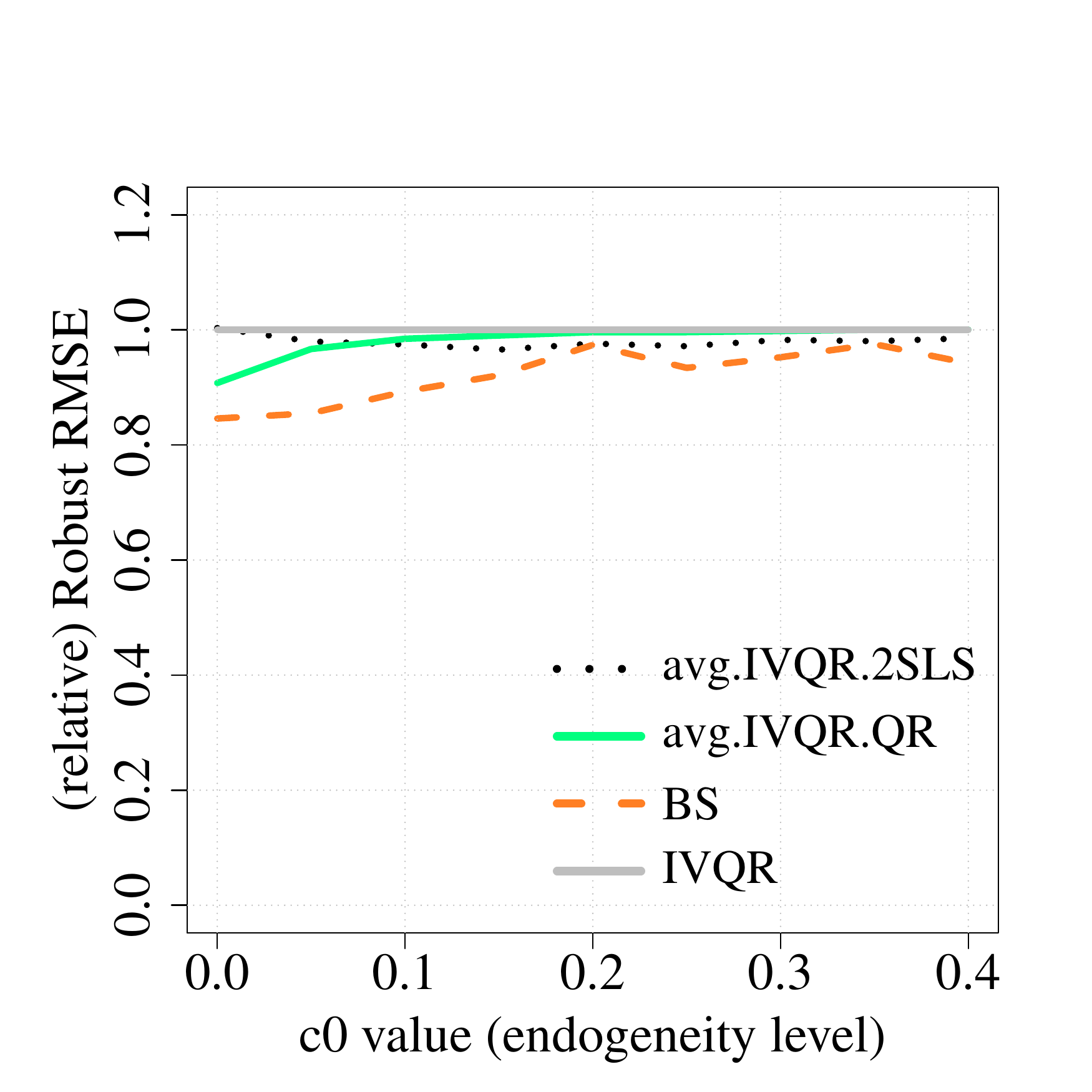}
\hfill\null
\includegraphics[width=0.45\textwidth, height=0.3\textheight, trim=35 20 20 70 ]{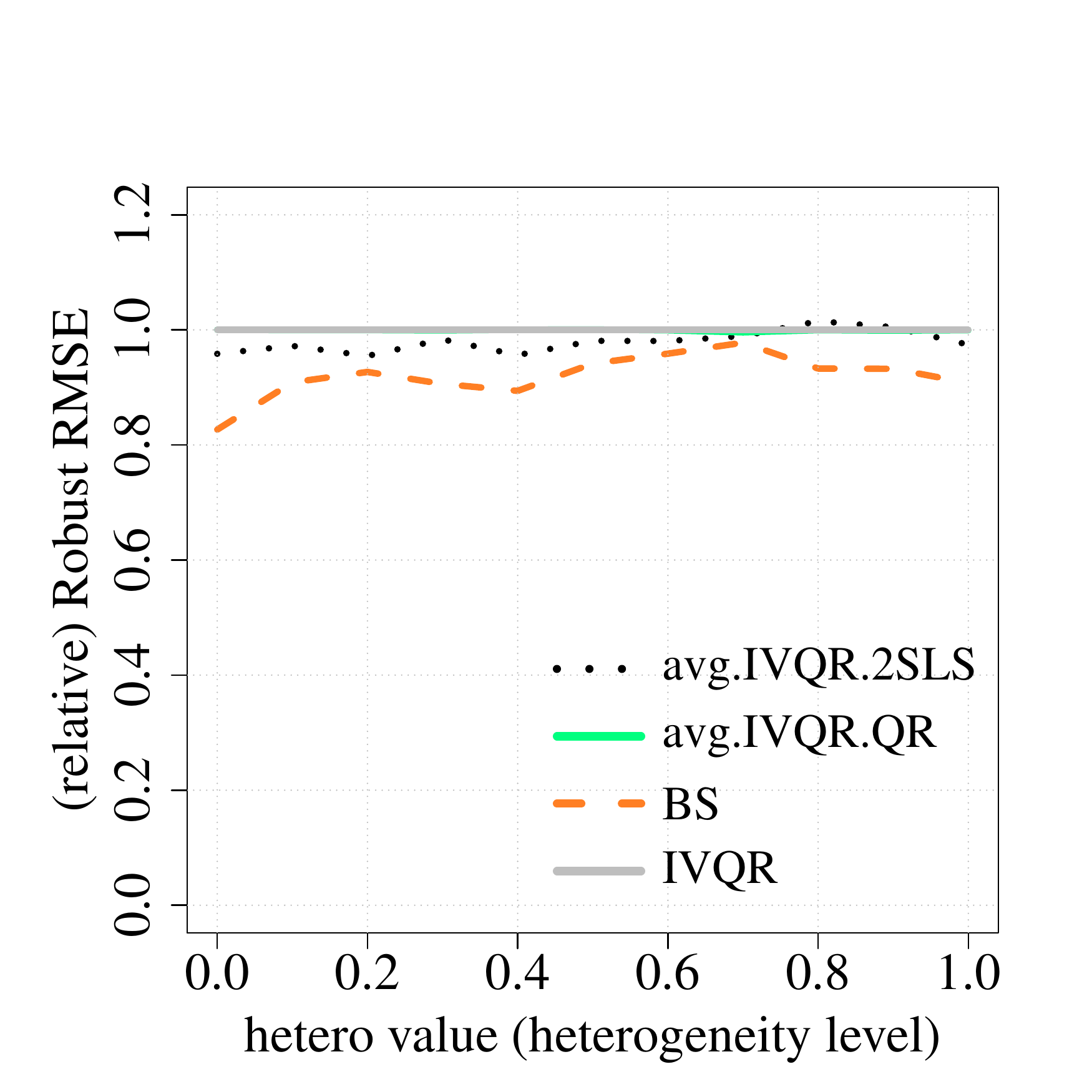}
% \hfill\null
\hfill
\includegraphics[width=0.45\textwidth, height=0.3\textheight, trim=35 20 20 70]{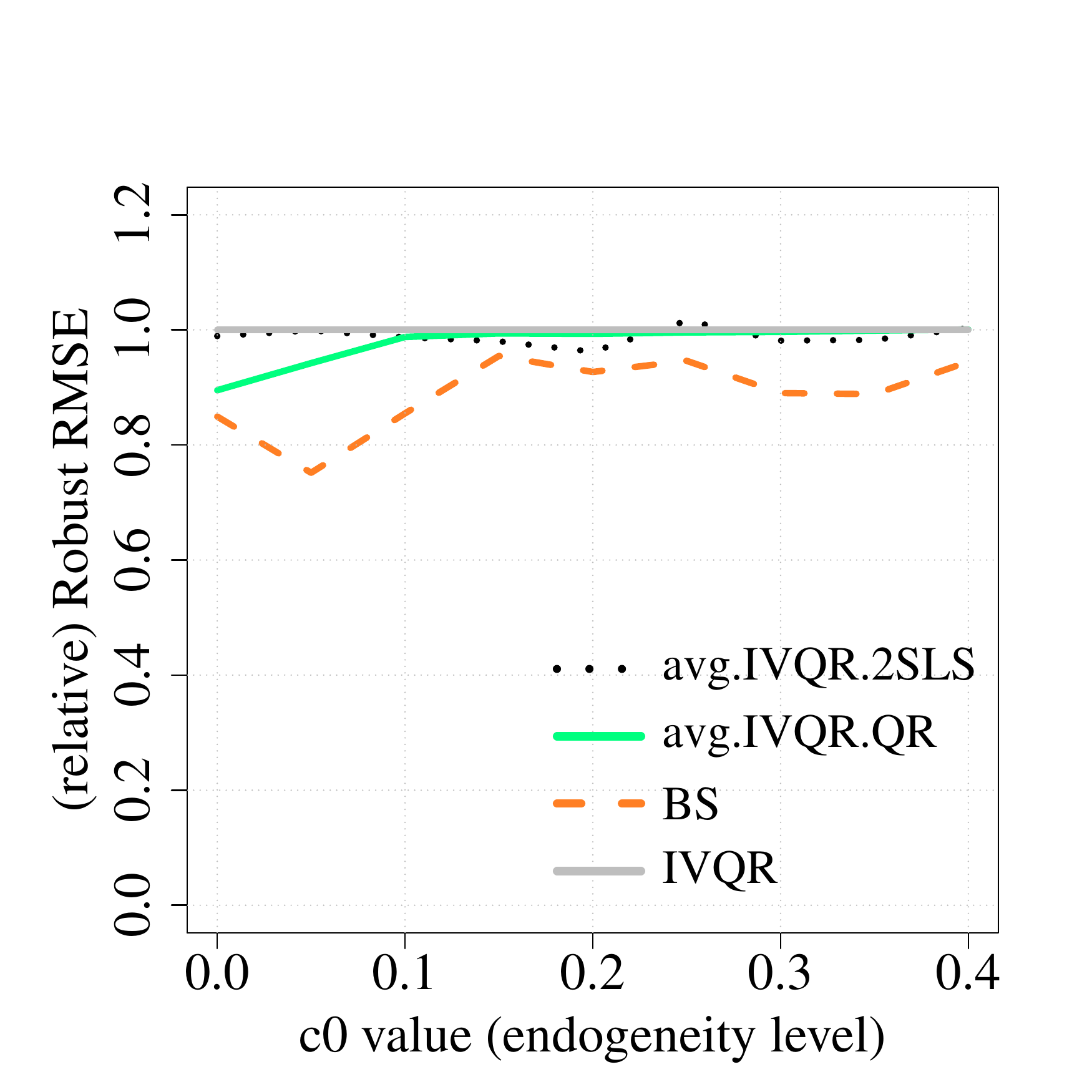}
% \hfill
\caption{\label{fig:M3:S1S22:tau0.6}%
Relative rRMSE in simulation model 3 at $\tau=0.6$ quantile level in 6 cases: fixed endogeneity level $c_0=0$  (left up), $c_0=0.2$ (left middle), $c_0=0.4$ (left bottom) and varying heterogeneity; and fixed heterogeneity level $hetero=0$ (right up), $hetero=0.5$ (right middle), $hetero=1$ (right bottom) and varying endogeneity,
% , at endogeneity level $c_0=0$ (left up), $c_0=0.2$  (left middle), $c_0=0.4$ (left bottom), and at heterogeneity level $hetero=0$ (right up), $hetero=0.5$ (right middle),$hetero=1$ (right bottom), 
based on $\num{200}$ replications and $\num{50}$ bootstraps. Sample size n=1000.} 
\end{figure}

\begin{figure}[htbp]
% \centering
% \hfill
\includegraphics[width=0.45\textwidth, height=0.3\textheight, trim=35 20 20 70]{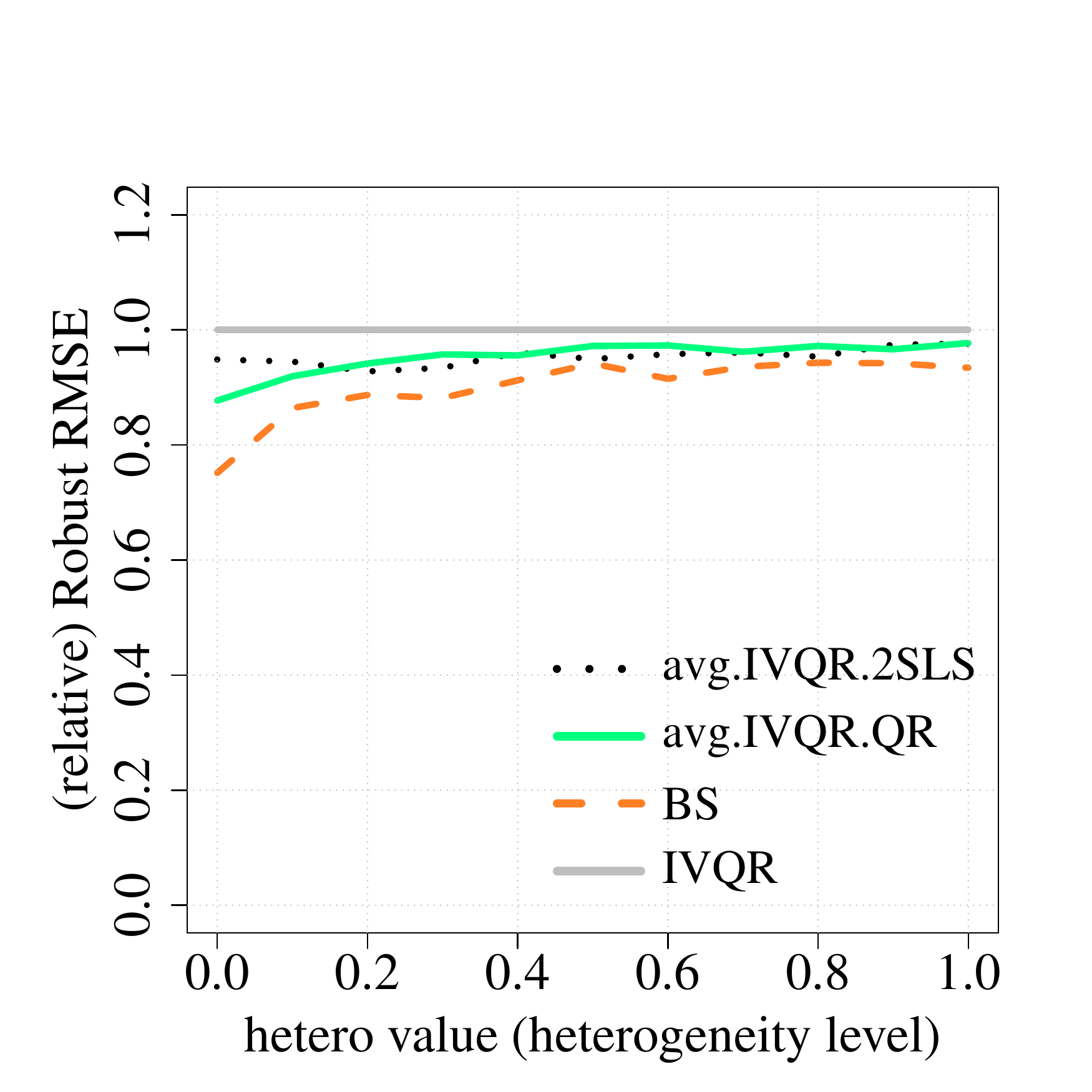}
\hfill
\includegraphics[width=0.45\textwidth, height=0.3\textheight, trim=35 20 20 70]{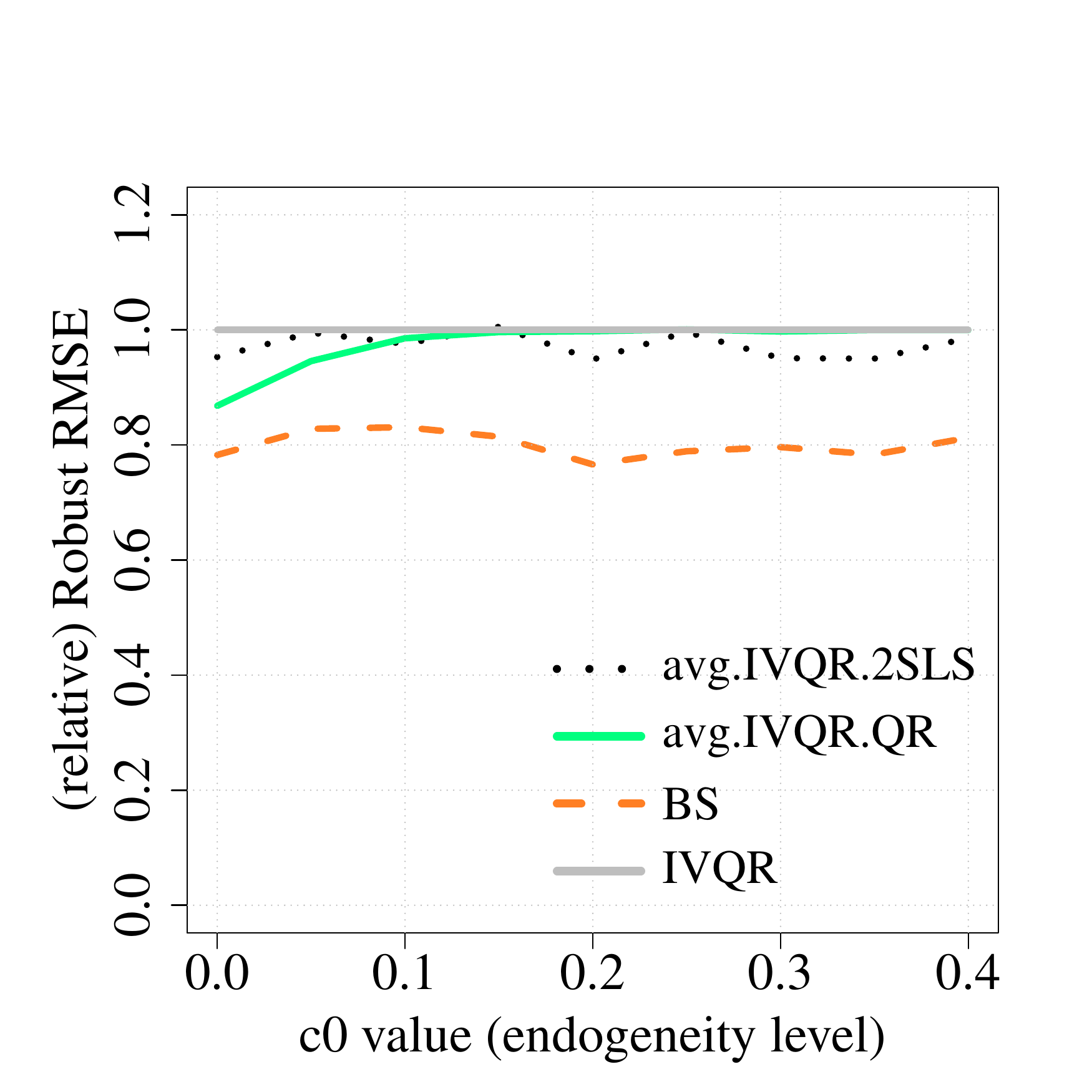}
\hfill\null
\includegraphics[width=0.45\textwidth, height=0.3\textheight, trim=35 20 20 70]{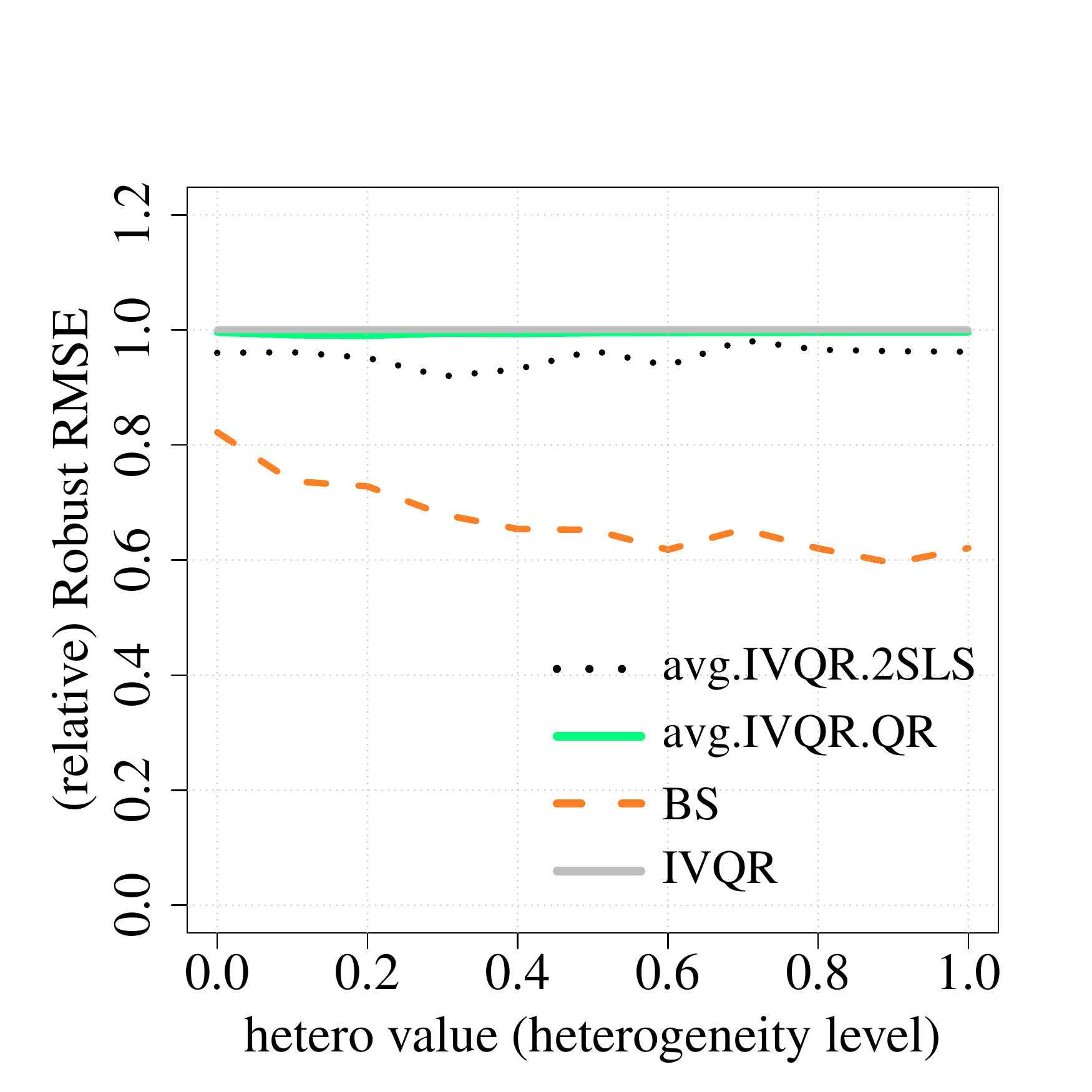}
\hfill
\includegraphics[width=0.45\textwidth, height=0.3\textheight, trim=35 20 20 70 ]{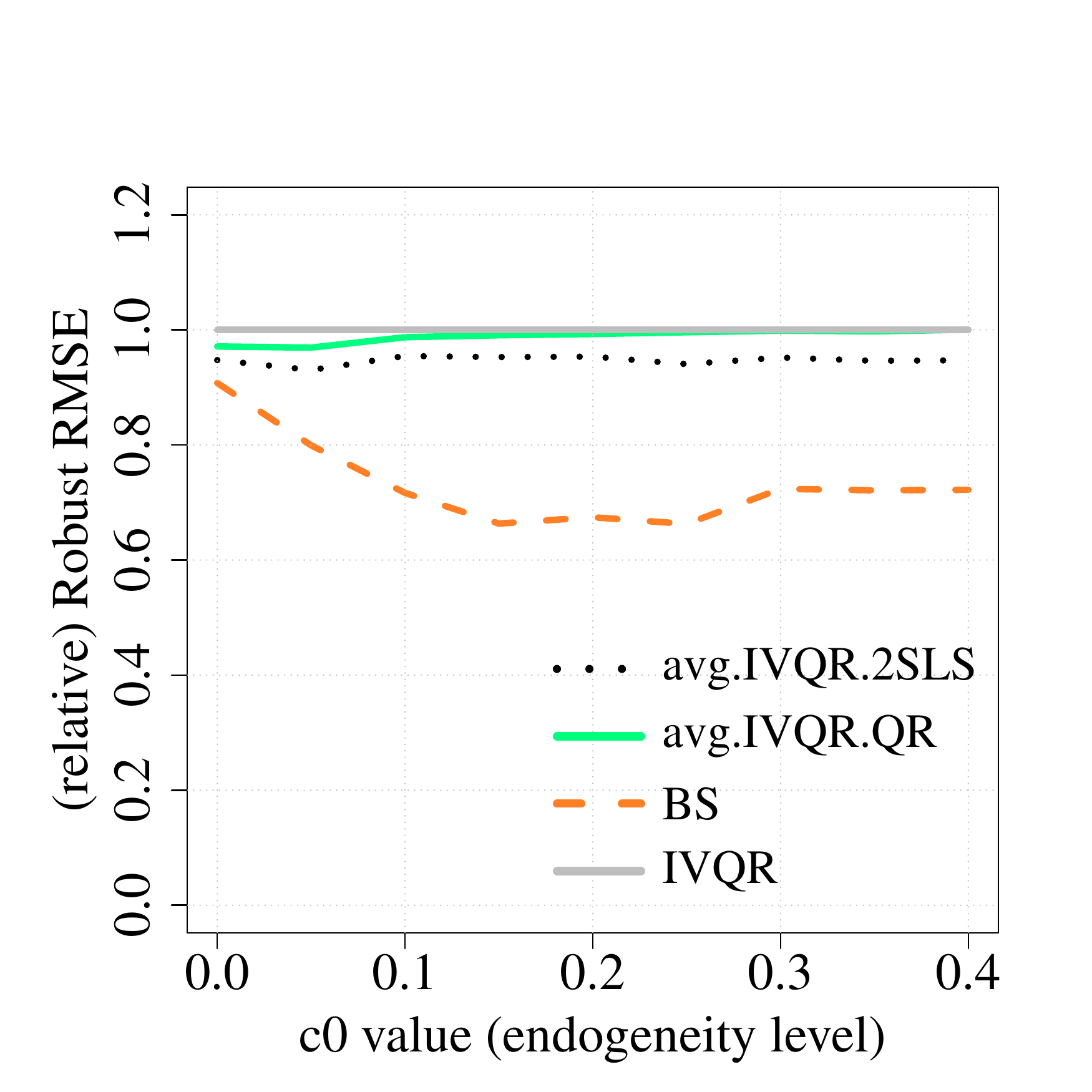}
\hfill\null
\includegraphics[width=0.45\textwidth, height=0.3\textheight, trim=35 20 20 70 ]{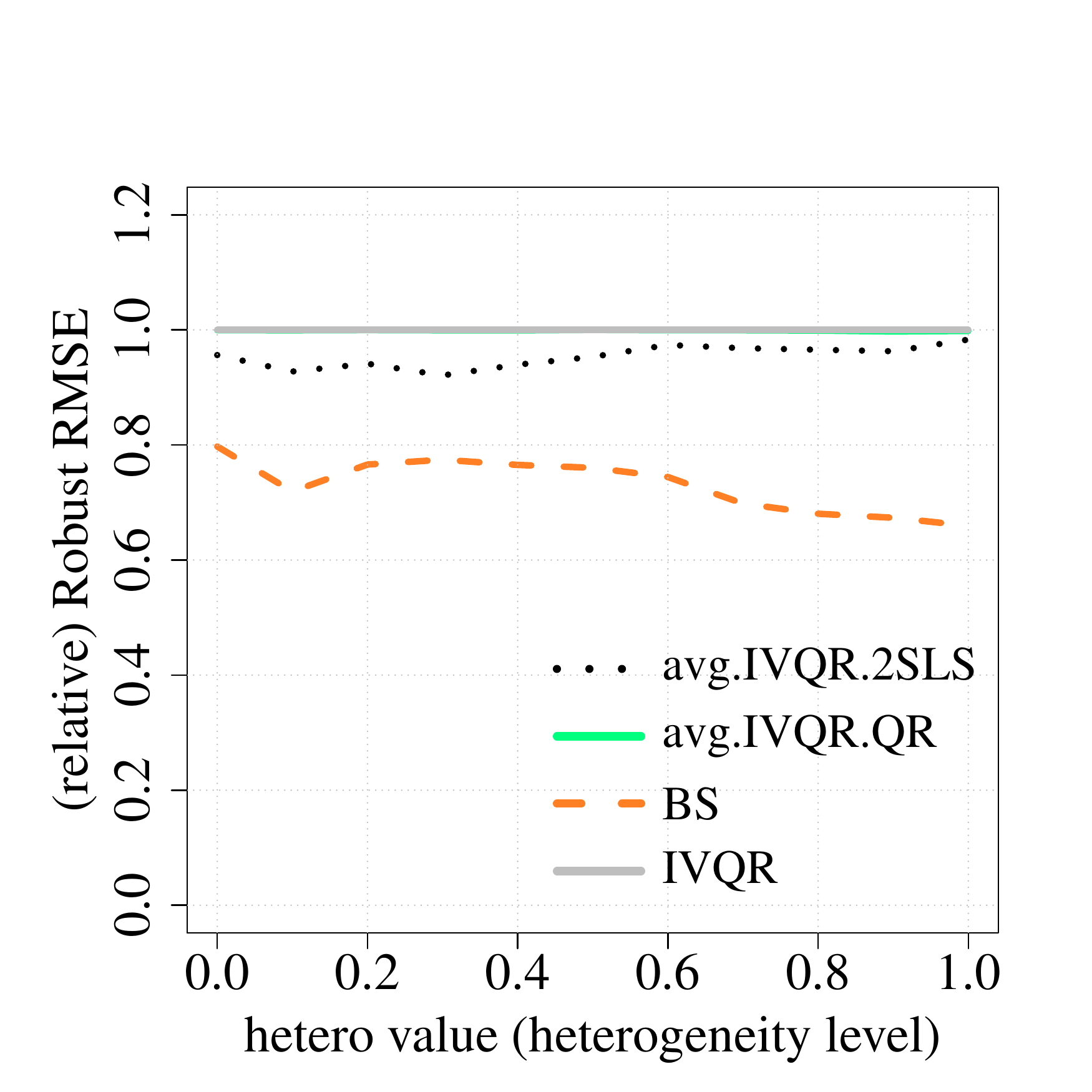}
\hfill
\includegraphics[width=0.45\textwidth, height=0.3\textheight, trim=35 20 20 70]{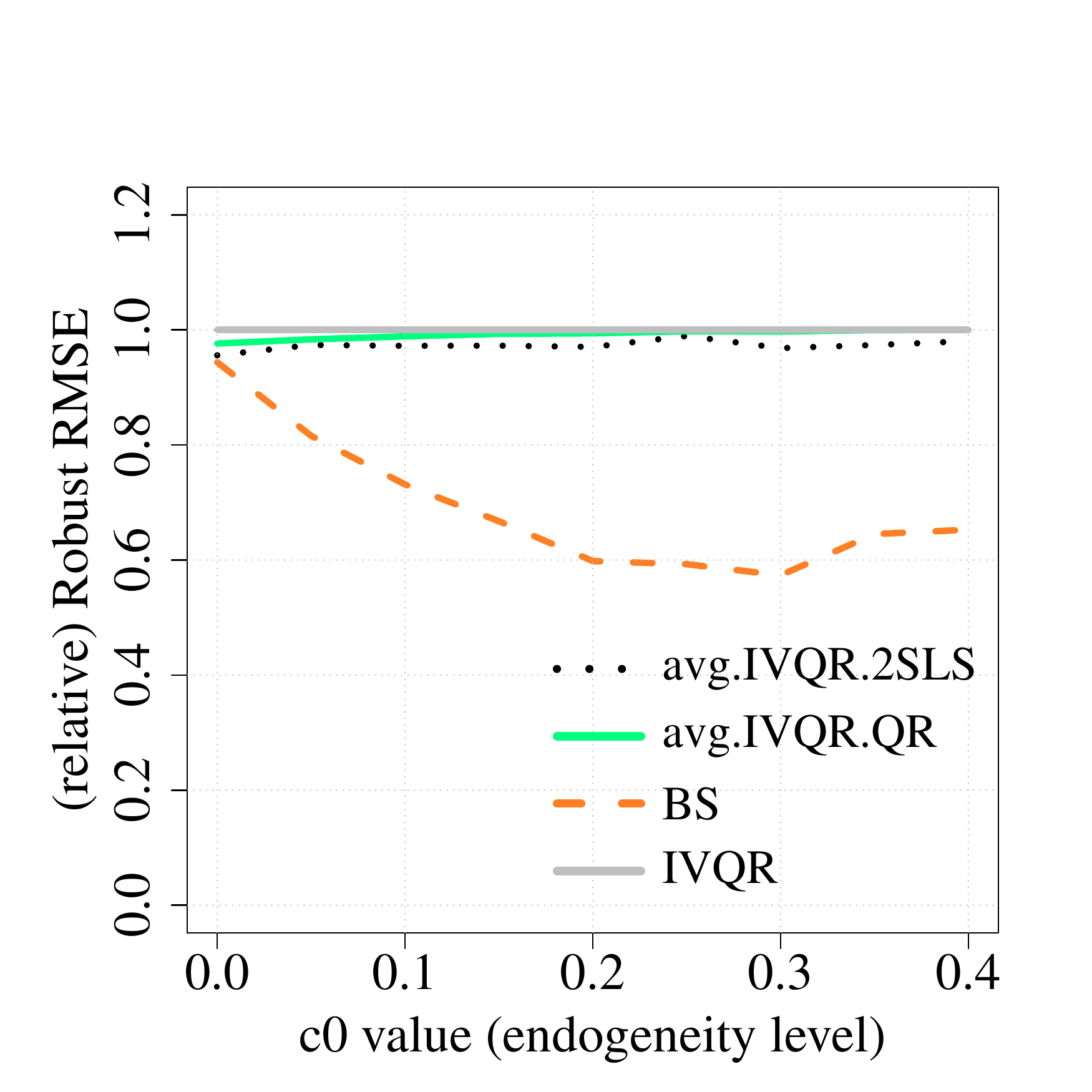}
\caption{\label{fig:M3:S1S22:tau0.8}%
Relative rRMSE in simulation model 3 at $\tau=0.8$ quantile level in 6 cases: fixed endogeneity level $c_0=0$  (left up), $c_0=0.2$ (left middle), $c_0=0.4$ (left bottom) and varying heterogeneity; and fixed heterogeneity level $hetero=0$ (right up), $hetero=0.5$ (right middle), $hetero=1$ (right bottom) and varying endogeneity,
% , at endogeneity level $c_0=0$ (left up), $c_0=0.2$  (left middle), $c_0=0.4$ (left bottom), and at heterogeneity level $hetero=0$ (right up), $hetero=0.5$ (right middle),$hetero=1$ (right bottom), 
based on $\num{200}$ replications and $\num{50}$ bootstraps. Sample size n=1000.} 
\end{figure}

\end{document}